\documentclass{llncs}
\usepackage[letterpaper,hmargin=1.05in,vmargin=1.05in]{geometry}
\DeclareMathAlphabet{\cmrmsl}{OT1}{cmss}{m}{n}

\usepackage{marvosym}
\usepackage{bbm}
\usepackage{graphicx}
\usepackage{xcolor}
\usepackage{amssymb}
\usepackage{epstopdf}
\usepackage{amsmath,upgreek}
\usepackage{framed}
\usepackage{epsfig}
\usepackage{wasysym}
\usepackage{pifont}
\usepackage[mathscr]{eucal}
\usepackage{mathrsfs}
\usepackage{graphicx}
\usepackage{listings}
\usepackage{amssymb}
\usepackage{amsmath,upgreek}
\usepackage{dsfont}
\usepackage{mathtools}
\usepackage{eqnarray}
\usepackage{soul}
\usepackage{float}
\usepackage{typicons}

\usepackage{etoolbox}
\usepackage[ruled,vlined,linesnumbered]{algorithm2e}
\usepackage{hyperref}

\usepackage[maxbibnames=10]{biblatex}

\DeclareMathAlphabet{\mathpzc}{OT1}{pzc}{m}{it}

\newcommand{\protocolsname}{Protocol}
\newcommand{\inthashprot}{\ensuremath{\Pi_{\textnormal{\tiny{\textsc{int-hash}}}}}}
\newcommand{\comname}{\ensuremath{\Pi_{\textnormal{\tiny{\textsc{com}}}}}}
\newcommand{\openname}{\ensuremath{\Pi_{\textnormal{\tiny{\textsc{open}}}}}}

\newcommand{\NOVYInterName}{\inthashprot}

\newcommand{\NOVYcomname}{\comname}
\newcommand{\NOVYopenname}{\openname}

\newcommand{\experimentsname}{Experiment}

\usepackage{multirow}

\usepackage[english]{babel}
\usepackage{wasysym}
\usepackage{csquotes}
\usepackage{environ}
\usepackage{trace}
\usepackage[clock]{ifsym}
\usepackage[
	n,
	operators,
	advantage, 
	sets,
	adversary,
	landau,
	probability,
	notions,
	logic,
	ff,
	mm,
        primitives,
        events,
        complexity,
        asymptotics, 
        keys]{cryptocode}
\usepackage{qip}
\usepackage{qm}
\usepackage{multicol}
\usepackage{typicons}

\usepackage{alphalph}

\newcommand{\inr}{\ensuremath{\in_{\textnormal{R}}}}
\newcommand{\protname}[1]{\ensuremath{\cmrmsl{#1}}}
\newcommand{\protpi}{\protname{\Pi}}
\newcommand{\novy}{\textsc{novy}}

\usepackage[most]{tcolorbox}

\definecolor{block-gray}{gray}{0.89}

\renewtcolorbox{blockquote}{colback=block-gray,grow to right by=-1mm,grow to left by=-1mm,boxrule=0pt,boxsep=0pt,breakable}

\newtheorem{cl}[theorem]{Claim}

\newcommand{\setup}{\textsf{SetUp}}

\newcommand{\HH}[1]{\ensuremath{\mathrm{H}_{#1}}}
\newcommand{\HHn}[1]{\ensuremath{\mathrm{H}^{n}_{#1}}}
\newcommand{\HHsl}{\ensuremath{\mathrm{H}_{\textnormal{SL}}}}
\renewcommand{\tt}{\ensuremath{\mathtt{t}}}
\newcommand{\TT}[1]{\ensuremath{\mathrm{T}_{#1}}}
\newcommand{\TTn}[1]{\ensuremath{\mathrm{T}^{n}_{#1}}}

\renewcommand{\ZZ}[1]{\ensuremath{\mathrm{Z}_{#1}}}

\newcommand{\reg}[1]{\ensuremath{\mathtt{#1}}}
\newcommand{\Dst}[1]{\ensuremath{\mathbf{D}(#1)}}
\newcommand{\Unitary}[1]{\ensuremath{\mathbf{U}(#1)}}

\newcommand{\Proj}[1]{\ensuremath{\mathbf{Proj}(#1)}}

\newcommand{\scalar}[2]{\ensuremath{\langle #1, #2 \rangle}}

\newcommand{\measop}[3]
{\ensuremath{\textsf{M}}_{#1}^{#2}\left(#3\right)}

\newcommand{\cons}[1]{\ensuremath{\textnormal{Consist}(#1)}}

\newcommand{\bbone}{\text{\usefont{U}{bbold}{m}{n}1}}
\newcommand{\idt}{\ensuremath{\bbone}}

\newcommand{\Sreg}{\reg{S}}

\usepackage[short]{optidef}

\spnewtheorem{construct}{Construction}{\bfseries}{\itshape}
\spnewtheorem*{thm}{Theorem}{\bfseries}{\itshape}
\newtheorem{restatelemmainner}{Lemma}

\makeatletter
\renewcommand{\restatelemmainner}{%
    \@spthm{restatelemmainner}{\csname lemmaname\endcsname}{\bfseries}{\itshape}%
}
\makeatother

\newenvironment{restatelemma}[1]
{%
    \begingroup
    \renewcommand{\therestatelemmainner}{\ref*{#1}}%
    \begin{restatelemmainner}
}
{%
    \end{restatelemmainner}
    \endgroup
}

\newtheorem{restatetheoreminner}{Theorem}

\makeatletter
\renewcommand{\restatetheoreminner}{%
    \@spthm{restatetheoreminner}{\csname theoremname\endcsname}{\bfseries}{\itshape}%
}
\makeatother

\newenvironment{restatetheorem}[1]
{%
    \begingroup
    \renewcommand{\therestatetheoreminner}{\ref*{#1}}%
    \begin{restatetheoreminner}
}
{%
    \end{restatetheoreminner}
    \endgroup
}

\newcommand{\louis}[1]{{\color{blue}\textbf{Louis}: #1}}

\newcommand{\remi}[1]{{\color{olive}\textbf{Rémi}: #1}}
\newcommand{\lc}[1]{{\color{orange}\textbf{LC}: #1}}
\renewcommand{\qed}{{\hfill{$\blacksquare$}}}
\newcommand{\qedSketch}{{\hfill{$\square$}}}

\newcommand{\invpi}{\ensuremath{\textsf{Inv}_{\pi}}}
\newcommand{\qinv}{\ensuremath{\textsf{qInv}}}
\newcommand{\qinvpi}{\ensuremath{\textsf{qInv}_{\pi}}}

\newcommand{\searcher}{\ensuremath{\textsf{Searcher}}}
\newcommand{\qsearcher}{\ensuremath{\textsf{qSearcher}}}
\newcommand{\straightline}{\ensuremath{\textsf{StraightLine}}}
\newcommand{\qstraightline}{\ensuremath{\textsf{qStraightLine}}}

\newcommand{\repeatqstraightline}{\ensuremath{\textsf{RepeatqStraightLine}}}

\SetKw{KwFail}{fail}
\SetKw{KwAnd}{and}
\SetKwData{Done}{done}
\SetKwData{False}{false}
\SetKwData{True}{true}

\SetKwInput{KwOracle}{Oracle}
\SetKwInput{KwInput}{Input}
\SetKwInput{KwInputs}{Inputs}
\SetKwInput{KwParameter}{Parameter}
\SetKwInput{KwParameters}{Parameters}
\SetKwFor{ForNoDo}{for}{\!:}{end}

\SetCommentSty{mycommfont}

\newcommand{\pibind}{\ensuremath{\pi\textsf{-Bind}_{\novy,\sdv}}}
\newcommand{\pibindTitle}{\ensuremath{\pi\textsf{-Bind}_{\textnormal{\novy},\sdv}}}

\usepackage{ifdraft}
\ifdraft{\newcommand{\authnote}[3]{{\color{#3} {\bf  #1:} #2}}}{\newcommand{\authnote}[3]{}}
\usepackage{nicefrac}

\newcommand{\F}{\mathsf{F}}
\newcommand{\PD}{\mathsf{PD}}

\newcommand{\khaireddine}[1]{{\color{purple}\textbf{Khaireddine}: #1}}

\newcommand{\Expt}{\mathsf{MeasExpt}}

\usepackage{tikz}
\usetikzlibrary{angles,quotes,arrows.meta}
\newcommand{\CompMeas}{%
    \mathord{%
        \vcenter{\hbox{%
            \tikz[x=1em,y=1em,line cap=round,line join=round]{%
                \draw[thin] (0,0) circle[radius=.62];

                \draw[thick]
                    (-.40,-.26)
                    to[bend left=45]
                    (.40,-.26);

                \draw[thick,-{Latex[scale=.50]}]
                    (0,-.26) -- (.23,.40);
            }%
        }}%
    }
}

\newcommand{\SNDC}{\textsf{S}_{\textnormal{com}}}
\newcommand{\SNDO}{\textsf{S}_{\textnormal{open}}}
\newcommand{\RCPC}{\textsf{R}_{\textnormal{com}}}
\newcommand{\RCPO}{\textsf{R}_{\textnormal{open}}}
\newcommand{\vdv}{\mathcal{V}}
\newcommand{\supopcom}[2]{\ensuremath{\Upsilon_{#1,#2}}}
\newcommand{\supopchal}[3]{\ensuremath{\Omega_{#1,#2,#3}}}
\newcommand{\linop}[1]{\mathbf{L}\!\left(#1\right)}
\newcommand{\prothc}[1]{\textsf{Hamiltonian-cycle}_{#1}}
\newcommand{\bI}{\mathcal{I}}
\newcommand{\pdvch}[1]{\mathtt{c}\textnormal{-}\pdv_{#1}}
\newcommand{\pdvcom}{\pdv_{\text{com}}}
\newcommand{\TD}{\mathsf{TD}}
\newcommand{\params}{n}
\newcommand{\sizeG}{\eta}
\newcommand{\sizeC}[1]{\ensuremath{\textnormal{Size}(#1)}}

\begin{document}
\title{Post-Quantum Perfect Zero-Knowledge Arguments for NP \\ From Any One-Way Permutation} 
\titlerunning{Post-Quantum ZK Arguments for NP}

\author{
Frédéric Dupuis
\and 
Rémi Ligez
\and
Khaireddine Medhioub
\and
Louis Salvail
\and
Louis-Charles Thibodeau
}
\institute{Département d'informatique et de recherche opérationnelle (DIRO),\\ Université de Montréal, Montréal, Québec, Canada \\
\texttt{$\{\texttt{remi.ligez},\texttt{khaireddine.medhioub},\texttt{louis-charles.thibodeau}\}$@umontreal.ca} \\
\texttt{$\{\texttt{dupuisf},\texttt{salvail}\}$@iro.umontreal.ca}
}

\maketitle
\begin{abstract}
In this paper, we show for the first time that post-quantum one-way permutations are sufficient
for perfect zero-knowledge classical arguments for all of \textsf{NP} with security guaranteed against quantum adversaries.  
We first establish\footnote{An AI generated claimed proof of a similar result has appeared 
 recently: \texttt{arxiv.org/abs/2609.33098}.} that Naor, Ostrovsky,
Venkatesen, and Yung's bit commitment scheme built from any  permutation using interactive hashing,
called the \novy\ scheme, is sum-binding and collapse-binding against quantum polynomial-time committers 
provided the permutation is post-quantum one-way. 
We then show that the collapse-binding
property of the \novy\ scheme  allows for Blum's
perfect zero-knowledge argument for \textsc{hamcycle} to remain so exactly as it stands against quantum adversaries.
\end{abstract}

\section{Introduction}

The \novy\ \textit{interactive hashing protocol} takes place between Sam, the sender,
and Rachel, the receiver. It allows for
Sam's private input string $y\in_{\textnormal{R}} \{0,1\}^n$  to be
released as one of two possible candidates $\{y_0,y_1\}$.
At the end of $n-1$ rounds of interaction, Rachel knows that Sam's secret
$y$ is either $y_0$ or $y_1$ but nothing more, and if $y=y_\beta$ for $\beta\in\{0,1\}$
then $y_{\beta\oplus 1}$ is essentially uniform in $\{0,1\}^n$ regardless
of Sam's behaviour. In each round,
Rachel asks Sam the parity of a random subset of positions in $y$.
Each new subset is chosen independently of the previous ones.
After $n-1$ rounds, two candidates $\{y_0,y_1\}$ are left, one of them
is~$y$ but Rachel does not know which one. On the other hand and as long as Sam
remains consistent with~$y_{\beta}$ throughout,
$y_{\beta \oplus 1}$ is almost uniformly distributed. The \novy\ interactive hashing protocol is
depicted below. 


\SetKwInput{KwIn}{Sam's input}
\SetKwInput{KwData}{Public input}
\SetAlgorithmName{\NOVYInterName}{}{}
{\SetAlgoRefName{}
\begin{algorithm}[H]
\LinesNumberedHidden
\SetAlgoNoEnd
\caption{\novy 's Interactive hashing}
\DontPrintSemicolon
\KwData{$1^n$}
\KwIn{$y\in_R\{0,1\}^n$}
\For{$i=1$ \KwTo $n-1$}{
        Rachel picks $h_i =0^{i-1}||1||r$ with $r \in_{\textnormal{R}}\{0,1\}^{n-i}$ and announces it to Sam\;
        Sam announces 
        $z_i = \scalar{y}{h_i} \in\{0,1\}$ \tcp*[r]{$\scalar{y}{h_i}$ denotes the inner product between $y$ and $h_i$}}
\KwRet{\textnormal{$(y_0,y_1)$ s.t. for all $1\leq i \leq n-1$, $\scalar{h_i}{y_0}=\scalar{h_i}{y_1}=z_i$ and\footnote{
Notation $y_0<_\text{lex} y_1$ for $y_0,y_1\in\{0,1\}^n$ stands for $y_0$
precedes $y_1$ in \textit{lexicographical order}.} 
$y_0<_{\textnormal{lex}} y_1$}}
%
\end{algorithm}}%




This primitive was introduced by Naor, Ostrovsky,
Venkatesen, and Yung~\cite{NOVY92} in the computational 
setting to show that statistically
hiding commitments are possible from any one-way
permutation. This is the flavor of commitments needed for perfect 
zero-knowledge (\textsf{ZK}) arguments~\cite{BCC88}.
One-way permutations were then 
the weakest generic  computational
assumption allowing for \textsf{ZK} arguments for $\textsf{NP}$. 
It remained that way until 2006, when Nguyen, Ong, and
Vadhan showed how to
construct statistically hiding commitments from any 
one-way function\cite{NOV06, HNORV09}.
Interactive hashing is important in this construction as well albeit as a multi-bit primitive where Rachel  learns only that $y\in \{y_x\}_{x\in\{0,1\}^\ell}$.
This $\ell$-bit version of interactive hashing is  simply obtained by running  \novy\ only for the first $n-\ell$ rounds.

The \novy\ bit commitment scheme simply runs 
interactive hashing with Sam's private input $y=\pi_n(x)$
where $\pi_n:\{0,1\}^n\rightarrow\{0,1\}^n$ is a one-way  permutation and
where $x\in_{\textnormal{R}} \{0,1\}^n$. It follows that $y \inr \{0,1\}^n$ as well.
At the end of interactive hashing, 
Rachel knows that $y\in \{y_0,y_1\}$ (where
$y_0<_{\text{lex}} y_1$)
and Sam knows $x\in\{0,1\}^n$ such that $\pi_n(x)=y_\beta$.
Sam's commitment to $b\in\{0,1\}$ is completed once
$d=\beta\oplus b$ has been announced to Rachel.
The committing and opening phases of  the scheme are 
depicted in Fig.~\ref{fnovybc}.
Showing that the scheme is perfectly hiding (i.e.~commitments to 
bit $b=0$ and bit $b=1$ are identical on the receiver's side)
is not difficult,
the hard part is showing that the scheme is computationally
binding.

Notice that if Sam wants to open $b\oplus 1$, then
$x'$ such that $\pi_n(x')=y_{d\oplus b\oplus 1}$ must be announced.
As long as $\pi_n(\cdot)$ is a one-way permutation, 
if $y_{d\oplus b\oplus 1}$ is  uniform then Sam cannot find $x'$
in polynomial time, preventing him from opening $b\oplus 1$.
However, $y_{d\oplus b\oplus 1}$
is only guaranteed to be essentially uniform  when
Sam remains consistent with $y=\pi_n(x)$  throughout, that is,
when Sam is honest.
To show that the commitment scheme is 
computationally binding, one needs to transform 
any efficient adversary breaking the computational binding condition
with non-negligible probability of success 
into an efficient inverter for $\pi_n(\cdot)$. 
In the classical setting, breaking the binding condition 
simply means that after the committing phase, the adversary 
can open both bits with non-negligible probability of success.

\SetAlgorithmName{\protocolsname}{}{}
\SetKwInput{KwIn}{Sam's input}
\SetKwInput{KwData}{Public input}
\begin{figure}[h]\label{novybc}
\begin{minipage}{\textwidth}
\begin{minipage}{0.49\textwidth}
   \SetAlgorithmName{\NOVYcomname}{}{}
   {\SetAlgoRefName{}
   \begin{algorithm}[H]
     \caption{\novy's committing phase}
     \DontPrintSemicolon
     \KwData{$1^n$, $\pi=\{\pi_n\}_n$ a post-quantum one-way permutation}
     \KwIn{$b\in \{0,1\}$}
      Sam picks $x\in_{\textnormal{R}} \{0,1\}^n$ and computes $y=\pi_n(x)$ \;
      Sam and Rachel run $(y_0,y_1) \leftarrow \inthashprot(1^n, y)$ with  $y_0 <_{\text{lex}} y_1$\;
      Sam announces $d\in\{0,1\}$ s. t. $y=y_\beta$ and $d=b\oplus \beta$
    \end{algorithm}}%
\end{minipage}
\hfill
\begin{minipage}{0.49\textwidth}
  \SetAlgorithmName{\NOVYopenname}{}{}
   {\SetAlgoRefName{}
   \begin{algorithm}[H]
     \caption{\novy's opening phase}
    \DontPrintSemicolon
    \KwData{$1^n$, $\pi=\{\pi_n\}_n$, $(y_0,y_1)$,
     $d$}
     Sam announces  $b$, $x$, and $y$ \;
     Rachel accepts iff there exists $\beta\in\{0,1\}$ s. t. $y=y_\beta$, 
      $\pi_n(x)=y$, and $b=d\oplus \beta$\;
    \end{algorithm}}%
\end{minipage}
\end{minipage}
\caption{The \novy\ ${\Pi^\pi_{\text{\novy}}}=(\comname,\openname)$ bit commitment scheme.}\label{fnovybc}
\end{figure}

In \cite{NOVY98}, it was shown that any classical adversary capable of producing valid openings for both bits with  probability $\nicefrac{1}{\poly}$ can be used to invert the permutation $\pi_{n}$, i.e. to find $\pi^{-1}_n(y)$ with  probability  $\nicefrac{1}{\poly}$  whenever $y \inr \{0,1\}^{n}$.
The \novy\ interactive hashing protocol
was analysed in depth in \cite{HR14}  resulting in a
tighter analysis that can also be applied
to  $\ell$--bit interactive hashing, as
used in \cite{NOV06,HNORV09} to get a statistically
hiding scheme from any one-way function rather than any one-way permutation.


\sloppy
In any of these classical security proofs for the binding property, the inverter constructed from the adversary~$\sdv_n$,
upon input \mbox{$y\in\{0,1\}^n$},
needs to take snapshots of its memory at the beginning of each
round~\mbox{$i\in \{1,\ldots,n-1\}$}.
This allows to return~$\sdv_n$ to
its state prior to the announcement of the random challenge \mbox{$h_i \in \{0,1\}^{n}$}
 as long as~$\sdv_n$'s reply for that round is not  the one  the inverter
is looking  for in order to find~$\pi^{-1}_n(y)$ with good probability.  
Rewinding a quantum adversary cannot, in general, be done with perfect accuracy
unless a few conditions are satisfied. For instance,
Watrous's rewinding technique~\cite{W09} requires that the probability for any adversary to 
provide the right reply to the inverter be constant. 
When this probability is almost constant, the  adversary's snapshot
will be noisy and the noise will accumulate over the rounds. 
Chiesa, Ma, Spooner, and Zhandry \textit{state-repair} approach~\cite{CMSZ21}
does not rewind the adversary to its state prior the challenge 
announcement, but rather to a state almost as good.
This approach also has drawbacks, as the repaired state may depend heavily on the challenge announced and the reply provided by the adversary. 
In a situation like $\novy$, with this approach, it is  unclear whether the repairing procedure might leak the value of $y$, preventing the inverter to find $\pi_{n}^{-1}(y)$ with good probability.
\fussy

\subsection{Contributions}
In this work, we show that  \novy\
is post-quantum computational \textit{open-once binding} provided
the one-way permutation family $\pi=\{\pi_n\}_n$ is post-quantum secure. It means that no QPT adversary can open 
$b=0$ and $b=1$ simultaneously.
\novy\ has a special property that we use many times in this paper,
it has \textit{unique openings}: After the committing phase, there exists at most one way
to open $b=0$ and at most one way to open $b=1$. We know from \cite{Kent05, CSST11} that
open-once binding commitment schemes with unique openings are also \textit{sum-binding}~\cite{DMS00},
which means that for any QPT adversary having performed the committing phase,
the probability $p_b$ to open 
$b\in\{0,1\}$ with success satisfies $p_0+p_1\leq 1+\negl$.
For the first time, a post-quantum statistically hiding commitment scheme
is shown to satisfy the sum-binding criterion
assuming a very mild generic computational assumption. 
In fact, we prove something stronger, in addition of being sum-binding,
\novy\ is also \textit{collapse-binding} as defined by Unruh in \cite{U16a}. 
Intuitively, a collapse-binding commitment scheme is one
where after the committing phase and conditioned on the fact that the
adversary produces a valid opening, the register holding the committed message 
 can be measured without interfering with 
the adversary's strategy; the adversary cannot detect whether
the measurement  took place. Our first main result can therefore be stated as follows:
\begin{theorem}[main result \#1]\label{th1}
    Assuming the existence of post-quantum one-way permutations, the $\novy$ bit commitment scheme is {post-quantum} computational sum-binding, collapse-binding,
    and open-once binding, as well as perfectly hiding.
\end{theorem}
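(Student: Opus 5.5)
The plan is to prove the four properties separately, with most of the effort going into a single quantum reduction: from open-once binding to inverting $\pi_n$. Collapse-binding will then follow from open-once binding by a structural argument.

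\emph{Perfect hiding.} For any (possibly quantum) Rachel, fix the transcript $h_1,\dots,h_{n-1}$. Honest Sam's $y=\pi_n(x)$ is uniform because $\pi_n$ is a permutation and $x$ is uniform. Conditioned on the answers $z_1,\dots,z_{n-1}$, the value $y$ is uniform on the two-element set $\{y_0,y_1\}$, since the $h_i$ are linearly independent by their echelon form. Hence $\beta$ is a uniform bit, independent of Rachel's entire view, and $d=b\oplus\beta$ is a one-time pad of $b$. Rachel's joint distribution of $(h_i,z_i,d)$ is therefore identical for $b=0$ and $b=1$, for any receiver strategy.

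\emph{Open-once binding.} Suppose a QPT committer $\sdv_n$ outputs valid openings of both bits, that is, preimages $x_0,x_1$ of both $y_0$ and $y_1$, with probability $\varepsilon$ non-negligible. We build a quantum inverter for $\pi_n$ that follows the structure of the classical NOVY/HR14 proof. On input a uniform target $y^*$, the inverter picks a random round $j$ and plays Rachel honestly, sampling fresh $h_i$, for rounds $i<j$. In round $j$ it chooses $h_j$ so that $y^*$ remains consistent, then continues. The aim is to have $y^*\in\{y_0,y_1\}$ at the end with probability $\varepsilon/\mathrm{poly}$; then the pair of openings yields $\pi_n^{-1}(y^*)$.

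The classical proof rewinds round $j$ until the adversary's answer is consistent with $y^*$. This rewinding is the step I expect to be the main obstacle. I would avoid general rewinding by exploiting \emph{unique openings}. The final openings are classical and uniquely determined by the transcript, so the successful event is a binary projective measurement. Each round's reply $z_i$ is a single bit, so the state is disturbed only by a one-bit measurement.

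Concretely, I would purify the adversary and treat each round's reply as a two-outcome measurement on its state. I would then use a gentle-measurement / Marriott--Watrous-style estimation of the success probability as a function of the current partial transcript, in place of snapshot-and-restore. The analysis needs only a union-bound loss of $O(\sqrt{\delta})$ per round, with polynomially many rounds and target precision $1/\mathrm{poly}$. The counting part, which bounds how many $h$'s keep $y^*$ alive with the adversary still succeeding, can be imported from HR14, since it is a statement about the classical distribution of transcripts conditioned on success.

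If this direct approach fails, the fallback is to prove the sum-binding bound directly. The fallback takes each round's answer to be measured anyway, since it is announced classically. One then replaces the target with a uniformly random consistent $y$ and argues, via an operator (Jordan) decomposition of the two opening projectors $P_0,P_1$, that $\|P_1P_0\|$ large implies inverting.

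\emph{Sum-binding and collapse-binding.} Given open-once binding and unique openings, sum-binding follows from the cited results of \cite{Kent05, CSST11}. For collapse-binding, after the commit phase the valid-opening projector for bit $b$ projects onto a single classical string $(b,x_b,y_b)$, which is unique given the transcript. A valid opening register therefore determines $b$ as a deterministic function of the opening. Measuring $b$ is equivalent to measuring the opening $(x,y)$ in the computational basis, and a superposition across both $b$ values would require a superposition of two valid openings.

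The collapse distinguisher reduces to open-once binding as follows. Measuring $b$ disturbs the state only if it has non-negligible weight on both valid openings. In that case, measuring the whole opening register in the computational basis yields either opening with non-negligible probability. The sum-binding bound then limits such weight: with $p_0+p_1\le 1+\mathrm{negl}$, the cross terms $\langle\psi|P_0\ldots P_1|\psi\rangle$ are negligible. A standard hybrid then shows that the distinguisher's advantage is negligible.
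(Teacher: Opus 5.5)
Your perfect-hiding argument and your appeal to Kent and Cr\'epeau--Salvail--Simard--Tapp for sum-binding are fine and agree with the paper. The gaps are in the two parts that carry the weight.

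\textbf{Open-once binding: the inverter does not work.} As described, it does not even keep the target alive. If you play Rachel honestly in rounds $i<j$ (and after $j$), the adversary's replies are independent of $y^*$, so $y^*$ survives each such round only with probability $1/2$. The inverter must steer every one of the first $R=n-O(\log n)$ rounds; this is why NOVY and Haitner--Reingold rewind in each round.

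Your substitute for rewinding also fails. On average over consistent $y$, the reply agrees with $y$ with probability exactly $1/2$ (the balanced property). So the measurement you condition on is far from gentle, and there is no $O(\sqrt{\delta})$ per-round loss. Value estimation plus state repair restores only the \emph{value} of the adversary's success probability; nothing controls how the repaired state correlates with the fixed target $y^*$. Nor can the Haitner--Reingold counting be imported: almost-dominance and well-spreadness are statements about a deterministic adversary replayed from identical snapshots, which a disturbed quantum state is not.

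What the paper does instead works as follows.
\begin{itemize}
\item In each of the first $R$ rounds, $y$ is kept in a register. One Watrous reflection about $\Gamma^i$ (challenge register in $|0\rangle$) is applied whenever the consistency measurement $\Delta^{i,y,\mathtt{t}}$ fails. This measurement runs the adversary on a uniform superposition of challenges and checks consistency with $y$.
\item The symmetry of interactive hashing gives, for every transcript, $\mathbb{E}_y\|\Delta_1^{i,y,\mathtt{t}}|0\rangle|\psi^{\mathtt{t}}\rangle\|^2=\frac12$ and $\mathbb{E}_y\|(\idt-\Gamma^i)\Delta_1^{i,y,\mathtt{t}}|0\rangle|\psi^{\mathtt{t}}\rangle\|^2=\frac14-\frac{1}{4\cdot 2^{n-i}}$. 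Hence the purified distance to an ideal state, pairing each transcript with a uniform consistent $y$, is at most $9/\sqrt{2^{n-R}}$.
\item Since $\PD^2\approx 81/2^{n-R}$ exceeds the ideal single-run success $\epsilon/2^{n-R}$, one straight-line run cannot be transferred to the real state.
\item The paper therefore uses a \emph{repeated} straight-line phase. It relies on a CMSZ-style recording theorem valid when each challenge has polynomially many accepting answers (here $2^{n-R-1}$), plus a count of distinct preimages.
\item This raises the ideal success to $\Omega(\epsilon/\sqrt{2^{n-R}})$, which then transfers via $|\sqrt{p}-\sqrt{q}|\le\PD$.
\end{itemize}
Your Jordan-decomposition fallback only re-derives the passage from a sum-binding violation to an open-both adversary. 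It leaves the inversion of $\pi_n$ untouched.

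\textbf{Collapse-binding: the key step is wrong.} Sum-binding does not limit the weight on both openings. The state $\frac{1}{\sqrt2}(|0,u_0\rangle+|1,u_1\rangle)$ gives $p_0=p_1=\frac12$, so $p_0+p_1=1$. Yet its coherence is maximal, and measuring $b$ destroys it. Obtaining \emph{either} opening by measuring is no binding violation. Whether the disturbance is \emph{efficiently} detectable is precisely the question, and sum-binding is only known to yield collapse-binding against valid adversaries.

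The missing idea is to use the distinguisher itself to move between openings.
\begin{itemize}
\item By unique openings, write $V_1=P_0+P_1$ with rank-one projectors on $\mathtt{M}\mathtt{U}$.
\item A distinguisher $D=\{D_1,D_0\}$ with advantage $1/p(n)$ then gives $\sum_{\mathtt{c}}\Pr[\mathtt{c}]\,\mathrm{Tr}\bigl(D_0(P_0\sigma P_1+P_1\sigma P_0)\bigr)\ge 2/p(n)$.
\item The open-both adversary measures the opening, applies the reflection $\idt-2D_0$, and measures again.
\item The Dall'Agnol--Spooner/Zhandry inequality shows that it outputs both valid openings with probability at least $8/p(n)^2$. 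This contradicts open-once binding directly.
\end{itemize}
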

Notice that Dall'Agnol \& Spooner show in \cite{DS23} that  sum-binding schemes are also
collapse-binding 
against adversaries that can always open with success (i.e. called \textit{valid adversaries}).
Moreover, Unruh shows in \cite{U16a} that non-interactive commitment schemes 
are collapse binding if and only if they are collapse-binding against valid adversaries.
The problem is that Unruh's result, a priori, holds only for non-interactive scheme.
We show that \novy\ is collapse-binding as consequence of the fact that
it is open-once binding and has unique openings.


Recall that in \cite{NOVY98}, the \novy\ scheme was
introduced as a key building block for \textsf{ZK} arguments for all of~\textsf{NP}.
A natural question to ask
is whether \novy\ can be used to guarantee the post-quantum security (in this case, the \textit{computational soundness})
of  arguments. 
Our second contribution establishes that \novy\ can be used as the commitment scheme 
in Blum's zero-knowledge argument for \textsc{hamcycle}~\cite{Blum86}.  As a consequence, we obtain the following:
\begin{theorem}[main result \#2]\label{th2}
    Assuming the existence of post-quantum one-way permutations, there exists a post-quantum perfect zero-knowledge argument system with soundness error~$\nicefrac{14}{15}$ for all $\textsf{NP}$.
\end{theorem}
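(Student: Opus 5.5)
The plan is to instantiate Blum's protocol for \textsc{hamcycle} with the \novy\ bit commitment scheme $\Pi^\pi_{\text{\novy}}$ from Fig.~\ref{fnovybc}, and to use Theorem~\ref{th1} (perfect hiding and collapse-binding) as a black box. Since \textsc{hamcycle} is \textsf{NP}-complete under Karp reductions, the protocol for it yields an argument for all of \textsf{NP}. Recall the protocol: the prover picks a random permutation $\sigma$ of the vertices, commits bit by bit to the adjacency matrix of $\sigma(G)$ using independent \novy\ commitments, and the verifier sends a challenge $c\in\{0,1\}$. On $c=0$ the prover opens everything and reveals $\sigma$. On $c=1$ it opens only the entries on a Hamiltonian cycle of $\sigma(G)$.

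\textbf{Completeness and perfect zero-knowledge.} Completeness is immediate. For zero-knowledge I would use Watrous's quantum rewinding lemma~\cite{W09}. The simulator guesses $c'\in\{0,1\}$ uniformly. If $c'=0$ it commits honestly to $\sigma(G)$; if $c'=1$ it commits to a random cycle graph. It runs the (quantum) verifier and succeeds when $c=c'$. Because \novy\ is perfectly hiding, the verifier's view of the commitment phase is identical in both branches. Hence the success probability is exactly $1/2$, independent of the auxiliary quantum input, and Watrous's lemma applies with no error term, giving a perfect simulation (up to the usual expected-polynomial-time or exponentially small failure caveat). The honest prover's transcript uses its committer strategy, and the committer's randomness stays on the simulator's side, so no subtlety arises there.

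\textbf{Soundness.} This is the main obstacle. Suppose $G$ is not Hamiltonian and a QPT prover $\pdv$ is accepted with probability noticeably above $14/15$. Classically one argues that for a fixed commitment the prover cannot answer both challenges, because the two openings would together open some commitment both ways; quantumly we cannot run $\pdv$ twice. I would follow Unruh's collapse-binding technique for $\Sigma$-protocols~\cite{U16a}. After the commitment phase, take the joint state and apply the two challenge strategies, measuring the openings. Let $p_0,p_1$ be the acceptance probabilities on the two challenges, so that $p_0+p_1>2\cdot 14/15$.

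\emph{Step 1.} Using collapse-binding of every individual \novy\ commitment, together with a hybrid over all $O(|V|^2)$ commitments, I would show that measuring the committed bits, which are unique by unique openings, is undetectable to $\pdv$ conditioned on a valid opening. This requires verifying that collapse-binding composes in parallel for the interactive scheme. I expect that to follow by a standard hybrid, with the reduction simulating the other commitments itself.

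\emph{Step 2.} After this measurement, the committed matrix $M$ is a classical value. Since $G$ is not Hamiltonian, $M$ is either not isomorphic to $G$ or contains no Hamiltonian cycle, so one of the two challenges cannot be answered consistently with $M$. Answering it anyway would open a commitment to the other bit.

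\emph{Step 3.} Unruh's lemma bounds the sequential success $p_{0}p_{1}$-type quantity for ``answer $c=0$, then rewind via the inverse unitary, then answer $c=1$.'' That rewinding is near-perfect because the measurement of the opening is collapsing. Hence both answers succeed with probability roughly $(p_0+p_1-1)^3/$const. Gentle measurement losses contribute the constant, which I expect yields the $14/15$ threshold. Getting the constants to match $14/15$ and handling the interactive (multi-round) commitment in the collapse hybrid are where the real work lies. A consistent double opening then contradicts collapse-binding of \novy, and hence the one-wayness of $\pi$.
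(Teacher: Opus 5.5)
Your architecture matches the paper's. You instantiate Blum's protocol with \novy. You get zero-knowledge from perfect hiding via the exact case of Watrous's lemma; committing to a random cycle graph rather than $K_\eta$ is immaterial. You get soundness from collapse-binding of the $|\bI|$-fold composition: measure only the challenge-$0$ openings after a validity check, rewind with $\pdvch{1}(\pdvch{0})^*$, and extract a double opening. Drop the ``expected-polynomial-time or exponentially small failure'' caveat. The guess succeeds with probability exactly $1/2$ for every auxiliary state, so one conditional application of $U_{\textnormal{Wat}}$ gives an exact, strict-polynomial-time simulation, which perfect zero-knowledge requires.

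\textbf{The gap is Step~3.} Unruh's rewinding lemma says $\frac14\sum_{i,j\in\{0,1\}}\|P_iP_j\Phi\|^2\ge V^3$ with $V=(p_0+p_1)/2$. This bounds by $V^3-V/2$ the event that two \emph{randomly ordered, distinct} challenges both succeed. It does not bound your fixed sequence ``$0$ then $1$''. It is also not of the form $(p_0+p_1-1)^3/C$, since it is negative for $V<1/\sqrt2$. A bound of your stated form would be positive whenever $p_0+p_1>1$, i.e.\ it would give soundness error $1/2$. So multiplicative ``gentle-measurement constants'' cannot produce $14/15$; a threshold only arises from additive losses that must be beaten. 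As written, your proposal derives no threshold.

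The paper's mechanism is additive and elementary:
\begin{itemize}
\item Acceptance $1-s$ forces $p_0,p_1\ge 1-2s$.
\item $\textsf{Ver-COM}$ and $\textsf{Ver-CH}_0$ each pass with probability at least $1-2s$. By the gentle-measurement lemma, each disturbs the state by at most $\sqrt{2s}$ in trace distance.
\item By unique openings, measuring $\reg{A}$ after a successful $\textsf{Ver-COM}$ equals measuring $\reg{M}$. Collapse-binding of the parallel composition then makes that measurement cost only a negligible amount.
\item The rewound state therefore passes $\textsf{Ver-CH}_1$ with probability at least $1-4s-2\sqrt{2s}$ minus a negligible term, which is positive for $s\le 1/15$.
\end{itemize}

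Your proposal also omits the reduction to a single commitment. $\pdv$'s view is independent of which position $(c,d)$ carries the external \novy\ commitment. Hence the differing cycle entry is that position with probability $1/|\bI|$, which gives a non-negligible Open-Both adversary.

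You could instead close Step~3 with the fixed-order triangle inequality. Let $P_c:=(\pdvch{c})^*\Pi_c\,\pdvch{c}$ and let $\Phi$ be a purification of the post-commitment state. Then $\|P_1P_0\Phi\|\ge\sqrt{p_1}-\sqrt{1-p_0}\ge\sqrt{1-2s}-\sqrt{2s}$, which is bounded away from $0$ at $s=1/15$. Either way, the key is to exploit that both $p_c$ are close to $1$, not Unruh's averaged lemma.

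Finally, state explicitly that $\reg{P}$ is not measured before rewinding. The permutation need not be unique, and collapse-binding says nothing about it.
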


The hard part consists in showing that Blum's argument is sound against quantum adversarial provers.
The difficulty of establishing the post-quantum computational soundness 
of an argument was pointed out by Ambainis, Rosmanis, and Unruh in \cite{ARU14}.
At the heart of the difficulty is the fact that there are oracles relative to which 
statistically hiding
commitment schemes that do not allow quantum adversaries to open 
two different messages simultaneously (which is a natural quantum version of
a standard classical binding criterion) exist but still allow to open 
any message of the adversary's choosing. The soundness of an argument
using these commitments cannot be established in a black-box manner.
Extra properties of the commitment scheme must therefore be invoked to show the 
post-quantum computational soundness of an argument built out of it.
The collapse-binding property of \novy\ and the fact that it has unique openings  are the main ingredients 
allowing to prove
the computational soundness of Blum's argument.
In~\cite{U16a}, post-quantum arguments of knowledge for all of \textsf{NP} 
are provided from collapse-binding commitment schemes. The construction 
transforms Unruh's  post-quantum \textsf{ZK}
proof of knowledge for \textsc{hamcycle} into an argument after replacing the
statistically binding commitment by a statistically
hiding
and collapse-binding one~\cite{U12, U16a}. However, for this analysis to apply,
we must start with $3$-move 
interactive proofs or arguments of a special form (called $\Sigma$--protocols) with small challenge space,
which is the case for Blum's argument for \textsc{hamcycle}
provided the commitment scheme used is non-interactive.
In consequence,  this line of work does not directly apply when using $\novy$ commitment scheme in Blum's protocol. 
Moreover, this approach does not allow to prove the post-quantum soundness of Blum's protocol
as the protocol must be first modified to guarantee \textit{strict soundness} by asking the prover
to commit to the replies of each possible challenges.
By contrast,
we show that the original Blum's argument for \textsc{hamcycle}
is post-quantum computationally sound with soundness error upper bounded by the constant $\nicefrac{14}{15}$
 when using \novy\ as the commitment scheme. 
 %
 The only collateral damage  of using \novy\ in the post-quantum setting 
 is an increase of the soundness error from $\nicefrac{1}{2}$ to 
 $\nicefrac{14}{15}$. Note that the weakest previously known computational assumption
 allowing for perfect \textsf{ZK} arguments for all of \textsf{NP} was the existence
 of collapsing cryptographic hash functions~\cite{U16a}.

\subsection{\textnormal{\novy}'s inverter}\label{class-inverter}


\newcommand{\Samstar}{Sam$^*$}
\SetKwFor{RepTimes}{repeat}{times}{end}

Before showing how to construct an efficient quantum inverter for permutation family $\pi=\{\pi_n\}_n$,
given a QPT adversary against \novy's sum-binding property, it is helpful to see  how 
the classical inverter works from any successful adversary against the standard classical binding
condition~\cite{NOVY98, HR14}. In the following section, we provide the inverter as it is presented in \cite{HR14}
with a few adaptations to make it more similar to our quantum inverter.

For security parameter $n$, $\sdv=(\sdv^{\text{Com}}_n, \sdv^{\text{Open}}_n)$ are the adversary's circuits executed  during
\novy's committing and  opening phases. Let $\tt$ be the transcript for the first $i-1$ rounds of the committing phase.
It contains the request $h_j$ together with 
the reply $z_j$ for all rounds $j\in \{1,\ldots, i-1\}$.
We write
$z \gets\sdv^{\text{Com}}_n(\tt,h)$ 
when adversary $\sdv$ at round $i$ replies $z$
on request $h$,
given transcript $\tt$ for the $i-1$ first rounds. The updated 
transcript after round $i$ is then denoted $\tt \circ (h, z)$.
Assume an adversary $\sdv$ capable of opening both $b=0$ and $b=1$ successfully
with  probability $\epsilon_\sdv(n)$. Hence, given an interactive hashing full transcript $\tt$, 
$((x_0, y_0),(x_1,y_1))\gets \sdv^{\textnormal{Open}}_n(\texttt{t})$ denotes 
the adversary's attempt to open both bits, which is successful when $y_0<_{\text{lex}} y_1$ 
are both consistent with transcript $\texttt{t}$
while $\pi_n(x_0)=y_0$ and $\pi_n(x_1)=y_1$.
The inverter that we describe next combines the last $n-R-1$
rounds of the committing phase with the opening phase, where 
$R\in n-O\left(\lg{\left(\frac{n}{\epsilon_\sdv(n)}\right)}\right)$.
Given a transcript $\tt$ for the first $R$ rounds of interactive hashing and $h := (h_{R+1},\ldots,  h_{n-1})$, a sequence of challenges for the $n-R-1$ remaining rounds, we denote by $((x_0, y_0),(x_1,y_1))\gets \sdv^{\textnormal{End}}_n(\tt,h)$ the result of the opening phase when the first $R$ rounds have produced transcript $\tt$ and $h$ is requested in the last $n-R-1$ rounds before running $\sdv_{n}^{\textnormal{Open}}$ to produce $((x_0,y_0),(x_1,y_1))$.

The inverter $\invpi^\sdv(y)$ is given below in Fig.~\ref{invalgo}. It is 
made out of 2 algorithms: $\searcher^{\sdv}(y)$
and $\straightline^{\sdv}(\texttt{t})$
with black-box access to $\sdv^{\text{Com}}_n$ and $\sdv^{\textnormal{End}}_n$,
respectively.   
 $\searcher^{\sdv}(y)$ is called first and aims
at running the first $R$ rounds of interactive hashing to produce
a transcript $\tt$ consistent 
with $y$. 
Given that for all rounds
prior round $i$, $\sdv^{\textnormal{Com}}_n$ has replied consistently 
with $y$, 
round $i$
is run with a random challenge $\hbar$ until reply $\scalar{\hbar}{y}$ is provided by the adversary.
This is made possible
by rewinding $\sdv^{\textnormal{Com}}_n$ whenever $\scalar{\hbar}{y} \oplus 1$
is announced by the adversary. The adversary is then rewound  to its state
just before the announcement of $\hbar$.  New random choices
for $\hbar$ are requested until the adversary's reply is  $\scalar{\hbar}{y}$ or more than
$2\lg{(n)}$ unsuccessful attempts have been tried, in which case 
$\invpi^\sdv(y)$ aborts. In order to simplify their analysis, 
all security proofs that we are aware of fix the adversary's random tape
at the start of the execution. In other words, it is assumed that 
 $\sdv^{\textnormal{Com}}_n$ and $\sdv^{\textnormal{End}}_n$ 
are deterministic algorithms once $\sdv$'s random tape 
is fixed to some $\varkappa\in \{0,1\}^{\ell_{\sdv}(n)}$, where $\ell_{\sdv}(n)$
is a polynomial upper bound on the total number of random bits needed 
to run $\sdv$ with security parameter $n$. 
We write $\texttt{t} \gets \searcher^{\sdv}(y)$ 
for an execution of  $\searcher^{\sdv}(y)$
that produced a transcript $\tt$ with all $R$ queries replied consistently with $y$.
Given such a transcript $\tt$, $\invpi^\sdv(y)$  then calls the
$\straightline^{\sdv}(\tt)$ algorithm that runs
the adversary for the last $n-R-1$ remaining rounds of interactive hashing with 
requests $h=(h_{R+1}, \ldots,  h_{n-1})$ picked at random.
Then, $((x_0, y_0),(x_1,y_1))\gets \sdv^{\textnormal{End}}_n(\tt,h)$
is returned hoping for the best: There exists $\beta\in \{0,1\}$ such that
$\pi_n(x_\beta)=y$.

\SetKwRepeat{RepeatN}{repeat at most $2\lg{n}$ times or \textnormal{\textsc{abort}}}{until}

\SetAlgorithmName{\protocolsname}{}{}
\SetKwInput{KwIn}{input}
\SetKwInput{KwData}{input}
\begin{figure}[h]
\begin{minipage}{\textwidth}
\begin{minipage}{0.49\textwidth}
   \SetAlgorithmName{$\searcher^{\sdv}$}{}{}
   {\SetAlgoRefName{}
   \begin{algorithm}[H]
     \caption{Where rewinding takes place}
     \DontPrintSemicolon
     \KwIn{$y\in \{0,1\}^n$}
      $\texttt{t} \gets \text{empty transcript}$\;
      \For{$i=1$ \KwTo $R$}{
        \RepeatN{$\sdv^{\textnormal{Com}}_n(\tt,\hbar)=\scalar{\hbar}{y}$
               }
        {
             Pick $\hbar =0^{i-1}||1||r$ with $r \in_R\{0,1\}^{n-i}$ 
           }
        $\tt \gets \tt\circ (\hbar, \scalar{\hbar}{y})$\;
        }
        \KwRet{\textnormal{$\texttt{t}$}}
    \end{algorithm}}%
\end{minipage}
\hfill
\begin{minipage}{0.49\textwidth}
  \SetAlgorithmName{$\straightline^{\sdv}$}{}{}
   {\SetAlgoRefName{}
   \begin{algorithm}[H]
     \caption{To the end, no rewinding}
    \DontPrintSemicolon
    \KwIn{transcript $\tt$ for the first $R$ rounds
    }
     Pick ${h}= (h_{R+1}, \ldots, h_{n-1})$ uniformly at random\;
     $((x_0, y_0),(x_1,y_1)) \gets \sdv_n^{\textnormal{End}}(\tt, {h})$ \;
     \KwRet{$((x_0,y_0),(x_1,y_1))$}
    \end{algorithm}}%
\end{minipage}
\end{minipage}
\begin{center}
\begin{minipage}{0.7\textwidth}
  \SetAlgorithmName{$\invpi^\sdv$}{}{}
    {\SetAlgoRefName{}
    \begin{algorithm}[H]
    \caption{The inverter}
    \DontPrintSemicolon
    \KwIn{$y\in \{0,1\}^n$}
    Pick $\varkappa\in_R \{0,1\}^{\ell_{\sdv}(n)}$ long enough to feed $\sdv$'s random tape throughout\;
    $\texttt{t} \gets\searcher^{\sdv}(y)$\;
    $((x_0,y_0),(x_1, y_1))\gets \straightline^{\sdv}(\texttt{t})$\;
    \leIf{\textnormal{there exists $\beta\in \{0,1\}$ s.t. $\pi(x_\beta)=y$}}{\KwRet{$x_\beta$}}{\KwRet{$\bot$}}
    \end{algorithm}}%
\end{minipage}
\end{center}
\caption{The \novy's inverter \textit{à la} Haitner \& Reingold~\cite{HR14}.}
\label{invalgo}
\end{figure}

In \cite{NOVY98} and~\cite{HR14}, 
$\invpi^\sdv(y)$'s success probability for $y\in_R\{0,1\}^n$ 
is shown to be polynomial in $\epsilon_\sdv(n)$.
The proof of Haitner \& Reingold in \cite{HR14} 
is similar in spirit to ours. Next, we provide a quick overview of it.

\sloppy
Let $\tt = \!\bigl( (h_1,\ldots,h_i),(z_1,\ldots,z_i) \bigr)$  be a transcript for the first $i$ rounds: 
For every round $j\in \{1,\dots, i\}$, 
request $h_j$ was replied with $z_j$ by $\sdv$.  In the following, we abuse notation 
and write $z\gets \sdv^{\textnormal{Com}}_n(\lambda, (h_1,\ldots, h_R))$,
where $\lambda$ denotes the empty transcript, 
to indicate that the first $R$ rounds were run with queries $(h_1,\ldots,h_{R})$ and  
adversary's replies
$z=(z_1,\ldots, z_{R})$.
For any transcript~$\tt$ after round~$i$, we let $\cons{\tt}$ be the set 
of all $2^{n-i}$ strings $y\in \{0,1\}^n$ consistent with $\tt$. 
Haitner \& Reingold
define the following two distributions:
\fussy
\begin{enumerate}
\item $\textsf{D}_{\textnormal{real}} = (\tt, y)_{y\in_R\{0,1\}^n,\, \tt\gets \searcher^{\sdv}(y)}$, and
\item $\textsf{D}_{\textnormal{ideal}}
= ((h,z),y)_{\textnormal{pick $h=(h_1,\ldots,h_{R})$},\, z\gets \sdv^{\textnormal{Com}}_n(\lambda, h),\, y\in_R\cons{(h,z)}}$.
\end{enumerate}
Clearly,  $\textsf{D}_{\textnormal{real}}$  corresponds
to the distribution of pairs $(\tt, y)$ where $\tt$ is a transcript produced 
by $\searcher^{\sdv}(y)$. 
On the other hand, $\textsf{D}_{\textnormal{ideal}}$ corresponds to the distribution
of pairs $(\tt,y)$ where $\tt$ is a transcript obtained after running
$\sdv$ for the first $R$ rounds, and $y$ is then picked at random
among all strings consistent with $\tt$.

The probability that $\invpi^\sdv(y)$ returns $\pi^{-1}_n(y)$ for $y\in_R \{0,1\}^n$
corresponds to the probability that $\straightline^{\sdv}(\texttt{t})$,
with
$(\tt,y)$ sampled according to $\textsf{D}_{\textnormal{real}}$, 
returns $((x_0,y_0),(x_1,y_1))$ with $\pi_n(x_0)=y$ or $\pi_n(x_1)=y$.
Instead, if $(\tt,y)$ was  sampled according to $\textsf{D}_{\textnormal{ideal}}$, then
$\straightline^{\sdv_n^{\textnormal{End}}}(\texttt{t})$ would return 
$\pi_n^{-1}(y)$ with probability $\frac{\epsilon_{\sdv}(n)}{2^{n-R}}$. 
We then say that the inverter \textit{works well} on $\textsf{D}_{\textnormal{ideal}}$.
Therefore, if
$\textsf{D}_{\textnormal{real}}$ and
 $\textsf{D}_{\textnormal{ideal}}$ were shown  \textit{sufficiently close} to each other
then $\invpi^\sdv(y)$'s success probability could easily be established non-negligible.
Unfortunately, and as pointed out in \cite{HR14},  $\textsf{D}_{\textnormal{real}}$ and
 $\textsf{D}_{\textnormal{ideal}}$ are not statistically close, and if they were, there would not
 be any one-way permutation to base \novy's computational binding condition upon. 
 As the value of $R$ increases, $\textsf{D}_{\textnormal{real}}$  and 
 $\textsf{D}_{\textnormal{ideal}}$ get further and further apart.

Haitner \& Reingold establish that even though $\textsf{D}_{\textnormal{real}}$ and
 $\textsf{D}_{\textnormal{ideal}}$ can be far apart, 
 $\textsf{D}_{\textnormal{ideal}}$  is \emph{almost dominated} by $\textsf{D}_{\textnormal{real}}$.
 It means that for most $(\tt, y)$ having 
 non-zero probability to occur in $\textsf{D}_{\textnormal{ideal}}$, the probability that $(\tt,y)$ occurs
in $\textsf{D}_{\textnormal{real}}$ is not much smaller. 
The proof proceeds using a hybrid argument  transforming
$\textsf{D}_{\textnormal{ideal}}$ into $\textsf{D}_{\textnormal{real}}$ one round at a time.
They then show that $\straightline^{\sdv_n^{\textnormal{End}}}(\texttt{t})$ keeps doing well 
on $\textsf{D}_{\textnormal{ideal}}$ even after removing 
any small enough contribution 
of successful transcripts: Successful transcripts are \textit{well spread}
among all possible transcripts.
It follows that the inverter works well on
$\textsf{D}_{\textnormal{real}}$ since it is dominating $\textsf{D}_{\textnormal{ideal}}$. 
The time complexity of $\invpi^\sdv(y)$ for $y\in \{0,1\}^n$ is 
in $O(\sizeC{\sdv_n}\cdot n\cdot\lg{(n)})$, where $\sizeC{\sdv_n}$
is the number of gates in $\sdv_n$'s circuit,
 and its success probability 
is in $\Omega\left(\frac{\epsilon_{\sdv}(n)^2}{n^8}\right)$.

Although the inverter has \textit{black-box} access to the adversary $\sdv$, 
it does not depend on any of its internal properties
except for setting $R=n-\Theta\left(\lg{\left(\frac{n}{\epsilon_\sdv(n)}\right)}\right)$,
which depends on $\sdv$'s probability of success. However,
the choice of $R$ can be made at random, which divides the success probability 
by at most~$n$.  The resulting inverter remains good and becomes independent of  
any internal property of the adversary.

The proof in \cite{HR14}  establishes that the inverter also works 
well when $R=n-1$: $\straightline$  is not
required for the inverter to achieve the same success probability. 
It is mainly used to accommodate the proof technique that first
establishes that $\textsf{D}_{\textnormal{ideal}}$ is almost dominated
by $\textsf{D}_{\textnormal{real}}$ whenever $R\leq n-\Theta(\lg{(\frac{n}{\epsilon_\sdv(n)}}))$.

\subsection{Our Quantum Inverter}

For each quantum adversary $\sdv$ against $\novy$, the quantum inverter that we analyse here, $\qinvpi^{\sdv}$, is built in a very similar fashion to the classical inverter shown in Fig.~\ref{invalgo}.
The main difference is the
replacement of algorithms
$\searcher^{\sdv}$ and $\straightline^{\sdv}$ by their 
quantum versions: $\qsearcher^{\sdv}$ and $\repeatqstraightline^{\sdv}$, respectively.

Unlike all classical proofs that we are aware of, 
quantum adversaries cannot be made deterministic.
Therefore, the analysis of the inverter must deal with the fact that 
the quantum adversary's replies to challenges cannot be
replayed. 
The number of rounds $R$ for which we apply   $\qsearcher$ 
will be set to essentially the same value as classically.
We suppose that the adversary starts \novy's committing phase with an auxiliary
quantum state $\sigma_{\textnormal{aux}}$. Of course, 
for the inverter to be able to realize something that should be impossible,
the one-way permutation must be post-quantum secure even given 
$\sigma_{\textnormal{aux}}$.

Next, we quickly review how $\qsearcher$ and $\repeatqstraightline$
are implemented and how their behaviour resemble and differ from their classical counterparts. 

\subsubsection{$\qsearcher$.} 
As in the classical case, this is where rewinding the adversary takes place for the first
$R$ rounds. We use Watrous's
quantum rewinding algorithm~\cite{W09}, but we rewind at most once, rather than up to $2\lg{(n)}$ times as 
in the worst case in  $\searcher$.
Indeed, when Watrous's rewinding is used in a situation where the probability for any
adversary to announce the outcome $\qsearcher^{\sdv}(y,\sigma_{\textnormal{aux}})$ is looking for is exactly $p=\nicefrac{1}{2}$,
rewinding once is always enough when the first measurement produces a pair $(\hbar,z)$ that
$\qsearcher^{\sdv}(y,\sigma_{\textnormal{aux}})$  is not
looking for (i.e. $\scalar{\hbar}{y}\neq z$). 
In this case, the state generated after rewinding $\sdv$ is guaranteed 
to produce a new pair $(\hbar',z')$ consistent with $y$.
A problem is that  $p=\nicefrac{1}{2}$ does not hold for any state  during
interactive hashing but only for those $\qsearcher^{\sdv}$ is dealing with
when invoking a \textit{good} adversary $\sdv$. This is preventing us from 
applying Watrous's rewinding lemma as stated in \cite{W09}. Nevertheless, we show that 
rewinding the adversary during interactive hashing behaves 
essentially as in the case where $p=\nicefrac{1}{2}$ holds for all states.
The price we have to pay is that rewinding gets noisier as the round number increases. 
As a consequence, 
the states produced while rewinding deviates from the
state the adversary would have obtained had he replied consistently with $y$
upon the first request in each of the first $R$ rounds. 
The analysis of $\qsearcher^{\sdv}(y,\sigma_{\textnormal{aux}})$
proceeds analogously to Haitner \& Reingold with respect to distributions  
$\textsf{D}_{\textnormal{real}}$ and
 $\textsf{D}_{\textnormal{ideal}}$. 
We consider a \textit{real} and an \textit{ideal} scenario generated from an arbitrary initial state $\sigma_{\textnormal{aux}}$. 
The real scenario corresponds
to running $\qsearcher^{\sdv}(y,\sigma_{\textnormal{aux}})$ on a random $y$ to get $\sigma_{\textnormal{real}}$. 
In the ideal case, $\sigma_{\textnormal{ideal}}$ is the state resulting from
running $\sdv$ during  the first $R$
rounds with $\sigma_{\textnormal{aux}}$ as input state together with a random $y$ consistent with 
the communication transcript. 
Our analysis 
also uses a hybrid argument transforming the ideal scenario  into the real scenario,  one round at a time.
We obtain that the \textit{purified-distance}, denoted $\PD(\sigma_{\textnormal{ideal}},\sigma_{\textnormal{real}})$,
between $\sigma_{\textnormal{ideal}}$ and $\sigma_{\textnormal{real}}$ is no more than
${\frac{9}{\sqrt{2^{n-R}}}}$. Unlike Haitner \& Reingold's \textit{almost dominating}
criterion, we characterize the difference between the real and ideal scenarios 
using a proper distance measure. 
Our analysis is therefore different than Haitner \& Reignold but does not, a priori, prove something stronger.
In particular, if we suppose that rewinding can be made perfect in our analysis like in the classical case, then 
the purified distance between the ideal and the real scenarios amounts to about $\frac{4}{\sqrt{2^{n-R}}}$
and the straightline procedure cannot be shown capable of returning $\pi^{-1}(y)$ in this case.
The advantage of our approach is that it simplifies the analysis, it is applicable to the quantum case,
and most importantly, it allows to show that the repeated version of the straightline procedure 
has a good probability of success.

An important property of the purified distance in 
our context is that any quantum algorithm that produces a given output with probability $q$
on  $\sigma_{\textnormal{ideal}}$ will produce the same output with probability at least
$\left(\sqrt{q}-\PD{(\sigma_{\textnormal{ideal}},\sigma_{\textnormal{real}})}\right)^2$ on  
$\sigma_{\textnormal{real}}$. Since rewinding is noisy in the quantum case,
$\sigma_{\textnormal{ideal}}$ and $\sigma_{\textnormal{real}}$ 
are not as close to each other as $\textsf{D}_{\textnormal{real}}$ and
 $\textsf{D}_{\textnormal{ideal}}$. As a consequence, the straight line 
 part of the quantum inverter  cannot be shown working the same way than in the classical case.
We need to improve its success probability by allowing for many straight line executions.
This is the $\repeatqstraightline$ procedure that we discuss next.




\subsubsection{$\repeatqstraightline$.}
Notice  that in the classical case, the $\straightline$ procedure could have been used 
several times to improve the inverter's success probability given a snapshot 
of the adversary's state after round $R$. $\repeatqstraightline$ 
is a quantum version of such a multi-usage $\straightline$. The goal 
of $\repeatqstraightline^{\sdv}$ is to improve $\sdv$'s probability of success
on input $\sigma_{\textnormal{ideal}}$ so that the probability to return
$\pi_n^{-1}(y)$ exceeds noticeably $\PD^2{(\sigma_{\textnormal{ideal}},\sigma_{\textnormal{real}})}
\leq \frac{81}{2^{n-R}}$. We are now in a position to conclude that running $\repeatqstraightline^{\sdv}$ on input $\sigma_{\textnormal{real}}$
produces $\pi_n^{-1}(y)$ with non-negligible probability. However, repeating  many times $\sdv$'s 
strategy from round $R$ to the end of the opening phase is not straightforward.
In order to improve $\sdv$'s probability of success, we need 
to collect many distinct $((x_0,y_0),(x_1,y_1))$ such that 
$\pi_n(x_0)=y_0$ and $\pi_n(x_1)=y_1$, hoping to get lucky and find $(\pi_n^{-1}(y),y)$ somewhere.
Fortunately, 
Chiesa, Ma, Spooner, and Zhandry's \textit{state-repair approach} 
has been shown to deliver in similar situations~\cite{CMSZ21}.
The scenarios considered in  \cite{CMSZ21} are those where an adversary 
gets queried in many ways and where a successful reply needs to be collected
for many different queries. 
The applications covered in~\cite{CMSZ21}
are of two kinds: 1) Each possible query has good and bad replies
and we aim at finding a single good one and 2) Each query has good and bad replies
and we need to collect several good replies to different queries.
The authors provide an unconditional solution in the first case and
and a solution to the second case when the measurements producing 
the replies satisfies a \textit{collapsing property}.
A priori, 
we are in neither of those two situations: We want to collect many
good replies while the measurement does not seem to have the collapsing property.
We show that the the results in~\cite{CMSZ21} also applies unconditionally in case 2)
provided the number of good replies to each query 
is polynomial. We then show how to set
all parameters in $\qinv^\sdv_\pi(y)$ so that
$\repeatqstraightline^{\sdv}$'s probability to return $(\pi_n^{-1}(y),y)$ 
is larger than $ \frac{81}{2^{n-R}}+\Omega(\epsilon_\sdv(n)^2)$ in the ideal 
case,
which means that $\qinv^\sdv_\pi(y)$ then returns $\pi_n^{-1}(y)$ with 
probability $\nicefrac{1}{\poly}$ when
$\epsilon_{\sdv}(n)\geq \nicefrac{1}{\poly}$. Notice that this approach that allows to conclude
that the inverter works well is different than the classical analyses
in \cite{HR14} and \cite{NOVY98}, where an intermediary step
is needed to show that $\sdv$'s probability of success is \textit{well spread}.
It then guarantees that when switching from the ideal to the real cases,
the probability of success remains high enough. Our generalization of \cite{CMSZ21}
also implies that against quantum adversaries, the probability of success in the ideal case is 
well spread.


\subsection{{Post-Quantum \textsf{ZK} Arguments for all of \textsf{NP}}}

In order to show that there are post-quantum {perfect} \textsf{ZK} arguments for 
all of \textsf{NP}, we consider Blum's 
argument for the \textsf{NP}--complete language \textsc{hamcycle} \cite{Blum86}.
Establishing that the argument is perfect  \textsf{ZK} is not too difficult
since \novy\ is perfectly hiding. The challenging part consists in showing
the computational soundness of the argument when the commitments used 
satisfy the properties given in Theorem~\ref{th1}. 


Let $G=(V,E) \in \textsc{hamcycle}$ be an undirected graph.
Blum's argument asks the prover $\textsf{P}$ to commit on each bit $A^H_{i,j}$
of the adjacency matrix $A^H$ for graph $H=\sigma(G)$ where $\sigma$ is a random isomorphism privately
picked by the prover.
The verifier $\textsf{V}$ then asks
$\textsf{P}$ one out of two things: 1) Open all commitments to $A^H$ and 
announces $\sigma$ so that $\textsf{V}$ can check that $H=\sigma(G)$, 
or 2) Open  all and only commitments 
along the Hamiltonian cycle in $H$,
allowing $\textsf{V}$ to verify that $H\in \textsc{hamcycle}$.
Clearly, if $H$ is isomorphic to $G$ and $H\in \textsc{hamcycle}$ then $G\in\textsc{hamcycle}$. 
It is not difficult to show  that against classical adversaries, 
computational binding commitments (those for which only one bit value can be opened with success)  
achieve soundness error $\nicefrac{1}{2}+\negl$ with perfect completeness.

Establishing the post-quantum soundness of Blum's argument when the adjacency matrix of graph $G$ 
is committed upon using \novy\ is not straightforward. 
\novy\ is used in parallel $\frac{\sizeG(\sizeG-1)}{2}$ times, where 
$\sizeG$ denotes the number of vertices in $G$. With probability~$\frac{1}{2}$
all commitments are opened and with probability $\frac{1}{2}$ 
only a subset of them are opened. How does \novy\  behave when it is composed in parallel
in such a way? 
To conduct the analysis, we would need to show that the scheme induced by running
\novy\ in parallel is itself collapse-binding. Notice that Unruh shows in~\cite{U16a} that $N$--fold parallel composition of a
non-interactive collapse-binding classical commitment remains collapse-binding.
We show that \novy's unique openings and collapse-binding properties
allow to prove that its $N$--fold parallel composition is also collapse-binding
despite being interactive. Once \novy\ is shown to be collapse-binding in parallel, 
its open-once binding property can be used to show that 
the soundness
error of Blum's argument can be bounded above by $\nicefrac{14}{15}$. Notice that in \cite{U16a},
collapse-binding interactive  bit commitment schemes are defined but their properties
are left as open questions

\subsection{Roadmap}

In Sec.~\ref{sec:prelim}, we provide the groundwork upon which our results
are constructed. In particular, Sec.~\ref{subsec:WatrousSingleReflection}
analyses Watrous's rewinding technique when applied to
the specific scenario encountered in the $\qsearcher$ part of our
quantum inverter,
as mentioned above. In Sec.~\ref{subsec:ResultCMSZ21}, 
we generalizes the state-repair approach of 
Chiesa, Ma, Spooner, and Zhandry
so that it can be applied to the $\repeatqstraightline$ 
part of our quantum inverter. Sec.~\ref{subsec:BitCommitmentDef} introduces
the relevant security definitions and properties for bit commitment schemes.
In Sec.~\ref{sec:Inverter}, our quantum inverter is introduced.
The $\qsearcher$ and $\repeatqstraightline$ procedures are given and discussed in 
Sec.~\ref{subsec:qSearcher} and Sec.~\ref{subsec:RepeatqStraightLine}, respectively. 
Sec.~\ref{sec:Proof} establishes that \novy\ is open-once binding, sum-binding and collapse-binding. The 
$\qsearcher$ and $\repeatqstraightline$ procedures are  analysed in 
Sec.~\ref{subsec:AnalyisqSearcher} and Sec.~\ref{subsec:AnalysisRepeatqStraightLine}, respectively. 
In Sec.~\ref{subsec:SuccessProbabilityRunningTimeqInv}, the parameters are set to
get an efficient inverter for the post-quantum one-way permutation from any  good 
and efficient adversaries against \novy's sum-binding property. The main result appears
in Theorem~\ref{thm:qInvProbabilitySucess}. Sec.~\ref{sec:ZKArguments} shows that
Blum's perfect \textsf{ZK} arguments for \textsc{hamcycle} is post-quantum secure
when using \novy's commitment scheme. Sec.~\ref{subsec:zksound} establishes the computational
soundness of Blum's argument while Sec.~\ref{subsec:zk} shows that it is perfect zero-knowledge.

\section{Preliminaries and Groundwork}\label{sec:prelim}

\subsubsection{Quantum Registers and Linear Operators.}
In the following, we do not denote quantum registers and their corresponding 
Hilbert spaces differently. A quantum register $\reg{Q}$ also stands for its Hilbert space. The joint Hilbert space corresponding to quantum registers $\reg{Q}$ and $\reg{W}$
is denoted by $\reg{Q} \otimes \reg{W}$ or, more compactly,~$\reg{Q}\reg{W}$.
Moreover, we use $\reg{Q}\reg{W} \setminus \reg{W}$ to denote the remaining register $\reg{Q}$ after removing register $\reg{W}$. 
For quantum registers $\reg{Q}_i, \reg{Q}_{i+1},
\dots, \reg{Q}_\ell$, we write $\reg{Q}_{i:\ell}$ to denote their joint register.
The sets of linear operators, unitary operators, density matrices, and projectors
on register $\reg{Q}$ are denoted $\linop{\reg{Q}}$, $\Unitary{\reg{Q}}$, $\Dst{\reg{Q}}$ and $\Proj{\reg{Q}}$, respectively.
In the following, we denote by $O^*$ the conjugate transposed of linear operator $O$.

\subsubsection{Quantum Procedures.}

We use the term \emph{quantum procedure} to denote a quantum algorithm that takes a quantum state as input and produces a quantum state. Any measurement outcomes and classical randomness produced within the procedure are not accessible outside the procedure. Thus, from the outside, a quantum procedure acts as a quantum channel. If a quantum procedure $\textnormal{PROC}$ maps a quantum state $\sigma \in \Dst{\reg{Q}}$ to some quantum state
$\rho \in \Dst{\reg{S}}$, we write
\[
    \rho_{\reg{S}} \gets \textnormal{PROC}(\sigma_{\reg{Q}}) \enspace.
\]

\subsubsection{Measurement Procedures.}

Unlike a quantum procedure, a \emph{measurement procedure} takes a quantum state as input and produces a classical outcome in addition to a quantum state. For a measurement procedure $M$, we write
\[
    (r,\rho_{\reg{S}}) \gets M(\sigma_{\reg{Q}})
\]
to denote that $M$ on input $\sigma \in \Dst{\reg{Q}}$ produced classical outcome $r$ and post-measurement state 
$\rho \in \Dst{\reg{S}}$.

For $\sigma \in \Dst{\reg{R} \otimes \reg{Q}}$, where register $\reg{R}$ contains $m$ qubits, measuring the register $\reg{R}$ of $\sigma_{\reg{R}\reg{Q}}$ in the computational basis, producing outcome $r \in \{0,1\}^{m}$, is denoted by
\[
    (r,\rho_{\reg{R}\reg{Q}}) \gets \CompMeas_{\reg{R}}\!\left(\sigma_{\reg{R}\reg{Q}}\right) \enspace,
\]
where $\rho_{\reg{R}\reg{Q}}$ is the post-measurement state. We sometimes write
\[
    (r,\rho_{\reg{Q}}) \gets \CompMeas_{\reg{R}}\!\left(\sigma_{\reg{R}\reg{Q}}\right)
\]
to indicate that the register $\reg{R}$, which is in a computational basis state, has been traced out after the measurement.

We suppose that any quantum algorithm measuring a quantum register can only do it in the computational basis. Thus, a general $K$-outcomes measurement procedure~$M$, which takes input states on register~$\reg{Q}$ and outputs states on register~$\reg{S}$, is implemented by a unitary $U_{M} \in \Unitary{\reg{R} \otimes \reg{W} \otimes \reg{Q}}$, where $\reg{R}\reg{W}$ is a composite working register such that $\reg{S}$ is a subregister of $\reg{W}\reg{Q}$ and $\dim{\reg{R}} = K$.
The measurement procedure $M$ is then performed by applying $U_{M}$ before measuring register $\reg{R}$ (the result register) in the computational basis, we finish the procedure by tracing out all the registers other than the output register $\reg{S}$. In other words,
\[
    (r,\hat{\rho}_{\reg{W}\reg{Q}}) \gets \CompMeas_{\reg{R}}\!\left(U_M (\proj{0}_{\reg{RW}} \otimes \sigma_{\reg{Q}}) U^*_M\right)  \enspace,
\]
and the final output state of the measurement procedure is $\rho_{\reg{S}} = \trace[\reg{W}\reg{Q}\setminus\reg{S}]{\hat{\rho}_{\reg{W}\reg{Q}}}$. 

\subsubsection{Projective Measurements.}

A \emph{projective measurement} on register $\reg{Q}$ is a set $\Pi = \{\Pi_{k}\}_{k \in \mathcal{K}}$ of mutually orthogonal projectors satisfying
\[
    \sum_{k \in \mathcal{K}} \Pi_k = \idt_{\reg{Q}} \enspace.
\]
We write
\[
    (k,\rho_{\reg{Q}}) \gets \Pi(\sigma_{\reg{Q}})
\]
to denote that classical outcome $k$ and post-measurement state $\rho_{\reg{Q}}$ were obtained by measuring $\sigma_{\reg{Q}}$ with~$\Pi$.
A projective measurement does not, by itself, specify how it is implemented as a measurement procedure. However, the projective measurement $\Pi$ can be implemented as follows:
\[
    (k,\rho_{\reg{Q}}) \gets \CompMeas_{\reg{R}}\!\left( U_{\Pi} ( \ket{0}\!\bra{0}_{\reg{R}} \otimes \sigma_{\reg{Q}}) U_{\Pi}^{*} \right) \enspace,
\]
for a register $\reg{R}$ and a unitary $U_{\Pi} \in \Unitary{\reg{R} \otimes \reg{Q}}$ such that for every state $\ket{\psi}_{\reg{Q}}$,
\[
    U_{\Pi}\ket{0}_{\reg{R}} \ket{\psi}_{\reg{Q}} = \sum_{k \in \mathcal{K}} \ket{k}_{\reg{R}} \Pi_{k}\ket{\psi}_{\reg{Q}} \enspace.
\]
Such a unitary is called a \emph{unitary implementation} of $\Pi$.

\subsubsection{Circuit complexity.}

We measure the complexity of quantum algorithms by their circuit complexity, namely the number of elementary gates in their implementation. For a procedure $\textnormal{PROC}$ or a measurement procedure $M$, we denote their circuit complexities by $\sizeC{\textnormal{PROC}}$ and $\sizeC{M}$, respectively.

A unitary operator does not, by itself, specify how it is implemented as a quantum circuit. Whenever we write the circuit complexity $\sizeC{U}$ 
of a unitary $U$, we implicitly fix a circuit implementing $U$, and $\sizeC{U}$ denotes the number of elementary gates in that implementation.
Similarly, whenever we write $\sizeC{\Pi}$ for a projective measurement $\Pi$, we implicitly fix a unitary implementation $U_{\Pi}$ of $\Pi$, as defined above, and let $\sizeC{\Pi}$  denote the circuit complexity of that implementation.



In this work, we consider almost exclusively \emph{quantum polynomial-time} (QPT) algorithms, by which we mean uniform families of polynomial-size quantum circuits. Thus, a QPT algorithm $A$ is a family~$\{A_{n}\}_{n}$ of polynomial-size quantum circuits, one for each value of the security parameter~$n$. Moreover, when we write $\textsf{negl}$, we mean a negligible function, that is, a function that is \emph{eventually} smaller than every inverse-polynomial function.

\subsubsection{Almost-Projective Measurements.}

Some measurement procedures have the property that consecutive applications produce close outcomes with high probability. We formalize this notion with the following definition, previously introduced in~\cite{Zha20}.


\begin{definition}[\cite{Zha20}]\label{def:AlmostProjectiveMeasurement}
A real-valued measurement procedure $M$ on register $\reg{Q}$ is \emph{$(\varepsilon,\delta)$-almost-projective} if applying $M$ twice
in a row to any state $\sigma\in \Dst{\reg{Q}}$ produces measurement outcomes $p$, $q$ satisfying
\[
    \Pr_{\substack{(p,\rho) \gets M(\sigma)\\ (q,\tau) \gets M(\rho)}}[|p-q|\leq \varepsilon] \geq 1-\delta \enspace.
\]
\end{definition}
Note that this property only concerns the measurement outcomes and does not require the corresponding post-measurement states to be close.

\subsection{Fidelity and Purified Distance.}\label{subsec:FidelityPD}

\begin{definition}[\cite{J94}]\label{def:Fidelity}
    For any two states $\rho,\sigma \in \Dst{\reg{Q}}$, the \emph{fidelity} between $\sigma$ and $\rho$ is given by
    \begin{equation*}
        \F(\rho,\sigma) = \| \sqrt{\rho} \sqrt{\sigma} \|_{1} = \tr\left( \sqrt{\sqrt{\rho} \sigma \sqrt{\rho}} \right) \enspace.
    \end{equation*}
\end{definition}
\noindent In particular, when $\sigma = \ket{\psi}\!\bra{\psi}$ is a pure state,
\begin{equation*}
    \F(\rho,\ket{\psi}\!\bra{\psi}) = \sqrt{\bra{\psi}\rho\ket{\psi}} \enspace.
\end{equation*}
Moreover, consider the case where $\rho$ and $\sigma$ are both classical-quantum states (cq-states)
in the same basis:
\begin{align*}
    \rho &= \sum_x p_x \ket{x}\!\bra{x} \otimes \rho_x \enspace,\\
    \sigma &= \sum_x q_x \ket{x}\!\bra{x} \otimes \sigma_x \enspace.
\end{align*}
Then,
\begin{equation*}
    \F(\rho,\sigma) = \sum_{x} \sqrt{p_x q_x} \enspace \F(\rho_x,\sigma_x) \enspace.
\end{equation*}
We now introduce a notion of distance, named the \emph{purified distance}~\cite{M13}, which will prove helpful later in the analysis.
\begin{definition}\label{def:PurifiedDistance}
    For any two states $\rho, \sigma \in \Dst{\reg{Q}}$, the \emph{purified distance} between $\rho$ and $\sigma$ is defined as
    \begin{equation*}
        \PD(\rho,\sigma) \coloneq \sqrt{1-\F^2(\rho,\sigma)} \enspace.
    \end{equation*}
\end{definition}
\begin{lemma}[\cite{M13}]\label{lem:ContractivityPD}
    For any states $\rho,\sigma \in \Dst{\reg{Q}}$ and any completely positive trace-preserving (CPTP) map~$\mathcal{E}$,
    \begin{equation*}
        \PD\!\left(\mathcal{E}(\rho),\mathcal{E}(\sigma)\right) \leq \PD(\rho,\sigma) \enspace,
    \end{equation*}
    which makes the purified distance contractive under CPTP maps.
\end{lemma}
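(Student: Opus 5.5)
Our route to the contractivity of $\PD$ goes through Uhlmann's theorem, which turns the statement into monotonicity of fidelity: we show $\F(\mathcal{E}(\rho),\mathcal{E}(\sigma)) \geq \F(\rho,\sigma)$. Since $\PD(\rho,\sigma)=\sqrt{1-\F^2(\rho,\sigma)}$ is a decreasing function of $\F$, this yields $\PD(\mathcal{E}(\rho),\mathcal{E}(\sigma)) \leq \PD(\rho,\sigma)$.

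The first step is to write $\mathcal{E}$ via a Stinespring dilation. There is an isometry $V:\reg{Q}\to\reg{S}\otimes\reg{E}$ such that $\mathcal{E}(\omega)=\trace[\reg{E}]{V\omega V^*}$ for every $\omega\in\Dst{\reg{Q}}$. Monotonicity of $\F$ then reduces to two facts: fidelity is invariant under isometries, and fidelity does not decrease under partial trace.

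Invariance under an isometry $V$ follows from the definition in Definition~\ref{def:Fidelity}. Indeed, $\sqrt{V\rho V^*}=V\sqrt{\rho}V^*$ and $V^*V=\idt$, so
\[
\|\sqrt{V\rho V^*}\sqrt{V\sigma V^*}\|_1=\|V\sqrt{\rho}\sqrt{\sigma}V^*\|_1=\|\sqrt{\rho}\sqrt{\sigma}\|_1 .
\]

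The main step is monotonicity under partial trace. We invoke Uhlmann's theorem:
\[
\F(\rho,\sigma)=\max |\langle\psi|\phi\rangle| ,
\]
where the maximum runs over purifications $\ket{\psi},\ket{\phi}$ of $\rho$ and $\sigma$ on a common reference system. Take purifications $\ket{\psi},\ket{\phi}$ of $\rho_{\reg{S}\reg{E}}$ and $\sigma_{\reg{S}\reg{E}}$, with reference $\reg{R}$, that attain $\F(\rho_{\reg{S}\reg{E}},\sigma_{\reg{S}\reg{E}})$. The same vectors, now viewed with reference $\reg{E}\reg{R}$, are also purifications of $\rho_{\reg{S}}$ and $\sigma_{\reg{S}}$. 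The maximum defining $\F(\rho_{\reg{S}},\sigma_{\reg{S}})$ is therefore taken over a larger set of pairs, so
\[
\F(\rho_{\reg{S}},\sigma_{\reg{S}})\geq|\langle\psi|\phi\rangle|=\F(\rho_{\reg{S}\reg{E}},\sigma_{\reg{S}\reg{E}}) .
\]
If one prefers not to assume Uhlmann's theorem, it can be proved from the variational formula $\|X\|_1=\max_U|\trace{}{UX}|$ over unitaries $U$. Applying this to the canonical purifications $(\sqrt{\rho}\otimes\idt)\ket{\Omega}$ and $(\sqrt{\sigma}\otimes\idt)\ket{\Omega}$, where $\ket{\Omega}$ is the unnormalized maximally entangled vector, gives the formula. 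Since the paper cites~\cite{M13} for this lemma, we take Uhlmann's theorem as standard.

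Combining the two facts, $\F(\mathcal{E}(\rho),\mathcal{E}(\sigma))\geq\F(V\rho V^*,V\sigma V^*)=\F(\rho,\sigma)$. Applying the monotone decreasing map $t\mapsto\sqrt{1-t^2}$ then gives $\PD(\mathcal{E}(\rho),\mathcal{E}(\sigma))\leq\PD(\rho,\sigma)$, which is the claim. The only nontrivial ingredient is Uhlmann's theorem; everything else is bookkeeping.
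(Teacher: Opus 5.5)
Your proof is correct. Stinespring dilation, invariance of $\F$ under isometries, and monotonicity of $\F$ under partial trace via Uhlmann's theorem together give monotonicity of the fidelity, and hence of $\PD$. For the Uhlmann step you only need attainment for $\rho_{\reg{S}\reg{E}},\sigma_{\reg{S}\reg{E}}$ and the easy bound $|\langle\psi|\phi\rangle|\leq\F$ for $\rho_{\reg{S}},\sigma_{\reg{S}}$.

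The paper gives no proof of this lemma and only cites Tomamichel's thesis~\cite{M13}. As far as I recall, the result there is obtained in the same way, from monotonicity of the (generalized) fidelity, in the more general setting of subnormalized states and trace-non-increasing CP maps.
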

\begin{lemma} \label{lem:PurifiedDistance}
    For any projector $\Pi \in \Proj{\reg{Q}}$ and any states $\rho, \sigma \in \Dst{\reg{Q}}$,
    \begin{equation*}
        \left| \sqrt{\tr(\Pi \rho)} - \sqrt{\tr(\Pi \sigma)} \right| \leq \PD(\rho,\sigma) \enspace.
    \end{equation*}
\end{lemma}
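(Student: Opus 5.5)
The plan is to reduce the statement to a two-outcome measurement and then apply the fidelity characterisation of the purified distance. First I would apply the CPTP map $\mathcal{E}(X)=\tr(\Pi X)\ket{0}\!\bra{0}+\tr((\idt-\Pi)X)\ket{1}\!\bra{1}$ to both states. This map is the measurement $\{\Pi,\idt-\Pi\}$ followed by discarding the post-measurement state. Writing $p=\tr(\Pi\rho)$ and $q=\tr(\Pi\sigma)$, Lemma~\ref{lem:ContractivityPD} gives
\[
\PD(\mathcal{E}(\rho),\mathcal{E}(\sigma))\leq \PD(\rho,\sigma) \enspace .
\]
It therefore suffices to prove the claim for the two diagonal (classical) states $\mathrm{diag}(p,1-p)$ and $\mathrm{diag}(q,1-q)$.

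For these cq-states with trivial quantum part, the formula for fidelity of cq-states stated after Definition~\ref{def:Fidelity} gives
\[
\F=\sqrt{pq}+\sqrt{(1-p)(1-q)} \enspace .
\]
I then parametrize $\sqrt{p}=\cos\alpha$ and $\sqrt{q}=\cos\beta$ with $\alpha,\beta\in[0,\pi/2]$. This makes $\sqrt{1-p}=\sin\alpha$ and $\sqrt{1-q}=\sin\beta$, so $\F=\cos(\alpha-\beta)$ and hence $\PD=\sqrt{1-\F^2}=|\sin(\alpha-\beta)|$.

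The remaining inequality is $|\cos\alpha-\cos\beta|\leq|\sin(\alpha-\beta)|$. By the sum-to-product identities it reads
\[
2\,|\sin\tfrac{\alpha+\beta}{2}|\,|\sin\tfrac{\alpha-\beta}{2}| \leq 2\,|\sin\tfrac{\alpha-\beta}{2}|\,|\cos\tfrac{\alpha-\beta}{2}| \enspace ,
\]
so I need $\sin\frac{\alpha+\beta}{2}\leq\cos\frac{\alpha-\beta}{2}$. This holds because $\cos\frac{\alpha-\beta}{2}-\sin\frac{\alpha+\beta}{2}=\cos\frac{\alpha-\beta}{2}-\cos(\frac{\pi}{2}-\frac{\alpha+\beta}{2})$, and the angle $\frac{\alpha+\beta}{2}$ lies in $[0,\pi/2]$, where the difference is nonnegative. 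Concretely, the difference factors as $2\sin(\frac{\pi}{4}-\frac{\beta}{2})\sin(\frac{\pi}{4}-\frac{\alpha}{2})\ge 0$, since $\alpha,\beta\le\pi/2$.

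A more algebraic route avoids the trigonometry. Set $a=\sqrt{p}$, $b=\sqrt{q}$, $a'=\sqrt{1-p}$, $b'=\sqrt{1-q}$. The identity $(ab'-a'b)^2=1-(ab+a'b')^2$ holds because $(a^2+a'^2)(b^2+b'^2)=1$, so $\PD=|ab'-a'b|$. The claim $|a-b|\leq|ab'-a'b|$ then follows from the trigonometric computation above.

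The only delicate step is this final two-outcome inequality. I expect it to be routine once the angle parametrization is in place.
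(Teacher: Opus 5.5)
Your proof is correct and follows the paper's route: you reduce to the two-outcome classical states via contractivity of $\PD$ under the measurement channel, then use the fidelity $\sqrt{pq}+\sqrt{(1-p)(1-q)}$. The only difference is the final elementary inequality. The paper disposes of it more quickly with the identity $1-F^2=(\sqrt p-\sqrt q)^2+2\sqrt{pq}\,(1-F)$ and $F\le 1$, while your angle parametrization reaches the same conclusion through the factorization $2\sin(\frac{\pi}{4}-\frac{\beta}{2})\sin(\frac{\pi}{4}-\frac{\alpha}{2})\ge 0$, which is correct.
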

\begin{proof} 
    Consider the following CPTP map
    \begin{equation*}
        \Phi(\tau) = \tr(\Pi \tau) \ket{1}\!\bra{1} + \tr((\idt -\Pi) \tau) \ket{0}\!\bra{0} \enspace.
    \end{equation*}
    Denote $p \coloneq \tr(\Pi\rho)$ and $q \coloneq \tr(\Pi\sigma)$. As $\Phi(\rho),\Phi(\sigma)$ are
    diagonal in the same basis, it holds that
    \begin{equation*}
        \F\bigl(\Phi(\rho),\Phi(\sigma)\bigr)=\sqrt{pq}+\sqrt{(1-p)(1-q)} \eqcolon F \enspace .
    \end{equation*}
    It follows that
    \begin{equation*}
        \PD^2\bigl(\Phi(\rho),\Phi(\sigma)\bigr)=1-F^2 = (\sqrt p-\sqrt q)^2+2\sqrt{pq}\,(1-F)\;\ge\;(\sqrt p-\sqrt q)^2 \enspace ,
    \end{equation*}
    where the second equality is a result of direct calculation and the last inequality follows from $F\leq 1$. Hence,
    \begin{equation*}
        \left| \sqrt{p} - \sqrt{q} \right| \leq \PD\bigl(\Phi(\rho),\Phi(\sigma)\bigr) \leq \PD(\rho,\sigma) \enspace,
    \end{equation*}
    where the final inequality follows from contractivity.
    \qed
\end{proof}
Note that Lemma~\ref{lem:PurifiedDistance} provides a non-trivial advantage over the standard argument using the trace distance. Indeed, the Fuchs--van de Graaf inequality~\cite{FV99} together with Helström's theorem~\cite{helstrom} would only give 
\begin{equation*}
    \left| \tr(\Pi \rho) - \tr(\Pi \sigma) \right| \leq \PD(\rho,\sigma) \enspace.
\end{equation*}
In particular, when $\tr(\Pi \rho)$ corresponds to a success probability that we wish to lower bound, Lemma~\ref{lem:PurifiedDistance} provides a quadratic advantage in the amount of purified distance error that can be tolerated. In particular, it implies 
\begin{equation*}
    \tr(\Pi \rho) \geq \left( \max\left\{\PD(\rho,\sigma) - \sqrt{\tr(\Pi \sigma)},0\right\}   \right)^2 \enspace,
\end{equation*}
whereas the standard argument only gives
\begin{equation*}
    \tr(\Pi \rho) \geq  \max\left\{\PD(\rho,\sigma) - \tr(\Pi \sigma),0\right\} \enspace.
\end{equation*}
Thus, the former is useful as long as
\begin{equation*}
    \PD(\rho,\sigma) < \sqrt{\tr(\Pi \sigma)} \enspace,
\end{equation*}
rather than requiring
\begin{equation*}
    \PD(\rho,\sigma) < \tr(\Pi \sigma) \enspace.
\end{equation*}

\subsection{Watrous Single-Reflection}\label{subsec:WatrousSingleReflection}


We will use a simple procedure inspired by the quantum rewinding technique of Watrous~\cite{W09}. Let
\[
    \Delta = \{ \Delta_{1} , \Delta_{0} \}
\]
be a two-outcome projective measurement on register $\reg{Q}$, where a \emph{successful} event corresponds to measuring state $\ket{\psi}_{\reg{Q}}$ with $\Delta$ and obtaining outcome $1$. Moreover, let $\Gamma$ be a projector on register $\reg{Q}$ whose image contains the state $\ket{\psi}_{\reg{Q}}$.
We measure the state with $\Delta$; if outcome $0$ is obtained, then we apply a single reflection about the image of $\Gamma$.
Unlike Watrous's rewinding procedure, we do not subsequently repeat the projective measurement and the reflection.

\SetAlgorithmName{Procedure}{}{}
\begin{algorithm}[H]
\SetNoFillComment
\DontPrintSemicolon
\SetAlgoLined
\KwInput{$\sigma \in \Dst{\reg{Q}}$}
\KwParameters{$\Gamma \in \Proj{\reg{Q}}$, $\Delta = \{\Delta_{1},\Delta_{0}\}$ a two-outcome projective measurement on $\reg{Q}$}
$(k,\rho) \gets \Delta(\sigma)$ \label{step:WatReflStep1}\;
\lIf{$k = 0$}{$\rho \leftarrow (2\Gamma - \idt)\rho(2\Gamma - \idt)$}
{\Return $\rho$}
\caption{$\textnormal{WatRefl}^{\Gamma}_{\Delta}(\sigma)$}
\label{proc:WatRefl}
\end{algorithm}

\noindent Recall that, by convention, the outcome of the projective measurement in Step~\ref{step:WatReflStep1} is not accessible outside the procedure.
Therefore, for $\sigma \in \Dst{\reg{Q}}$,
\begin{equation*}
    \textnormal{WatRefl}^{\Gamma}_{\Delta}(\sigma) = \Delta_{1} \sigma \Delta_{1} + (2\Gamma - \idt)\Delta_{0} \sigma \Delta_{0}(2\Gamma - \idt) \enspace.
\end{equation*}
The intuition behind this procedure is the following. Consider the toy case scenario where $\ket{\psi}$, $\Delta_{0}\ket{\psi}$ and $\Delta_{1}\ket{\psi}$ all lie in the same two-dimensional space, and where $\Gamma$ has rank one on this space, as depicted in the following figure (inspired from~\cite{W09}).
\begin{figure}[H]
    \centering
    \resizebox{0.4\linewidth}{!}{\newif\ifFirstQuadrant
\FirstQuadranttrue  
\begin{tikzpicture}[scale=3]

\def\r{1.4}          
\def\th{25}          
\def\arcr{0.5}      

\coordinate (O) at (0,0);

\coordinate (H) at (\r,0);                    
\coordinate (V) at (90:\r);                  
\coordinate (P) at ({90-\th}:\r);            
\coordinate (R) at ({90-2*\th}:\r);          

\ifFirstQuadrant
    \draw (H) arc[start angle=0, end angle=90, radius=\r];
\else
    \draw (O) circle (\r);
\fi

\tikzset{
  vec/.style={-{Stealth[length=2.2mm,width=1.4mm]}, line width=0.45pt}
}

\draw[vec] (O) -- (H);
\draw[vec] (O) -- (V);
\draw[vec] (O) -- (P);
\draw[vec] (O) -- (R);

\draw[line width=0.35pt] ({90-\th}:\arcr) arc[start angle={90-\th}, end angle=90, radius=\arcr];
\draw[line width=0.35pt] ({90-2*\th}:\arcr) arc[start angle={90-2*\th}, end angle={90-\th}, radius=\arcr];

\node at ({90-\th/2}:{\arcr+0.09}) {$\theta$};
\node at ({90-3*\th/2}:{\arcr+0.09}) {$\theta$};

\node[anchor=south] at (V)
{$\displaystyle \frac{\Delta_{0}\ket{\psi}}{\|\Delta_{0}\ket{\psi}\|}$};

\node[anchor=west] at (H) 
{$\displaystyle \frac{\Delta_{1}\ket{\psi}}{\|\Delta_{1}\ket{\psi}\|}$};

\node[anchor={270-2*\th}] at ($(O)!1.06!(R) + (0.15,-0.10)$)
{$\displaystyle
(2\Gamma-\idt)
\frac{\Delta_{0}\ket{\psi}}{\|\Delta_{0}\ket{\psi}\|}$};

\node[anchor={270-\th}] at ($(O)!1.04!(P)$)
{$\ket{\psi}$};


\ifFirstQuadrant
\else
    \pgfresetboundingbox
    \path[use as bounding box]
        ({-\r-0.2},{-\r-0.2})
        rectangle
        ({\r+0.2},{\r+0.2});
\fi

\end{tikzpicture}}
    \caption{\small The action of the reflection $(2\Gamma - \idt)$ on the state $\frac{\Delta_{0}\ket{\psi}}{\|\Delta_{0}\ket{\psi}\|}$.
    The angle between this state and its reflection is $2\theta$, where $\sin^{2}(\theta) = \|\Delta_{1}\ket{\psi}\|^2$ is the probability of a successful event.}
    \label{fig:WatrousReflection}
\end{figure}
Denote by $p \coloneq \|\Delta_{1}\ket{\psi}\|^2$ the probability of a successful event. In the two-dimensional toy case of Fig.~\ref{fig:WatrousReflection}, the closer $p$ is to $\nicefrac{1}{2}$, the closer the reflection of the state obtained after an unsuccessful event is to the state that would have been obtained after a successful event. In particular, when $p = \nicefrac{1}{2}$, we have $\theta = \nicefrac{\mathlarger{\pi}}{4}$ and the reflection maps $\frac{\Delta_{0}\ket{\psi}}{\|\Delta_{0}\ket{\psi}\|}$ exactly to $\frac{\Delta_{1}\ket{\psi}}{\|\Delta_{1}\ket{\psi}\|}$. The situation is more subtle in higher-dimensional spaces. The reflection may move the state outside the two-dimensional plane spanned by $\Delta_{0}\ket{\psi}$ and $\Delta_{1}\ket{\psi}$, so having $p$ close to $\nicefrac{1}{2}$ is no longer sufficient by itself. Nevertheless, the quality of the reflection can be controlled through the quantity $\| (\idt - \Gamma) \Delta_1 \ket{\psi} \|^2$, as formalised in the following lemma.


\begin{lemma} \label{lem:WatRefl}
    Let $\Delta = \{\Delta_{1},\Delta_{0}\}$ be a projective measurement and $\Gamma$ be a projector. For every state $\ket{\psi}$ such that $\Gamma\ket{\psi} = \ket{\psi}$ and $\|\Delta_{1}\ket{\psi}\| > 0$, define
    \begin{equation*}
        \ket{\textnormal{Success}} \coloneq \frac{\Delta_{1}\ket{\psi}}{\|\Delta_{1}\ket{\psi}\|} \enspace.
    \end{equation*}
    Then,
    \begin{equation*}
        \F^{2}\!\left( \textnormal{WatRefl}^{\Gamma}_{\Delta}(\ket{\psi}\!\bra{\psi}), \ket{\textnormal{Success}}\!\bra{\textnormal{Success}}  \right) \geq 4 \| (\idt - \Gamma) \Delta_1 \ket{\psi} \|^2 \enspace.
    \end{equation*}
\end{lemma}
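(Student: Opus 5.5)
The plan is a direct computation. The fidelity is taken against a pure state, so the identity $\F(\rho,\ket{\phi}\!\bra{\phi})=\sqrt{\bra{\phi}\rho\ket{\phi}}$ recalled in Sec.~\ref{subsec:FidelityPD} turns the left-hand side into an expectation value. Using the closed form of $\textnormal{WatRefl}^{\Gamma}_{\Delta}$ given after Procedure~\ref{proc:WatRefl}, I write
\begin{equation*}
    \F^{2}\!\left( \textnormal{WatRefl}^{\Gamma}_{\Delta}(\ket{\psi}\!\bra{\psi}), \ket{\textnormal{Success}}\!\bra{\textnormal{Success}} \right) = \left|\bra{\textnormal{Success}}\Delta_1\ket{\psi}\right|^2 + \left|\bra{\textnormal{Success}}(2\Gamma-\idt)\Delta_0\ket{\psi}\right|^2 \enspace.
\end{equation*}
I then evaluate the two terms separately. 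Set $p \coloneq \|\Delta_1\ket{\psi}\|^2>0$ and $q \coloneq \|(\idt-\Gamma)\Delta_1\ket{\psi}\|^2$.

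The first term equals $p$, directly from the definition of $\ket{\textnormal{Success}}$.

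For the second term, I use the hypothesis $\Gamma\ket{\psi}=\ket{\psi}$, which gives $(2\Gamma-\idt)\ket{\psi}=\ket{\psi}$. Substituting $\Delta_0\ket{\psi}=\ket{\psi}-\Delta_1\ket{\psi}$ then yields
\begin{equation*}
    (2\Gamma-\idt)\Delta_0\ket{\psi} = \ket{\psi} - (2\Gamma-\idt)\Delta_1\ket{\psi} = \Delta_0\ket{\psi} + 2(\idt-\Gamma)\Delta_1\ket{\psi} \enspace .
\end{equation*}
The vector $\ket{\textnormal{Success}}$ lies in the image of $\Delta_1$, so it is orthogonal to $\Delta_0\ket{\psi}$. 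Since $\idt-\Gamma$ is a projector, the second term therefore equals
\begin{equation*}
    \left|\frac{2\bra{\psi}\Delta_1(\idt-\Gamma)\Delta_1\ket{\psi}}{\sqrt{p}}\right|^2 = \frac{4q^2}{p} \enspace .
\end{equation*}

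The conclusion follows from AM--GM: $p + \frac{4q^2}{p} \geq 2\sqrt{4q^2} = 4q$, which is exactly the claimed bound.

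The only step needing care is the rewriting of $(2\Gamma-\idt)\Delta_0\ket{\psi}$. This is where the hypothesis $\Gamma\ket{\psi}=\ket{\psi}$ enters, and where the cross term $\bra{\textnormal{Success}}\Delta_0\ket{\psi}$ must be seen to vanish. I expect no real obstacle beyond this. It is worth checking that the bound is tight in the two-dimensional toy case with $p=\nicefrac{1}{2}$, where one computes $q=\nicefrac{1}{4}$ and gets fidelity $1$.
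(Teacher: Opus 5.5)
Your proof is correct and follows the paper's argument. Both use the pure-state fidelity formula and the hypothesis $\Gamma\ket{\psi}=\ket{\psi}$ to reduce the squared fidelity to $p+\frac{4q^2}{p}$. The paper then concludes by rewriting this as $4q+\frac{(p-2q)^2}{p}$, which is the same inequality as your AM--GM step.
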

\begin{proof}
    Define
    \begin{equation*}
        p \coloneq \| \Delta_{1} \ket{\psi} \|^2 \text{ and } q \coloneq \| (\idt - \Gamma) \Delta_1 \ket{\psi} \|^2 \enspace.
    \end{equation*}
    A direct calculation yields
    \begin{equation*}
        \F^{2}\!\left( \textnormal{WatRefl}^{\Gamma}_{\Delta}(\ket{\psi}\!\bra{\psi}), \ket{\textnormal{Success}}\!\bra{\textnormal{Success}}  \right) = p + \frac{\left|\bra{\psi}\Delta_{1}(2\Gamma-\idt)\Delta_{0}\ket{\psi}\right|^{2}}{p} \enspace.
    \end{equation*}
    Using that $(\idt - \Gamma)\ket{\psi} = 0$ and $\Delta_{0} = \idt - \Delta_{1}$, we obtain
    \begin{equation*}
        \bra{\psi}\Delta_{1}(2\Gamma-\idt)\Delta_{0}\ket{\psi} = 2q \enspace.
    \end{equation*}
    Thus,
    \begin{equation*}
        \F^{2}\!\left( \textnormal{WatRefl}^{\Gamma}_{\Delta}(\ket{\psi}\!\bra{\psi}), \ket{\textnormal{Success}}\!\bra{\textnormal{Success}}  \right) = p + \frac{4q^2}{p} = 4q + \frac{(p-2q)^2}{p} \geq 4q \enspace.
    \end{equation*}
    \qed
\end{proof}
Note that for any projective measurement $\Delta = \{\Delta_{1},\Delta_{0}\}$, projector $\Gamma$ and state $\ket{\psi}$ such that \mbox{$\Gamma \ket{\psi} = \ket{\psi}$}, $\| (\idt - \Gamma) \Delta_1 \ket{\psi} \|^2 \in [0,\nicefrac{1}{4}]$. Moreover, this quantity is close to $\nicefrac{1}{4}$ if and only if $\| \Delta_1 \ket{\psi} \|^2$
is close to $\nicefrac{1}{2}$ and if the state $\Delta_{1}\ket{\psi}$ is approximately halfway in the image of~$\Gamma$. If those conditions are satisfied, then the $\textnormal{WatRefl}$ procedure can be used to produce a state close to having measured $\ket{\psi}$ with $\Delta$ and having obtained outcome~1. Contrary to Watrous's rewinding~\cite{W09}, we do not need to have the probability of a successful event be (close to being) independent of the input state, as our procedure is tailored to succeed on specific input states which satisfy the above conditions.

\subsection{Efficient Quantum Algorithms for Repeated Experiences with Answer Recording} \label{subsec:ResultCMSZ21}
In this section, we introduce a recording theorem that will later be used in the construction of the inverter. Our result provides another variant of the rewinding theorems of \cite{CMSZ21}.

One key component of the framework proposed in \cite{CMSZ21} is a procedure for repairing the state after the application of a damaging measurement, which we refer to as the \textit{state-repair procedure}. This procedure already applies to projective measurements with multiple outcomes, which represent the damaging measurements (formally defined below) that produce the answers we want to record.
However, the rewinding theorems derived in \cite{CMSZ21} impose additional restrictions on what can be recorded during repeated executions. In their first rewinding theorem, the damaging measurement is effectively reduced to a binary success/failure measurement, and therefore the answer obtained in each successful execution is not retained. Ref.~\cite{CMSZ21} also gives a recording variant for multi-outcome executions, but this requires an additional collapsing property.

Here, we give another variant
that allows the answers of successful executions to be recorded without assuming that the underlying experiment is collapsing. Instead, we require that, for every challenge \(h\), the number of accepting answers
be polynomially bounded.  Thus, although the original answer set may be large, the damaging measurement that must be repaired has only
polynomially many
outcomes.
Let $\mathrm{H}$ be a challenge set and $\mathsf A$ be an answer set.  For each $h \in \mathrm H$, let $\mathbf{Acc}(h) \subseteq\mathsf A$ denote the set of accepting 
answers for challenge $h$. Let $\reg{R}$ be a register containing the answer and $\reg{S}$ be a working register. 
For each $h \in \mathrm H$, let $U_h$ be the unitary acting on $\reg{R} \otimes \reg{S}$ that produces the answer in register $\reg{R}$.
We define the following generic experiment $\Expt$ 
with input state $\sigma$ on  registers $\reg{R} \otimes \reg{S}$.


\SetAlgorithmName{Experiment}{}{}
\begin{algorithm}[H]
\SetNoFillComment
\DontPrintSemicolon
\SetAlgoLined
\KwInput{$\sigma \in \Dst{\reg{R} \reg{S}}$}
Pick $h \in_{\textnormal{R}} \mathrm{H}$\;
$\rho \gets U_h\sigma(U_h)^{*}$\;
$(r,\tau) \gets \CompMeas_{\reg{R}}\!\left( \rho \right)$\;
\leIf{$r \in \mathbf{Acc}(h) $}{\Return{$1$}}{\Return{$0$}}
\caption{$\Expt(\sigma)$}
\label{exp:GeneralExpt}
\end{algorithm}

\noindent We denote by $\sizeC{\Expt}$ the circuit complexity of a coherent implementation of the experiment $\Expt$. The value of $ \sizeC{\Expt}$ depends on the chosen implementation: It includes the cost of applying the unitary $U_h$, coherently implementing the projective measurement the computational basis measurement of register $\reg{R}$,
and reversibly checking whether the extracted answer is accepted, as well as any costs associated with generating the challenge and controlling these operations. We therefore leave $ \sizeC{\Expt}$ as an implementation-dependent parameter in this section, to be specified in each application. 

Ideally, we would like to run $\Expt$ many times on the same input state $\sigma$ and record the answers obtained in successful executions, namely those for which the experiment outputs $1$. Equivalently, we would like to retain the outcomes of the damaging measurements whenever they belong to the set of accepting answers corresponding to the randomly sampled challenge.

In the classical setting, repeatedly running an experiment on the same input is straightforward: one can simply rewind the computation and execute it again. In the quantum setting, it is no longer possible to rewind to the exact same state in general.
An execution of $\Expt$ involves a measurement, which can irreversibly disturb the state. After several executions, the system may therefore drift to a state on which the experiment succeeds with a much smaller probability, and it may become pointless to continue repeating the experiment.

In the remainder of this section, we construct an efficient procedure that allows us to repeatedly execute the experiment and record its accepting answers, provided that the number of accepting answers for every challenge is polynomially bounded. The construction relies on two tools from \cite{CMSZ21}: value estimation and state repair. We briefly recall the intuition behind these tools, their formal statements are given in App.~\ref{proof:thm1}. First, we need a quantity that reflects how well the current state performs in the experiment, and which can be approximately restored after each execution of $\Expt$. The natural candidate is the success probability of the experiment itself. We write the success probability on average over challenges as follows 
\begin{equation*}
    \Pr[\Expt(\rho)=1]= \operatorname{Tr}[\Xi\rho]\enspace, \text{ where} \quad\Xi\coloneq \frac{1}{|\mathrm H|} \sum_{h \in \mathrm H} (U_h)^{*} \left(\sum_{r \in \mathbf{Acc}(h)} \proj{r}_{\reg{R}} \otimes \idt_{\reg{S}}\right) U_h \enspace.
\end{equation*}
Ideally, if we could measure $\Xi$ projectively (i.e. measure in its eigenbasis and output the corresponding eigenvalue), then this eigenvalue would provide a stable estimate of how well the current state performs in the experiment. In particular, if that measurement were immediately repeated, the same eigenvalue would be obtained. Unfortunately, implementing such a measurement directly may not be efficient.

The procedure $\textnormal{ValEst}\coloneq\textnormal{ValEst}_{\Expt}^{\varepsilon,\delta}$ introduced in~\cite{CMSZ21} provides an efficient alternative. It takes as input a state~$\rho$ and outputs an estimate $p$ for the success probability of the experiment which satisfies $\mathbb E[p]=\Pr[\Expt(\rho)=1]$. Moreover, it is $(\varepsilon,\delta)$-almost-projective (see Definition~\ref{def:AlmostProjectiveMeasurement}).
Thus, if $\textnormal{ValEst}$ is applied twice consecutively, the two outcomes are $\varepsilon$-close except with probability~$\delta$. 



The almost-projective property of $\textnormal{ValEst}$ only guarantees that the estimate (the output of $\textnormal{ValEst}$) is almost preserved under another execution of $\textnormal{ValEst}$. It does not guarantee that the estimate remains close to its original value after an arbitrary measurement. In particular, the measurement performed when executing $\Expt$ may significantly disturb the state, so that a subsequent execution of $\textnormal{ValEst}$ may return a completely value. This motivates the second tool, the procedure 
$$\textnormal{Repair}\coloneq\textnormal{Repair}^{T}_{V^{\varepsilon,\delta},P} \enspace,$$
introduced in \cite{CMSZ21}, which takes as parameters an integer $T$, a $K$-outcome damaging projective measurement~$P$ and an $(\varepsilon,\delta)$-almost-projective measurement $V^{\varepsilon,\delta}$.
The procedure 
$\textnormal{Repair}$ is applied in the following context. Given a state $\rho$, we first apply the measurement $V^{\varepsilon,\delta}$ to obtain a classical outcome $p$.
We then apply the damaging measurement $P$ to the resulting state, obtaining an integer $k$ and a post-measurement state $\hat\rho$. The procedure $\textnormal{Repair}$ takes $(p,k,\hat{\rho})$ as input. It attempts to repair $\hat\rho$ and restore the state so that a new execution of $V^{\varepsilon,\delta}$ gives approximately the same value $p$ as before the damage. It does so by alternating between appropriate measurements derived from $P$ and $V^{\varepsilon,\delta}$, with parameter $T$ bounding the number of repair iterations. More precisely, the combination ``estimate, damage, repair" is $(\varepsilon',\delta')$-almost-projective, where
\begin{equation*}
    \varepsilon' = 2 \varepsilon \quad \text{and} \quad \delta' = K\left(\delta+\frac1T\right)+4\sqrt{\delta} \enspace.
\end{equation*}
In our case, $V^{\varepsilon,\delta}$ is set to be $\textnormal{ValEst}_{\Expt}^{\varepsilon,\delta}$. 
Nevertheless, we emphasise that the almost-projective measurement $V^{\varepsilon,\delta}$ needs not always be $\textnormal{ValEst}$; however, it should serve the same purpose, namely, to provide an estimate~$p$ while satisfying the almost-projective property.
To simplify the notation, we write $\textnormal{Repair}$ as 
$$ \textnormal{Repair}^{\varepsilon,\delta,T}_{\Expt,P}(p,k,\hat{\rho})\coloneq\textnormal{Repair}^{T}_{\textnormal{ValEst}_{\Expt}^{\varepsilon,\delta},P}(p,k,\hat{\rho}) \enspace.$$
Since we are only interested in recording the accepting answers (i.e. the outcomes $r\in\mathbf{Acc}(h)$), we do not need to distinguish between the different rejecting answers. We therefore define the following damaging measurement which records an accepting answer when one is obtained, and groups all rejecting answers into a single outcome.
\begin{equation}
\label{recording_measurement}
     P_{h}^{\textnormal{rec}}:= \left\{P_h^r\right\}_{r \in \mathbf{Acc}(h)} \cup \{P_h^{\perp}\}\enspace, 
\end{equation}
where
\begin{equation*}
    P_h^r \coloneq U_h\left(\ket{r}\!\bra{r}_{\reg{R}}\otimes\idt_{\reg{S}}\right) (U_h)^* \quad \text{for} \; r \in  \mathbf{Acc}(h),\quad\text{and}  \quad P_h^{\perp}:=\sum_{r\notin \mathbf{Acc}(h)} U_h\left(\ket{r}\!\bra{r}_{\reg{R}}\otimes\idt_{\reg{S}}\right) (U_h)^* \enspace.
\end{equation*}
Notice that $P_h^{\mathrm{rec}}$ is a projective measurement with $|\mathbf{Acc}(h)|+1$ outcomes.
After measuring with $P_{h}^{\textnormal{rec}}$, when the outcome is $r\in\mathbf{Acc}(h)$, the execution of $\Expt$ is successful and we record the answer $r$. When the outcome is $\perp$, the execution fails and we do not record which rejecting answer was obtained. This selective recording is useful because, as mentioned earlier, the almost-projective property of the repair procedure depends on the number of outcomes of the damaging measurement.
Hence, the damaging measurement considered here has $|\mathbf{Acc}(h)|+1$ outcomes, in comparison to the computational basis measurement of register~$\reg{R}$, which has $|\mathsf{A}|$ outcomes.
We therefore consider the following equivalent formulation of the experiment $\Expt$.

\SetAlgorithmName{Experiment}{}{}
\begin{algorithm}[H]
\SetNoFillComment
\DontPrintSemicolon
\SetAlgoLined
\KwInput{$\sigma \in \Dst{\reg{R} \reg{S}}$}
Pick $h\in_{\textnormal{R}}\mathrm H$\;
$(k,\rho) \gets P^{\textnormal{rec}}_h(\sigma)$\;
\leIf{$k \neq \perp $}{\Return{$1$}}{\Return{$0$}}
\caption{$\Expt(\sigma)$ \hfill \textnormal{Equivalent formulation of Experiment~\ref{exp:GeneralExpt}}\hspace*{-2\algomargin}}
\label{exp:GeneralExptEquivalentFormulation}
\end{algorithm}

\noindent We can now combine $\textnormal{ValEst}$, the recording measurement, and $\textnormal{Repair}$ into the following rewinding algorithm $\mathrm{Rec}^{\varepsilon,\delta,T,N}_{\Expt}$.


\SetAlgorithmName{Algorithm}{}{}
\begin{algorithm}[H]
\SetNoFillComment
\DontPrintSemicolon
\SetAlgoLined
\KwInput{$\sigma \in \Dst{\reg{R}  \reg{S}}$} 
\KwParameters{$\varepsilon \in (0,\nicefrac{1}{2}], \delta \in (0,\nicefrac{1}{2}], T \in \mathbb{N},N \in \mathbb N$, and an experiment $\Expt$ of the form defined in Experiment~\ref{exp:GeneralExpt}. 
The set $\mathrm H$ and the projective measurement $P^{\textnormal{rec}}_{h}$ for every $h \in \mathrm{H}$ are derived from $\Expt$ as explained above.}
Let $\mathfrak{R}$ be an empty list \; 
$\tau_1 \gets \sigma$\;
\For{$j=1$ \KwTo $N$}{
$(p_j,\sigma_j)$ $\gets$ $\textnormal{ValEst}_{\Expt}^{\varepsilon,\delta}(\tau_j)$\;\label{step:ValEstinRepeatExpt}
Pick $h\in_{\textnormal{R}}\mathrm{H}$\;
$(k_j,\hat{\sigma}_{j})$ $\gets$ $P^{\textnormal{rec}}_{h}(\sigma_j)$\;
\lIf{$k_j\neq\perp$}{\textbf{add} $k_j$ \textbf{to} $\mathfrak{R}$}
$\tau_{j+1}$ $\gets$ $\textnormal{Repair}^{\varepsilon,\delta,T}_{\Expt,P^{\textnormal{rec}}_{h}}(p_j,k_j,\hat{\sigma}_j)$
}
\Return{$\mathfrak{R}$}
\caption{$\mathrm{Rec}^{\varepsilon,\delta,T,N}_{\Expt}(\sigma)$}
\label{algo:Rec}
\end{algorithm}


\noindent We now formally state the recording theorem.
\begin{theorem}
\label{thm:rec_rewinding}
Let $\Expt$ be an experiment with challenge set $\mathrm H$ and answer set $\mathsf A$. For each $h \in \mathrm H$, let $\mathbf{Acc}(h) \subseteq \mathsf A$ denote the set of accepting answers to challenge $h$ and let $U_h \in \Unitary{\reg{R}\reg{S}}$.
Let $P_{h}^{\textnormal{rec}}$ be the damaging measurement defined as in \eqref{recording_measurement}. Set $K\coloneq \max_h |\mathbf{Acc}(h)|+1$. 
For every integers $N,T\geq 1$ and every $\varepsilon,\delta \in (0,\frac12]$, the algorithm $\textnormal{\textsf{Rec}}^{\varepsilon,\delta,T,N}_{\Expt}$ (Algorithm~\ref{algo:Rec}), which takes state $\sigma\in\Dst{\reg{R}\reg{S}}$ as input and outputs a list $\mathfrak{R}$ of recorded answers, satisfies
\begin{equation*}
    \mathbb{E}[|\mathfrak{R}|] \geq N \left( \Pr[\Expt(\sigma)=1] - \mu \right)
\end{equation*}
where
\begin{equation*}
    \mu = N \left( \varepsilon + \frac K2 \left( \delta + \frac{1}{T}\right) + 2 \sqrt{\delta} \right) \enspace.
\end{equation*}
Moreover, the algorithm has circuit complexity in $O(N\cdot T \cdot\sizeC{\Expt}\cdot \frac1\varepsilon \log \frac1\delta )$.
\end{theorem}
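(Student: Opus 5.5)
We would prove Theorem~\ref{thm:rec_rewinding} by tracking the sequence of estimates $p_1,\dots,p_N$ produced at Step~\ref{step:ValEstinRepeatExpt} and relating each to the probability that the $j$-th damaging measurement yields an accepting answer. The plan has four steps.

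\textbf{Step 1 (each iteration records with probability $\mathbb{E}[p_j]$).} We invoke the property of $\textnormal{ValEst}$ recalled from~\cite{CMSZ21} (App.~\ref{proof:thm1}). Conditioned on the outcome $p_j$ and the post-measurement state $\sigma_j$, measuring $P^{\textnormal{rec}}_h$ for a uniformly random $h$ gives $k_j\neq\perp$ with probability $\tr(\Xi\sigma_j)$. Since $\textnormal{ValEst}$ is built from the Jordan/Marriott--Watrous decomposition associated with $\Xi$, we have $\mathbb{E}[\tr(\Xi\sigma_j)]=\mathbb{E}[p_j]$ up to $\varepsilon$, possibly plus a $\sqrt{\delta}$ term. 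Hence $\Pr[k_j\neq\perp]\geq \mathbb{E}[p_j]-\varepsilon$, and by linearity of expectation $\mathbb{E}[|\mathfrak{R}|]=\sum_j\Pr[k_j\neq\perp]$. At $j=1$ we have $\mathbb{E}[p_1]=\Pr[\Expt(\sigma)=1]$.

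\textbf{Step 2 (estimates do not drift much).} We use the repair guarantee with $K=\max_h|\mathbf{Acc}(h)|+1$ outcomes. The chain "ValEst, damage $P^{\textnormal{rec}}_h$, Repair, ValEst" is $(2\varepsilon,\delta')$-almost-projective with $\delta'=K(\delta+1/T)+4\sqrt{\delta}$. So $p_{j+1}\geq p_j-2\varepsilon$ except with probability $\delta'$, and on that failure event we lower bound the drop by $1$. This gives
\[
\mathbb{E}[p_{j+1}]\geq \mathbb{E}[p_j]-2\varepsilon-K(\delta+1/T)-4\sqrt{\delta}.
\]
The stated $\mu$ carries half these constants, $\varepsilon+\frac K2(\delta+\frac1T)+2\sqrt{\delta}$, so the telescoping must be arranged so that summing over $j$ produces the $\frac12$. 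The natural source of this factor is the arithmetic sum: $\sum_{j=1}^N (j-1)\leq N^2/2$. We would therefore bound $\mathbb{E}[p_j]\geq \Pr[\Expt(\sigma)=1]-(j-1)\cdot(\text{per-step loss})$ and sum over $j$. Whatever loss remains, including the single $\varepsilon$ from Step~1, is absorbed into $N\mu$, since $N\geq1$ and we only need $N\cdot\text{loss}\leq\mu$ in aggregate. Summing yields $\mathbb{E}[|\mathfrak{R}|]\geq N(\Pr[\Expt(\sigma)=1]-\mu)$ with $\mu$ of the stated form.

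\textbf{Step 3 (complexity).} The circuit complexity follows by counting. $\textnormal{ValEst}^{\varepsilon,\delta}$ costs $O(\sizeC{\Expt}\frac1\varepsilon\log\frac1\delta)$. $\textnormal{Repair}$ runs at most $T$ alternations of ValEst-type measurements and $P^{\textnormal{rec}}_h$. Multiplying by $N$ iterations gives $O(N T\,\sizeC{\Expt}\frac1\varepsilon\log\frac1\delta)$.

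\textbf{Main obstacle.} The main difficulty is applying the~\cite{CMSZ21} repair lemma to our damaging measurement. That lemma is stated for a fixed $K$-outcome projective measurement, whereas here $h$ is random and the outcome is reported only partially. We would handle this by conditioning on $h$, so that each $P^{\textnormal{rec}}_h$ is a genuine projective measurement with at most $K$ outcomes, and then averaging over $h$.

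The key point to verify is that grouping all rejecting answers into $P_h^\perp$ keeps the outcome count at $|\mathbf{Acc}(h)|+1$. This is what lets us avoid any collapsing assumption: the $K\delta$ and $K/T$ terms scale with $|\mathbf{Acc}(h)|$ rather than with $|\mathsf{A}|$.
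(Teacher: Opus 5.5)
Your proposal is correct and follows essentially the same route as the paper. Both arguments bound each iteration's recording probability by the expected estimate, control the drift $p_j\to p_{j+1}$ through the $\bigl(2\varepsilon,K(\delta+\frac1T)+4\sqrt\delta\bigr)$-almost-projectivity of ValEst--damage--Repair after conditioning on $h_j$, and sum $\sum_j(j-1)\le N^2/2$ to obtain the factor $\frac12$. The one loosely justified point is Step~1: rather than appealing to the Jordan decomposition, combine the unbiasedness $\mathbb{E}[p_j]=\Pr[\Expt(\tau_j)=1]$ with the ValEst property $\Pr[\Expt(\sigma_j)=1]\ge\Pr[\Expt(\tau_j)=1]-\delta$, as the paper does. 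That gives a per-step loss of $\delta$ (not $\varepsilon$), which is absorbed into $\mu$ because $\delta\le\frac12\bigl(K(\delta+\frac1T)+4\sqrt\delta\bigr)$.
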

We provide the formal proof of Theorem~\ref{thm:rec_rewinding} in App.~\ref{proof:thm1}. 

The recording algorithm has five parameters. Among these, $\varepsilon$ and $\delta$ are determined by the almost-projective property of $\textnormal{ValEst}$, while $K$ is intrinsic to the underlying experiment $\Expt$. The remaining two parameters are free parameters that we can choose: $N$ (the number of recording attempts) and $T$ (the number of alternating measurements within $\textnormal{Repair}$).

\subsection{Commitment Schemes}
\label{subsec:BitCommitmentDef}

A classical commitment scheme is defined by three sub-protocols: the setup phase, the committing phase and the opening phase. A priori, both the committing and the opening phases can be interactive, but we assume the opening phase to be non-interactive throughout. Moreover, to simplify the notation, we assume the receiver's algorithm in the opening phase to be deterministic.

\begin{definition}[Classical commitment scheme (with non-interactive openings)]\label{def:ClassicalBitCommitment}
    A classical commitment scheme $\protpi = (\setup, \textsf{S}_{\textnormal{com}}, \textsf{R}_{\textnormal{com}}, \textsf{S}_{\textnormal{open}}, \textsf{R}_{\textnormal{open}})$ for message space $\mathcal{M}$ (which may depend on the security parameter~$n$) consists of the following probabilistic polynomial time (PPT) protocols:
    \begin{itemize}
        \item $\setup(1^n)$: Outputs the public parameters of the commitment scheme for a given security parameter~$n$. We suppose without loss of generality that $\setup(1^n)$ is deterministic. In the following, we denote $\textnormal{Public}_{n} \coloneq \setup(1^n)$.
        \item $[\textsf{S}_{\textnormal{com}}(m,\textnormal{Public}_{n}) \rightleftarrows \textsf{R}_{\textnormal{com}}(\textnormal{Public}_{n})]$: An interactive protocol executing the committing phase between the sender having message $m$ as input and the receiver. The output of the protocol consists in public communication $\textnormal{Comm}$ and each party private information $\textnormal{Priv}_{\textsf{S}}$ and $\textnormal{Priv}_{\textsf{R}}$.
        \item $[\textsf{S}_{\textnormal{open}}(\textnormal{Comm},\textnormal{Priv}_{\textsf{S}},\textnormal{Public}_{n}) \rightarrow \textsf{R}_{\textnormal{open}}(\textnormal{Comm},\textnormal{Priv}_{\textsf{R}},\textnormal{Public}_{n})]$: A one-message protocol executing the opening phase between the sender and the receiver. The sender sends $(m,u)$ to the receiver, where $u$ is called an opening to the message $m$. Then, the receiver executes $\textsf{R}_{\textnormal{open}}(m,u,\textnormal{Comm},\textnormal{Priv}_{\textsf{R}},\textnormal{Public}_{n})$ to obtain output $ok \in \{0,1\}$ (which is the protocol's output). A valid opening corresponds to $ok = 1$ whereas an invalid one corresponds to $ok = 0$.
    \end{itemize}
    In particular, when $\mathcal{M} = \{0,1\}$, we say that $\protpi$ is a classical \emph{bit} commitment scheme.
\end{definition}
If it can always be verified that the sender sent a valid opening to the receiver without the receiver's private memory, the commitment scheme is called \emph{publicly verifiable}.
\begin{definition}[Publicly verifiable]
    A classical commitment scheme $\protpi$ is \emph{publicly verifiable} if the private information of the receiver produced during the committing phase is not necessary for the opening phase, i.e if $\textsf{R}_{\textnormal{open}}(m,u,\textnormal{Comm},\textnormal{Priv}_{\textsf{R}},\textnormal{Public}_{n}) = \textsf{R}_{\textnormal{open}}(m,u,\textnormal{Comm},\textnormal{Public}_{n})$.
\end{definition}
In this work, we are only dealing with statistically hiding commitment schemes. The security definition for the statistically hiding condition of a post-quantum commitment scheme is standard and coincides with the criterion for classical commitment schemes.
\begin{definition}[Statistically and perfectly hiding]\label{def:Hiding}
    A classical commitment scheme $\protpi$ is \emph{statistically hiding} if for any, possibly unbounded, receiver $\mathcal{R}$ and any pair of messages $m,m' \in \mathcal{M}$, the views of $\mathcal{R}$ in
    \begin{equation*}
        [\textsf{S}_{\textnormal{com}}(m,\textnormal{Public}_{n}) \rightleftarrows \mathcal{R}(\textnormal{Public}_{n})] \, \text{ and } \, [\textsf{S}_{\textnormal{com}}(m',\textnormal{Public}_{n}) \rightleftarrows \mathcal{R}(\textnormal{Public}_{n})]
    \end{equation*}
    are statistically indistinguishable. When the two views are identical, we say that $\protpi$ is \emph{perfectly hiding}.   
\end{definition}
The $\novy$ bit commitment scheme was shown perfectly hiding 
in~\cite{NOVY98}
and this remains true against quantum adversaries.

In a classical world, the binding condition for bit commitment schemes is easy to define as we only need the adversary to be unable to open both bits simultaneously. It is defined through the following experiment. Here, the adversarial sender $\sdv$ is modelled as a collection of adversaries $\{\sdv_{n}\}_{n}$, one for each security parameter $n$.

\SetAlgorithmName{Experiment}{}{}
\begin{algorithm}[H]
\SetNoFillComment
\DontPrintSemicolon
\SetAlgoLined
$\textnormal{Public}_{n} \gets \setup(1^n)$\;
$(\textnormal{Comm},\textnormal{Priv}_{\textsf{S}},\textnormal{Priv}_{\textsf{R}}) \gets [\sdv_{n}(\textnormal{Public}_{n}) \rightleftarrows \textsf{R}_{\textnormal{com}}(\textnormal{Public}_{n})]$ \;
$(u_0,u_1) \gets \sdv_{n}(\textnormal{Comm},\textnormal{Priv}_{\textsf{S}},\textnormal{Public}_{n})$\;
$ok_{0} \gets \textsf{R}_{\textnormal{open}}(0,u_0,\textnormal{Comm},\textnormal{Priv}_{\textsf{R}},\textnormal{Public}_{n}) \vee \textsf{R}_{\textnormal{open}}(0,u_1,\textnormal{Comm},\textnormal{Priv}_{\textsf{R}},\textnormal{Public}_{n})$\;
$ok_{1} \gets \textsf{R}_{\textnormal{open}}(1,u_0,\textnormal{Comm},\textnormal{Priv}_{\textsf{R}},\textnormal{Public}_{n}) \vee \textsf{R}_{\textnormal{open}}(1,u_1,\textnormal{Comm},\textnormal{Priv}_{\textsf{R}},\textnormal{Public}_{n})$\;
\Return $(ok_{0} = ok_{1} = 1)$\;
\caption{$\textnormal{Open-Both}_{\Pi,\sdv}(n)$}
\label{exp:ClassBinding}
\end{algorithm}

\noindent The classical binding condition simply says that any adversarial sender can only succeed in producing two valid openings simultaneously with negligible probability.
\begin{definition}[Classical computational binding]\label{def:ClassicalCompBinding}
    A classical bit commitment scheme $\protpi$ is \emph{classical computational binding} if for any PPT algorithm $\sdv$,
    \begin{equation*}
        \Pr[\textnormal{Open-Both}_{\Pi,\sdv}(n) = 1] \leq \negl \enspace.
    \end{equation*}
\end{definition}

\begin{theorem}[\cite{NOVY98,HR14}] \label{thm:NOVYandHR}
    Let $\pi$ be a (classical) one-way permutation family, then the $\novy$ bit commitment scheme, instantiated with $\pi$, is classical computational binding.
\end{theorem}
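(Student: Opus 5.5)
The plan is to follow the classical reduction of Haitner \& Reingold, as recalled in Sec.~\ref{class-inverter}. Assume toward contradiction a PPT adversary $\sdv$ with $\Pr[\textnormal{Open-Both}_{\Pi,\sdv}(n)=1]\geq \epsilon_\sdv(n)$, where $\epsilon_\sdv(n)\geq 1/p(n)$ for infinitely many $n$. From $\sdv$ we build the inverter $\invpi^\sdv$ of Fig.~\ref{invalgo}. It first fixes the random tape $\varkappa$, so that $\sdv^{\textnormal{Com}}_n$ and $\sdv^{\textnormal{End}}_n$ are deterministic. It then picks $R=n-\Theta(\lg(n/\epsilon_\sdv(n)))$, or $R$ uniformly at random at the cost of a factor $n$. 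Finally it runs $\searcher^\sdv(y)$ and then $\straightline^\sdv(\tt)$. The running time is $O(\sizeC{\sdv_n}\, n\lg n)$, since each round is rewound at most $2\lg n$ times. It remains to show that the success probability on uniform $y$ is $\Omega(\epsilon_\sdv(n)^2/n^8)$, which contradicts one-wayness of $\pi$.

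The analysis compares $\textsf{D}_{\textnormal{real}}$ and $\textsf{D}_{\textnormal{ideal}}$.

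\textbf{Step 1: the inverter works on $\textsf{D}_{\textnormal{ideal}}$.} Conditioned on a transcript $(h,z)$ of the first $R$ rounds, the string $y$ is uniform in $\cons{(h,z)}$, which has $2^{n-R}$ elements and is independent of $\sdv$'s view. When $\sdv$ opens both bits, $y_0$ and $y_1$ are two distinct elements of this consistency set. Hence $y\in\{y_0,y_1\}$ with probability about $2/2^{n-R}$ times the success probability. So the success probability on $\textsf{D}_{\textnormal{ideal}}$ is $\Omega(\epsilon_\sdv/2^{n-R})=\textnormal{poly}(\epsilon_\sdv/n)$.

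\textbf{Step 2: almost-domination.} We show that $\textsf{D}_{\textnormal{ideal}}$ is almost dominated by $\textsf{D}_{\textnormal{real}}$: apart from a set of pairs $(\tt,y)$ of small ideal mass, $\Pr_{\textnormal{real}}[\tt,y]\geq \Pr_{\textnormal{ideal}}[\tt,y]/\textnormal{poly}(n)$. This is a hybrid argument over rounds. In round $i$, let $p$ be the probability over $\hbar$ that $\sdv$ answers $\scalar{\hbar}{y}$, given the state after $i-1$ consistent rounds. The searcher then outputs the pair $(\hbar,\scalar{\hbar}{y})$ with probability roughly $\Pr[\hbar\mid\textnormal{consistent}]$, up to an abort term $(1-p)^{2\lg n}$.

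Consider the rounds where $p$ is far below $1/2$. For a random $y$ in the current consistency set, this happens only when the answers split the consistency set unevenly. A counting and entropy argument bounds the total multiplicative loss over $R$ rounds by $\textnormal{poly}(n)$, on all but a small fraction of $(\tt,y)$. This small fraction is what forces $n-R=\Theta(\lg(n/\epsilon_\sdv))$: the consistency set must stay of size $\textnormal{poly}(n/\epsilon_\sdv)$.

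\textbf{Step 3: well-spreadness and transfer.} Step 2 transfers success from the ideal to the real distribution only if success is not concentrated on the excluded pairs. So we show that successful transcripts are well spread. Since $y$ is uniform among $2^{n-R}$ candidates and the two opened values depend only on $\tt$ and the later challenges, removing any set of ideal mass $\gamma$ removes at most $\gamma$ of the success mass. Combining Steps 1--3, the real success probability is at least $(\epsilon_\sdv/2^{n-R}-\gamma)/\textnormal{poly}(n)$, which is $\Omega(\epsilon_\sdv^2/n^8)$ for the chosen $R$.

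I expect \textbf{Step 2} to be the main obstacle, namely the quantitative bound on the accumulated domination loss across the $R$ rewound rounds. Unbalanced rounds can compound, and the per-round abort probabilities must be accounted for. As a first attempt, I would track the potential $\lg|\cons{\tt}\cap \textnormal{Good}|$ and argue by Markov that only few rounds can lose more than a constant factor.
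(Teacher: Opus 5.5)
Your outline follows the Haitner--Reingold sketch recalled in Sec.~\ref{class-inverter}; the paper itself only cites this theorem. Step~3, however, has a genuine gap. Let $E$ be the success event and $B$ the set of pairs where domination fails. The bound you use, $\Pr_{\textnormal{ideal}}[E\cap B]\le \Pr_{\textnormal{ideal}}[B]=\gamma$, is trivially true. It only helps if $\gamma<\Pr_{\textnormal{ideal}}[E]\approx 2\epsilon_\sdv/2^{n-R}$, and no almost-domination statement can guarantee that.

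Steps~1--3 never use the hardness of $\pi$, so they must survive $\pi_n=\mathrm{id}$ with the following efficient adversary. It answers every challenge $h$ with $\scalar{h}{y^*}$ for a fixed $y^*$. With probability $\epsilon<1/2$ over its tape, it then outputs the trivial openings of $y_0,y_1$.
\begin{itemize}
\item For every tape and transcript, the pair $(\tt,y^*)$ has $\Pr_{\textnormal{real}}/\Pr_{\textnormal{ideal}}=2^{-n}/2^{-(n-R)}=2^{-R}$. The searcher never rewinds on $y^*$, so real transcripts are distributed exactly as in \novy.
\item Hence all these pairs lie in $B$, and $\gamma\ge 2^{-(n-R)}$.
\item Consequently $\Pr_{\textnormal{ideal}}[E]-\gamma\le(2\epsilon-1)\,2^{-(n-R)}<0$ for every choice of $R$, and your final bound is vacuous.
\end{itemize}
The inverter does work on this adversary, but only because half of the ideal success mass sits on the other output $y'\neq y^*$. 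Since $y'\oplus y^*$ has last bit $1$ (Lemma~\ref{lem:app:PairsProperties}), every searched round is exactly balanced for $y'$, and its real/ideal ratio is essentially $1$.

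What is missing is a genuine well-spreadness lemma. You need to show that the ideal success mass carried by the domination-failure pairs is at most a constant fraction of $\Pr_{\textnormal{ideal}}[E]$, i.e.\ success cannot concentrate on the strings the adversary's answers favour. This cannot follow from mass counting, since $\Pr_{\textnormal{ideal}}[B]$ may exceed $\Pr_{\textnormal{ideal}}[E]$. It is precisely the extra intermediate step the paper attributes to~\cite{NOVY98,HR14}.

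The paper's own analysis meets the same obstruction. It remarks that even with perfect rewinding the purified distance is about $4/\sqrt{2^{n-R}}$. That is an error of order $2^{-(n-R)}$ on success probabilities, which a single straight-line run cannot beat. It circumvents this in two ways:
\begin{itemize}
\item it repeats the straight-line phase $N=2^{n-R}/4$ times from the round-$R$ state and counts distinct preimages, lifting the ideal success to order $\epsilon_\sdv/\sqrt{2^{n-R}}$ (Lemmas~\ref{lem:ExpectedValueDistinctR}--\ref{lem:ProbabilityIdealqInv});
\item it compares real and ideal via purified distance rather than domination (Lemma~\ref{lem:PDqSearcherIdeal}, Corollary~\ref{cor:ProbSuccessqInvVSIdealqInv}).
\end{itemize}

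Step~2 is also only asserted. Under $\textsf{D}_{\textnormal{ideal}}$, the law of $y$ given a partial transcript $\tt_i$ is not uniform on $\cons{\tt_i}$. It is reweighted by the probability that $y$ survives rounds $i,\dots,R$. So the balancedness identity $\mathbb{E}_{y\in_{\textnormal{R}}\cons{\tt_i}}[p]=1/2$ cannot simply be applied round by round, as your sketch suggests.
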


However, the computational binding property of a classical bit commitment scheme against dishonest quantum sender is more difficult to address. As we mentioned in the introduction, a quantum adversarial sender $\sdv$ could be able to open any bit of his choosing without ever being able to open both bits simultaneously. It can then be a challenge to reduce a computational hypothesis to a successful adversary $\sdv$ against the binding condition. Most classical reductions use the fact that any classical successful adversary $\sdv$ can open with success both bits simultaneously to construct an algorithm breaking the computational assumption.
For instance, for the $\novy$ bit commitment scheme,
both~\cite{NOVY98} and~\cite{HR14}
turn any (classical) adversary $\sdv$ able to produce both openings at the same time into an inverter for the (classical) one-way permutation $\pi$ used to instantiate the scheme. 
First, we restate the classical binding condition in the post-quantum setting. As we allow quantum adversaries, $\sdv$ is a able to retain a quantum memory between the committing and the opening phases.
\begin{definition}[Computational open-once binding]
\label{def:openonceBinding} 
    A classical bit commitment scheme $\protpi$ is \emph{post-quantum computational open-once binding} if for any QPT algorithm~$\sdv$,
    \begin{equation*}
        \Pr[\textnormal{Open-Both}_{\Pi,\sdv}(n) = 1] \leq \negl \enspace.
    \end{equation*}
\end{definition}
One criterion for the computational binding property against quantum adversaries is called the \emph{sum-binding property}~\cite{DMS00} and is defined from the experiment below.
Since the adversary $\sdv$ is quantum, the classical private information $\textnormal{Priv}_{\textsf{S}}$ produced in the committing phase is replaced by $\sdv$'s quantum memory in state $\sigma_{\textsf{S}}$.

\SetAlgorithmName{Experiment}{}{}
\begin{algorithm}[H]
\SetNoFillComment
\DontPrintSemicolon
\SetAlgoLined
$\textnormal{Public}_{n} \gets \setup(1^n)$\;
$(\textnormal{Comm},\sigma_{\textsf{S}},\textnormal{Priv}_{\textsf{R}}) \gets [\sdv_{n}(\textnormal{Public}_{n}) \rightleftarrows \textsf{R}_{\textnormal{com}}(\textnormal{Public}_{n})]$ \;
Pick $b \in_{\textnormal{R}} \{0,1\}$ and announce it to $\sdv_{n}$\;
$u \gets \sdv_{n}(b,\textnormal{Comm},\sigma_{\textsf{S}},\textnormal{Public}_{n})$ \label{step:ExperimentSumBindingOpening}
\;
$ok \gets \textsf{R}_{\textnormal{open}}(b,u,\textnormal{Comm},\textnormal{Priv}_{\textsf{R}},\textnormal{Public}_{n})$\;
\Return $ok$
\caption{$\textnormal{Sum-Binding}_{\Pi,\sdv}(n)$}
\label{exp:SumBinding}
\end{algorithm}

\noindent The sum-binding property simply says that any QPT adversary $\sdv$ can only succeeds in the above experiment with probability at most $\nicefrac{1}{2} + \negl$.
\begin{definition}[Computational sum-binding~\cite{DMS00}]\label{def:SumBinding}
    A classical bit commitment scheme $\protpi$ is \emph{post-quantum computational sum-binding} if for any QPT algorithm $\sdv$,
    \begin{equation*}
        \Pr[\textnormal{Sum-Binding}_{\Pi,\sdv}(n) = 1] \leq \frac{1}{2} + \negl \enspace.
    \end{equation*}
\end{definition}
Equivalently, if we let $p_{b}$ be the probability, at Step~\ref{step:ExperimentSumBindingOpening}, that $\sdv$ produces an acceptable opening for bit $b$, then the sum-binding condition can be rewritten as
\begin{equation*}
    p_{0} + p_{1} \leq 1 + \negl \enspace.
\end{equation*}

\noindent Notice that a successful quantum adversary $\sdv$ against the sum-binding criterion does not necessarily have the ability to open both bits simultaneously. However, the following property of classical bit commitment schemes will turn out to be important.
\begin{definition}[Unique openings]\label{def:UniqueOpenings}
    A classical bit commitment scheme $\protpi$ has \emph{unique openings} if for any value of $\textnormal{Comm},\textnormal{Priv}_{\textsf{R}},\textnormal{Public}_{n}$ and for any bit $b$, there exists a most one opening $m$ for which $\textsf{R}_{\textnormal{open}}(b,m,\textnormal{Comm},\textnormal{Priv}_{\textsf{R}},\textnormal{Public}_{n}) = 1$.
\end{definition}
Notice that open-once binding schemes with unique openings satisfy the \textit{strict-binding} criterion introduced in 
\cite{U12} and also used in \cite{ARU14}.
Fortunately, any classical bit commitment scheme $\protpi$ with unique openings has the property that any successful adversarial sender $\sdv$ in the $\textnormal{Sum-Binding}_{\protpi,\sdv}$ experiment can be turned into a sender that can successfully open both bits simultaneously. The following lemma, that we believe was originally proven in a slightly different form
in~\cite{Kent05}
and independently
in~\cite{CSST11},
establishes that fact.

\begin{lemma}[\cite{Kent05,CSST11}] \label{lem:ClassBindingUniqueOpeningsImpliesSumBinding}
    If classical bit commitment scheme $\protpi$ with unique openings is post-quantum computational open-once binding,
    then $\protpi$ is post-quantum computational sum-binding.
\end{lemma}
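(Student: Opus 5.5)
We argue by contraposition: from a QPT sender $\sdv$ breaking sum-binding we build a QPT sender $\sdv'$ that wins $\textnormal{Open-Both}_{\Pi,\sdv'}$ with non-negligible probability. Fix $n$ and run the committing phase of $\sdv$ honestly with the receiver, obtaining $(\textnormal{Comm},\sigma_{\textsf{S}},\textnormal{Priv}_{\textsf{R}})$. Purify the opening step: for each $b$ let $U_b$ be the unitary that $\sdv_n$ applies to $\sigma_{\textsf{S}}$ (with ancillas) to write its opening $u$ in an output register. By unique openings, for this fixed $(\textnormal{Comm},\textnormal{Priv}_{\textsf{R}},\textnormal{Public}_n)$ there is at most one valid opening $u_b^\star$ for each $b$. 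Let $Q_b \coloneq U_b^*(\proj{u_b^\star}\otimes\idt)U_b$, a projector with $p_b=\tr(Q_b\sigma_{\textsf{S}})$ (taking $Q_b=0$ if $u_b^\star$ does not exist). The sender $\sdv'$ does the following: it applies $U_0$, measures the output register to get $u_0$, applies $U_0^*$ to undo the computation, then applies $U_1$ and measures to get $u_1$. It outputs $(u_0,u_1)$. $\sdv'$ is QPT, and it never needs to know $u_b^\star$.

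The point of unique openings is that obtaining $u_0=u_0^\star$ is the same event as the binary measurement $\{Q_0,\idt-Q_0\}$ returning $Q_0$. Measuring the full output register yields a finer post-measurement state than the binary measurement. However, conditioned on the outcome $u_0^\star$, the state after $U_0^*$ is exactly $Q_0\sigma Q_0/p_0$, because $\proj{u_0^\star}\otimes\idt$ is precisely the conjugated projector $U_0 Q_0 U_0^*$. Hence
\[
\Pr[\textnormal{Open-Both}_{\Pi,\sdv'}(n)=1]\;\ge\; \mathbb{E}\bigl[\tr(Q_1Q_0\sigma_{\textsf{S}}Q_0)\bigr],
\]
where the expectation is over the committing phase.

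It remains to lower bound $\tr(Q_1Q_0\sigma Q_0)$ when $p_0+p_1>1$. By convexity, and by purifying $\sigma_{\textsf{S}}$ into a larger register on which $Q_0,Q_1$ act trivially on the purifying part, it suffices to treat a pure state $\ket\psi$. Then $\|Q_1Q_0\ket\psi\|\ge \langle\psi|Q_1Q_0|\psi\rangle$ in absolute value, and
\[
\mathrm{Re}\,\langle\psi|Q_1Q_0|\psi\rangle=\mathrm{Re}\,\langle\psi|Q_1(\idt-(\idt-Q_0))|\psi\rangle\ge p_1-\|Q_1\psi\|\,\|(\idt-Q_0)\psi\|=p_1-\sqrt{p_1(1-p_0)}.
\]
The resulting bound is positive whenever $p_0+p_1>1$. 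To obtain a clean bound, I would instead use the standard inequality $\tr(Q_1Q_0\rho Q_0)\ge (p_0+p_1-1)^2$. This follows from writing $\ket\psi$ in terms of the two projectors and applying Cauchy–Schwarz, or from Jordan's lemma reducing to $2$-dimensional blocks, where the inequality is elementary: there it reads $\cos^2\theta_0\cos^2\phi\ge(\cos^2\theta_0+\cos^2\theta_1-1)^2$.

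Averaging over the commitment and applying Jensen's inequality gives
\[
\Pr[\textnormal{Open-Both}_{\Pi,\sdv'}(n)=1]\ge (\mathbb E[p_0+p_1]-1)^2.
\]
If $\sdv$ wins the sum-binding experiment with probability $\nicefrac12+\epsilon(n)$, then $\mathbb{E}[p_0+p_1]=1+2\epsilon(n)$. This yields open-both success at least $4\epsilon(n)^2$, which is non-negligible and contradicts open-once binding.

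The main obstacle is the operator inequality $\|Q_1Q_0\psi\|^2\ge(p_0+p_1-1)^2$ in full generality, where $\ket\psi$ is not confined to a two-dimensional plane. I would first attempt it via Jordan's lemma. The block decomposition splits $\ket\psi$ into orthogonal components on which both projectors act invariantly. The block-wise bounds then combine by convexity of $x\mapsto x^2$, since both sides are averages over blocks weighted by the squared norms of the components. A secondary point to check carefully is that measuring all of $u_0$ rather than only the accept bit does not hurt. This holds precisely because of unique openings: the accepting outcome space is one-dimensional in the output register, so nothing beyond $Q_0$ is revealed on the branch that matters.
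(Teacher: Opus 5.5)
Your reduction itself is sound. The sender that computes the opening for $0$, measures it, uncomputes, and then computes and measures the opening for $1$ wins $\textnormal{Open-Both}$ with probability at least $\mathbb{E}_{\mathtt c}\,\mathrm{Tr}(Q_1Q_0\sigma Q_0)$. Unique openings is exactly what makes the full measurement of the opening register act as $\{Q_0,1-Q_0\}$ on the accepting branch. (The paper does not reprove this lemma; it only cites Kent and Cr\'epeau et al.)

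\textbf{The gap: the key inequality is false.} The bound $\mathrm{Tr}(Q_1Q_0\rho Q_0)\ge(p_0+p_1-1)^2$ that you call standard does not hold. In $\mathbb{R}^2$ take $\psi=(1,0)$ and let $Q_0,Q_1$ be the rank-one projectors onto $(\cos\frac{\pi}{8},\pm\sin\frac{\pi}{8})$. Then $p_0=p_1=\frac{2+\sqrt2}{4}$, so $(p_0+p_1-1)^2=\frac12$. But
\[
\lVert Q_1Q_0\psi\rVert^2=\cos^2\tfrac{\pi}{8}\cos^2\tfrac{\pi}{4}=\tfrac{2+\sqrt2}{8}<\tfrac12 .
\]
Your ``elementary'' two-dimensional case fails in the same way. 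When the two ranges lie on opposite sides of $\psi$, $\phi=\theta_0+\theta_1$, and $\cos^2\theta_0+\cos^2\theta_1-1=\cos(\theta_0+\theta_1)\cos(\theta_0-\theta_1)$. So your inequality reduces to $\cos^2\theta_0\ge\cos^2(\theta_0-\theta_1)$, which is false whenever $0<\theta_1<2\theta_0$. There is a second defect: the right-hand side is positive when $p_0+p_1<1$, while the left-hand side can be $0$ there. For the Jensen step you need a pointwise bound $g(p_0+p_1-1)$ with $g$ convex and vanishing on $(-\infty,0]$.

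\textbf{The fix is already in your proposal.} Your first estimate $\lVert Q_1Q_0\psi\rVert\ge p_1-\sqrt{p_1(1-p_0)}$ suffices. Put $\epsilon_{\mathtt c}=p_0+p_1-1>0$, so that $1-p_0=p_1-\epsilon_{\mathtt c}$. Then
\[
p_1-\sqrt{p_1(p_1-\epsilon_{\mathtt c})}=\frac{\sqrt{p_1}\,\epsilon_{\mathtt c}}{\sqrt{p_1}+\sqrt{p_1-\epsilon_{\mathtt c}}}\ge\frac{\epsilon_{\mathtt c}}{2}.
\]
Hence $\mathrm{Tr}(Q_1Q_0\rho Q_0)\ge\frac14\max\{0,p_0+p_1-1\}^2$ for every commitment. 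The constant $\frac14$ is best possible; it is approached already by rank-one projectors in $\mathbb{R}^2$, so no other proof recovers your constant. No Jordan-lemma argument is needed: after purification, this Cauchy--Schwarz bound holds for any pure state in any dimension. Since $x\mapsto\frac14\max\{0,x\}^2$ is convex, Jensen gives $\Pr[\textnormal{Open-Both}]\ge\frac14\bigl(2\epsilon(n)\bigr)^2=\epsilon(n)^2$ rather than $4\epsilon(n)^2$. This is still non-negligible, and the rest of your argument goes through unchanged.
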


Another criterion for the computational binding property against quantum adversaries is called the \emph{collapse-binding property}~\cite{U16a} and is defined below. Intuitively, it states that the sender cannot tell whether the register kept for holding the committed message has been measured or not. More precisely, let $\sigma \in \Dst{\reg{S}\reg{M}  \reg{U}}$ be the adversary's state after the committing phase: Register $\reg{S}$ holds the adversary's private memory, register $\reg{M}$ holds the message that is being committed and register $\reg{U}$ holds the opening information. Moreover, for any fixed value of $\mathtt{c} \coloneq (\textnormal{Comm},\textnormal{Priv}_{\textsf{R}},\textnormal{Public}_{n})$, define the following projective measurement, $V_\protpi^{\mathtt{c}\vphantom{C_{n}}} = \{V^{\mathtt{c}\vphantom{C_{n}}}_{1}, V^{\mathtt{c}\vphantom{C_{n}}}_{0} \coloneq \idt - V^{\mathtt{c}\vphantom{C_{n}}}_{1} \}$, which verifies if the opening information is valid: $V^{\mathtt{c}\vphantom{C_{n}}}_{1} \in \Proj{\reg{S}\reg{M}\reg{U}}$ and
\begin{equation*}
    V^{\mathtt{c}\vphantom{C_{n}}}_{1} = \sum_{\substack{{m,u}:\\ \textsf{R}_{\textnormal{open}}(m,u,\mathtt{c}) = 1}} \ket{m,u}\!\bra{m,u}_{\reg{M}\reg{U}} \otimes \idt_{\reg{S}} \enspace.
\end{equation*}

\SetAlgorithmName{Experiment}{}{}
\begin{algorithm}[H]
\SetNoFillComment
\DontPrintSemicolon
\SetAlgoLined
$\textnormal{Public}_{n} \gets \setup(1^n)$\;
$(\textnormal{Comm},\sigma_{\reg{S}\reg{M}\reg{U}},\textnormal{Priv}_{\textsf{R}}) \gets [\sdv_{n}(\textnormal{Public}_{n}) \rightleftarrows \textsf{R}_{\textnormal{com}}(\textnormal{Public}_{n})]$ \label{step:CollapseBindingStep2}\;
$(v,\rho_{\reg{S}\reg{M}\reg{U}}) \gets V_\protpi^{\textnormal{Comm},\textnormal{Priv}_{\textsf{R}},\textnormal{Public}_{n}}(\sigma_{\reg{S}\reg{M}\reg{U}})$\label{step:CollapseBindingStep3}\;
Pick $b \in_{\textnormal{R}} \{0,1\}$;
\lIf{$b=1$ \textnormal{\textbf{and}} $v = 1$}{$(m,\rho_{\reg{S}\reg{M}\reg{U}}) \gets \CompMeas_{\reg{M}}\!\left(\rho_{\reg{S}\reg{M}\reg{U}}\right)$} \label{step:CollapseBindingStep4}
$b' \gets \sdv_{n}(\textnormal{Comm},\rho_{\reg{S}\reg{M}\reg{U}},\textnormal{Public}_{n})$ \label{step:CollapseBindingStep5}\;
\Return $(b=b')$
\caption{$\textnormal{Collapse-Binding}_{\Pi,\sdv}(n)$}
\label{exp:CollapseBinding}
\end{algorithm}

\noindent The collapse-binding property simply says that any QPT adversary $\sdv$ can only succeeds in the above experiment with probability at most $\nicefrac{1}{2} + \negl$.
\begin{definition}[Computational collapse-binding~\cite{U16a}]\label{def:CollapseBinding}
    A classical bit commitment scheme $\protpi$ is \emph{post-quantum computational collapse-binding} if for any QPT algorithm $\sdv$,
    \begin{equation*}
        \Pr[\textnormal{Collapse-Binding}_{\Pi,\sdv}(n) = 1] \leq \frac{1}{2} + \negl \enspace.
    \end{equation*}
\end{definition}
Next, we prove that any classical bit commitment scheme with unique openings satisfying the open-once binding property is also collapse-binding, which further highlights the usefulness of unique openings.
Our proof follows the same argument as an analogous result
in~\cite{DS23}
which derives a different variant of the collapse-binding property from sum-binding.
\begin{lemma} \label{lem:ClassBindingUniqueOpeningsImpliesCollapseBinding}
    If classical bit commitment scheme $\protpi$ with unique openings is post-quantum computational open-once binding, then $\protpi$ is post-quantum computational collapse-binding.
\end{lemma}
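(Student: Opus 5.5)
We argue by contraposition, following the template of~\cite{DS23}. Suppose a QPT adversary $\sdv$ wins $\textnormal{Collapse-Binding}_{\protpi,\sdv}(n)$ with probability $\nicefrac{1}{2}+\epsilon(n)$ for non-negligible $\epsilon$. From $\sdv$ we build a QPT adversary $\sdv'$ that wins $\textnormal{Open-Both}_{\protpi,\sdv'}(n)$ with probability $\Omega(\epsilon^2)$, contradicting open-once binding.

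\textbf{Step 1 (reduction to the valid case).} Fix the transcript $\mathtt{c}$ and let $\sigma=\sigma_{\reg{S}\reg{M}\reg{U}}$ be the state after the committing phase. When $v=0$, nothing is measured in either branch, so only the post-measurement state $V^{\mathtt{c}}_1\sigma V^{\mathtt{c}}_1$ (normalized) matters. By unique openings, $V^{\mathtt{c}}_1=\sum_{m\in\{0,1\}}\proj{m,u_m}_{\reg{M}\reg{U}}\otimes\idt_{\reg{S}}$, with the sum ranging over the at most two valid pairs $(m,u_m)$. Purifying, write the valid part as
\begin{equation*}
\ket{\phi}=\ket{0,u_0}\ket{\alpha_0}+\ket{1,u_1}\ket{\alpha_1},
\end{equation*}
with unnormalized $\ket{\alpha_m}$ on $\reg{S}$ and a purifying register. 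Measuring $\reg{M}$ on this state yields the dephased mixture $\proj{0,u_0}\otimes\proj{\alpha_0}+\proj{1,u_1}\otimes\proj{\alpha_1}$. The advantage $\epsilon$ is therefore bounded by the distinguishing advantage of $\sdv$'s final binary measurement $\{B_0,B_1\}$ between $\proj{\phi}$ and its dephased version. Expanding, this advantage equals a cross term of the form $|\mathrm{Re}\bra{0,u_0,\alpha_0}(B_1-B_0)\ket{1,u_1,\alpha_1}|$, averaged over $\mathtt{c}$.

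\textbf{Step 2 (extraction).} Let $B$ be a unitary implementation of $\sdv$'s distinguisher, writing $O=B_1-B_0$. The key point is that a non-negligible off-diagonal term $\bra{0,u_0,\alpha_0}O\ket{1,u_1,\alpha_1}$ forces the $\{B_0,B_1\}$ measurement to map weight between the $m=0$ and $m=1$ branches. The adversary $\sdv'$ proceeds as follows. It runs the committing phase, applies $V^{\mathtt{c}}$ coherently, and measures $\reg{M}\reg{U}$, obtaining one valid opening $(m,u_m)$ whenever $v=1$. It then applies $\sdv$'s measurement $\{B_0,B_1\}$ (with uncomputation) and measures $\reg{M}\reg{U}$ again. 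Because measuring $\reg{M}\reg{U}$ first collapses the state onto one branch, the probability that the second measurement returns the \emph{other} valid opening is $\sum_{m}\|\Pi_{1-m}B_b\Pi_m\ket{\phi}\|^2$, summed over both outcomes $b$, where $\Pi_m=\proj{m,u_m}$. A Cauchy--Schwarz argument bounds the squared cross term by this quantity, up to constants. The first opening is kept classically, so $\sdv'$ outputs both $u_0$ and $u_1$ with probability $\Omega(\epsilon^2)$. Note that $\sdv'$ only needs to output the candidate pair; $\textsf{R}_{\textnormal{open}}$ checks it.

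\textbf{Main obstacle.} The delicate step is the inequality linking the distinguishing advantage to the branch-switching probability. Because $O$ acts on $\reg{M}\reg{U}$ as well as on $\reg{S}$, I would first handle the case where $\sdv$ measures $\reg{M}\reg{U}$ directly; that part of the cross term is easy. For the general case, I would write $\bra{0,u_0,\alpha_0}O\ket{1,u_1,\alpha_1}=\bra{0,u_0,\alpha_0}\Pi_0(B_1-B_0)\Pi_1\ket{1,u_1,\alpha_1}$. Since $B_1-B_0=2B_1-\idt$ and $\Pi_0\Pi_1=0$, this equals $2\bra{\cdot}\Pi_0B_1\Pi_1\ket{\cdot}$. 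Its modulus is at most $2\|B_1\Pi_0\ket{\phi}\|\cdot\|\Pi_0 B_1\Pi_1\ket{\phi}\|$, and the last factor is exactly a branch-switch amplitude. This gives $\epsilon\le O(\sqrt{\Pr[\text{both openings}]})$. A final averaging over $\mathtt{c}$ via Jensen yields $\Pr[\textnormal{Open-Both}_{\protpi,\sdv'}(n)=1]\ge\Omega(\epsilon^2)$, which is non-negligible.
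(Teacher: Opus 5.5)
Your argument is correct in substance and has the same skeleton as the paper's proof. Both argue by contraposition and use $V^{\mathtt{c}}_1=P^{\mathtt{c}}_0+P^{\mathtt{c}}_1$ from unique openings. Both note that the two branches of the collapse experiment differ exactly by $P^{\mathtt{c}}_0\sigma^{\mathtt{c}}P^{\mathtt{c}}_1+P^{\mathtt{c}}_1\sigma^{\mathtt{c}}P^{\mathtt{c}}_0$. Both use an extractor that measures $\reg{M}\reg{U}$, acts with the distinguisher, and measures $\reg{M}\reg{U}$ again, followed by Jensen/Cauchy--Schwarz averaging over $\mathtt{c}$.

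The only real difference is the middle step. The paper applies the unitary reflection $\idt-2D_0$ and lower-bounds the branch-switching probability via Lemma~\ref{lem:DS23}. You instead measure $\{B_0,B_1\}$ and use a direct Cauchy--Schwarz bound. Your version is more elementary and avoids the auxiliary lemma. The reflection is better by a constant factor: its switching weight is $4\|\Pi_0D_0\Pi_1\ket{\phi}\|^2$, versus $\|\Pi_0B_0\Pi_1\ket{\phi}\|^2+\|\Pi_0B_1\Pi_1\ket{\phi}\|^2=2\|\Pi_0B_0\Pi_1\ket{\phi}\|^2$ for measuring (using $\Pi_0\Pi_1=0$). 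That factor is irrelevant here.

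Two corrections are needed.

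First, the bound $2\|B_1\Pi_0\ket{\phi}\|\cdot\|\Pi_0B_1\Pi_1\ket{\phi}\|$ is false in general, because it mixes two different Cauchy--Schwarz splittings. For example, let $B_1$ be rank one onto a vector whose overlap with the normalized $\Pi_0\ket{\phi}$ is $c$ with $0<|c|<1$. Then the right-hand side is smaller than the left-hand side by a factor $|c|$. Use instead
\[
\bigl|\bra{\phi}\Pi_0\cdot\Pi_0B_1\Pi_1\ket{\phi}\bigr|\le\|\Pi_0\ket{\phi}\|\,\|\Pi_0B_1\Pi_1\ket{\phi}\|\le\|\Pi_0B_1\Pi_1\ket{\phi}\|\enspace,
\]
which gives exactly what you need.

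Second, $\sdv'$ should not apply $V^{\mathtt{c}}$ coherently. Without public verifiability, $V^{\mathtt{c}}$ depends on $\textnormal{Priv}_{\textsf{R}}$, which $\sdv'$ does not hold. The step is also redundant: the computational-basis measurement of $\reg{M}\reg{U}$ already refines $V^{\mathtt{c}}$, which is why the paper's $\sdv'$ simply measures $\reg{M}\reg{U}$ directly.
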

\begin{proof}[Sketch]
    We proceed by the contrapositive. Assume that $\protpi$ has unique openings and that QPT $\sdv$ breaks the collapse-binding criterion.
    In particular, there exists an efficient projective measurement $D$ that distinguishes, with non-negligible probability, if the register $\reg{M}$ has been measured.
    We prove that the unitary $\idt - 2D$ maps the unique opening for 0 into the unique opening for 1, and vice-versa, with non-negligible probability.
    Hence, to break the open-once binding criterion, we construct an adversary, which first produces the valid opening to $b$ and then applies $\idt - 2D$ to produce the valid opening to $b \oplus 1$.
    The formal proof is provided in App.~\ref{app:CollapseBinding}.
    \qedSketch
\end{proof}
Unruh showed
in~\cite{U16a}
that the parallel composition of several non-interactive collapse-binding commitments remains collapse-binding. Adapting the proof to interactive commitment schemes requires only minor modifications\footnote{Note that Dall'Agnol \& Spooner~\cite{DS23} showed that the parallel composition of \emph{chosen-bit binding} commitments remains chosen-bit binding, and moreover that chosen-bit binding is equivalent to collapse-binding. However, their definition of collapse-binding only considers \emph{valid} adversaries, and while it is equivalent to Definition~\ref{def:CollapseBinding} for non-interactive commitments (as shown by Unruh~\cite{U16a}), it is not immediately clear whether the equivalence holds for interactive commitments.
}. We show that it holds for interactive commitments that are publicly verifiable. For completeness, we include the proof in App.~\ref{app:CollapseBinding}.

\begin{definition}[$k$-fold parallel composition]\label{def:kfoldcomp}
    Let $\protpi = (\setup, \textsf{S}_{\textnormal{com}}, \textsf{R}_{\textnormal{com}}, \textsf{S}_{\textnormal{open}}, \textsf{R}_{\textnormal{open}})$ be a classical commitment scheme with message space $\mathcal{M}$ and let $k = k(n)$ be a polynomially bounded integer. The \emph{\mbox{$k$-fold} parallel composition} of $\protpi$, denoted $\protpi^{k} = (\setup, \textsf{S}_{\textnormal{com}}^{k}, \textsf{R}_{\textnormal{com}}^{k}, \textsf{S}_{\textnormal{open}}^{k}, \textsf{R}_{\textnormal{open}}^{k})$, has message space~$\mathcal{M}^{k}$. To commit to a message $m = (m_1,\hdots,m_k)$, $\textsf{S}_{\textnormal{com}}^{k}$ and $ \textsf{R}_{\textnormal{com}}^{k}$ simply execute the $k$ protocols
    \[
        [\textsf{S}_{\textnormal{com}}(m_{i},\textnormal{Public}_{n}) \rightleftarrows \textsf{R}_{\textnormal{com}}(\textnormal{Public}_{n})] \,, i \in \{1,\hdots,k\} \,,
    \]
    in parallel.
    More precisely, at each round of interaction, the message sent by each party is the concatenation of the messages that would be sent in the corresponding round of the $k$ executions. By construction, the committing phase produces public communication $(\textnormal{Comm}^{1},\hdots,\textnormal{Comm}^{k})$
    together with each party private information $(\textnormal{Priv}_{\textsf{S}}^{1},\hdots,\textnormal{Priv}_{\textsf{S}}^{k})$ and $(\textnormal{Priv}_{\textsf{R}}^{1},\hdots,\textnormal{Priv}_{\textsf{R}}^{k})$. In the opening phase, $\textsf{S}_{\textnormal{open}}^{k}$ sends $(m,(u_1,\hdots,u_k))$ where $u_i$ is the opening produced by $\textsf{S}_{\textnormal{open}}$ using $\textnormal{Comm}^{i}$ and $\textnormal{Priv}_{\textsf{S}}^{i}$. Finally,  $\textsf{R}_{\textnormal{open}}^{k}$ accepts if and only if $\textsf{R}_{\textnormal{open}}(m_i,u_i,\textnormal{Comm}^{i},\textnormal{Priv}_{\textsf{R}}^{i},\textnormal{Public}_{n})=1$ for every $i \in \{1,\hdots,k\}$.
    %
    %
\end{definition}

\begin{lemma}[Parallel composition of collapse-binding]\label{lem:compositioncollapse}
    Let $\protpi$ be a publicly verifiable post-quantum computational collapse-binding classical commitment scheme and let $k = k(n)$ be a polynomially bounded integer. Then, $\protpi^{k}$ is post-quantum computational collapse-binding. Moreover, if $\protpi$ is statistically (resp. perfectly) hiding, then~$\protpi^{k}$ is statistically (resp. perfectly) hiding.
\end{lemma}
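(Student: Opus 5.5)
The hiding part is immediate. A receiver interacting with $\textsf{S}_{\textnormal{com}}^{k}$ interacts with $k$ independent honest sender executions in parallel. A hybrid over the coordinates, changing $m_i$ to $m'_i$ one at a time, reduces the claim to hiding of a single instance: the other $k-1$ honest senders are simulated internally by an (unbounded) receiver for one instance. This yields statistical distance at most $k$ times the single-instance bound, which is negligible for polynomial $k$, and identical views in the perfect case.

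For collapse-binding I will follow Unruh's hybrid argument from~\cite{U16a}. Fix a QPT adversary $\sdv$ against $\protpi^{k}$. Public verifiability makes $V^{\mathtt{c}}_1$ for $\protpi^{k}$ the tensor product $\bigotimes_i V^{\mathtt{c}^i}_1$ of the single-instance verification projectors, each acting on $(\reg{M}_i,\reg{U}_i)$ and depending only on $\textnormal{Comm}^i$ and $\textnormal{Public}_n$. Define hybrids $H_0,\dots,H_k$. In $H_j$ we first apply $V^{\mathtt{c}}_1$, and if it accepts we measure $\reg{M}_1,\dots,\reg{M}_j$ in the computational basis. Then $H_0$ and $H_k$ are exactly the $b=0$ and $b=1$ branches of $\textnormal{Collapse-Binding}_{\protpi^k,\sdv}$, so an advantage $\epsilon$ gives some $j$ (guessed at random, losing a factor $k$) where $\sdv$ distinguishes $H_{j-1}$ from $H_j$ with advantage $\epsilon/k$.

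Next I build a single-instance adversary $\mathcal{B}$ that embeds the external $\protpi$ execution as coordinate $j$. $\mathcal{B}$ runs $\sdv$'s committing phase, relaying round by round the $j$-th component of each message to the external receiver and playing honest $\textsf{R}_{\textnormal{com}}$ for the other $k-1$ coordinates. This is the only place interaction matters: the messages are concatenations round by round, and $\textsf{R}_{\textnormal{com}}$'s messages are independent across coordinates, so the simulated view is identically distributed. Public verifiability lets $\mathcal{B}$ check the other coordinates itself, since it knows $\textnormal{Comm}^i$. $\mathcal{B}$ then hands the external challenger the registers $(\reg{M}_j,\reg{U}_j)$ together with all remaining registers as its private memory $\reg{S}$. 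Because the projectors commute, applying all $V^{\mathtt{c}^i}_1$ for $i\neq j$ and then the external $V^{\mathtt{c}^j}_1$ realizes $V^{\mathtt{c}}_1$, conditioned on accepting. After the external step, $\mathcal{B}$ measures $\reg{M}_1,\dots,\reg{M}_{j-1}$ exactly when the combined verification accepted, and finally runs $\sdv$'s distinguisher.

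The main obstacle is the case where some other coordinate rejects while coordinate $j$ accepts. Here the external challenger may measure $\reg{M}_j$ even though $V^{\mathtt{c}}_1$ rejected, which does not match either hybrid, where no measurement occurs on rejection. I would handle this as Unruh does. $\mathcal{B}$ first measures $\bigotimes_{i\neq j}V^{\mathtt{c}^i}_1$. If that measurement rejects, $\mathcal{B}$ replaces $(\reg{M}_j,\reg{U}_j)$ with a fixed classical dummy state before forwarding. This is harmless, because in both hybrids the overall test rejects on this branch, so $\sdv$'s subsequent behaviour needs a consistent simulation. Concretely, $\mathcal{B}$ keeps the original $(\reg{M}_j,\reg{U}_j)$ and sends a computational-basis dummy, on which a basis measurement has no effect. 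The branches then match $H_{j-1}$ and $H_j$ exactly, so $\mathcal{B}$ breaks collapse-binding of $\protpi$ with advantage $\epsilon/k$, a contradiction.
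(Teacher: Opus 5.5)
Your proof is correct and follows the paper's argument --- Unruh's hybrids over how many of $\reg{M}_1,\dots,\reg{M}_k$ are measured after acceptance, and a reduction that picks $j$ at random, relays coordinate $j$ to the external receiver while playing the other receivers itself, and measures $\reg{M}_{1:j-1}$. The differences are that the paper sidesteps your rejection-branch obstacle by switching to an equivalent game in which rejection makes the experiment output $0$ (enforced by its flag register $\reg{O}$), whereas your dummy swap works directly with Experiment~\ref{exp:CollapseBinding}, and that you also supply the hiding argument, which the paper does not write out; one small correction is that, because you verify the coordinates $i\neq j$ before coordinate $j$, your rejection-branch state is a more dephased version of $V^{\mathtt{c}}_0\sigma V^{\mathtt{c}}_0$, so your simulated worlds do not match $H_{j-1},H_j$ literally, but that branch is identical for $b=0$ and $b=1$ (as it is in the two hybrids), so it cancels and the advantage is still $\epsilon/k$.
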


\subsubsection{Post-Quantum One-Way Permutations.} We now precisely define the cryptographic primitive used to show that $\novy$'s interactive hashing is post-quantum secure.
\begin{definition}[Post-quantum one-way permutation]\label{def:OneWayPermutation}
    A family of permutation $\pi = \{\pi_{n}\}_n$, where $\pi_{n} : \{0,1\}^{n} \rightarrow \{0,1\}^{n}$ is a \emph{post-quantum one-way permutation} if for any QPT algorithm $A = \{A_n\}_{n}$,
    \begin{equation*}
        \Pr_{x \in_{\textnormal{R}} \{0,1\}^{n}}{\left[ \pi_{n}(A_{n}(\pi_{n}(x)) = \pi_{n}(x)\right] \leq \negl} \enspace.
    \end{equation*}
\end{definition}

\section{The NOVY Experiment} \label{sec:NOVY}

We now explicitly state the
open-once
binding condition applied to the $\novy$ bit commitment scheme.  An adversarial sender $\sdv$ against the $\novy$ bit commitment scheme instantiated with $\pi$, on input~$1^n$, begins by executing the $\novy$ interactive hashing protocol with an honest Rachel, producing \mbox{$y_0, y_1 \in \{0,1\}^{n}$}.
Then, to break the binding condition, $\sdv$ must produce valid openings for both bits simultaneously. Therefore, $\sdv$ must produce $x_0,x_1 \in \{0,1\}^{n}$ such that $\pi_{n}(x_0) = y_0$ and $\pi_{n}(x_1) = y_1$. The resulting experiment, called $\pibind$, is provided below.

\SetAlgorithmName{Experiment}{}{}
\begin{algorithm}[H]
\SetNoFillComment
\DontPrintSemicolon
\SetAlgoLined
The experiment executes the $\novy$ interactive hashing protocol between $\sdv_{n}$ and an honest Rachel, producing outputs $(y_0,y_1)$\;
$\sdv_{n}$ produces $(x_0,x_1)$\;
\Return $(\pi_{n}(x_0) = y_0) \wedge (\pi_{n}(x_1) = y_1)$
\caption{$\pibind(n)$}
\label{exp:piBind}
\end{algorithm}

\noindent We then say that $\novy$ is $\pi$-binding if no QPT adversary $\sdv$ wins the $\pibind(n)$ experiment with non-negligible probability.
\begin{definition}[$\pi$-binding]\label{def:PiBinding}
    The $\novy$ bit commitment scheme, instantiated with permutation family $\pi$, is \emph{computational $\pi$-binding} if for any QPT sender $\sdv$,
    \begin{equation*}
        \Pr[\pibind(n)=1] \leq \negl \enspace.
    \end{equation*}
\end{definition}
As the $\novy$ commitment scheme has unique openings, any quantum adversary breaking either the sum-binding criterion or the collapse-binding criterion can be turned into an adversary which open both bits simultaneously.
\begin{corollary}[$\pi$-binding implies sum-binding and collapse-binding]\label{cor:pibindsumbind}
    Assume that the $\novy$ commitment scheme is instantiated with permutation family $\pi$. If $\novy$ is computational $\pi$-binding then $\novy$ is post-quantum computational sum-binding and post-quantum collapse-binding.
\end{corollary}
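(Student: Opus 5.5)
The plan is to reduce the corollary to Lemma~\ref{lem:ClassBindingUniqueOpeningsImpliesSumBinding} and Lemma~\ref{lem:ClassBindingUniqueOpeningsImpliesCollapseBinding}. Both lemmas require two hypotheses: that $\novy$ has unique openings (Definition~\ref{def:UniqueOpenings}), and that $\novy$ is post-quantum computational open-once binding (Definition~\ref{def:openonceBinding}). So it suffices to verify unique openings directly from the protocol, and then to show that computational $\pi$-binding implies open-once binding.

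\textbf{Unique openings.} Fix the transcript $\mathtt{c}$, which contains $(y_0,y_1)$ with $y_0<_{\textnormal{lex}}y_1$ and the bit $d$. An opening of $b$ is a pair $(x,y)$ such that $y=y_\beta$ with $\beta=d\oplus b$ and $\pi_n(x)=y$. Since $d$ and $b$ determine $\beta$, the value $y=y_{d\oplus b}$ is fixed. Because $\pi_n$ is a permutation, $x=\pi_n^{-1}(y_{d\oplus b})$ is also fixed. Hence each bit has at most one valid opening.

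\textbf{$\pi$-binding implies open-once binding.} I argue by contrapositive. Suppose a QPT $\sdv$ wins $\textnormal{Open-Both}$ with non-negligible probability. I build $\sdv'$ for $\pibind$ as follows:
\begin{enumerate}
\item $\sdv'$ runs $\sdv$ through interactive hashing against the honest Rachel.
\item $\sdv'$ simulates $\sdv$'s announcement of $d$ internally; nothing about $d$ is needed in $\pibind$.
\item $\sdv'$ runs $\sdv$'s opening stage to obtain $(u_0,u_1)$.
\item Each $u_j$ has the form $(b_j,x_j,\tilde y_j)$. $\sdv'$ checks which $x_j$ satisfies $\pi_n(x_j)=y_0$ and which satisfies $\pi_n(x_j)=y_1$, and outputs the matching pair $(x_0',x_1')$.
\end{enumerate}
This check is efficient because $\pi_n$ is efficiently computable. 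Whenever $\textnormal{Open-Both}$ succeeds, both bits are opened, so the values $\beta=d\oplus b$ cover $\{0,1\}$. Thus preimages of both $y_0$ and $y_1$ appear among the openings, and $\sdv'$ wins with the same probability. Combining the two facts, Lemma~\ref{lem:ClassBindingUniqueOpeningsImpliesSumBinding} gives sum-binding and Lemma~\ref{lem:ClassBindingUniqueOpeningsImpliesCollapseBinding} gives collapse-binding.

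The main obstacle is a mismatch in what Rachel's output covers. Interactive hashing outputs both $y_0$ and $y_1$ only when they differ, since $n-1$ linearly independent constraints leave exactly two candidates. This holds by construction because the $h_i$ are in echelon form, so it is a minor point. Otherwise the argument is bookkeeping, and the substantive content lies in the cited lemmas.
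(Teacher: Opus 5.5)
Your proposal is correct and follows the paper's route. The paper's proof simply invokes Lemma~\ref{lem:ClassBindingUniqueOpeningsImpliesSumBinding} and Lemma~\ref{lem:ClassBindingUniqueOpeningsImpliesCollapseBinding}, taking unique openings as evident and treating $\pi$-binding as the \novy\ instance of open-once binding. Your verification of unique openings (via $y_0\neq y_1$ and $\pi_n$ being a permutation) and your reduction from $\textnormal{Open-Both}$ to $\pibind$ correctly fill in details the paper leaves implicit.
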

\begin{proof}
    This is a direct consequence of Lemma~\ref{lem:ClassBindingUniqueOpeningsImpliesSumBinding} and Lemma~\ref{lem:ClassBindingUniqueOpeningsImpliesCollapseBinding} since $\novy$ has unique openings.
    \qed
\end{proof}
Therefore, our reduction of a successful adversary $\sdv$ against $\novy$ to an efficient inverter for the permutation~$\pi$ will use an adversary capable of winning the $\pibind$ experiment with non-negligible probability of success.

\subsection{Modelling $\sdv$ in $\pibindTitle$}\label{subsec:ModelingS}

Before giving the description of our quantum inverter for the permutation family $\pi = \{\pi_{n}\}_{n}$, we need to model the adversarial sender $\sdv$ against $\pibind$. As discussed previously, $\sdv$ is a collection of adversaries $\{\sdv_{n}\}_{n}$, one for each value of the security parameter $n$. 
Each adversary $\sdv_{n}$ first participates in the $\novy$ interactive hashing protocol with an honest Rachel, which produces $y_0,y_1$, and then tries to invert the permutation $\pi_n$ on those values.
\subsubsection{Challenges in \textnormal{$\novy$}.} During interactive hashing, for each round $1 \leq i \leq n-1$, Rachel picks a random challenge~$h$ of the form
\begin{equation*}
    h = 0^{i-1} || 1 || r \enspace,
\end{equation*}
where $r \in \{0,1\}^{n-i}$. We define
\begin{equation*}
    \HHn{i} \coloneq \left\{ 0^{i-1} || 1 || r  : r \in \{0,1\}^{n-i} \right\}
\end{equation*}
as the set of possible challenges for round $i$.
When the security parameter $n$ is clear from the context, we simply write $\HH{i}$.
For a challenge $h \in \HHn{i}$ at round $i$ and $y \in \{0,1\}^{n}$, we define
\begin{equation*}
    h(y) \coloneq \langle h,y\rangle \enspace,
\end{equation*}
the scalar product between $h$ and $y$ (i.e. $\langle h,y\rangle = \left(\sum\limits_{i=1}^{n} h_i \cdot y_i \right)\!\!\!\!\mod 2$, where $x_{i}$ denotes the $i$'th bit of $x$).

\subsubsection{Transcripts in \textnormal{$\novy$}.} At the beginning of round $i$, the classical communication up to that point consists of the challenges $h_{1},\hdots,h_{i-1}$ asked by Rachel and the 1-bit answers $z_{1},\hdots,z_{i-1}$ given by $\sdv_{n}$ in the previous $i-1$ rounds. Together, they form the \emph{transcript} at the beginning of round $i$.
We define
\begin{equation*}
    \TTn{i} \coloneq \left\{ \bigl( (h_{1},\hdots,h_{i-1}) , (z_{1},\hdots,z_{i-1}) \bigr) : h_j \in \HHn{j} \text{ and } z_j \in \{0,1\} \text{ for } 1 \leq j \leq i-1  \right\}
\end{equation*}
as the set of possible transcripts at the beginning of round $i$. When the security parameter $n$ is clear from the context, we simply write $\TT{i}$. We use $\circ$ to denote the concatenation of transcripts:
\begin{equation*}
    \bigl( (h_{1},\hdots,h_{i-1}) , (z_{1},\hdots,z_{i-1}) \bigr) \circ (h,z) \coloneq \bigl( (h_{1},\hdots,h_{i-1},h) , (z_{1},\hdots,z_{i-1},z) \bigr) \enspace.
\end{equation*}
For a transcript $\tt \in \TTn{i}$ at the beginning of round $i$, we define
\begin{equation*}
    \cons{\tt} \coloneq \left\{ y \in \{0,1\}^{n} : h_{j}(y) = z_{j} \text{ for } 1 \leq j \leq i-1, \text{ where } \tt = \bigl( (h_{1},\hdots,h_{i-1}) , (z_{1},\hdots,z_{i-1}) \bigr) \right\} \enspace,
\end{equation*}
which represents the set of values $y \in \{0,1\}^{n}$ that are consistent with the replies $z_{1},\hdots,z_{i-1}$ that were given in response to the challenges $h_{1},\hdots,h_{i-1}$ during the first $i-1$ rounds of the $\novy$ interactive hashing protocol. 
For a transcript $\tt \in \TTn{n}$ after the whole protocol, there are two remaining consistent values, which we denote by $y_0(\tt)$ and $y_1(\tt)$, where $y_{0}(\tt) <_{\textnormal{lex}} y_{1}(\tt)$. In particular, the output of the interactive hashing is then $(y_0(\tt),y_1(\tt))$.



Each adversary $\sdv_{n}$ may come with an auxiliary state\footnote{ Without loss of generality, we may assume that the adversary's auxiliary state is pure  since any mixed state can be purified using purification registers that remain untouched throughout the whole experiment.} $\ket{\psi_{n,\textnormal{aux}}}$
and is divided into $n$ parts: $\sdv_{n,1}$, $\ldots$, $\sdv_{n,n-1}$ for the $n-1$ rounds of interactive hashing and $\sdv_{n,\textnormal{out}}$ for its attempt at inverting $\pi_{n}$ on the values produced by the interactive hashing. As the security parameter $n$ will be fixed from the beginning, we simply write~$\ket{\psi_{\textnormal{aux}}}$ and  $\sdv_{1},\hdots,\sdv_{n-1},\sdv_{\textnormal{out}}$ (the notation $\sdv_{j}$ is now overloaded for both the whole adversary for security parameter $j$ or the adversary's actions in round $j$ of the interactive hashing protocol for some larger security parameter; in the following, the latter is used throughout, with fixed security parameter~$n$). In each round $1 \leq i \leq n-1$, the adversary $\sdv_{i}$ is queried \emph{classically} with challenge $h \in \HH{i}$ and must provide a \emph{classical} answer $z \in \{0,1\}$. We choose to provide all classical communications prior to round $i$ as parameters to the adversary instead of relying upon classical registers holding that information.
This reduces the number of quantum registers required to specify $\sdv$'s actions and having the transcript explicit may help the reader. Therefore, we write $\sdv_{i} = \left\{\sdv^{\tt}_{i} \right\}_{\tt \in \TT{i}}$ to indicate that $\sdv$'s actions in round $i$ are $\sdv^{\tt}_{i}$ when $\tt$ is the transcript for the previous $i-1$ rounds. Similarly, we write $\sdv_{\textnormal{out}} = \{ \sdv_{\textnormal{out}}^{\tt} \}_{\tt \in \TT{n}}$ for the final operation of $\sdv$. 


For each round $1 \leq i \leq n-1$ of interactive hashing, register $\reg{S}_{i}$ contains the adversary's internal memory, including any working registers the adversary may wish to use. At round~$i$, the adversary is queried classically with challenge $h \in \HH{i}$; we model that with a register $\reg{H}_{i}$, in state $\ket{h}_{\reg{H}_{i}}$ at the beginning of the round, which is only used as a control register. To complete round~$i$, register $\reg{R}_{i}$, which is a subregister of $\reg{S}_{i}$, is measured in the computational basis, providing classical reply $z \in \{0,1\}$. Therefore, we model $\sdv_{i}^{\tt}$ as a unitary operator which acts as 
\begin{equation*}
    \sdv_{i}^{\tt} \ket{h}_{\reg{H}_{i}} \ket{\phi}_{\reg{S}_{i}} = \ket{h}_{\reg{H}_{i}} \left( \sqrt{q_{h,0}} \ket{0}_{\reg{R}_{i}} \ket{\phi_{h,0}}_{\reg{S}_{i+1}}+ \sqrt{q_{h,1}} \ket{1}_{\reg{R}_{i}} \ket{\phi_{h,1}}_{\reg{S}_{i+1}} \right) \enspace,
\end{equation*}
where $q_{h,z}$ and $\ket{\phi_{h,z}}$ correspond to the probability that $\sdv$ answers $z$ on challenge $h$ and the post-measurement state, respectively. Notice that the way it is defined $\reg{S}_{i+1}$ has lost register $\reg{R}_{i}$ that belongs to $\reg{S}_{i}$, i.e. $\reg{S}_{i} = \reg{S}_{i+1} \otimes \reg{R}_{i}$. We write $\sdv_{i,h}^{\tt}$ to denote the unitary operator when the challenge is $h \in \HH{i}$, i.e.
\begin{equation*}
    \sdv_{i,h}^{\tt}  \ket{\phi}_{\reg{S}_{i}} =  \sqrt{q_{h,0}} \ket{0}_{\reg{R}_{i}} \ket{\phi_{h,0}}_{\reg{S}_{i+1}}+ \sqrt{q_{h,1}} \ket{1}_{\reg{R}_{i}} \ket{\phi_{h,1}}_{\reg{S}_{i+1}} \enspace.
\end{equation*}
As previously mentioned, $\sdv_{\textnormal{out}} = \{ \sdv_{\textnormal{out}}^{\tt} \}_{\tt \in \TT{n}}$ is used to produce the final output of the $\pibind$ experiment. Register $\reg{S}_{n}$ now contains $\sdv$'s internal memory. To complete the experiment, registers $\reg{X}_{0}$ and $\reg{X}_{1}$, which are subregisters of $\reg{S}_{n}$, are measured in the computational basis to produce $(x_0,x_1) \in \{0,1\}^{n} \times \{0,1\}^{n}$. Therefore, we model $\sdv_{\textnormal{out}}^{\tt}$ as a unitary operator which acts as 
\begin{equation*}
    \sdv_{\textnormal{out}}^{\tt} \ket{\phi}_{\reg{S}_{n}} = \hspace{-0.4cm}\sum_{x_0,x_1 \in \{0,1\}^{n}} \hspace{-0.4cm} \sqrt{q_{x_0,x_1}} \ket{x_0,x_1}_{\reg{X}_{0}\reg{X}_{1}} \ket{\phi_{x_0,x_1}}_{\reg{S}_{n} \setminus \reg{X}_{0}\reg{X}_{1}} \enspace,
\end{equation*}
where $q_{x_{0},x_{1}}$ corresponds to the probability that $\sdv$ outputs $(x_{0},x_{1})$. $\sdv$ is successful in the $\pibind$ experiment if $\pi_n(x_0) = y_0(\tt)$ and $\pi_n(x_1) = y_1(\tt)$.

Therefore, we can rewrite the $\pibind$ experiment using this notation.

\SetAlgorithmName{Experiment}{}{}
\begin{algorithm}[H]
\SetNoFillComment
\DontPrintSemicolon
\SetAlgoLined
Let $\tt$ be an empty transcript\;
$\sigma^{1}_{\reg{S}_{1}} \gets \ket{\psi_{\textnormal{aux}}}\!\bra{\psi_{\textnormal{aux}}}$\;
\For{$i=1$ \KwTo $n-1$}{
Pick $h \in_{\textnormal{R}} \HH{i}$ \;
$(z,\sigma^{i+1}_{\reg{S}_{i+1}}) \gets \CompMeas_{\reg{R}_{i}}\!\left( \sdv_{i,h}^{\tt} \sigma^{i}_{\reg{S}_{i}} (\sdv_{i,h}^{\tt})^{*} \right)$\;
$\tt \gets \tt \circ (h,z)$
}
$((x_0,x_1),\rho) \gets \CompMeas_{\reg{X}_{0}\reg{X}_{1}}\!\left( \sdv_{\textnormal{out}}^{\tt} \sigma^{n}_{\reg{S}_{n}} (\sdv_{\textnormal{out}}^{\tt})^{*} \right)$\;
\Return $(\pi_{n}(x_0) = y_0(\tt)) \wedge (\pi_{n}(x_1) = y_1(\tt))$
\caption{$\pibind(n)$ \hfill \textnormal{Equivalent formulation of Experiment~\ref{exp:piBind}}\hspace*{-2\algomargin}}
\label{exp:piBindEquivalentFormulation}
\end{algorithm}
\subsubsection{$\sdv$'s states and success probability in $\pibindTitle$.}

Let $\tt$ be an empty transcript and define 
\begin{equation*}
    \ket{\psi^{\tt}}_{\reg{S}_{1}} \coloneq \ket{\psi_{\textnormal{aux}}}
\end{equation*}
which is the state of $\sdv$ at the beginning of round 1. For each round $i$, for a transcript $\tt \in \TT{i}$ at the beginning of round $i$ and for a challenge $h \in \HH{i}$, we define recursively
\begin{equation*}
    \sdv^{\tt}_{i,h} \ket{\psi^{\tt}}_{\reg{S}_{i}} = \sqrt{p^{\tt}_{h,0}} \ket{0}_{\reg{R}_{i}} \ket{\smash{\psi^{\tt \circ (h,0)}}}_{\reg{S}_{i+1}}  + \sqrt{p^{\tt}_{h,1}} \ket{1}_{\reg{R}_{i}}  \ket{\smash{\psi^{\tt \circ (h,1)}}}_{\reg{S}_{i+1}} \enspace.
\end{equation*}
Thus, when $\sdv$ is used in $\pibind$, the state in register $\reg{S}_{i}$ at the beginning of round $i$, when the transcript is $\tt \in \TT{i}$, is given by $\ket{\psi^{\tt}}_{\reg{S}_{i}}$, and the probability that $\sdv$ answers $z$ on challenge $h$ is given by~$p^{\tt}_{h,z}$.
Additionally, for $\tt \in \TT{i}$, we denote by $\Pr\limits_{\novy}[\tt]$ the probability that $\tt$ is the transcript at the beginning of round $i$ in $\pibind$. 

For a transcript $\tt \in \TT{n}$ after the whole interactive hashing protocol, we denote $\sdv$'s probability of success in $\pibind$ by
\begin{equation*}
    q^{\tt} \coloneq \left\| \left(\bra{\pi^{-1}_{n}(y_0(\tt)),\pi^{-1}_{n}(y_1(\tt))}_{\reg{X}_{0}\reg{X}_{1}} \otimes \idt_{\reg{S}_{n} \setminus \reg{X}_{0}\reg{X}_{1}} \right) \sdv^{\tt}_{\textnormal{out}} \ket{\psi^{\tt}}_{\reg{S}_{n}} \right\|^{2} \enspace.
\end{equation*}
Therefore, $\sdv$'s overall probability of success in the $\pibind$ experiment, which we denote by $\epsilon_{\sdv}(n)$ is
given~by
\begin{equation*}
    \epsilon_{\sdv}(n) \coloneq \Pr[\pibind(n)=1] = \sum_{\tt \in \TT{n}} \Pr_{\novy}[\tt] \cdot  q^{\tt} \enspace.
\end{equation*}



\section{The Quantum Inverter}\label{sec:Inverter}

In this section, we present our quantum inverter for the permutation family $\pi$, $\qinvpi$, depicted in the algorithm below. Similarly to its classical counterpart, given $y \in \{0,1\}^{n}$, it starts by executing $\qsearcher$, which simulates the first $R$ rounds of $\novy$ with the goal of having the answers provided by the adversary be consistent with the value $y$.
To do so, it uses the Watrous single-reflection procedure presented in Sec.~\ref{subsec:WatrousSingleReflection}. Then, $\qinvpi$ executes $\repeatqstraightline$, which repeats (using the technique described in Sec.~\ref{subsec:ResultCMSZ21}) the remaining $n - R - 1$ rounds
many times with the goal of finding $x \in \{0,1\}^{n}$ such that $\pi_{n}(x) = y$.
The algorithm $\repeatqstraightline$ returns a polynomial-size set of preimages $\mathcal{X} \subseteq \{0,1\}^{n}$.

\SetAlgorithmName{Algorithm}{}{}
\begin{algorithm}[H]
\SetNoFillComment
\DontPrintSemicolon
\SetAlgoLined
\KwOracle{$\sdv$}
\KwInputs{$y \in \{0,1\}^{n}, \sigma \in \Dst{\reg{S}_{1}}$}
$(\tt,\rho) \gets \qsearcher^{\sdv}_{R}(y,\sigma)$\;
$\mathcal{X} \gets \repeatqstraightline^{\sdv}_{\varepsilon,\delta,T,R,N}(\tt,\rho)$\;
\lIf{$\pi_{n}^{-1}(y) \in \mathcal{X}$}{\Return $\pi_{n}^{-1}(y)$ \textbf{else} \Return \KwFail} \label{step:qInvStep3}
\caption{$\qinvpi^{\sdv}(y,\sigma)$}   
\label{algo:qInv}
\end{algorithm}


\subsubsection{An assumption on $\sdv$'s probability of success.} 
Our inverter is designed to run with oracle access to a QPT adversary $\sdv$ that wins $\pibind$ with non-negligible probability~\mbox{$\epsilon_{\sdv} = \epsilon_{\sdv}(n)$}. Therefore, there exists an integer $\kappa$ such that
\[
    \epsilon_{\sdv} \geq \frac{1}{n^{\kappa}} 
\]
for infinitely many~$n$. We assume that $\kappa$ is known to the inverter, so that it may be used in the choice of parameters. For the remainder of the analysis, we only consider values of $n$ for which this inequality holds.

\subsubsection{Parameters in $\qinvpi$.}
As one may have noticed, the algorithms $\qsearcher$ and $\repeatqstraightline$ also use parameters, namely $\varepsilon$, $\delta$, $T$, $R$ and $N$. The value for these parameters will be precisely determined later, and will depend on the value of $n$ (the security parameter) and the value of $\kappa$ (see above). 
For now, we give an intuition for their role and their rough scaling. The parameter $R$ determines the number of interactive hashing rounds that are done in $\qsearcher$, the remaining $n-R-1$ rounds are done in $\repeatqstraightline$; it will be chosen as $R \in n - \kappa \cdot \Theta(\lg{(n)})$.
The intuition for this choice of $R$ is the following. In $\qsearcher$, the adversary is not used exactly as in the $\novy$ experiment. Therefore, each round done in $\qsearcher$ accumulates noise towards the success probability.
Taking $R$ too large makes the noise overwhelming.
On the other hand, if $R$ is too small, the number of consistent values $y$ at the end of $\qsearcher$, which grows as $2^{n-R}$, becomes overwhelming and the $\repeatqstraightline$ procedure is not able to find the correct preimage with non-negligible probability.
The parameters $\varepsilon$, $\delta$ and $T$ are internal parameters to the $\textnormal{ValEst}$
and $\textnormal{Repair}$
procedures defined in Sec.~\ref{subsec:ResultCMSZ21}. Intuitively, they control the almost-projective property of the $\textnormal{ValEst}$ procedure and the number of attempts in the $\textnormal{Repair}$ procedure to get back to a \emph{good} state. They will be chosen as $\varepsilon \in \Theta(\frac{1}{n^{3\kappa}})$, $\delta \in \Theta(\frac{1}{n^{6\kappa}})$ and $T \in \Theta(n^{5\kappa})$. Finally, the parameter~$N$ determines the number of times that $\repeatqstraightline$ repeats the last $n-R-1$ rounds of interactive hashing, in hope of finding the preimage of $y$. It will be chosen as $N \in \Theta(n^{2\kappa})$.


We present $\qsearcher$ in Sec.~\ref{subsec:qSearcher} and $\repeatqstraightline$ in Sec.~\ref{subsec:RepeatqStraightLine}. Then, in Sec.~\ref{sec:Proof}, we show that $\qinvpi$ produces $\pi_{n}^{-1}(y)$ for $y \in_{\textnormal{R}} \{0,1\}^{n}$ in polynomial time with non-negligible probability of success.

\subsection{$\qsearcher$}\label{subsec:qSearcher}

In this section, we present the $\qsearcher$ algorithm, which constitutes the first part of our quantum inverter. Similarly to its classical counterpart, $\searcher$, presented in the introduction, its goal is to execute the first~$R$ rounds of $\novy$ in a way that the answers provided by the adversary are consistent with the value~$y$ that our inverter is trying to invert. In the classical case, at each round $1 \leq i \leq R$, $\searcher$ simply queries the adversary with random challenges $h \in_{\textnormal{R}} \HH{i}$ until the adversary answers $z = h(y)$, and rewinds when this is not the case.
Since perfect rewinding is not possible for quantum adversaries, the task at hand is more involved. The only property of the adversary $\sdv$ that can be relied upon is its ability to have success in the $\pibind$ experiment.
Thus, we do not merely want the output $(\tt,\rho) \in \TT{R+1} \times \Dst{\reg{R}_{R+1}}$ produced by $\qsearcher_{R}^{\sdv}$ to satisfy $y \in \cons{\tt}$, we also want $\rho$ to be \emph{close} to the state produced in $\pibind$ upon transcript $\tt$. Moreover, we would also like the distribution of the transcript $\tt$ to be close to its distribution in the $\pibind$ experiment. Clearly, this cannot hold for every $y$ because $\qsearcher_{R}^{\sdv}$ is designed to only produce transcripts $\tt$ such that $y \in \cons{\tt}$.
Nevertheless, we show in Sec.~\ref{sec:Proof} that this only needs to hold after averaging over the random choice of $y \in_{\textnormal{R}} \{0,1\}^{n}$.

To achieve this, we use the Watrous single-reflection procedure presented in Sec.~\ref{subsec:WatrousSingleReflection}, which allows to produce a state \emph{close} to a desired \emph{success state} (see Lemma~\ref{lem:WatRefl}).
Therefore, it remains to define, for each round, an appropriate \emph{success measurement}, whose successful outcome guarantees that the adversary's answer is consistent with $y$ and that the post-measurement state corresponds to the state produced in the $\pibind$ experiment.

\subsubsection{The success measurement.} For each round $1 \leq i \leq R$ and for each transcript $\tt \in \TT{i}$, the goal is to make $\sdv^{\tt}_{i}$ answers $z = h(y)$ for some challenge $h \in \HH{i}$, as if $\sdv^{\tt}_{i}$ were in $\novy$. We define the following unitary $U^{i,\tt} \in \Unitary{\reg{H}_{i}  \reg{S}_{i}}$ and projector $\Delta_{1}^{i,y,\tt} \in \Proj{\reg{H}_{i}  \reg{S}_{i}}$.
\begin{equation*}
    U^{i,\tt} \coloneq \sdv_{i}^{\tt} \,\bigl(\mathsf{H}^{\otimes (n-i)}_{\reg{H}_{i}} \otimes \idt_{\reg{S}_{i}}\bigr) \enspace,
\end{equation*}
\begin{equation*}
    \Delta_{1}^{i,y,\tt} \coloneq (U^{i,\tt})^{*} \left( \sum_{h \in \HH{i}} \ket{h,h(y)}\!\bra{h,h(y)}_{\reg{H}_{i}\reg{R}_{i}} \otimes \idt_{\reg{S}_{i+1}} \right) U^{i,\tt} \enspace,
\end{equation*}
where $\mathsf{H}$ is the Hadamard transform.
Suppose that register $\reg{H}_{i}$ starts in state $\ket{0}_{\reg{H}_{i}}$, then $U^{i,\tt}$ puts the challenges of round~$i$ in uniform superposition and runs the adversary $\sdv_{i}^{\tt}$ in superposition on those challenges. Then, $\Delta^{i,y,\tt}_{1}$ checks if the answers in register $\reg{R}_{i}$ are consistent with $y$.
Thus, in each round, we aim to measure the state with projective measurement $\Delta^{i,y,\tt} = \{\Delta^{i,y,\tt}_{1}, \Delta^{i,y,\tt}_{0} \coloneq \idt - \Delta^{i,y,\tt}_{1}\}$ and obtain outcome~$1$. Once we obtain outcome~$1$, we simply need to reapply
the unitary~$U^{i,\tt}$
and measure registers~$\reg{H}_{i}\reg{R}_{i}$ to complete the round.
Indeed, in that case, conditioned on obtaining $h$ when measuring register~$\reg{H}_{i}$, the measurement of register~$\reg{R}_{i}$ necessarily yields $h(y)$ and the resulting post-measurement state is exactly the adversary's state after replying~$h(y)$ to challenge $h$.

We also define the projector $\Gamma^{i} \in \Proj{\reg{H}_{i}  \reg{S}_{i}}$ which will be used to carry out the reflection in the Watrous single-reflection procedure (see Procedure~\ref{proc:WatRefl}),
\begin{equation*}
    \Gamma^{i} \coloneq \ket{0}\!\bra{0}_{\reg{H}_{i}} \otimes \idt_{\reg{S}_{i}} \enspace.
\end{equation*}
Therefore, in $\qsearcher$ (depicted in Algorithm~\ref{algo:qSearcher} below), for each round $1 \leq i \leq R$, we first initialize register~$\reg{H}_{i}$ to $\ket{0}_{\reg{H}_{i}}$.
Then, we apply the $\textnormal{WatRefl}^{\Gamma^{i}}_{\Delta^{i,y,\tt}}$ procedure, which aims to produce a state close to having measured $\Delta^{i,y,\tt}$ and obtained outcome~$1$. We complete the round by applying the unitary $U^{i,\tt}$ and measuring registers $\reg{H}_{i}\reg{R}_{i}$.

\SetAlgorithmName{Algorithm}{}{}
\begin{algorithm}[H]
\SetNoFillComment
\DontPrintSemicolon
\SetAlgoLined
\KwOracle{$\sdv$}
\KwInputs{$y \in \{0,1\}^{n}, \sigma \in \Dst{\reg{S}_{1}}$}
\KwParameter{$R \in \{1,\hdots,n-1\}$}
Let $\tt$ be an empty transcript\;
$\sigma^{1} \gets \sigma$\;
\For{$i=1$ \KwTo $R$}{
$\rho^{i} \gets \textnormal{WatRefl}^{\Gamma^{i}}_{\Delta^{i,y,\tt}}(\ket{0}\!\bra{0}_{\reg{H}_{i}} \otimes \sigma^{i}_{\reg{S}_{i}})$\;
$((h,z),\sigma^{i+1}_{\reg{S}_{i+1}}) \gets \CompMeas_{\reg{H}_{i}\reg{R}_{i}}\!\left( U^{i,\tt} \rho^{i} (U^{i,\tt})^{*} \right)$ \label{step:MesureqSearcher}\;
$\tt \gets \tt \circ (h,z)$\;
}
\Return{$(\tt,\sigma^{R+1})$}
\caption{$\qsearcher^{\sdv}_{R}(y,\sigma)$} 
\label{algo:qSearcher}
\end{algorithm}

\subsection{$\repeatqstraightline$}\label{subsec:RepeatqStraightLine}

Before describing the $\repeatqstraightline$ algorithm, we introduce the $\qstraightline$ experiment (depicted below),
which verifies if the quantum version of the $\straightline$ procedure, presented in the introduction, succeeds. More precisely, we perform the $\pibind$ experiment (see Experiment~\ref{exp:piBindEquivalentFormulation}) but starting from round $R+1$.


\SetAlgorithmName{Experiment}{}{}
\begin{algorithm}[H]
\SetNoFillComment
\DontPrintSemicolon
\SetAlgoLined
\KwOracle{$\sdv$}
\KwInputs{$\tt \in \TT{R+1}$, $\sigma \in \Dst{\reg{S}_{R+1}}$}
\KwParameter{$R \in \{1,\hdots,n-1\}$}
$\sigma^{R+1} \gets \sigma$\;
\For{$i=R+1$ \KwTo $n-1$}{
Pick $h \in_{\textnormal{R}} \HH{i}$ \;
$(z,\sigma^{i+1}_{\reg{S}_{i+1}}) \gets \CompMeas_{\reg{R}_{i}}\!\left( \sdv_{i,h}^{\tt} \sigma^{i}_{\reg{S}_{i}} (\sdv_{i,h}^{\tt})^{*} \right)$\;
$\tt \gets \tt \circ (h,z)$
}
$((x_0,x_1),\rho) \gets \CompMeas_{\reg{X}_{0}\reg{X}_{1}}\!\left( \sdv_{\textnormal{out}}^{\tt} \sigma^{n}_{\reg{S}_{n}} (\sdv_{\textnormal{out}}^{\tt})^{*} \right)$\;
\Return $(\pi_{n}(x_0) = y_0(\tt)) \wedge (\pi_{n}(x_1) = y_1(\tt))$
\caption{$\qstraightline^{\sdv}_{R}(\tt,\sigma)$}
\label{exp:qStraightLine}
\end{algorithm}

\noindent To apply the results of Sec.~\ref{subsec:ResultCMSZ21}, we reformulate the $\qstraightline$ experiment in the same form as Experiment~\ref{exp:GeneralExpt}, which is possible because all the challenges can be picked at the beginning of the experiment and the intermediate measurements of the registers~$\reg{R}_{R+1}$ to~$\reg{R}_{n-1}$ can be deferred until the end of the experiment.
The corresponding challenge set is 
\begin{equation*}
    \HHsl \coloneq \HH{R+1} \times \hdots \times \HH{n-1} \enspace.
\end{equation*}
In each round, the adversary's unitary depends on the answers from previous rounds (through the transcript~$\tt$). Thus, we keep registers $\reg{R}_{R+1}$ to $\reg{R}_{n-1}$ until the end and use them as control registers for the adversary's unitaries. More formally, for $\tt \in \TT{R+1}$ and for $h = (h_{R+1},\hdots,h_{n-1}) \in \HHsl$, the controlled version of the adversary's unitary for round $i \in \{R+1,\hdots,n-1\}$ is defined as
\begin{equation*}
    \widetilde{\sdv}_{i,h}^{\tt} \coloneq \sum_{z_{R+1},\hdots,z_{i-1} \in \{0,1\}} \ket{z_{R+1},\hdots,z_{i-1}}\!\bra{z_{R+1},\hdots,z_{i-1}}_{\reg{R}_{R+1:i-1}} \otimes \idt_{\reg{R}_{i:n-1}} \otimes \sdv_{i,h_{i}}^{\tt \circ (h_{R+1},z_{R+1}) \circ \cdots \circ (h_{i-1},z_{i-1})}  \enspace.
\end{equation*}
Similarly, the controlled version of the adversary's final unitary is defined as
\begin{equation*}
    \widetilde{\sdv}_{\textnormal{out},h}^{\tt} \coloneq \sum_{z_{R+1},\hdots,z_{n-1} \in \{0,1\}} \ket{z_{R+1},\hdots,z_{n-1}}\!\bra{z_{R+1},\hdots,z_{n-1}}_{\reg{R}_{R+1:n-1}} \otimes \sdv_{\textnormal{out}}^{\tt \circ (h_{R+1},z_{R+1}) \circ \cdots \circ (h_{n-1},z_{n-1})} \enspace.
\end{equation*}
Therefore, the total unitary applied in a execution of the $\qstraightline$ experiment, when the initial transcript is $\tt \in \TT{R+1}$ and the challenge picked is $h \in \HHsl$, is given by
\begin{equation*}
    U_{\textnormal{SL},h}^{\tt} \coloneq \widetilde{\sdv}_{\textnormal{out},h}^{\tt} \widetilde{\sdv}_{n-1,h}^{\tt} \cdots \widetilde{\sdv}_{R+1,h}^{\tt} \enspace.
\end{equation*}
Hence, the $\qstraightline$ experiment is equivalently described by the following experiment, which, importantly, follows the structure of Experiment~\ref{exp:GeneralExpt}.

\SetAlgorithmName{Experiment}{}{}
\begin{algorithm}[H]
\SetNoFillComment
\DontPrintSemicolon
\SetAlgoLined
\KwOracle{$\sdv$}
\KwInputs{$\tt \in \TT{R+1}$, $\sigma \in \Dst{\reg{S}_{R+1}}$}
\KwParameter{$R \in \{1,\hdots,n-1\}$}
Pick $h = (h_{R+1},\hdots,h_{n-1}) \in_{\textnormal{R}} \HHsl$ \;
$\rho \gets  U_{\textnormal{SL},h}^{\tt} \sigma ( U_{\textnormal{SL},h}^{\tt})^{*}$\;
$((z_{R+1},\hdots,z_{n-1},x_0,x_1),\tau) \gets \CompMeas_{\reg{R}_{R+1}\cdots\reg{R}_{n-1}\reg{X}_{0}\reg{X}_{1}}\!\left(\rho \right)$\;
\Return $(\pi_{n}(x_0) = y_0(\tt')) \wedge (\pi_{n}(x_1) = y_1(\tt'))$ where $\tt' = \tt \circ (h_{R+1},z_{R+1}) \circ \cdots \circ (h_{n-1},z_{n-1})$ \;
\caption{$\qstraightline^{\sdv}_{R}(\tt,\sigma)$  \hfill \textnormal{Equivalent formulation of Experiment~\ref{exp:qStraightLine}}\hspace*{-2\algomargin}}
\label{exp:qStraightLineEquivalentFormulation}
\end{algorithm}

\noindent Therefore, we may use the results of Sec.~\ref{subsec:ResultCMSZ21}, which allow us to \emph{repeat} the $\qstraightline$ experiment many times in order to \emph{record} several accepting answers through the recording algorithm $\textnormal{Rec}$ (see Algorithm~\ref{algo:Rec} and Theorem~\ref{thm:rec_rewinding})\footnote{
Notice that the recording theorem presented in Sec.~\ref{subsec:ResultCMSZ21} is essential to our argument, rather than the variants originally proposed by Chiesa, Ma, Spooner \& Zhandry in \cite{CMSZ21}. Indeed, merely observing that the $\qstraightline$ experiment succeeds does not, by itself, produce a preimage. Moreover, after observing a success, we have no reason to believe that measuring the pair of preimages is collapsing, as would be required to directly apply the results of~\cite{CMSZ21}.
}.
In particular, for $\tt \in \TT{R+1}$ and $\sigma \in \Dst{\reg{S}_{R+1}}$, an accepting answer
in the $\qstraightline_{R}^{\sdv}(\tt,\sigma)$ experiment above is a tuple $(z_{R+1},\hdots,z_{n-1},x_{0},x_{1})$ which satisfies $\pi_{n}(x_{0}),\pi_{n}(x_{1}) \in \cons{\tt}$. Notice that the $\qstraightline$ experiment does not explicitly depend on some particular value in $\cons{\tt}$.
Hence, if the inputs $\tt$ and $\sigma$ are such that $y \in \cons{\tt}$ and that~$\sigma$ is close to the state produced in $\pibind$ when the transcript is~$\tt$ (which is precisely what we want $\qsearcher$ to ouput), then, in average over $y$ and $\tt$, there is a good probability that $\pi_{n}^{-1}(y)$ is amongst the recorded answers.
The $\repeatqstraightline$ algorithm simply execute the recording algorithm with the $\qstraightline$ experiment.

\SetAlgorithmName{Algorithm}{}{}
\begin{algorithm}[H]
\SetNoFillComment
\DontPrintSemicolon
\SetAlgoLined
\KwOracle{$\sdv$}
\KwInputs{$\tt \in \TT{R+1}, \sigma \in \Dst{\reg{S}_{R+1}}$}
\KwParameters{$\varepsilon \in (0,\nicefrac{1}{2}], \delta \in (0,\nicefrac{1}{2}], T \in \mathbb{N}, R \in \{1,\hdots,n-1\}, N \in \mathbb{N}$}
Let $\mathcal{X}$ be an empty set\;
$\mathfrak{R} \gets \textnormal{Rec}^{\varepsilon,\delta,T,N}_{\qstraightline^{\sdv}_{R}(\tt,\cdot)}(\sigma)$ \label{step:RecordingInRepeatqStraightLine}\;
\For{\textnormal{\textbf{every}} $(z_{R+1},\hdots,z_{n-1},x_0,x_1) \in \mathfrak{R}$}{
$\mathcal{X} \gets \mathcal{X} \cup \{x_0,x_1\}$
}
\Return{$\mathcal{X}$}
\caption{$\repeatqstraightline^{\sdv}_{\varepsilon,\delta,T,R,N}(\tt,\sigma)$} 
\label{algo:RepeatqStraightLine}
\end{algorithm}

\noindent Notice that the $\repeatqstraightline$ algorithm does not explicitly use the value $y$ that our inverter is trying to invert.
Thus, the dependence on $y$, other than being consistent with transcript~$\tt$, may only come from the input state~$\sigma$.


\section{$\textnormal{\textsc{novy}}$ is Post-Quantum Secure}
\label{sec:Proof}

In this section, we prove that $\novy$ commitment scheme is post-quantum computational open-once binding (see Definition~\ref{def:openonceBinding}), sum-binding (see Definition~\ref{def:SumBinding}) and collapse-binding (see Definition~\ref{def:CollapseBinding}). To this end, using Corollary~\ref{cor:pibindsumbind}, it suffices to prove that $\novy$ is $\pi$-binding (see Definition~\ref{def:PiBinding}). 
Consequently, we prove that $\qinvpi^{\sdv}(y,\ket{\psi_{\textnormal{aux}}}\!\bra{\psi_{\textnormal{aux}}})$ produces $\pi_{n}^{-1}(y)$ in polynomial time with non-negligible probability over $y \in_{\textnormal{R}} \{0,1\}^{n}$, 
when QPT adversary $\sdv$, with auxiliary state $\ket{\psi_{\textnormal{aux}}}$, has non-negligible probability of success in the $\pibind$ experiment.


For analysis purposes, in $\qsearcher$, instead of relying upon classical strings for the transcripts and for the value of $y$, we introduce register $\reg{Y}$ to hold the value of $y$, and we keep registers $\reg{H}_{1},\reg{R}_{1},\hdots,\reg{H}_{R},\reg{R}_{R}$ until the end to use them as control registers. For each round $1 \leq i \leq R$, we denote by $\reg{T}_{i}$ the collection of registers $\reg{H}_{1},\reg{R}_{1},\hdots,\reg{H}_{i-1},\reg{R}_{i-1}$, which together hold the transcript at the beginning of round $i$. 
In the following, we define the \emph{controlled versions} of each of the operations done in $\qsearcher$ (see Algorithm~\ref{algo:qSearcher}).

First, for each round $1 \leq i \leq R$, we define the controlled version of the $\textnormal{WatRefl}^{\Gamma^{i}}_{\Delta^{i,y,\tt}}$ procedure, which is controlled by registers $\reg{Y}$ and $\reg{T}_{i}$ instead of being parametrised by $y$ and $\tt$. More formally, define $\mathtt{c}\textnormal{-}\textnormal{WatRefl}^{i}$ as a channel which acts on registers $\reg{Y}\reg{T}_{i}\reg{H}_{i}\reg{S}_{i}$ and such that for every value $y \in \{0,1\}^{n}$, transcript~$\tt \in \TT{i}$ and state $\sigma \in \Dst{\reg{H}_{i}\reg{S}_{i}}$,
\begin{equation*}
    \mathtt{c}\textnormal{-}\textnormal{WatRefl}^{i}\left( \ket{y}\!\bra{y}_{\reg{Y}} \otimes \ket{\tt}\!\bra{\tt}_{\reg{T}_{i}} \otimes \sigma_{\reg{H}_{i}\reg{S}_{i}} \right) = \ket{y}\!\bra{y}_{\reg{Y}} \otimes \ket{\tt}\!\bra{\tt}_{\reg{T}_{i}} \otimes \textnormal{WatRefl}^{\Gamma^{i}}_{\Delta^{i,y,\tt}}(\sigma_{\reg{H}_{i}\reg{S}_{i}}) \enspace.
\end{equation*}
Moreover, we define the controlled version of the unitary $U^{i,\tt}$, which is controlled by register $\reg{T}_{i}$ instead of being parametrised by $\tt$. Formally, define $\mathtt{c}\textnormal{-}U^{i} \in \Unitary{\reg{Y}\reg{T}_{i}\reg{H}_{i}\reg{S}_{i}}$ as
\begin{equation*}
    \mathtt{c}\textnormal{-}U^{i} \coloneq \idt_{\reg{Y}} \otimes \sum_{\tt \in \TT{i}} \ket{\tt}\!\bra{\tt}_{\reg{T}_{i}} \otimes U^{i,\tt} \enspace.
\end{equation*}
Finally, to reproduce the computational basis measurement of Step~\ref{step:MesureqSearcher} of $\qsearcher$, we introduce the dephasing channel for registers $\reg{H}_{i}\reg{R}_{i}$, denoted $M^{i}$, which acts on registers $\reg{Y}\reg{T}_{i}\reg{H}_{i}\reg{S}_{i}$ and such that for every state $\sigma \in \Dst{\reg{Y}\reg{T}_{i}\reg{H}_{i}\reg{S}_{i}}$,
\begin{equation*}
    M^{i}(\sigma) = \sum_{\substack{h \in \HH{i}\\ z \in \{0,1\}}} \left( \idt_{\reg{Y}\reg{T}_{i}\reg{S}_{i+1}} \otimes \ket{h,z}\!\bra{h,z}_{\reg{H}_{i}\reg{R}_{i}} \right) \sigma \left( \idt_{\reg{Y}\reg{T}_{i}\reg{S}_{i+1}} \otimes \ket{h,z}\!\bra{h,z}_{\reg{H}_{i}\reg{R}_{i}} \right) \enspace.
\end{equation*}
The controlled version of the $\qsearcher$ algorithm is depicted in the following procedure. Conceptually, $\mathtt{c}\textnormal{-}\qsearcher$ is simply the $\qsearcher$ algorithm, but with the input value $y$ stored in register $\reg{Y}$ and with the output transcript kept as a mixture in register $\reg{T}_{R+1}$. 

\SetAlgorithmName{Procedure}{}{}
\begin{algorithm}[H]
\SetNoFillComment
\DontPrintSemicolon
\SetAlgoLined
\KwOracle{$\sdv$}
\KwInputs{$\sigma \in \Dst{\reg{Y} \reg{S}_{1}}$
}
\KwParameter{$R \in \{1,\hdots,n-1\}$}
Let $\tt$ be an empty transcript\;
$\sigma^{1} \gets \ket{\tt}\!\bra{\tt}_{\reg{T}_{1}} \otimes \sigma_{\reg{Y}\reg{S}_{1}}$\;
\For{$i=1$ \KwTo $R$}{
$\rho^{i} \gets \mathtt{c}\textnormal{-}\textnormal{WatRefl}^{i}(\ket{0}\!\bra{0}_{\reg{H}_{i}} \otimes \sigma^{i})$\;
$\sigma^{i+1} \gets M^{i}\left( \mathtt{c}\textnormal{-}U^{i} \rho^{i} (\mathtt{c}\textnormal{-}U^{i})^{*}\right)$
}
\Return{$\sigma^{R+1}$}
\caption{$\mathtt{c}\textnormal{-}\qsearcher^{\sdv}_{R}(\sigma)$} 
\label{proc:TildeqSearcher}
\end{algorithm}

\noindent Similarly, we do this transformation for the other steps of $\qinvpi$ (see Algorithm~\ref{algo:qInv}).
We define the controlled version of the $\repeatqstraightline$ algorithm, in which the classical output (the set $\mathcal{X} \subseteq \{0,1\}^{n}$) is placed into register $\reg{X}$. More formally, define $\mathtt{c}\textnormal{-}\repeatqstraightline^{\sdv}$ 
as a parametrised channel which maps states from registers $\reg{Y}\reg{T}_{R+1}\reg{S}_{R+1}$ to registers $\reg{Y}\reg{T}_{R+1}\reg{X}$ such that for every value $y \in \{0,1\}^{n}$, transcript $\tt \in \TT{R+1}$ and state $\sigma \in \Dst{\reg{S}_{R+1}}$,
\begin{multline}
    \hspace{-0.45cm} \mathtt{c}\textnormal{-}\repeatqstraightline^{\sdv}_{\varepsilon,\delta,T,R,N}\left( \ket{y}\!\bra{y}_{\reg{Y}} \otimes \ket{\tt}\!\bra{\tt}_{\reg{T}_{R+1}} \otimes \sigma_{\reg{S}_{R+1}} \right) \\ = \ket{y}\!\bra{y}_{\reg{Y}} \otimes \ket{\tt}\!\bra{\tt}_{\reg{T}_{R+1}} \otimes \repeatqstraightline^{\sdv}_{\varepsilon,\delta,T,R,N}(\tt,\sigma) \enspace.
\end{multline}
%
Finally, for the Step~\ref{step:qInvStep3} of $\qinvpi$, instead of returning $\pi^{-1}_{n}(y)$ if it's contained in the output of $\repeatqstraightline$, we simply check if $\qinvpi$ successfully produced a value $x$ such that $\pi_{n}(x) = y$. More formally, we define the following success projective measurement $\Pi^{\textnormal{succ}} = \{\Pi^{\textnormal{succ}}_{1},\Pi^{\textnormal{succ}}_{0} \coloneq \idt - \Pi^{\textnormal{succ}}_{1}\}$ where $\Pi^{\textnormal{succ}}_{1} \in \Proj{\reg{Y}\reg{X}\reg{T}_{R+1}}$ and
%
\begin{equation*}
    \Pi^{\textnormal{succ}}_{1}  \coloneq  \sum_{y \in \{0,1\}^{n}}  \ket{y}\!\bra{y}_{\reg{Y}} \otimes \sum_{\substack{\mathcal{X} \subseteq \{0,1\}^{n}:\\ \pi_{n}^{-1}(y) \in \mathcal{X}}} \ket{\mathcal{X}}\!\bra{\mathcal{X}}_{\reg{X}} \otimes \idt_{\reg{T}_{R+1}} \enspace.
\end{equation*}
Therefore, to characterise the success probability of $\qinvpi$ on a random $y \in_{\textnormal{R}} \{0,1\}^{n}$, we simply need to set the register $\reg{Y}$ in a uniform mixture. The following experiment and lemma establish that fact.

\SetAlgorithmName{Experiment}{}{}
\begin{algorithm}[H]
\SetNoFillComment
\DontPrintSemicolon
\SetAlgoLined
\KwOracle{$\sdv$}
\KwInput{$\sigma \in \Dst{\reg{S}_{1}}$}
$\displaystyle \sigma \gets \frac{1}{2^{n}} \hspace{-0.2cm} \sum_{y \in \{0,1\}^{n}} \hspace{-0.2cm} \ket{y}\!\bra{y}_{\reg{Y}} \otimes \sigma_{\reg{S}_{1}}$\;
$\rho \gets \mathtt{c}\textnormal{-}\qsearcher^{\sdv}_{R}(\sigma)$\;
$\tau \gets \mathtt{c}\textnormal{-}\repeatqstraightline^{\sdv}_{\varepsilon,\delta,T,R,N}(\rho)$\;
$(s,\omega) \gets \Pi^{\textnormal{succ}}(\tau)$\;
\Return $s$
\caption{$\textnormal{Success-}\qinvpi^{\sdv}(\sigma)$}   
\label{exp:SuccessqInv}
\end{algorithm}

\begin{lemma} \label{lem:qInv=SuccessqInv}
    For every state $\sigma \in \Dst{\reg{S}_{1}}$,
    \begin{equation*}
        \Pr\limits_{y \in_{R} \{0,1\}^{n}}[ \mathsf{qInv}_{\pi}^{\sdv}(y,\sigma) = \pi^{-1}_{n}(y)] = \Pr[\textnormal{Success-}\mathsf{qInv}_{\pi}^{\sdv}(\sigma) = 1] \enspace.
    \end{equation*}
\end{lemma}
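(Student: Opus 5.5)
The plan is to prove the equality by linearity, reducing it to a statement for each fixed value of $y$. Every step of $\textnormal{Success-}\qinvpi^{\sdv}$ is either a controlled channel, a unitary that leaves register $\reg{Y}$ unchanged, or a dephasing of classical registers. So the computational-basis content of $\reg{Y}$ is preserved throughout, and the input state $\frac{1}{2^n}\sum_y \ket{y}\!\bra{y}_{\reg{Y}}\otimes\sigma$ evolves block by block. Concretely, I will show that for each fixed $y$ the state after $\mathtt{c}\textnormal{-}\qsearcher^{\sdv}_{R}$ on input $\ket{y}\!\bra{y}_{\reg{Y}}\otimes\sigma$ equals
\[
\ket{y}\!\bra{y}_{\reg{Y}}\otimes\sum_{\tt\in\TT{R+1}}\Pr[\tt]\,\ket{\tt}\!\bra{\tt}_{\reg{T}_{R+1}}\otimes\sigma^{R+1}_{\tt},
\]
where $\Pr[\tt]$ and $\sigma^{R+1}_{\tt}$ are the probability that $\qsearcher^{\sdv}_{R}(y,\sigma)$ outputs transcript $\tt$ and the corresponding normalized output state.

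This identity is proved by induction on the round $i$. Suppose that at the start of round $i$ the state is a cq-state $\ket{y}\!\bra{y}\otimes\sum_{\tt}\Pr[\tt]\ket{\tt}\!\bra{\tt}\otimes\sigma^i_{\tt}$. By definition, $\mathtt{c}\textnormal{-}\textnormal{WatRefl}^{i}$ acts on each block $(y,\tt)$ as $\textnormal{WatRefl}^{\Gamma^i}_{\Delta^{i,y,\tt}}$, and $\mathtt{c}\textnormal{-}U^{i}$ acts on each block as $U^{i,\tt}$. The dephasing $M^{i}$ then produces, in each block, the mixture over outcomes $(h,z)$ with the correct probabilities and post-measurement states. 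Identifying $\reg{T}_i\reg{H}_i\reg{R}_i$ with $\reg{T}_{i+1}$ via $\tt\circ(h,z)$ gives the cq-form for round $i+1$. This matches Step~\ref{step:MesureqSearcher} of Algorithm~\ref{algo:qSearcher}, where the outcome is recorded classically.

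Next, applying $\mathtt{c}\textnormal{-}\repeatqstraightline^{\sdv}$, which by its defining equation acts blockwise, yields
\[
\tau=\frac{1}{2^n}\sum_{y}\ket{y}\!\bra{y}\otimes\sum_{\tt}\Pr[\tt\mid y]\ket{\tt}\!\bra{\tt}\otimes\sum_{\mathcal{X}}\Pr[\mathcal{X}\mid y,\tt]\ket{\mathcal{X}}\!\bra{\mathcal{X}}_{\reg{X}},
\]
with the same conditional distributions as the classical run of $\qinvpi^{\sdv}(y,\sigma)$. Here I treat the output of $\repeatqstraightline$ as a classical set stored in $\reg{X}$. Since $\Pi^{\textnormal{succ}}_1$ is diagonal in this basis,
\[
\tr(\Pi^{\textnormal{succ}}_1\tau)=\frac{1}{2^n}\sum_y\Pr[\pi_n^{-1}(y)\in\mathcal{X}\mid y].
\]
The right-hand side is exactly $\Pr_y[\qinvpi^{\sdv}(y,\sigma)=\pi_n^{-1}(y)]$, because Step~\ref{step:qInvStep3} returns $\pi_n^{-1}(y)$ precisely when it lies in $\mathcal{X}$ and returns \KwFail{} otherwise.

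I do not expect a real obstacle. The only care needed is in the bookkeeping of the inductive step: the controlled channels must be read as block-diagonal in $\reg{Y}\reg{T}_i$, so that the input being a mixture rather than a single basis state does no harm, and the register identification $\reg{T}_{i+1}=\reg{T}_i\reg{H}_i\reg{R}_i$ must be stated explicitly.
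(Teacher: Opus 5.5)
Your proof is correct and matches the reasoning the paper relies on. The paper gives no separate proof of this lemma; it treats it as immediate from the construction of the controlled procedures, which act block-diagonally on $\reg{Y}\reg{T}_i$ and reproduce \qsearcher{} and \repeatqstraightline{} in each block. Your round-by-round induction, with the identification $\reg{T}_{i+1}=\reg{T}_i\reg{H}_i\reg{R}_i$ and linearity over the mixture in $\reg{Y}$, is exactly the bookkeeping that makes this explicit.
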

Hence, the goal is to show that $\Pr[\textnormal{Success-}\qinvpi^{\sdv}(\sigma) = 1]$ is non-negligible for $\sigma = \ket{\psi_{\textnormal{aux}}}\!\bra{\psi_{\textnormal{aux}}}$, the auxiliary state for which $\sdv$ has non-negligible probability of success in the $\pibind$ experiment.

\subsection{Analysis of $\qsearcher$} \label{subsec:AnalyisqSearcher}

In this section, we establish that $\qsearcher$ behaves as it is intended: It mostly produces $(\tt,\sigma)$ such that $y \in \cons{\tt}$ and that $\sigma$ is close to the state produced in $\pibind$ when the transcript is~$\tt$. Moreover, the distribution on the transcript~$\tt$, after averaging over the random choice of $y \in_{\textnormal{R}} \{0,1\}^{n}$, is similar to its distribution in the $\pibind$ experiment.

Analogously to the classical analysis of Haitner \& Reingold~\cite{HR14}, we wish to compare the distribution of outputs of $\qsearcher$ with an \emph{ideal} one, one for which we know that $\repeatqstraightline$ will have high probability of success in finding $\pi^{-1}_{n}(y)$. In our case, instead of analysing the classical distribution of the output transcript~$\tt$, given the classical input $y$, we compare the output state of $\mathtt{c}\textnormal{-}\qsearcher$, when given an uniform mixture on register $\reg{Y}$, with an \emph{ideal state}.

\subsubsection{The ideal state.} 
In the $\pibind$ experiment, when the challenges and answers are kept in quantum registers, the state at the beginning of round $i$ is, by definition,
\begin{equation*}
    \sum_{\tt \in \TT{i}} \Pr_{\novy}[\tt] \ket{\tt}\!\bra{\tt}_{\reg{T}_{i}} \otimes \ket{\psi^{\tt}}\!\bra{\psi^{\tt}}_{\reg{S}_{i}} \enspace.
\end{equation*}
Since the only usable property of~$\sdv$ is its ability to have success in the $\pibind$ experiment, the ideal state should have the same marginal state on registers $\reg{T}_{i}\reg{S}_{i}$.
Moreover, each value $y$ should only be coupled with transcripts $\tt$ satisfying $y \in \cons{\tt}$.
Thus, we define the \emph{ideal state at the beginning of round $i$} as
\begin{equation*}
    \sigma^{i}_{\textnormal{Ideal}} \coloneq \sum_{\tt \in \TT{i}} \Pr_{\novy}[\tt] \ket{\tt}\!\bra{\tt}_{\reg{T}_{i}} \otimes \ket{\psi^{\tt}}\!\bra{\psi^{\tt}}_{\reg{S}_{i}} \otimes \frac{1}{2^{n-(i-1)}} \hspace{-0.3cm} \sum_{y \in \cons{\tt}} \hspace{-0.4cm} \ket{y}\!\bra{y}_{\reg{Y}} \enspace.
\end{equation*}
Intuitively, this corresponds to the \emph{quantum version} of the ideal distribution $\textsf{D}_{\textnormal{ideal}}$ of Haitner \& Reingold presented in the introduction.

Recall that the Watrous single-reflection procedure (see Sec.~\ref{subsec:WatrousSingleReflection}) needs the success measurement, $\Delta^{i,y,\tt}$ in this case, to have success probability close to~$\nicefrac{1}{2}$ in order to produce a state close to the post-measurement state after a success. The probability to measure $\Delta^{i,y,\tt}_{1}$ corresponds to the probability that adversary $\sdv_{i}^{\tt}$ answers $z = h(y)$ for a random challenge $h \in_{\textnormal{R}} \HH{i}$. For a fixed $y$, there is, in general, no reason to believe that this probability is close to~$\nicefrac{1}{2}$. However, the structure of $\novy$'s interactive hashing entails that, in average over the consistent $y$'s, there is a probability $\nicefrac{1}{2}$ of measuring $\Delta^{i,y,\tt}_{1}$ (this corresponds to the ``balanced'' property of Naor, Ostrovsky, Venkatesan \& Yung
in~\cite{NOVY98}).
This structure, together with the fact that $y$ is stored in register $\reg{Y}$ instead of kept as a classical value, enables us to directly bound the purified distance between the output of $\mathtt{c}\textnormal{-}\qsearcher$ and the ideal state at the beginning of round~$R+1$. In particular, this does not depend on $\sdv$'s success probability.


\begin{lemma} \label{lem:PDqSearcherIdeal}
    Define
    \begin{equation*}
        \sigma_{\reg{Y}\reg{S}_{1}} \coloneq \frac{1}{2^{n}} \hspace{-0.2cm} \sum\limits_{y \in \{0,1\}^{n}} \hspace{-0.2cm} \ket{y}\!\bra{y}_{\reg{Y}} \otimes \ket{\psi_{\textnormal{aux}}}\!\bra{\psi_{\textnormal{aux}}} \enspace,
    \end{equation*}
    where $\ket{\psi_{\textnormal{aux}}}$ is $\sdv$'s auxiliary state, then
    \begin{equation*}
        \PD\!\left(\mathtt{c}\textnormal{-}\qsearcher^{\sdv}_{R}(\sigma),\sigma^{R+1}_{\textnormal{Ideal}}\right) \leq \frac{9}{\sqrt{2^{n-R}}} \enspace.
    \end{equation*}
\end{lemma}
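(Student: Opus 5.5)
The plan is a round-by-round hybrid argument, in the spirit of Haitner \& Reingold but carried out with the purified distance. For $1\le i\le R$, let $\Phi^{i}$ be the channel performed in round $i$ of $\mathtt{c}\textnormal{-}\qsearcher$ (Procedure~\ref{proc:TildeqSearcher}):
\[
\Phi^{i}(\tau)=M^{i}\bigl(\mathtt{c}\textnormal{-}U^{i}\,\mathtt{c}\textnormal{-}\textnormal{WatRefl}^{i}(\ket{0}\!\bra{0}_{\reg{H}_i}\otimes\tau)\,(\mathtt{c}\textnormal{-}U^{i})^{*}\bigr).
\]
The input $\sigma$ of the lemma, tensored with the empty transcript, is exactly $\sigma^{1}_{\textnormal{Ideal}}$. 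The purified distance is a metric, and it is contractive under CPTP maps (Lemma~\ref{lem:ContractivityPD}). Telescoping therefore gives
\[
\PD\bigl(\Phi^{R}\cdots\Phi^{1}(\sigma^{1}_{\textnormal{Ideal}}),\sigma^{R+1}_{\textnormal{Ideal}}\bigr)\le\sum_{i=1}^{R}\PD\bigl(\Phi^{i}(\sigma^{i}_{\textnormal{Ideal}}),\sigma^{i+1}_{\textnormal{Ideal}}\bigr).
\]
It then suffices to prove a one-round bound of the form $\PD(\Phi^{i}(\sigma^{i}_{\textnormal{Ideal}}),\sigma^{i+1}_{\textnormal{Ideal}})\le c\cdot 2^{-(n-i)/2}$. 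Summing this geometric series gives $c\,(1-2^{-1/2})^{-1}2^{-(n-R)/2}\approx 3.42\,c\cdot 2^{-(n-R)/2}$, and the constant $9$ should follow once $c$ is pinned down, with $c$ around $2.6$.

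For the one-round bound, I first reduce to a fixed transcript $\tt\in\TT{i}$. Both states are block diagonal in $\reg{T}_i$ with the same weights $\Pr_{\novy}[\tt]$, and $\Phi^i$ preserves this structure, so by the cq-state formula for the fidelity it is enough to lower bound the fidelity for each $\tt$. Fix $\tt$, write $N_i=2^{n-i+1}=|\cons{\tt}|$, and set $\ket{\chi_y}=\Delta^{i,y,\tt}_1\ket{0}\ket{\psi^{\tt}}$ and $p_y=\|\chi_y\|^2$.

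The first structural fact is the ``balanced'' property. The challenge $h\in\HH{i}$ has its leading $1$ at position $i$, and $y_i,\dots,y_n$ are free on $\cons{\tt}$. Hence for every $h$ exactly half of the $y\in\cons{\tt}$ satisfy $h(y)=z$, for each $z$, which gives $\frac1{N_i}\sum_y p_y=\nicefrac12$.

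The second step identifies the ideal next state. Apply $U^{i,\tt}$ to the state $\sum_y \frac{1}{N_i}\ket{y}\!\bra{y}\otimes 2\ket{\chi_y}\!\bra{\chi_y}$ and dephase $\reg{H}_i\reg{R}_i$. This yields exactly the $\tt$-block of $\sigma^{i+1}_{\textnormal{Ideal}}$: each outcome $(h,z)$ occurs with weight $p^{\tt}_{h,z}/|\HH{i}|$ and post-measurement state $\ket{\psi^{\tt\circ(h,z)}}$, and $y$ is uniform on $\cons{\tt\circ(h,z)}$. By contractivity, what remains is to compare $\mathtt{c}\textnormal{-}\textnormal{WatRefl}^i$ applied to $\frac1{N_i}\sum_y\ket{y}\!\bra{y}\otimes\ket{0\psi^{\tt}}\!\bra{0\psi^{\tt}}$ with this ``joint success state''. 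The cq-fidelity formula together with Lemma~\ref{lem:WatRefl} gives
\[
\F\ge\frac1{N_i}\sum_y\sqrt{2p_y}\cdot 2\sqrt{q_y},\qquad q_y=\|(\idt-\Gamma^i)\ket{\chi_y}\|^2 .
\]

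The main obstacle is showing that on average $q_y$ is essentially $\nicefrac14$, and then turning that into a fidelity $1-O(1/|\HH{i}|)$ rather than $1-O(1/\sqrt{|\HH{i}|})$. The latter would only give $\PD\sim 2^{-(n-i)/4}$ and ruin the sum. To estimate $q_y=p_y-\|\Gamma^i\ket{\chi_y}\|^2$, write $\Gamma^i\Delta^{i,y,\tt}_1\Gamma^i=\ket0\!\bra0\otimes\frac1{|\HH{i}|}\sum_h A_{h,h(y)}$, where $A_{h,z}=(\sdv^{\tt}_{i,h})^*(\ket z\!\bra z_{\reg{R}_i}\otimes\idt)\sdv^{\tt}_{i,h}$. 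Expanding $\frac1{N_i}\sum_y\|\Gamma^i\ket{\chi_y}\|^2$ gives a double sum over $(h,h')$. For $h\neq h'$, the vector $h\oplus h'$ is nonzero on the free coordinates, so $(h(y),h'(y))$ is uniform on $\{0,1\}^2$ over $\cons{\tt}$. These cross terms therefore average to $\frac14\bra{\psi}\bigl(\sum_z A_{h,z}\bigr)\bigl(\sum_{z'}A_{h',z'}\bigr)\ket\psi=\frac14$. The diagonal terms contribute $\nicefrac12$ each. The total is $\frac14+\frac{1}{4|\HH{i}|}$, hence $\frac1{N_i}\sum_y q_y=\frac14-\frac1{4|\HH{i}|}$.

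Finally I need to convert these averages into a lower bound on $\frac1{N_i}\sum_y\sqrt{8p_yq_y}$. My plan is to use the pointwise constraints $q_y\le\nicefrac14$, $q_y\le p_y-p_y^2$ and $\frac1{N_i}\sum_y p_y=\nicefrac12$, and write $p_y=\nicefrac12+a_y$, $q_y=\nicefrac14-b_y$ with $b_y\ge0$ and $\mathbb{E}[a_y]=0$. A second-order expansion, with $a_y^2\le b_y$ coming from $q_y\le p_y(1-p_y)=\frac14-a_y^2$, should give $\F\ge 1-O(\mathbb{E}[b_y])=1-O(2^{-(n-i)})$, and hence the per-round bound $\PD\le c\,2^{-(n-i)/2}$. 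Tracking the constants in this step, which I expect to be the delicate part, is what produces the final $9$.
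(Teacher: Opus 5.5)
Your proposal is correct, and it takes a different route from the paper within each round.

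\textbf{The paper's route.} It passes through the intermediate state $\sigma^{i}_{\textnormal{All-1}}$, which has uniform $y$ and each branch replaced by the normalized success state $\ket{\chi_y}/\sqrt{p_y}$. It then uses the triangle inequality and bounds two separate errors:
\begin{itemize}
\item the Watrous-reflection error, via $\sqrt{4q_y}\ge 4q_y$ and $\mathbb{E}[q_y]$;
\item the mismatch of the $y$-weights after $M^{i}\circ\mathtt{c}\textnormal{-}U^{i}$. Its fidelity is $\sqrt2\,\mathbb{E}[\sqrt{p_y}]$, controlled by the variance bound $\mathbb{E}[(p_y-\tfrac12)^2]\le\frac{1}{4|\HH{i}|}$, which comes from Cauchy--Schwarz on the vectors $\ket{D^{\tt}_{y}}$.
\end{itemize}
This yields $(1+\sqrt2)\,2^{-(n-i)/2}$ per round.

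\textbf{Your route.} Your state with $y$-weights $2p_y/N_i$ is normalized precisely because of the balanced property. It is exactly the preimage of $\sigma^{i+1}_{\textnormal{Ideal}}$ under $M^{i}\circ\mathtt{c}\textnormal{-}U^{i}$, so the second error source vanishes and the whole per-round error sits in $\mathbb{E}[\sqrt{8p_yq_y}]$. Your pointwise constraint $q_y\le p_y(1-p_y)$ is the paper's Cauchy--Schwarz step in disguise. In its notation, $a_y=\frac12\langle\psi^{\tt}|D^{\tt}_{y}\rangle$ and $b_y=\frac14\|\ket{D^{\tt}_{y}}\|^2$.

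\textbf{The step you left open goes through.} For $|a_y|\le\frac12$ one has $\sqrt{1+2a_y}\ge 1+a_y-2a_y^2\ge 0$, and also $\sqrt{1-4b_y}\ge 1-4b_y\ge 0$. Combined with $a_y^2\le b_y$ and $|a_y b_y|\le b_y/2$, this gives $\sqrt{8p_yq_y}\ge 1+a_y-8b_y$. Averaging, with $\mathbb{E}[a_y]=0$ and $\mathbb{E}[b_y]=\frac{1}{4|\HH{i}|}$, gives $\F\ge 1-2\cdot 2^{-(n-i)}$. Hence $\PD\le 2\cdot 2^{-(n-i)/2}$ per round, and the geometric sum is at most $\frac{2}{1-2^{-1/2}}\,2^{-(n-R)/2}<7\cdot 2^{-(n-R)/2}$.

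\textbf{Comparison.} Your approach needs one fidelity estimate instead of two and gives a slightly better constant than the paper's $\approx 8.24$. The price is treating $\sqrt{p_yq_y}$ jointly with a second-order expansion. The paper's decomposition keeps reflection quality and reweighting separate, and each of its bounds is a one-variable estimate.
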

\begin{proof}[sketch]
    We proceed by a hybrid argument.
    %
    %
    %
    For each round $1 \leq i \leq R$, we denote by $\textnormal{Real}_{i}$ the channel implemented during round $i$ of $\mathtt{c}\textnormal{-}\qsearcher$ (see Procedure~\ref{proc:TildeqSearcher}).
    More precisely, for every state $\rho \in \Dst{\reg{Y}\reg{T}_{i}\reg{S}_{i}}$,
    \begin{equation*}
        \textnormal{Real}_{i}(\rho) \coloneq  M^{i}\!\left( \mathtt{c}\textnormal{-}U^{i} \, \, \mathtt{c}\textnormal{-}\textnormal{WatRefl}^{i}(\ket{0}\!\bra{0}_{\reg{H}_{i}} \otimes \rho) (\mathtt{c}\textnormal{-}U^{i})^{*}\right) \enspace.
    \end{equation*}
    Hence, for each round $1 \leq i \leq R$,
    \begin{align*}
        \PD\!\left(\mathtt{c}\textnormal{-}\qsearcher^{\sdv}_{i}(\sigma),\,\sigma^{i+1}_{\textnormal{Ideal}}\right) &= \PD\!\left(\textnormal{Real}_{i}(\mathtt{c}\textnormal{-}\qsearcher^{\sdv}_{i-1}(\sigma)),\sigma^{i+1}_{\textnormal{Ideal}}\right)\\
        &\leq \PD\!\left(\textnormal{Real}_{i}(\mathtt{c}\textnormal{-}\qsearcher^{\sdv}_{i-1}(\sigma)),\textnormal{Real}_{i}(\sigma^{i}_{\textnormal{Ideal}})\right) + \PD\!\left(\textnormal{Real}_{i}(\sigma^{i}_{\textnormal{Ideal}}),\sigma^{i+1}_{\textnormal{Ideal}}\right)\\
        &\leq \PD\!\left(\mathtt{c}\textnormal{-}\qsearcher^{\sdv}_{i-1}(\sigma),\sigma^{i}_{\textnormal{Ideal}}\right) + \PD\!\left(\textnormal{Real}_{i}(\sigma^{i}_{\textnormal{Ideal}}),\sigma^{i+1}_{\textnormal{Ideal}}\right) \enspace,
    \end{align*}
    where the first inequality follows from the triangle inequality and the second from the contractivity of the purified distance under CPTP maps. Since $\mathtt{c}\textnormal{-}\qsearcher^{\sdv}_{0}(\sigma) = \sigma^{1}_{\textnormal{Ideal}}$ (they are both equal to the initial state~$\sigma$), a simple recurrence yields 
    \begin{equation*}
        \PD\!\left(\mathtt{c}\textnormal{-}\qsearcher^{\sdv}_{R}(\sigma),\sigma^{R+1}_{\textnormal{Ideal}}\right) \leq \sum_{i=1}^{R}\PD\!\left(\textnormal{Real}_{i}(\sigma^{i}_{\textnormal{Ideal}}),\sigma^{i+1}_{\textnormal{Ideal}}\right) \enspace.
    \end{equation*}
    Hence, we only need to analyse one round at a time. There are two sources for the noise accumulated in each round. First, the Watrous single-reflection procedure, when applied on $\ket{\psi^{\tt}}$, \emph{attempts} at producing a state close to having measured $\ket{\psi^{\tt}}$ with $\Delta^{i,y,\tt}$ and obtained outcome 1. 
    Second, even if for every $y$ and for every transcript $\tt$, we were able to obtain outcome~1 after the measurement $\Delta^{i,y,\tt}$, the distribution on the challenges and answers is different than in $\pibind$ (which is how they are distributed in $\sigma^{i+1}_{\textnormal{Ideal}})$. Indeed, after measuring $\Delta^{i,y,\tt}_{1}$, the amplitude associated to each challenge $h \in \HH{i}$ is renormalised by the probability to answer $h(y)$, which is not the case in $\pibind$.

    As discussed in Sec.~\ref{subsec:WatrousSingleReflection}, the error caused by the first source depends on the proximity between $\|(1-\Gamma^{i})\Delta^{i,y,\tt}_{1} \ket{0}_{\reg{H}_{i}}\ket{\psi^{\tt}}_{\reg{S}_{i}} \|^{2}$ and $\nicefrac{1}{4}$. As the state $\sigma^{i}_{\textnormal{Ideal}}$ contains a uniform mixture of the consistent values~$y$ for each transcript $\tt$, we prove that the error associated with~$\tt$ is controlled by the difference between~$\nicefrac{1}{4}$~and
    \begin{equation*}
        \frac{1}{2^{n-(i-1)}} \hspace{-0.3cm} \sum_{y \in \cons{\tt}} \hspace{-0.3cm} \left\|(1-\Gamma^{i})\Delta^{i,y,\tt}_{1} \ket{0}\ket{\psi^{\tt}} \right\|^{2} \enspace.
    \end{equation*}
    For the second source, we prove that the error associated with~$\tt$ is controlled by the difference between~$\sqrt{\nicefrac{1}{2}}$ and
    \begin{equation*}
        \frac{1}{2^{n-(i-1)}} \hspace{-0.3cm} \sum_{y \in \cons{\tt}} \hspace{-0.2cm} \sqrt{\left\|\Delta^{i,y,\tt}_{1} \ket{0}\ket{\psi^{\tt}} \right\|^{2}} \enspace.
    \end{equation*}
    An important property of $\novy$'s interactive hashing is its symmetry regarding the challenges and the values $y$. In particular, for any transcript $\tt \in \TT{i}$,
    \begin{equation*}
        \frac{1}{2^{n-(i-1)}} \hspace{-0.3cm} \sum_{y \in \cons{\tt}} \hspace{-0.3cm} \left\|\Delta^{i,y,\tt}_{1} \ket{0}\ket{\psi^{\tt}} \right\|^{2} = \frac{1}{2} \enspace,
    \end{equation*}
    which means that, in average over the consistent $y$'s, there is a probability $\nicefrac{1}{2}$ of measuring $\Delta^{i,y,\tt}_{1}$.
    To bound the two error contributions, it remains to show that, in average over the consistent $y$'s, $\Delta^{i,y,\tt}_{1} \ket{0}_{\reg{H}_{i}}\ket{\psi^{\tt}}$ has register $\reg{H}_{i}$ in state $\ket{0}_{\reg{H}_{i}}$ with probability approximately~$\nicefrac{1}{2}$ and that $\sqrt{\left\|\Delta^{i,y,\tt}_{1} \ket{0}\ket{\psi^{\tt}} \right\|^{2}}$ is approximately~$\sqrt{\nicefrac{1}{2}}$. We leave the details of these calculations to App.~\ref{app:AnalysisqSearcher}. Precisely, using only the structure of $\novy$'s interactive hashing, we obtain that for every round~$i$ and every transcript~$\tt \in \TT{i}$,
    \begin{equation*}
        \frac{1}{2^{n-(i-1)}} \hspace{-0.3cm} \sum_{y \in \cons{\tt}} \hspace{-0.3cm} \left\|(1-\Gamma^{i})\Delta^{i,y,\tt}_{1} \ket{0}\ket{\psi^{\tt}} \right\|^{2} = \frac{1}{4} - \frac{1}{4 \cdot 2^{n-i}} \enspace,
    \end{equation*}
    and it holds that
    \begin{equation*}
        \frac{1}{2^{n-(i-1)}} \hspace{-0.3cm} \sum_{y \in \cons{\tt}} \hspace{-0.2cm} \sqrt{\left\|\Delta^{i,y,\tt}_{1} \ket{0}\ket{\psi^{\tt}} \right\|^{2}} \geq \frac{1}{\sqrt{2}} - \frac{1}{2\sqrt{2} \cdot 2^{n-i}} \enspace.
    \end{equation*}
    Together, they allow us to bound
    \begin{equation*}
        \PD\!\left(\textnormal{Real}_{i}(\sigma^{i}_{\textnormal{Ideal}}),\sigma^{i+1}_{\textnormal{Ideal}}\right) \leq \frac{1+\sqrt{2}}{\sqrt{2^{n-i}}} \enspace.
    \end{equation*}
    The lemma's statement follows from the fact that
    \begin{equation*}
        \sum_{i=1}^{R} \frac{1+\sqrt{2}}{\sqrt{2^{n-i}}} \leq \frac{9}{\sqrt{2^{n-R}}} \enspace.
    \end{equation*}
    The formal proof is provided in App.~\ref{app:AnalysisqSearcher}.
    \qedSketch
\end{proof}
Hence, to analyse the success probability of the  $\textnormal{Success-}\qinvpi^{\sdv}(\sigma)$ experiment (see Experiment~\ref{exp:SuccessqInv}), with $\sdv$'s auxiliary state as input, we replace the state produced by $\mathtt{c}\textnormal{-}\qsearcher$ with the ideal state at the beginning of round~$R+1$. The following experiment makes this replacement explicit.


\SetAlgorithmName{Experiment}{}{}
\begin{algorithm}[H]
\SetNoFillComment
\DontPrintSemicolon
\SetAlgoLined
\KwOracle{$\sdv$}
$\rho \gets \sigma^{R+1}_{\textnormal{Ideal}}$\;
$\tau \gets \mathtt{c}\textnormal{-}\repeatqstraightline^{\sdv}_{\varepsilon,\delta,T,R,N}(\rho)$\;
$(s,\omega) \gets \Pi^{\textnormal{succ}}(\tau)$\;
\Return $s$
\caption{$\textnormal{Ideal-}\qinvpi^{\sdv}$}   
\label{exp:IdealqInv}
\end{algorithm}

\begin{corollary} \label{cor:ProbSuccessqInvVSIdealqInv}
    Let $\sigma = \ket{\psi_{\textnormal{aux}}}\!\bra{\psi_{\textnormal{aux}}}$ be $\sdv$'s auxiliary state, then
    \begin{equation*}
        \left| \sqrt{\Pr\limits_{y \in_{R}\{0,1\}^{n}}\!\left[ \mathsf{qInv}_{\pi}^{\sdv}(y,\sigma) = \pi^{-1}_{n}(y)\right]} - \sqrt{\Pr\!\left[\textnormal{Ideal-}\mathsf{qInv}_{\pi}^{\sdv} = 1\right]} \right| \leq \frac{9}{\sqrt{2^{n-R}}} \enspace.
    \end{equation*}
\end{corollary}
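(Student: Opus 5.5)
The corollary follows by combining Lemma~\ref{lem:qInv=SuccessqInv}, Lemma~\ref{lem:PDqSearcherIdeal}, Lemma~\ref{lem:ContractivityPD} and Lemma~\ref{lem:PurifiedDistance}.

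First, by Lemma~\ref{lem:qInv=SuccessqInv}, the left-hand probability equals $\Pr[\textnormal{Success-}\qinvpi^{\sdv}(\sigma)=1]$. In Experiment~\ref{exp:SuccessqInv}, the first step produces exactly the state $\sigma_{\reg{Y}\reg{S}_{1}}$ of Lemma~\ref{lem:PDqSearcherIdeal}. Let $\rho_{\textnormal{real}} \coloneq \mathtt{c}\textnormal{-}\qsearcher^{\sdv}_{R}(\sigma_{\reg{Y}\reg{S}_{1}})$ and $\rho_{\textnormal{ideal}} \coloneq \sigma^{R+1}_{\textnormal{Ideal}}$. With this notation, both experiments apply the same channel $\mathcal{E} \coloneq \mathtt{c}\textnormal{-}\repeatqstraightline^{\sdv}_{\varepsilon,\delta,T,R,N}$ and then measure with $\Pi^{\textnormal{succ}}$. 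The only difference is that Success-$\qinvpi$ starts from $\rho_{\textnormal{real}}$ while Ideal-$\qinvpi$ starts from $\rho_{\textnormal{ideal}}$. Hence
\begin{equation*}
\Pr[\textnormal{Success-}\qinvpi^{\sdv}(\sigma)=1]=\tr\bigl(\Pi^{\textnormal{succ}}_{1}\,\mathcal{E}(\rho_{\textnormal{real}})\bigr), \qquad \Pr[\textnormal{Ideal-}\qinvpi^{\sdv}=1]=\tr\bigl(\Pi^{\textnormal{succ}}_{1}\,\mathcal{E}(\rho_{\textnormal{ideal}})\bigr).
\end{equation*}

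Next, apply Lemma~\ref{lem:PurifiedDistance} with the projector $\Pi^{\textnormal{succ}}_{1}$ to the states $\mathcal{E}(\rho_{\textnormal{real}})$ and $\mathcal{E}(\rho_{\textnormal{ideal}})$. This bounds the difference of square roots by $\PD(\mathcal{E}(\rho_{\textnormal{real}}),\mathcal{E}(\rho_{\textnormal{ideal}}))$. By contractivity (Lemma~\ref{lem:ContractivityPD}), this is at most $\PD(\rho_{\textnormal{real}},\rho_{\textnormal{ideal}})$, and Lemma~\ref{lem:PDqSearcherIdeal} bounds that by $9/\sqrt{2^{n-R}}$.

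The only point requiring care is that $\mathcal{E}$ is a genuine CPTP map on $\reg{Y}\reg{T}_{R+1}\reg{S}_{R+1}$, so that contractivity applies. It is defined only on product inputs $\ket{y}\!\bra{y}\otimes\ket{\tt}\!\bra{\tt}\otimes\sigma$. To make it a channel on all states, I would first apply the dephasing channel on $\reg{Y}\reg{T}_{R+1}$ and then the classically controlled channel. This extension leaves $\mathcal{E}$ unchanged on both $\rho_{\textnormal{real}}$ and $\rho_{\textnormal{ideal}}$, since both are classical on $\reg{Y}\reg{T}_{R+1}$: the measurements $M^{i}$ keep the transcript registers classical, and $\reg{Y}$ is never touched coherently.

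Once this is checked, the remaining steps are direct.
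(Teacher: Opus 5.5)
Your proposal is correct and follows the paper's argument, which cites exactly Lemma~\ref{lem:qInv=SuccessqInv}, Lemma~\ref{lem:PDqSearcherIdeal} and Lemma~\ref{lem:PurifiedDistance}. You also make explicit two points the paper leaves implicit: the contractivity step through $\mathtt{c}\textnormal{-}\repeatqstraightline$, and the extension of that controlled map to a genuine CPTP map, which is harmless because both inputs are classical on $\reg{Y}\reg{T}_{R+1}$.
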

\begin{proof}
    This follows directly from Lemma~\ref{lem:PurifiedDistance}, Lemma~\ref{lem:qInv=SuccessqInv} and Lemma~\ref{lem:PDqSearcherIdeal}.
    \qed
\end{proof}
Thus, since the goal is to prove that $\qinvpi^{\sdv}(y,\sigma)$ produces $\pi^{-1}_{n}(y)$ with non-negligible probability over $y \in_{\textnormal{R}} \{0,1\}^{n}$, it remains to show that $\mathtt{c}\textnormal{-}\repeatqstraightline$, with $\sigma^{R+1}_{\textnormal{Ideal}}$ as input, has probability of success exceeding $\frac{81}{2^{n-R}}$ by a non-negligible amount.


\subsection{Analysis of $\repeatqstraightline$} \label{subsec:AnalysisRepeatqStraightLine}

In this section, we prove that the $\textnormal{Ideal-}\qinvpi^{\sdv}$ experiment defined in the previous section succeeds with probability sufficiently large to conclude the analysis.
To do so, we need to analyse $\mathtt{c}\textnormal{-}\repeatqstraightline$ when it has $\sigma^{R+1}_{\textnormal{Ideal}}$ as input. The form of $\sigma^{R+1}_{\textnormal{Ideal}}$ allows us to reduce this analysis to executions of $\repeatqstraightline$ with input $(\tt,\ket{\psi^{\tt}}\!\bra{\psi^{\tt}})$,
with~$\tt$ distributed as in the $\pibind$ experiment at the beginning of round~$R+1$,
as we show in Lemma~\ref{lem:ProbSuccessqInv} below.
Intuitively, the recording algorithm inside $\repeatqstraightline$ produces many accepting answers whenever the $\qstraightline$ experiment has sufficiently large success probability. Each accepting answer contains a pair of preimages that may contain $\pi_{n}^{-1}(y)$. However, the recording algorithm may produce the same preimage pair multiple times, so we must show that there are many distinct recorded pairs. Moreover, distinct preimage pairs may share preimages, so we must relate the number of distinct pairs to the number of distinct preimages ultimately produced by $\repeatqstraightline$. We then average over the transcript at the beginning of round $R+1$ to obtain the desired success probability.


\begin{lemma}\label{lem:ProbSuccessqInv}
    For every transcript $\tt \in \TT{R+1}$, let $\mathcal{X}^{\tt} \gets \mathsf{RepeatqStraightLine}^{\sdv}_{\varepsilon,\delta,T,R,N}(\tt,\ket{\psi^{\tt}}\!\bra{\psi^{\tt}})$, then
    \begin{equation*}
        \Pr[\textnormal{Ideal-}\mathsf{qInv}_{\pi}^{\sdv} = 1] = \frac{1}{2^{n-R}} \sum_{\tt \in \TT{R+1}}  \Pr_{\textnormal{\novy}}[\tt] \cdot\mathbb{E}\left[|\mathcal{X}^{\tt}|\right] \enspace.
    \end{equation*}
\end{lemma}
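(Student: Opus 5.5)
The plan is to expand the definition of $\sigma^{R+1}_{\textnormal{Ideal}}$ and push it through the controlled channel $\mathtt{c}\textnormal{-}\repeatqstraightline$. The aim is to reduce the success probability to an average, over $\tt$, of the probability that a uniformly random consistent $y$ has its preimage in $\mathcal{X}^{\tt}$.

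First, I will write
\begin{equation*}
\sigma^{R+1}_{\textnormal{Ideal}}=\sum_{\tt\in\TT{R+1}}\Pr_{\novy}[\tt]\,\ket{\tt}\!\bra{\tt}_{\reg{T}_{R+1}}\otimes\ket{\psi^{\tt}}\!\bra{\psi^{\tt}}_{\reg{S}_{R+1}}\otimes\frac{1}{2^{n-R}}\sum_{y\in\cons{\tt}}\ket{y}\!\bra{y}_{\reg{Y}} .
\end{equation*}
This is a classical mixture over $(y,\tt)$ with the state $\ket{\psi^{\tt}}$ depending only on $\tt$. By linearity and the defining property of $\mathtt{c}\textnormal{-}\repeatqstraightline$ on product inputs, the output is
\begin{equation*}
\sum_{\tt}\Pr_{\novy}[\tt]\,\frac{1}{2^{n-R}}\sum_{y\in\cons{\tt}}\ket{y}\!\bra{y}\otimes\ket{\tt}\!\bra{\tt}\otimes\repeatqstraightline(\tt,\ket{\psi^{\tt}}\!\bra{\psi^{\tt}}) .
\end{equation*}
Here the last factor is the classical distribution of $\mathcal{X}^{\tt}$ written on register $\reg{X}$. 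The key point is that it does not depend on $y$, since $\repeatqstraightline$ never reads $y$.

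Next, applying $\Pi^{\textnormal{succ}}_1$ gives
\begin{equation*}
\Pr[\textnormal{Ideal-}\qinvpi^{\sdv}=1]=\sum_{\tt}\Pr_{\novy}[\tt]\,\frac{1}{2^{n-R}}\sum_{y\in\cons{\tt}}\Pr\!\left[\pi_n^{-1}(y)\in\mathcal{X}^{\tt}\right]=\sum_{\tt}\Pr_{\novy}[\tt]\,\frac{1}{2^{n-R}}\,\mathbb{E}\Bigl[\bigl|\{y\in\cons{\tt}:\pi_n^{-1}(y)\in\mathcal{X}^{\tt}\}\bigr|\Bigr] .
\end{equation*}
The second equality exchanges the sum over $y$ with the expectation over the randomness of $\mathcal{X}^{\tt}$.

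The remaining step is to show that the set inside the expectation has size exactly $|\mathcal{X}^{\tt}|$, which requires two facts.
\begin{itemize}
\item Every element of $\mathcal{X}^{\tt}$ comes from an accepting answer $(z_{R+1},\dots,z_{n-1},x_0,x_1)$ of $\qstraightline^{\sdv}_R(\tt,\cdot)$. Acceptance means $\pi_n(x_b)=y_b(\tt')$ for an extension $\tt'$ of $\tt$, so $\pi_n(x_b)\in\cons{\tt'}\subseteq\cons{\tt}$. Hence every $x\in\mathcal{X}^{\tt}$ satisfies $\pi_n(x)\in\cons{\tt}$.
\item Since $\pi_n$ is a bijection, the map $x\mapsto\pi_n(x)$ is a bijection from $\mathcal{X}^{\tt}$ onto $\{y\in\cons{\tt}:\pi_n^{-1}(y)\in\mathcal{X}^{\tt}\}$.
\end{itemize}
Together these give the cardinality identity and hence the lemma.

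I expect the main obstacle to be purely bookkeeping. One must check that $\mathtt{c}\textnormal{-}\repeatqstraightline$ acts block-diagonally on the classical $\reg{Y}\reg{T}_{R+1}$ registers, so that linearity applies. One must also make sure the accepting-answer condition is read correctly as containment in $\cons{\tt}$, as noted after Experiment~\ref{exp:qStraightLineEquivalentFormulation}.
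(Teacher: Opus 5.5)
Your proposal is correct and follows the paper's proof step for step. You expand $\sigma^{R+1}_{\textnormal{Ideal}}$, use that $\repeatqstraightline$ never reads $y$ so the success probability becomes an average over $y\in\cons{\tt}$ of $\Pr[\pi_n^{-1}(y)\in\mathcal{X}^{\tt}]$, and count using that every recorded preimage maps into $\cons{\tt}$; your explicit bijection only spells out the counting step the paper states in one line.
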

\begin{proof}
    Recall that the ideal state at the beginning of round $R+1$ is given by
    \begin{equation*}
        \sigma^{R+1}_{\textnormal{Ideal}} = \sum_{\tt \in \TT{R+1}} \Pr_{\novy}[\tt] \ket{\tt}\!\bra{\tt}_{\reg{T}_{R+1}} \otimes \ket{\psi^{\tt}}\!\bra{\psi^{\tt}}_{\reg{S}_{R+1}} \otimes \frac{1}{2^{n-R}} \hspace{-0.3cm} \sum_{y \in \cons{\tt}} \hspace{-0.4cm} \ket{y}\!\bra{y}_{\reg{Y}} \enspace.
    \end{equation*}
    By the structure of $\mathtt{c}\textnormal{-}\repeatqstraightline$ and $\Pi^{\textnormal{succ}}$, we obtain directly
    \begin{equation*}
        \Pr[\textnormal{Ideal-}\qinvpi^{\sdv} = 1] = \sum_{\tt \in \TT{R+1}} \Pr_{\novy}[\tt] \sum_{y \in \cons{\tt}} \frac{1}{2^{n-R}} \Pr\!\left[ \pi_{n}^{-1}(y) \in \mathcal{X}^{\tt}\right] \enspace,
    \end{equation*}
    where $\mathcal{X}^{\tt} \gets\repeatqstraightline^{\sdv}_{\varepsilon,\delta,T,R,N}(\tt,\ket{\psi^{\tt}}\!\bra{\psi^{\tt}})$.
    Observe that $\repeatqstraightline^{\sdv}_{\varepsilon,\delta,T,R,N}(\tt,\ket{\psi^{\tt}}\!\bra{\psi^{\tt}})$ does not depend on the value $y$. Moreover, since the recording algorithm therein only records accepting answers for the $\qstraightline^{\sdv}_{R}(\tt,\cdot)$ experiment,
    every $x \in \mathcal{X}^{\tt}$ satisfies $\pi_n(x) \in \cons{\tt}$.
    Thus, 
    the number of $y \in \cons{\tt}$ for which $\pi^{-1}_{n}(y) \in \mathcal{X}^{\tt}$ is equal to $|\mathcal{X}^{\tt}|$, and the lemma's statement follows.
    \qed
\end{proof}
Now, we apply Theorem~\ref{thm:rec_rewinding} of Sec.~\ref{subsec:ResultCMSZ21} to the $\qstraightline$ experiment. For each $\tt \in \TT{R+1}$, observe that in the $\qstraightline^{\sdv}_{R}(\tt,\cdot)$ experiment (see Experiment~\ref{exp:qStraightLineEquivalentFormulation}), for each challenge, there are exactly $2^{n-R-1}$ accepting answers. Indeed, the sequence of 1-bit replies for the rounds $R+1$ to $n-1$ specifies a unique transcript \mbox{$\tt' \in \TT{n}$}. Thus, for each of these sequences of replies, there exists a unique pair $(x_0,x_1)$ for which \mbox{$\pi_{n}(x_0) = y_0(\tt')$} and $\pi_{n}(x_1) =y_1(\tt')$.
\begin{corollary}\label{cor:RecordingsqStraightLine}
    For every transcript $\tt \in \TT{R+1}$, let $\mathfrak{R}^{\tt} \gets \textnormal{Rec}^{\varepsilon,\delta,T,N}_{\qstraightline^{\sdv}_{R}(\tt,\cdot)}(\ket{\psi^{\tt}}\!\bra{\psi^{\tt}})$, then
    \begin{equation*}
        \mathbb{E}[|\mathfrak{R}^{\tt}|] \geq N \left(\Pr\!\left[\qstraightline^{\sdv}_{R}(\tt,\ket{\psi^{\tt}}\!\bra{\psi^{\tt}}) = 1\right] - \mu  \right) \enspace,
    \end{equation*}
    where
    \begin{equation*}
        \mu = N \left( \varepsilon + \frac{2^{n-R}}{2} \left( \delta + \frac{1}{T}\right) + 2 \sqrt{\delta} \right) \enspace.
    \end{equation*}
    Moreover, $\textnormal{Rec}^{\varepsilon,\delta,T,N}_{\qstraightline^{\sdv}_{R}(\tt,\cdot)}$ has circuit complexity
    \begin{equation*}
        O\!\left( N \cdot T \cdot \sizeC{\qstraightline^{\sdv}_{R}(\tt,\cdot)}\cdot\frac1\varepsilon \log \frac1\delta \right) \enspace.
    \end{equation*}
\end{corollary}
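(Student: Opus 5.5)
This is a direct instantiation of Theorem~\ref{thm:rec_rewinding}. The plan is to verify that the $\qstraightline^{\sdv}_{R}(\tt,\cdot)$ experiment, in its equivalent form (Experiment~\ref{exp:qStraightLineEquivalentFormulation}), matches the template of Experiment~\ref{exp:GeneralExpt}. Once that is done, the only remaining work is computing the parameter $K$.

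\textbf{Step 1: identify the template ingredients.} Fix $\tt \in \TT{R+1}$. The pieces are as follows.
\begin{itemize}
\item The challenge set is $\mathrm{H} = \HHsl$, sampled uniformly.
\item The answer register is $\reg{R}_{R+1}\cdots\reg{R}_{n-1}\reg{X}_0\reg{X}_1$.
\item The working register is the rest of $\reg{S}_{R+1}$ together with the controlled registers.
\item The unitary is $U_h = U^{\tt}_{\textnormal{SL},h}$, followed by a computational basis measurement of the answer register.
\end{itemize}
The accepting set $\mathbf{Acc}(h)$ consists of the tuples $(z_{R+1},\hdots,z_{n-1},x_0,x_1)$ with $\pi_n(x_0)=y_0(\tt')$ and $\pi_n(x_1)=y_1(\tt')$, where $\tt'=\tt\circ(h_{R+1},z_{R+1})\circ\cdots\circ(h_{n-1},z_{n-1})$.

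One technical point: the input must live on the registers on which $U_h$ acts. We pad the answer registers with $\ket{0}$, or equivalently note that the controlled unitaries act trivially when the control is initialized. Either way the success probability is exactly $\Pr[\qstraightline^{\sdv}_{R}(\tt,\ket{\psi^{\tt}}\!\bra{\psi^{\tt}})=1]$. This holds because deferring the intermediate measurements of $\reg{R}_{i}$ and using them as controls is the standard principle of deferred measurement.

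\textbf{Step 2: bound $K$, the only substantive step.} For any $h$, each of the $2^{n-R-1}$ reply strings $(z_{R+1},\hdots,z_{n-1})$ determines a unique full transcript $\tt'\in\TT{n}$. That transcript determines the pair $y_0(\tt')<_{\textnormal{lex}} y_1(\tt')$. Since $\pi_n$ is a permutation, the pair $(x_0,x_1)$ is then unique. Hence $|\mathbf{Acc}(h)| = 2^{n-R-1}$ for every $h$.

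This gives $K = 2^{n-R-1}+1 \le 2^{n-R}$. Plugging into $\mu$ yields $\frac{K}{2}\le \frac{2^{n-R}}{2}$. Replacing $K$ by the larger value only increases $\mu$, so the lower bound of Theorem~\ref{thm:rec_rewinding} still holds with the stated $\mu$.

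\textbf{Step 3: conclude.} The expectation bound then follows directly from Theorem~\ref{thm:rec_rewinding}. The circuit complexity bound is the theorem's bound with $\sizeC{\Expt}=\sizeC{\qstraightline^{\sdv}_{R}(\tt,\cdot)}$.

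We expect no real obstacle. The only care needed is to ensure the reformulated experiment matches the template exactly, and to check that the count of accepting answers uses both that the reply string fixes $\tt'$ and that $\pi_n$ is bijective.
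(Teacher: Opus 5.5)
Your proposal is correct and follows the paper's own argument. The paper likewise observes that each reply string $(z_{R+1},\hdots,z_{n-1})$ fixes a unique $\tt'\in\TT{n}$, hence a unique accepting pair, so that $|\mathbf{Acc}(h)|=2^{n-R-1}$, and then applies Theorem~\ref{thm:rec_rewinding} directly; your explicit check that $K=2^{n-R-1}+1\le 2^{n-R}$ is exactly the implicit step behind the factor $\frac{2^{n-R}}{2}$ in $\mu$.

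One small remark: no padding is needed. The registers $\reg{R}_{R+1},\hdots,\reg{R}_{n-1},\reg{X}_0,\reg{X}_1$ are already subregisters of $\reg{S}_{R+1}$, since $\reg{S}_i=\reg{S}_{i+1}\otimes\reg{R}_i$. So $\reg{S}_{R+1}$ itself plays the role of $\reg{R}\otimes\reg{S}$ in the theorem.
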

Next, we show that the recording algorithm produces many \emph{distinct} preimage pairs amongst the accepting answers in the $\qstraightline$ experiment.
For a list of recordings $\mathfrak{R}$, let 
\begin{equation*}
    \textnormal{Dist}(\mathfrak{R}) \coloneq \left\{ (x_0,x_1) : (z_{R+1},\hdots,z_{n-1},x_0,x_1) \in \mathfrak{R} \right\}
\end{equation*}
denote the set of distinct preimages pairs appearing in $\mathfrak{R}$.
\begin{lemma} \label{lem:ExpectedValueDistinctR}
    For every transcript $\tt \in \TT{R+1}$, let $\mathfrak{R}^{\tt} \gets \textnormal{Rec}^{\varepsilon,\delta,T,N}_{\qstraightline^{\sdv}_{R}(\tt,\cdot)}(\ket{\psi^{\tt}}\!\bra{\psi^{\tt}})$, then
    \begin{equation*}
        \mathbb{E}[|\textnormal{Dist}(\mathfrak{R}^{\tt})|] \geq \left(1 - \frac{2}{2^{n-R}}\right)^{N-1}  \mathbb{E}[|\mathfrak{R}^{\tt}|] \enspace.
    \end{equation*} 
\end{lemma}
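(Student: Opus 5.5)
The plan is to lower bound $|\textnormal{Dist}(\mathfrak{R}^{\tt})|$ by counting the recorded answers whose preimage pair can never show up again in a later iteration. Consider Algorithm~\ref{algo:Rec} with the $\qstraightline^{\sdv}_{R}(\tt,\cdot)$ experiment and input $\ket{\psi^{\tt}}\!\bra{\psi^{\tt}}$. For each iteration $j \in \{1,\hdots,N\}$, let $h^{(j)} = (h^{(j)}_{R+1},\hdots,h^{(j)}_{n-1}) \in \HHsl$ be the challenge picked in iteration $j$. Let $A_j$ be the event that $k_j \neq \perp$, i.e. an accepting answer $(z_{R+1},\hdots,z_{n-1},x_0,x_1)$ is recorded at iteration $j$. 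On $A_j$, write $P_j = (x_0,x_1)$ for the recorded pair and $Y_j = \{\pi_n(x_0),\pi_n(x_1)\}$ for the corresponding two images. By acceptance, $Y_j = \{y_0(\tt'),y_1(\tt')\}$ for the final transcript $\tt' = \tt\circ(h^{(j)}_{R+1},z_{R+1})\circ\cdots$.

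The first observation is deterministic. A later iteration $j' > j$ can record the same pair $P_j$ only if the final transcript at iteration $j'$ is consistent with both elements of $Y_j$. That in turn requires $\langle h^{(j')}_i , y_0 \oplus y_1\rangle = 0$ for every $i \in \{R+1,\hdots,n-1\}$, where $Y_j=\{y_0,y_1\}$. Call $B_{j,j'}$ the event that $h^{(j')}$ annihilates $y_0\oplus y_1$ in all these rounds. Then
\[
|\textnormal{Dist}(\mathfrak{R}^{\tt})| \geq \sum_{j=1}^{N} \mathbb{1}\Bigl[A_j \wedge \bigwedge_{j'>j} \neg B_{j,j'}\Bigr] \enspace,
\]
because each index counted on the right carries a pair that appears at no later index. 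Distinct counted indices therefore carry distinct pairs.

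The second step bounds the probability of $B_{j,j'}$ conditioned on the full history up to the start of iteration $j'$. Both elements of $Y_j$ lie in $\cons{\tt}$, so $w = y_0 \oplus y_1 \neq 0$ lies in the $(n-R)$-dimensional linear space of vectors annihilated by $h_1,\hdots,h_R$. Rachel's challenges for rounds $R+1,\hdots,n-1$ have the pivot form $0^{i-1}||1||r$ with independent uniform $r$. I will use this structure to show that the probability that all $n-R-1$ fresh challenges annihilate a fixed nonzero $w$ in that space is at most $2^{-(n-R-1)} = \frac{2}{2^{n-R}}$. Round by round, conditioned on the previous rounds, each new challenge's inner product with $w$ is either uniform or already fixed, and the fixed case can happen only in a controlled number of rounds. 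This calculation, in particular handling $w$ whose leading nonzero coordinates interact with the fixed pivot bits, is where I expect the main work; I would first try the reduction to a uniformly random $(n-R-1)$-dimensional annihilator in the $(n-R)$-dimensional space. Crucially, $h^{(j')}$ is sampled at iteration $j'$ independently of everything that happened before, including $A_j$, $P_j$ and all states. So the bound holds conditionally: $\Pr[B_{j,j'} \mid \text{history before } j'] \leq \frac{2}{2^{n-R}}$.

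Finally, chain these conditional bounds over $j' = j+1,\hdots,N$ to get
\[
\Pr\Bigl[A_j \wedge \bigwedge_{j'>j}\neg B_{j,j'}\Bigr] \geq \Pr[A_j]\left(1-\tfrac{2}{2^{n-R}}\right)^{N-j} \geq \Pr[A_j]\left(1-\tfrac{2}{2^{n-R}}\right)^{N-1} \enspace.
\]
Taking expectations in the displayed inequality and using $\mathbb{E}[|\mathfrak{R}^{\tt}|] = \sum_j \Pr[A_j]$ then yields the statement of Lemma~\ref{lem:ExpectedValueDistinctR}.
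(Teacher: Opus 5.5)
Your argument is correct. It reaches the same intermediate bound as the paper, $\sum_{j}\bigl(1-\frac{2}{2^{n-R}}\bigr)^{N-j}\Pr[A_j]$, but with different bookkeeping.

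The paper counts \emph{first} occurrences. Conditioned on $\mathfrak{R}_{j-1}$, it subtracts from $\Pr[S_j=1]$ a union bound of $\frac{2}{2^{n-R}}$ for each previously recorded pair. This gives the recurrence $\mathbb{E}[|\textnormal{Dist}(\mathfrak{R}_j)|]\ge\bigl(1-\frac{2}{2^{n-R}}\bigr)\mathbb{E}[|\textnormal{Dist}(\mathfrak{R}_{j-1})|]+\Pr[S_j=1]$, which it then unrolls.

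You count \emph{last} occurrences, certified by the future challenges alone, and get the factor $\bigl(1-\frac{2}{2^{n-R}}\bigr)^{N-j}$ in one shot from the chain rule. This is legitimate: each conditioning event $A_j\wedge\neg B_{j,j+1}\wedge\cdots\wedge\neg B_{j,j'-1}$ is determined before iteration $j'$, while $h^{(j')}$ is sampled afresh. Your events $B_{j,j'}$ depend only on $w$ and $h^{(j')}$. So you never need the pair ``produced'' by an attempt whose outcome is $\perp$ (where the replies are not measured), and there is no recurrence to solve. The paper's route is the more standard new-pair count; otherwise the two are equivalent.

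The step you leave open is the second half of Lemma~\ref{lem:app:PairsProperties}. It is short, but it does not go the way you suggest trying first.
\begin{itemize}
\item Let $m$ be the largest index with $w_m=1$. Back-substitution through the pivots of $h_1,\dots,h_R$ gives $m\ge R+1$ for any nonzero $w$ in their annihilator.
\item For a challenge $h_i=0^{i-1}||1||r$ you have $\langle h_i,w\rangle=w_i\oplus\langle r,(w_{i+1},\dots,w_n)\rangle$.
\item For $i<m$ this is a fair coin, independent across rounds. For $i=m\le n-1$ it is forced to $1$.
\item Hence the probability that all rounds $R+1,\dots,n-1$ annihilate $w$ is $2^{-(n-R-1)}$ if $w_n=1$, and $0$ otherwise.
\item In your situation $w_n=1$ always, because the two values left by a complete transcript differ in their last bit (the first half of the same lemma).
\end{itemize}
So the mechanism is not that the fixed rounds are few; every round $i\ge m$ is fixed. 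What matters is that the first fixed round is fixed to $1$.

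The reduction to a uniformly random annihilator is false. The line cut out by all $n-1$ challenges is always spanned by a vector with last coordinate $1$. It is uniform over the $2^{n-R-1}$ such vectors of the $(n-R)$-dimensional space, not over all its nonzero vectors. That is exactly why the bound is $\frac{2}{2^{n-R}}$ and not $\frac{1}{2^{n-R}-1}$.
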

\begin{proof}[sketch]
    In each execution of the $\qstraightline$ experiment inside the recording algorithm, the random challenges for the $n-R-1$ remaining rounds are sampled independently of the previous executions. On the other hand, given a possible preimage pair $(x_0,x_1)$, at each round $i$, there is exactly half the possible challenges $h \in \HH{i}$ for which $h(\pi_{n}(x_0)) = h(\pi_{n}(x_1))$. Let $\mathfrak{R}_{j}^{\tt}$ denote the list of recordings after attempt $j$ in the recording algorithm. The probability that attempt~$j$ produces a pair already recorded is therefore at most $\frac{1}{2^{n-R-1}} |\textnormal{Dist}(\mathfrak{R}_{j-1}^{\tt})| $. Thus, for each $j$,
    \begin{align*}
        \mathbb{E}[|\textnormal{Dist}(\mathfrak{R}_{j}^{\tt})|] &\geq \mathbb{E}[|\textnormal{Dist}(\mathfrak{R}_{j-1}^{\tt})|] + \Pr[\text{Attempt } j \text{ is a success}] - \frac{1}{2^{n-R-1}} \mathbb{E}[|\textnormal{Dist}(\mathfrak{R}_{j-1}^{\tt})|]\\
        &= \left(1 - \frac{2}{2^{n-R}}\right) \mathbb{E}[|\textnormal{Dist}(\mathfrak{R}_{j-1}^{\tt})|] + \Pr[\text{Attempt } j \text{ is a success}]  \enspace.
    \end{align*}
    By solving the recurrence,
    \begin{align*}
        \mathbb{E}[|\textnormal{Dist}(\mathfrak{R}_{N}^{\tt})|] &\geq \sum_{j=1}^{N} \left(1 - \frac{2}{2^{n-R}}\right)^{N-j} \Pr[\text{Attempt } j \text{ is a success}]\\
        &\geq \left(1 - \frac{2}{2^{n-R}}\right)^{N-1} \sum_{j=1}^{N} \Pr[\text{Attempt } j \text{ is a success}]\\
        &= \left(1 - \frac{2}{2^{n-R}}\right)^{N-1} \mathbb{E}[|\mathfrak{R}_{N}^{\tt}|] \enspace.
    \end{align*}
    The formal proof is provided in App.~\ref{app:AnalysisRepeatqStraightLine}.
    \qedSketch
\end{proof}
Now, we show that many distinct preimage pairs amongst the recorded accepting answers yield many distinct preimages in the output of $\repeatqstraightline$.
\begin{lemma} \label{lem:SizeOfXGivenDistR}
    For every transcript $\tt \in \TT{R+1}$, let $\mathfrak{R}^{\tt} \gets \textnormal{Rec}^{\varepsilon,\delta,T,N}_{\qstraightline^{\sdv}_{R}(\tt,\cdot)}(\ket{\psi^{\tt}}\!\bra{\psi^{\tt}})$.
    Moreover, let $\mathcal{X}^{\tt}$ be the set of preimages produced by $\repeatqstraightline^{\sdv}_{\epsilon,\delta,T,R,N}(\tt,\ket{\psi^{\tt}}\!\bra{\psi^{\tt}})$ when the list of recordings produced by the recording algorithm is $\mathfrak{R}^{\tt}$. Then,
    \begin{equation*}
        |\mathcal{X}^{\tt}| \geq \frac{2|\textnormal{Dist}(\mathfrak{R}^{\tt})|}{\sqrt{N}} \enspace.
    \end{equation*}
\end{lemma}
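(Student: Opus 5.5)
The plan is to view the recorded pairs as the edges of a graph on the set of recorded preimages. I will show this graph is bipartite, so that Mantel's bound, rather than the trivial $\binom{v}{2}$ bound, applies. Write $D \coloneq |\textnormal{Dist}(\mathfrak{R}^{\tt})|$ and $v \coloneq |\mathcal{X}^{\tt}|$.

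\textbf{Step 1: define the graph.} Every element of $\textnormal{Dist}(\mathfrak{R}^{\tt})$ is a pair $(x_0,x_1)$ coming from an accepting answer. Hence it satisfies $\pi_n(x_0)=y_0(\tt')$ and $\pi_n(x_1)=y_1(\tt')$ for some full transcript $\tt'$, with $y_0(\tt') <_{\textnormal{lex}} y_1(\tt')$. Since $\pi_n$ is a permutation, $x_0 \neq x_1$. Moreover, the order inside a pair is forced by the lexicographic order of the images. So distinct elements of $\textnormal{Dist}(\mathfrak{R}^{\tt})$ are distinct unordered pairs $\{x_0,x_1\}$. Both endpoints lie in $\mathcal{X}^{\tt}$ by construction of $\repeatqstraightline$ (Algorithm~\ref{algo:RepeatqStraightLine}). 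This gives a simple graph with vertex set $\mathcal{X}^{\tt}$ and exactly $D$ edges.

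\textbf{Step 2: bipartiteness (the key step).} I claim that for every full transcript $\tt' \in \TT{n}$, the two consistent values $y_0(\tt')$ and $y_1(\tt')$ differ in their last bit. Let $w \coloneq y_0(\tt') \oplus y_1(\tt') \neq 0$. Then $\langle h_i,w\rangle = 0$ for all $1\le i\le n-1$, where $h_i = 0^{i-1}\|1\|r_i$. Suppose $w_n = 0$. Since $h_{n-1}=0^{n-2}\|1\|r$, the constraint $\langle h_{n-1},w\rangle=0$ gives $w_{n-1}=0$. Inducting backwards, the constraint for $h_i$ gives $w_i = \bigoplus_{j>i} (h_i)_j w_j = 0$. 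Thus $w=0$, a contradiction, so $w_n = 1$. Consequently, partitioning $\mathcal{X}^{\tt}$ according to the last bit of $\pi_n(x)$, every edge joins the two parts $A$ and $B$. This echelon-form argument is the step I expect to carry the proof. The naive bound $D \le v(v-1)/2$ only yields $v \gtrsim \sqrt{2}\,D/\sqrt{N}$, so the factor $2$ must come from this structural fact.

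\textbf{Step 3: counting.} Bipartiteness gives $D \le |A|\,|B| \le v^2/4$, hence $v \ge 2\sqrt{D}$. The recording algorithm makes $N$ attempts and records at most one answer per attempt, so $D \le |\mathfrak{R}^{\tt}| \le N$, i.e.\ $\sqrt{D} \ge D/\sqrt{N}$. When $D = 0$ the claim is trivial. Otherwise, combining the two inequalities gives $|\mathcal{X}^{\tt}| = v \ge 2\sqrt{D} \ge 2D/\sqrt{N}$, which is the statement.
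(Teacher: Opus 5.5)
Your proposal is correct and follows the paper's proof essentially step for step. Like the paper, you partition $\mathcal{X}^{\tt}$ by the last bit of $\pi_n(x)$, using the same backward-induction argument that the two consistent values at the end of interactive hashing differ in their last bit. You then bound $|\textnormal{Dist}(\mathfrak{R}^{\tt})|$ by the product of the part sizes, apply AM--GM, and use $|\textnormal{Dist}(\mathfrak{R}^{\tt})|\le N$. Your observation that the order within a recorded pair is forced by the lexicographic order of the images makes explicit the injectivity into $\mathcal{X}^{\tt}_0\times\mathcal{X}^{\tt}_1$, which the paper leaves implicit.
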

\begin{proof}[sketch]
    This is a simple counting argument. For each possible preimage pair $(x_0,x_1)$, one of $\pi_{n}(x_0)$ or $\pi_{n}(x_1)$ ends with a 0 and the other with a 1 (otherwise, there would be no transcript $\tt'$ for which $\pi_{n}(x_0) = y_0(\tt')$ and $\pi_{n}(x_1) = y_1(\tt')$). This allows us to partition $\mathcal{X}^{\tt}$ into two sets, with the property that each pair $(x_0,x_1)$ in $\textnormal{Dist}(\mathfrak{R}^{\tt})$ has an element in each set. Thus, using the AM--GM inequality, we obtain
    \begin{equation*}
        |\mathcal{X}^{\tt}| \geq 2 \sqrt{|\textnormal{Dist}(\mathfrak{R}^{\tt})|} \enspace.
    \end{equation*}
    The lemma's statement follows from the fact that $|\textnormal{Dist}(\mathfrak{R}^{\tt})| \leq N$, since each attempt adds at most one recording. The formal proof is provided in App.~\ref{app:AnalysisRepeatqStraightLine}.
    \qedSketch
\end{proof}
Thus, for a fixed transcript, we bound the expected value of the number of distinct preimages produced by the $\repeatqstraightline$ algorithm.
\begin{lemma} \label{lem:ExpectedValueX}
    For every transcript $\tt \in \TT{R+1}$, let $\mathcal{X}^{\tt}\gets \repeatqstraightline^{\sdv}_{\varepsilon,\delta,T,R,N}(\tt,\ket{\psi^{\tt}}\!\bra{\psi^{\tt}})$. Then,
    \begin{equation*}
        \mathbb{E}[|\mathcal{X}^{\tt}|] \geq 2 \left(1 - \frac{2}{2^{n-R}}\right)^{N-1} \sqrt{N} \left(\Pr[\qstraightline^{\sdv}_{R}(\tt,\ket{\psi^{\tt}}\!\bra{\psi^{\tt}}) = 1] - \mu  \right) \enspace,
    \end{equation*}
    where
    \begin{equation*}
        \mu = N \left( \varepsilon + \frac{2^{n-R}}{2} \left( \delta + \frac{1}{T}\right) + 2 \sqrt{\delta} \right) \enspace.
    \end{equation*}
\end{lemma}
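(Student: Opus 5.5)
This lemma is obtained by chaining the three preceding results: Lemma~\ref{lem:SizeOfXGivenDistR}, then Lemma~\ref{lem:ExpectedValueDistinctR}, then Corollary~\ref{cor:RecordingsqStraightLine}, each time taking expectations. Fix a transcript $\tt \in \TT{R+1}$ and run $\repeatqstraightline^{\sdv}_{\varepsilon,\delta,T,R,N}(\tt,\ket{\psi^{\tt}}\!\bra{\psi^{\tt}})$. By construction (Algorithm~\ref{algo:RepeatqStraightLine}), $\mathcal{X}^{\tt}$ is a deterministic function of the list $\mathfrak{R}^{\tt}$ produced at Step~\ref{step:RecordingInRepeatqStraightLine}. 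That list is distributed exactly as the output of $\textnormal{Rec}^{\varepsilon,\delta,T,N}_{\qstraightline^{\sdv}_{R}(\tt,\cdot)}(\ket{\psi^{\tt}}\!\bra{\psi^{\tt}})$. So every earlier lemma applies to the same random variable $\mathfrak{R}^{\tt}$, and we may compute $\mathbb{E}[|\mathcal{X}^{\tt}|]$ through it.

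First, Lemma~\ref{lem:SizeOfXGivenDistR} holds pointwise for each realisation of $\mathfrak{R}^{\tt}$. Taking expectations and using linearity gives
\begin{equation*}
\mathbb{E}[|\mathcal{X}^{\tt}|] \geq \frac{2}{\sqrt{N}}\,\mathbb{E}[|\textnormal{Dist}(\mathfrak{R}^{\tt})|] \enspace.
\end{equation*}
Next, Lemma~\ref{lem:ExpectedValueDistinctR} lower bounds the right-hand side by $\frac{2}{\sqrt{N}}\bigl(1-\frac{2}{2^{n-R}}\bigr)^{N-1}\mathbb{E}[|\mathfrak{R}^{\tt}|]$. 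Finally, Corollary~\ref{cor:RecordingsqStraightLine} gives
\begin{equation*}
\mathbb{E}[|\mathfrak{R}^{\tt}|] \geq N\bigl(\Pr[\qstraightline^{\sdv}_{R}(\tt,\ket{\psi^{\tt}}\!\bra{\psi^{\tt}})=1]-\mu\bigr)
\end{equation*}
with the stated $\mu$. This is Theorem~\ref{thm:rec_rewinding} with $K = 2^{n-R-1}+1$. Note that $K/2 \leq 2^{n-R}/2$, so the corollary's form of $\mu$ is a valid (slightly weaker) bound. Multiplying through, $\frac{2}{\sqrt{N}}\cdot N = 2\sqrt{N}$, which yields exactly the claimed inequality.

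The only point needing care is signs. The factor $\bigl(1-\frac{2}{2^{n-R}}\bigr)^{N-1}$ is nonnegative whenever $n-R\geq 1$, so multiplying the inequality from Corollary~\ref{cor:RecordingsqStraightLine} by it and by $2/\sqrt{N}$ preserves its direction. If $\Pr[\cdot]-\mu<0$, the right-hand side is negative and the bound is trivial since $|\mathcal{X}^{\tt}|\geq 0$. No other step is a real obstacle: the substantive work lies in the earlier lemmas. Here one only checks that they all concern the same distribution of $\mathfrak{R}^{\tt}$, which holds because $\repeatqstraightline$ simply calls $\textnormal{Rec}$ on $\ket{\psi^{\tt}}\!\bra{\psi^{\tt}}$ and post-processes the result classically.
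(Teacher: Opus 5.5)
Your proposal is correct and matches the paper's proof, which simply chains Lemma~\ref{lem:SizeOfXGivenDistR} (taken in expectation), Lemma~\ref{lem:ExpectedValueDistinctR}, and Corollary~\ref{cor:RecordingsqStraightLine}, using $\frac{2}{\sqrt{N}}\cdot N = 2\sqrt{N}$. Your two side checks are also correct and are left implicit in the paper: the multiplying factors are nonnegative, and $K = 2^{n-R-1}+1 \leq 2^{n-R}$.
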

\begin{proof}
    This is a direct consequence of Corollary~\ref{cor:RecordingsqStraightLine}, Lemma~\ref{lem:ExpectedValueDistinctR} and Lemma~\ref{lem:SizeOfXGivenDistR}.
    \qed
\end{proof}
The following lemma states that, on average over the transcript at the beginning at round~$R+1$, the probability that $\qstraightline^{\sdv}_{R}(\tt,\ket{\psi^{\tt}}\!\bra{\psi^{\tt}})$ succeeds is $\epsilon_{\sdv}$.
\begin{lemma} \label{lem:ExpectedValueqSL}
    We have
    \begin{equation*}
        \sum_{\tt \in \TT{R+1}} \Pr_{\novy}[\tt] \cdot \Pr[\mathsf{qStraightLine}^{\sdv}_{R}(\tt,\ket{\psi^{\tt}}\!\bra{\psi^{\tt}}) = 1] = \epsilon_{\sdv} \enspace.
    \end{equation*}
\end{lemma}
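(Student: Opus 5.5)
The plan is to show that the left-hand side is exactly the success probability of the $\pibind$ experiment (Experiment~\ref{exp:piBindEquivalentFormulation}) after conditioning on the transcript at the beginning of round $R+1$. The argument is a law-of-total-probability decomposition over that transcript, combined with the recursive definition of the states $\ket{\psi^{\tt}}$.

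\textbf{Step 1 (transcript distribution and conditional state).} By the recursive definition
\[
\sdv^{\tt}_{i,h}\ket{\psi^{\tt}} = \sum_{z}\sqrt{p^{\tt}_{h,z}}\ket{z}_{\reg{R}_i}\ket{\psi^{\tt\circ(h,z)}},
\]
I would check by induction on $i$ the following facts about $\pibind$. At the beginning of round $i$, the transcript equals $\tt$ with probability $\Pr_{\novy}[\tt]$. Conditioned on this, the adversary's register $\reg{S}_i$ is in state $\ket{\psi^{\tt}}$. Moreover $\Pr_{\novy}[\tt\circ(h,z)] = \Pr_{\novy}[\tt]\cdot \frac{1}{|\HH{i}|}\, p^{\tt}_{h,z}$. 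The key point is that the measurement of $\reg{R}_i$ yields $z$ with probability $p^{\tt}_{h,z}$ and collapses the state onto $\ket{\psi^{\tt\circ(h,z)}}$. All of this follows directly from the definitions.

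\textbf{Step 2 (the suffix is $\qstraightline$).} Conditioned on the transcript $\tt\in\TT{R+1}$ and state $\ket{\psi^{\tt}}$ after round $R$, rounds $R+1$ to $n-1$ and the final output step of $\pibind$ are, line by line, Experiment~\ref{exp:qStraightLine} run on input $(\tt,\ket{\psi^{\tt}}\!\bra{\psi^{\tt}})$. In both, fresh challenges are drawn uniformly from $\HH{i}$, the same unitaries $\sdv^{\tt}_{i,h}$ are applied, the same computational-basis measurements are made, and the same acceptance predicate is checked. Hence the conditional success probability of $\pibind$ given $\tt$ equals $\Pr[\qstraightline^{\sdv}_{R}(\tt,\ket{\psi^{\tt}}\!\bra{\psi^{\tt}})=1]$.

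\textbf{Step 3 (sum over transcripts).} Summing over $\tt$ with weights $\Pr_{\novy}[\tt]$ gives $\Pr[\pibind(n)=1]$, which is $\epsilon_{\sdv}$ by definition. Alternatively, one can expand $\qstraightline$ over its own transcripts $\tt'\in\TT{n}$, obtaining $\sum_{\tt'}\Pr_{\novy}[\tt']\,q^{\tt'}$, and match this with the stated formula for $\epsilon_{\sdv}(n)$.

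There is no real obstacle here. The only care needed is the bookkeeping in Step 1: conditioning on a measurement outcome produces exactly the normalized state $\ket{\psi^{\tt\circ(h,z)}}$ with weight $p^{\tt}_{h,z}$. For $p^{\tt}_{h,z}=0$ the term is absent, since the corresponding transcript then has probability zero.
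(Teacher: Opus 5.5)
Your proposal is correct and follows the paper's own argument: both decompose $\Pr[\pibind(n)=1]$ over the transcript $\tt$ at the beginning of round $R+1$ (weighted by $\Pr_{\novy}[\tt]$), use that $\reg{S}_{R+1}$ is then in state $\ket{\psi^{\tt}}$, and identify the remaining rounds and the output step with $\qstraightline^{\sdv}_{R}(\tt,\ket{\psi^{\tt}}\!\bra{\psi^{\tt}})$. Your inductive bookkeeping in Step 1 simply spells out what the paper takes as immediate from the recursive definition of $\ket{\psi^{\tt}}$ and of $\Pr_{\novy}[\tt]$.
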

\begin{proof}
    By definition, $\epsilon_{\sdv} = \Pr[\pibind=1]$.
    In the $\pibind$ experiment, the transcript at the beginning of round~$R+1$ is $\tt$ with probability $\Pr\limits_{\novy}[\tt]$, and conditionned on that transcript, register $\reg{S}_{R+1}$ is in state $\ket{\psi^{\tt}}$.
    The lemma's statement follows directly from the fact that $\qstraightline^{\sdv}_{R}$ executes the $\pibind$ experiment starting from round $R+1$.
    \qed
\end{proof}
Finally, by taking the expected value over the possible transcripts at the beginning of round $R+1$, we bound the success probability in the $\textnormal{Ideal-}\qinvpi^{\sdv}$ experiment.
\begin{lemma} \label{lem:ProbabilityIdealqInv}
    Let $\alpha \coloneq 1 - \frac{2}{2^{n-R}}$, then
    \begin{equation*}
        \Pr[\textnormal{Ideal-}\mathsf{qInv}_{\pi}^{\sdv} = 1] \geq \frac{2 \alpha^{N-1} \sqrt{N}}{2^{n-R}} \left( \epsilon_{\sdv} - \mu \right) \enspace,
    \end{equation*}
    where
    \begin{equation*}
        \mu = N \left( \varepsilon + \frac{2^{n-R}}{2} \left( \delta + \frac{1}{T}\right) + 2 \sqrt{\delta} \right) \enspace.
    \end{equation*}
\end{lemma}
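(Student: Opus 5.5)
This follows by chaining Lemma~\ref{lem:ProbSuccessqInv}, Lemma~\ref{lem:ExpectedValueX} and Lemma~\ref{lem:ExpectedValueqSL}.

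First, Lemma~\ref{lem:ProbSuccessqInv} rewrites the success probability as an average of expected sizes:
\begin{equation*}
\Pr[\textnormal{Ideal-}\qinvpi^{\sdv}=1] = \frac{1}{2^{n-R}}\sum_{\tt\in\TT{R+1}}\Pr_{\novy}[\tt]\,\mathbb{E}[|\mathcal{X}^{\tt}|] \enspace .
\end{equation*}
Second, for each fixed transcript $\tt$, I apply Lemma~\ref{lem:ExpectedValueX} inside the sum. This lower bounds $\mathbb{E}[|\mathcal{X}^{\tt}|]$ by $2\alpha^{N-1}\sqrt{N}\bigl(\Pr[\qstraightline^{\sdv}_{R}(\tt,\ket{\psi^{\tt}}\!\bra{\psi^{\tt}})=1]-\mu\bigr)$.

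The one point needing care is that this lower bound may be negative for some $\tt$. That is harmless, since the inequality $\mathbb{E}[|\mathcal{X}^{\tt}|]\geq(\cdot)$ still holds termwise, and summing valid lower bounds with nonnegative weights $\Pr_{\novy}[\tt]$ preserves the inequality. Also, the prefactor $2\alpha^{N-1}\sqrt{N}$ and the quantity $\mu$ do not depend on $\tt$. Both can therefore be pulled out of the sum, leaving
\begin{equation*}
\frac{2\alpha^{N-1}\sqrt{N}}{2^{n-R}}\Bigl(\sum_{\tt}\Pr_{\novy}[\tt]\Pr[\qstraightline^{\sdv}_{R}(\tt,\ket{\psi^{\tt}}\!\bra{\psi^{\tt}})=1]-\mu\sum_{\tt}\Pr_{\novy}[\tt]\Bigr) \enspace .
\end{equation*}

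Third, $\sum_{\tt}\Pr_{\novy}[\tt]=1$, and Lemma~\ref{lem:ExpectedValueqSL} identifies the first sum as $\epsilon_{\sdv}$. This gives the claimed bound.

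I expect no real obstacle here. The only thing to check is that $\alpha\geq 0$, so that the prefactor is nonnegative and the termwise bounds can be summed safely. This holds as soon as $n-R\geq 1$, which is guaranteed by $R\leq n-1$.
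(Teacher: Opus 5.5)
Your proposal is correct and matches the paper's proof: it chains Lemma~\ref{lem:ProbSuccessqInv}, Lemma~\ref{lem:ExpectedValueX} and Lemma~\ref{lem:ExpectedValueqSL}, using $\sum_{\tt}\Pr_{\novy}[\tt]=1$. The check $\alpha\geq 0$ is not needed, because summing termwise lower bounds with the nonnegative weights $\Pr_{\novy}[\tt]$ is valid whatever the sign of the $\tt$-independent prefactor.
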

\begin{proof}
    This is a direct consequence of Lemma~\ref{lem:ProbSuccessqInv}, Lemma~\ref{lem:ExpectedValueX} and Lemma~\ref{lem:ExpectedValueqSL}.
    \qed
\end{proof}
Therefore, as previously discussed, all that remains is to choose the parameters so that $\frac{2 \alpha^{N-1} \sqrt{N}}{2^{n-R}} \left( \epsilon_{\sdv} - \mu \right)$ exceeds $\frac{81}{2^{n-R}}$ 
by a non-negligible amount, while ensuring that $\qinvpi$ runs in polynomial time.
 
\subsection{Success Probability and Running Time of $\qinvpi^{\sdv}$} \label{subsec:SuccessProbabilityRunningTimeqInv}

In this section, we conclude the analysis from Secs.~\ref{subsec:AnalyisqSearcher} and~\ref{subsec:AnalysisRepeatqStraightLine}. We set our choices for the parameters $\varepsilon$, $\delta$, $T$, $R$ and $N$.
Recall that $\kappa$ is a known integer such that $\epsilon_{\sdv}(n) = \Pr[\pibind=1] \geq \frac{1}{n^{\kappa}}$.
For the numbers of interactive hashing rounds done in $\qsearcher$, we choose
\begin{equation*}
    R \coloneq n - \ceil{2\kappa\lg(n) + 21} \enspace.
\end{equation*}
For the number of repetitions of the $\qstraightline$ experiment done in the recording algorithm of $\repeatqstraightline$, we choose
\begin{equation*}
    N \coloneq \frac{2^{n-R}}{4} \enspace.
\end{equation*}
For the internal parameters of the $\textnormal{ValEst}$ and $\textnormal{Repair}$ procedure presented in Sec.~\ref{subsec:ResultCMSZ21}, we choose
\begin{equation*}
    \varepsilon \coloneq \frac{1}{8 N n^{\kappa}} \text{ , } \delta \coloneq \frac{1}{16^2 N^2 n^{2\kappa}} \text{ and } T \coloneq 16 N^2 n^{\kappa} \enspace.
\end{equation*}
In the following, we assume oracle access to the permutation $\pi_{n}$; hence, each oracle query is considered to have unit cost.
\begin{theorem} \label{thm:qInvProbabilitySucess}
    For the choice of parameters above and for $\sigma_{\textnormal{aux}}$, $\sdv$'s auxiliary state,
    \begin{equation*}
        \Pr\limits_{y \in_{R} \{0,1\}^{n}}[ \mathsf{qInv}_{\pi}^{\sdv}(y,\sigma_{\textnormal{aux}}) = \pi^{-1}_{n}(y)] \geq \frac{49}{2^{21}n^{2\kappa}} \enspace.
    \end{equation*}
    Moreover, $\qinvpi^{\sdv}$ has circuit complexity 
    \begin{equation*}
        O\!\left( \sizeC{\sdv_{n}} \, n^{10\kappa+2} \lg(n)  \right) \enspace,
    \end{equation*}
    where $\sizeC{\sdv_{n}}$ denotes the circuit complexity of adversary $\sdv$ with security parameter~$n$.

\end{theorem}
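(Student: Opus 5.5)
The plan is to combine Corollary~\ref{cor:ProbSuccessqInvVSIdealqInv} with Lemma~\ref{lem:ProbabilityIdealqInv} and then substitute the chosen parameters. Write $L \coloneq 2^{n-R}$, so that $L \geq 2^{21} n^{2\kappa}$ and $N = L/4$. From the corollary,
\[
\sqrt{\Pr[\qinvpi^{\sdv}=\pi_n^{-1}(y)]} \;\geq\; \sqrt{\Pr[\textnormal{Ideal-}\qinvpi^{\sdv}=1]} - \frac{9}{\sqrt{L}},
\]
provided the right-hand side is nonnegative. It therefore suffices to lower bound the ideal success probability by $c^2/L$ for some constant $c > 9$ depending only on $n^{\kappa}$ scaling. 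The target is $\sqrt{L}\cdot\sqrt{\Pr}\geq 9+\tfrac{7}{2^{21/2}n^{\kappa}}\cdot\sqrt{L}/\sqrt{L}$, or something of that shape, which yields $49/(2^{21}n^{2\kappa})$.

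\textbf{Step 1: bound $\mu$.} With $\varepsilon = 1/(8Nn^\kappa)$, $\delta = 1/(16^2N^2n^{2\kappa})$ and $T = 16N^2n^\kappa$, the terms give
\[
N\varepsilon = \tfrac{1}{8n^\kappa},\qquad 2N\sqrt\delta = \tfrac{1}{8n^\kappa},\qquad \tfrac{NL}{2}\cdot\tfrac1T = \tfrac{L}{32Nn^\kappa} = \tfrac{1}{8n^\kappa},
\]
and $\tfrac{NL}{2}\delta$ is negligible in comparison. Hence $\mu \leq \tfrac{1}{2n^\kappa}$, and so $\epsilon_\sdv - \mu \geq \tfrac{1}{2n^\kappa}$.

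\textbf{Step 2: bound $\alpha^{N-1}$.} We have $\alpha^{N-1} \geq (1-2/L)^{L/4} \geq e^{-1/2}(1-o(1))$; a clean bound is $\geq 1-\tfrac{2(N-1)}{L} \geq \tfrac12$.

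\textbf{Step 3: combine.} Plugging in gives
\[
\Pr[\textnormal{Ideal}] \;\geq\; \frac{2\cdot\tfrac12\cdot\tfrac{\sqrt L}{2}}{L}\cdot\frac{1}{2n^\kappa} \;=\; \frac{1}{4n^\kappa\sqrt L},
\]
so $\sqrt{\Pr[\textnormal{Ideal}]}\geq \tfrac{1}{2n^{\kappa/2}L^{1/4}}$. Comparing with $9/\sqrt L$, the ratio is $L^{1/4}/(18 n^{\kappa/2})$, and with $L\geq 2^{21}n^{2\kappa}$ this is at least $2^{21/4}n^{\kappa/2}/18 > 1$. Then
\[
\sqrt{\Pr}\;\geq\; \frac{1}{2n^{\kappa/2}L^{1/4}}\left(1-\frac{18n^{\kappa/2}}{L^{1/4}}\right).
\]
\textbf{Main obstacle.} The constants must be tracked carefully, and in particular the rounding of $L$ between $2^{21}n^{2\kappa}$ and twice that, to land exactly on $49/(2^{21}n^{2\kappa})$. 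I would use the ceiling bound $L\leq 2^{22}n^{2\kappa}$ and the fact that $7=\ldots$ arises as $(\text{coefficient}-9)\cdot\ldots$. If the exact constant does not match, I would tighten $\alpha^{N-1}\geq e^{-1/2}$ and the $\mu$ bound.

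\textbf{Running time.} This follows from Corollary~\ref{cor:RecordingsqStraightLine}. The cost $\sizeC{\qstraightline}$ is $O(\sizeC{\sdv_n}\, n)$ including the $\pi_n$ oracle checks, and $\qsearcher$ costs $O(R\cdot\sizeC{\sdv_n})$. The dominant term is
\[
N\cdot T\cdot\varepsilon^{-1}\log(1/\delta) \;=\; N\cdot16N^2n^\kappa\cdot8Nn^\kappa\cdot O(\lg(Nn)) \;=\; O(N^4n^{2\kappa}\lg n).
\]
With $N=\Theta(n^{2\kappa})$ this becomes $n^{10\kappa}$, which together with the factor $n$ from $\qstraightline$ and a further factor $n$ from the reversible inner-product checks gives $O(\sizeC{\sdv_n}\,n^{10\kappa+2}\lg n)$.
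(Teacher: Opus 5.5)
Your proposal follows the paper's proof essentially line by line: the same termwise bound giving $\epsilon_{\sdv}-\mu\ge\frac{1}{2n^{\kappa}}$, the Bernoulli bound $\alpha^{N-1}\ge\frac12$, and the resulting $\Pr[\textnormal{Ideal-}\qinvpi^{\sdv}=1]\ge\frac{1}{4n^{\kappa}\sqrt{2^{n-R}}}$ fed into Corollary~\ref{cor:ProbSuccessqInvVSIdealqInv}. The running time uses the same count $N\cdot T\cdot\varepsilon^{-1}\log\frac{1}{\delta}\in O(n^{10\kappa}\lg n)$; the paper's extra $n^{2}$ comes from the additive overhead in $\sizeC{\qstraightline^{\sdv}_{R}(\tt,\cdot)}\in O(\sizeC{\sdv_{n}})+O(n^{2})$, not from multiplicative factors of $n$, but your looser count still gives the stated bound.

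The constant you left open needs no tightening of $\alpha^{N-1}$ or $\mu$. Using $2^{n-R}<2^{22}n^{2\kappa}$, the first term satisfies $\frac{1}{2n^{\kappa/2}(2^{n-R})^{1/4}}>\frac{16}{2^{21/2}n^{\kappa}}$. Using $2^{n-R}\ge 2^{21}n^{2\kappa}$, the error term satisfies $\frac{9}{\sqrt{2^{n-R}}}\le\frac{9}{2^{21/2}n^{\kappa}}$. Squaring the difference $\frac{7}{2^{21/2}n^{\kappa}}$ gives exactly $\frac{49}{2^{21}n^{2\kappa}}$.
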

\begin{proof}
    The success probability follows from Corollary~\ref{cor:ProbSuccessqInvVSIdealqInv}, Lemma~\ref{lem:ProbabilityIdealqInv} and a tedious calculation. For the circuit complexity, an analysis of the $\qstraightline$ experiment (see Experiment~\ref{exp:qStraightLineEquivalentFormulation}) yields
    \begin{equation*}
        \sizeC{\qstraightline^{\sdv}_{R}(\tt,\cdot)} \in O(\sizeC{\sdv_{n}}) + O(n^2) \enspace.
    \end{equation*}
    Thus, from Corollary~\ref{cor:RecordingsqStraightLine} and our choice of parameters, it follows that
    \begin{equation*}
        \sizeC{\repeatqstraightline} \in O\!\left( \sizeC{\sdv_n}\cdot n^{10\kappa+2} \lg(n) \right) \enspace.
    \end{equation*}
    This previous term dominates the circuit complexity of all the other operations in $\qinvpi$.
    The details of these calculations are provided in App.~\ref{app:AnalysisqInv}.
    \qed
\end{proof}
\subsection{$\textnormal{\textsc{novy}}$'s binding properties}
We conclude from Theorem~\ref{thm:qInvProbabilitySucess} that the $\novy$ bit commitment scheme is post-quantum computational $\pi$-binding, assuming the existence of post-quantum one-way permutations (see Definition~\ref{def:OneWayPermutation}). Moreover, it follows from Corollary~\ref{cor:pibindsumbind} that $\novy$ commitment scheme is post-quantum computational sum-binding and collapse-binding.
This concludes the proof of Theorem~\ref{th1}.


\section{Zero-knowledge Arguments For All of \textsf{NP}}\label{sec:ZKArguments}

An \emph{interactive proof system} $(\textsf{P},\textsf{V})$ for some relation $\mathscr{R}$ consists of two interactive algorithms, a prover $\textsf{P}$ and a verifier $\textsf{V}$, that gets classical input $(x,w)\in \mathscr{R}$ and $x$, respectively, and a security parameter $\params\in\mathbb{N}$. After the interaction, $\textsf{V}$ outputs $0$ or $1$. The system is \emph{complete} if and only if there exists a negligible function $\mu(\params)$ such that $\forall (x,w)\in \mathscr{R}$, $\Pr\left[1\gets\left[\textsf{P}(1^\params,x,w)\rightleftarrows\textsf{V}(1^\params,x)\right]\right]\geq 1-\mu(\params)$. The proof system is \emph{sound with soundness error $s(\params)$} if and only if for all algorithms $\pdv$, all auxiliary input $\rho$, and all input $x$ such that there does not exists $w$ satisfying $(x,w)\in \mathscr{R}$, it is true that $\Pr\left[1\gets\left[\pdv(1^\params,x)\rightleftarrows\textsf{V}(1^\params,x)\right]\right]\leq s(\params)$. When the soundness condition holds for all QPT algorithms $\pdv$, the proof system is called an \emph{argument system}.

For a language $\mathcal{L}$ in \textsf{NP}, the relation $\mathscr{R}_{\mathcal{L}}$ is such that $(x,w)\in \mathscr{R}_{\mathcal{L}}$ if and only if $x\in \mathcal{L}$ and $w$ is a witness that allows to certify in polynomial time that $x$ is in $\mathcal{L}$. The relation $\mathscr{R}_{\mathcal{L}}$ is called an \emph{\textsf{NP} relation}. An argument system for a relation $\mathscr{R}_{\mathcal{L}}$ is \emph{zero-knowledge} if the verifier does not learn anything during the execution of the protocol other than the fact that the common input is in $\mathcal{L}$.  In this section, we show that there exists a post-quantum perfect zero-knowledge argument system for all $\textsf{NP}$ relations.

Blum introduced in \cite{Blum86} a proof system for the $\textsf{NP}$-complete language $\textsc{hamcycle}$, the language of all undirected graphs $G$ that have a Hamiltonian cycle (i.e.~a cycle that goes through each vertex exactly once). In the following, we give Blum's interactive argument system for $\textsc{hamcycle}$ (Protocol~\ref{protocol:HCproofsystem}), show in Sec.~\ref{subsec:zksound} that it is an argument system 
with constant soundness error $\nicefrac{14}{15}$ (Theorem~\ref{theo:soundness}), and show in Sec.~\ref{subsec:zk} that it is perfect zero-knowledge (Theorem~\ref{theo:zk}), assuming the underlying bit commitment scheme has unique openings and is publicly verifiable, perfectly hiding, and post-quantum computational open-once binding. Theorem~\ref{theo:ZKAthm} follows from Theorem~\ref{theo:soundness} and Theorem~\ref{theo:zk}.


\begin{theorem}\label{theo:ZKAthm}
    Assuming the existence of a publicly verifiable classical bit commitment scheme that has unique openings and is both post-quantum computational open-once binding and perfectly hiding, there exists a post-quantum perfect zero-knowledge argument system with soundness error $\nicefrac{14}{15}$ for all $\textsf{NP}$.
\end{theorem}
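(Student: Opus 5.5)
The plan is to instantiate Blum's protocol for \textsc{hamcycle} with the assumed commitment scheme $\protpi$ and then transfer the result to all of \textsf{NP}. Every $\textsf{NP}$ relation $\mathscr{R}_{\mathcal{L}}$ reduces to \textsc{hamcycle} via a Karp reduction that also maps witnesses efficiently in both directions. The verifier maps $x$ to a graph $G_x$, the prover maps $w$ to a Hamiltonian cycle of $G_x$, and both run Protocol~\ref{protocol:HCproofsystem} on $G_x$.

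\textbf{Completeness and zero-knowledge.} Completeness is immediate from the correctness of $\protpi$. Perfect zero-knowledge is preserved by the reduction because the simulator only needs $G_x$. Any soundness error against QPT provers on $G_x\notin\textsc{hamcycle}$ transfers verbatim, since $x\notin\mathcal{L}$ implies $G_x\notin\textsc{hamcycle}$ and the reduction is polynomial time.

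\textbf{Soundness.} The bulk of the argument is Theorem~\ref{theo:soundness}. By Lemma~\ref{lem:ClassBindingUniqueOpeningsImpliesCollapseBinding}, $\protpi$ is collapse-binding. Since it is publicly verifiable, Lemma~\ref{lem:compositioncollapse} shows that the $\frac{\sizeG(\sizeG-1)}{2}$-fold parallel composition $\protpi^{k}$ is collapse-binding. This lets us treat the committed adjacency matrix as effectively classical.

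Suppose a QPT prover is accepted with probability greater than $\nicefrac{14}{15}$. Then, after the commitment phase, it succeeds on challenge 1 (open everything plus $\sigma$) with probability $p_1$ and on challenge 2 (open a cycle) with probability $p_2$, where $p_1+p_2 > 2\cdot\nicefrac{14}{15}$. First run the prover on challenge 1 and measure the opened matrix $A$; by collapse-binding of $\protpi^{k}$ this is indistinguishable from not measuring. Then rewind using the unitary implementation, in the style of the Dall'Agnol--Spooner / sum-binding argument, and answer challenge 2.

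Because openings are unique, each committed bit has at most one valid opening per value. The challenge-2 openings must therefore agree with $A$ on the cycle, unless some commitment is opened to both values, which would break open-once binding. If they agree, then $A$ is isomorphic to $G$ and contains a Hamiltonian cycle, which is impossible. The loss incurred when moving from sum-binding-type bounds to simultaneous success (the square-root or $(\sqrt{p}-\ldots)^2$ losses, as in Lemma~\ref{lem:PurifiedDistance}) is what inflates the constant from $\nicefrac12$ to $\nicefrac{14}{15}$. The main obstacle is the quantitative step showing that $p_1+p_2$ large forces a non-negligible probability of simultaneous valid openings. I would try a two-outcome Jordan/rewinding argument of the kind used in Lemma~\ref{lem:ClassBindingUniqueOpeningsImpliesSumBinding}, adapted to the parallel composition.

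\textbf{Perfect zero-knowledge.} This is Theorem~\ref{theo:zk}, via the standard Blum simulator. The simulator guesses the challenge, commits accordingly (to $\sigma(G)$, or to a random Hamiltonian cycle graph), and uses Watrous's rewinding~\cite{W09}. Perfect hiding makes the success probability exactly $\nicefrac12$ independently of the verifier's auxiliary state, so the simulation is perfect. With one rewinding slot it is perfect up to abort, and repetition makes it exact in expectation.

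Combining completeness, Theorem~\ref{theo:soundness} and Theorem~\ref{theo:zk} with the Karp reduction yields the statement.
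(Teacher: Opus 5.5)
Your architecture matches the paper's. You reduce to \textsc{hamcycle}, get collapse-binding of $\protpi^{|\bI|}$ from Lemma~\ref{lem:ClassBindingUniqueOpeningsImpliesCollapseBinding} and Lemma~\ref{lem:compositioncollapse}, measure the full-matrix opening (the paper's challenge $0$, your challenge 1), rewind with $\pdvch{1}(\pdvch{0})^*$, answer the cycle challenge, and find a cycle edge validly opened to both $0$ and $1$. For zero-knowledge you use Watrous rewinding with success probability exactly $\nicefrac12$ by perfect hiding.

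\textbf{The gap.} The quantitative core of soundness, the paper's Lemma~\ref{lem:HC}, is exactly the step you call ``the main obstacle'' and leave open. Your account of $\nicefrac{14}{15}$ via purified-distance-type losses is not a derivation: nothing shows that acceptance above $\nicefrac{14}{15}$ forces a non-negligible probability of passing both checks.

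\textbf{The missing observation.} No Jordan-type machinery is needed in this regime.
If the prover is accepted with probability $1-s$, then $\frac12(p_0+p_1)=1-s$ and $p_b\le 1$, so \emph{each} challenge is answered correctly with probability at least $1-2s$. This gives four bounds:
\begin{enumerate}
\item $\textsf{Ver-COM}$ succeeds with probability $\ge 1-2s$, since $\mathrm{Image}(\Pi_0)\subseteq\mathrm{Image}(\Gamma)$. By Lemma~\ref{lem:HC2} it moves the state by at most $\sqrt{2s}$ in trace distance.
\item Measuring $\reg{A}$ after a successful $\textsf{Ver-COM}$ changes the efficient test ``apply $\pdvch{1}(\pdvch{0})^*$, measure $\textsf{Ver-CH}_1$'' only negligibly. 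This is collapse-binding; unique openings make measuring $\reg{A}$ equivalent to measuring $\reg{M}$.
\item $\textsf{Ver-CH}_0$ also succeeds with probability $\ge 1-2s$, so it likewise costs at most $\sqrt{2s}$.
\item The $\textsf{Ver-CH}_0$-failure branch contributes at most $2s$.
\end{enumerate}
Starting from $\tr(\Pi_1\pdvch{1}\rho(\pdvch{1})^*)\ge 1-2s$, the non-abort probability is at least $1-4s-2\sqrt{2s}-\negl$. This is still positive at $s=\nicefrac{1}{15}$, which is where the constant comes from.

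Your own suggested route can also be completed. Once the $\reg{A}$-measurement is removed by collapse-binding, the unmeasured sequential success probability for a pure post-commitment state $\ket{\psi}$ is $\|\tilde\Pi_1\tilde\Pi_0\ket{\psi}\|^2$, where $\tilde\Pi_b=(\pdvch{b})^*\Pi_b\pdvch{b}$. The triangle inequality gives $\|\tilde\Pi_1\tilde\Pi_0\ket{\psi}\|\ge\sqrt{p_1}-\sqrt{1-p_0}$. Averaging, this yields at least $\tfrac14(1-2s)^2-\negl$, which even tolerates larger $s$. But it has to be carried out.

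\textbf{Two smaller points.} First, open-once binding is assumed only for a single instance of $\protpi$, so ``some commitment is opened to both values'' must become an attack on one instance. Embed the external receiver at a uniformly random position $(c,d)\in\bI$ and play the honest receiver elsewhere. The prover's view is unchanged, so you hit the conflicting cycle edge with probability $1/|\bI|$, an inverse polynomial.

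Second, your last sentence on zero-knowledge is inaccurate. The guess matches with probability exactly $\nicefrac12$ for \emph{every} input state, so Watrous's exact-case lemma (Lemma~\ref{lem:watrousrewinding}) maps the failure state onto the success state with a single reflection. The simulator therefore never aborts, runs in strict polynomial time, and gives $\Psi^{\params}_{x}=\Phi^{\params}_{x,w}$ exactly. No repetition or ``exactness in expectation'' is involved.
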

The $\novy$ bit commitment scheme is perfectly hiding and publicly verifiable with unique openings, and Theorem~\ref{th1} establishes that it is post-quantum computational open-once binding assuming the existence of post-quantum one-way permutations. Theorem~\ref{th2} given in the Introduction follows from Theorem~\ref{th1} and Theorem~\ref{theo:ZKAthm}.



\subsubsection{Notation used for graphs and partial unveiling of adjacency matrix.} Let $G$ be a simple undirected graph, then we denote by $V(G)$ its set of vertices and by $E(G)$ its set of edges. There is an edge between vertices $v,w\in G$ if and only if $vw\in E(G)$. A Hamiltonian cycle in a graph $G$ such that $|V(G)|=\sizeG$ is given by a list of vertices $(s_0,s_1,\dots,s_{\sizeG-1})$ such that $s_is_{i+1(\text{mod}\;\sizeG)}\in E(G)$ and $s_i\neq s_j$, $\forall i\neq j\in\{0,\dots,\sizeG-1\}$. Since the adjacency matrix of such a graph is symmetric and every bit on its diagonal is $0$, we consider only its strictly upper triangular entries. For two distinct vertices $i,j\in V(G)$, we abuse notation and write $A(i,j)$ to denote $A(\min\{i,j\},\max\{i,j\})$, regardless of their order. When committing to an adjacency matrix, we only commit to the entries $A(i,j)$ with $i<j$. Suppose that $\textsf{P}$ commits to an adjacency matrix $A$ using the $(\frac{\sizeG^2}{2}-\sizeG)$-fold parallel composition of bit commitment $\protpi=(\setup,\SNDC,\RCPC,\SNDO,\RCPO)$ that is publicly verifiable and that has unique openings. Then, it is possible for $\textsf{P}$ to open only a cycle in $A$. Indeed, by construction, the commitment resulted in public communication $\text{Comm}\coloneq(\text{Comm}_{1,2},\text{Comm}_{1,3},\dots,\text{Comm}_{\sizeG-1,\sizeG})$ such that there exists a unique list of opening information $u\coloneq(u_{1,2},\dots,u_{\sizeG-1,\sizeG})$ that is known to $\textsf{P}$, such that $\RCPO(A(i,j),u_{i,j},\text{Comm}_{i,j},\text{Public}_{\params})=1$ for all $i<j\in\{1,\dots,\sizeG\}$. Thus, $\textsf{P}$ may open the cycle $H\coloneq(s_0,\dots,s_{\sizeG-1})$ in $A$ by sending $\hat{u}\coloneq((A(s_0,s_1),u_{s_0,s_1}), (A(s_1,s_2),u_{s_1,s_2}),\dots,(A(s_{\sizeG-1},s_0),u_{s_{\sizeG-1},s_0}))$ along with $H$ to $\textsf{V}$. Let $\mathcal{I}_{\sizeG}\coloneq\{(i,j)\mid\left( i<j\right)\land \left(i,j\in\{1,\dots,\sizeG\}\right)\}$ be the set of relevant positions in an adjacency matrix for a simple undirected graph of size $\sizeG$. When $\sizeG$ is clear from the context, we abuse notation and write $\mathcal{I}$. 

\SetAlgorithmName{Protocol}{}{}
\begin{algorithm}
\DontPrintSemicolon
\SetKwInput{KwCommon}{Common input}
\SetKwInput{KwWitness}{Prover's input}
\caption{$\prothc{\protpi}$ \cite{Blum86}}\label{protocol:HCproofsystem}
\KwCommon{Security parameter $\params\in\mathbb{N}$, simple undirected graph $G=(V(G),E(G))$, such that $|V(G)|=\sizeG$}
\KwWitness{Hamiltonian cycle $H=(s_0,\dots,s_{\sizeG-1})$ in $G$}
\KwParameter{Bit commitment scheme~$\protpi$}
\textbf{Commitment phase:}\;
\textsf{P} samples a uniformly random permutation $\sigma\in S_{\sizeG}$ ($S_\sizeG$ is the symmetric group on $\{1,\dots,\sizeG\}$)\;
Let $A$ be the adjacency matrix of $\sigma(G)$ and set $\sigma(H) := \left(\sigma(s_0),\dots,\sigma(s_{\sizeG-1})\right)$. 
$\textsf{P}$ commits to matrix $A$ using the $\left|\mathcal{I}_{\sizeG}\right|$-fold parallel composition of $\protpi$ with security parameter $\params$\;
\textbf{Challenge phase:}\;
\textsf{V} samples $ch \in_R \{0,1\}$ and sends $ch$ to $\textsf{P}$\;
\textbf{Response and verification phase:}\;
\eIf{$ch=0$}{
    $\textsf{P}$ sends the permutation $\sigma$ and opens the commitment to the entire matrix $A$\;
    $\textsf{V}$ checks that the openings are valid and that $A$ is the adjacency
    matrix of $\sigma(G)$ \;
}{
    $\textsf{P}$ sends $\sigma(H)$ and opens each bit entry $A(\sigma(s_{i}),\sigma(s_{i+1\text{(mod $\sizeG$)}}))$ for $i \in \{0,\hdots,\sizeG-1\}$\;
    $\textsf{V}$ checks that the openings are valid, that the opened matrix entries are exactly the edges corresponding to the cycle $\sigma(H)$, and that each of those entries equals $1$ 
}

\leIf{\textnormal{\textbf{every required check succeeds}}}{\textsf V \textbf{outputs} $1$}{\textsf V \textbf{outputs} $0$}
\end{algorithm}

\subsection{Soundness of the Hamiltonian Cycle Argument System}\label{subsec:zksound}
In this section, we show that Protocol~\ref{protocol:HCproofsystem} is a sound argument system for \textsc{hamcycle}, when a publicly verifiable post-quantum computational open-once binding interactive bit commitment scheme with unique openings is used, by demonstrating that the existence of a malicious prover $\pdv$ that breaks soundness implies the existence of an adversary $\sdv$ that breaks the open-once binding property of the underlying commitment scheme.

\subsubsection{Notation for the transcripts.} Throughout this section, the public communication during a commitment Comm is written $\tt$ and referred to as the \emph{transcript}. Since the adjacency matrix $A$ is committed using the $|\mathcal{I}|$-fold parallel composition of bit commitment $\protpi$, the transcript $\tt$ is a list of transcript $\tt=(\tt_{1,2},\tt_{1,3},\dots,\tt_{\sizeG-1,\sizeG})$, where $\tt_{i,j}$ is the transcript of the commitment to the value $A(i,j)$ for each $(i,j)\in \mathcal{I}_{\sizeG}$.
In the case that $\protpi$ corresponds to $\novy$ bit commitment scheme, the transcripts are defined as in Sec.~\ref{subsec:ModelingS}.

\subsubsection{Registers.} At the end of the response and verification phase, register $\reg{T}$ holds the transcript $\tt$, register $\reg{S}$  holds the prover's private memory, register $\reg{P}$ holds either the permutation or the permuted cycle depending on what the challenge was, and register $\reg{A}$ holds the opening information for the commitment of $A$. By construction of the $|\mathcal{I}|$-fold parallel composition commitment scheme, register $\reg{A}\coloneq\bigotimes_{(i,j)\in\mathcal{I}}\reg{A}_{i,j}$, where register $\reg{A}_{i,j}$ holds the opening information for $A(i,j)$ for each pair $(i,j)\in\mathcal{I}$. Note that when opening the commitment to $A(i,j)$, register $\reg{A}_{i,j}$ holds two things: the bit $A(i,j)$ and the opening information $u_{i,j}$. Therefore, for all $(i,j)\in\mathcal{I}$, register $\reg{A}_{i,j}\coloneq \reg{M}_{i,j}\otimes\reg{U}_{i,j}$, where $\reg{M}_{i,j}$ holds the bit being opened and $\reg{U}_{i,j}$ holds the information to unveil the bit. Define register $\reg{M}\coloneq \bigotimes_{(i,j)\in\bI}\reg{M}_{i,j}$. Finally, Register $\reg{C}$ holds the challenge sent by the verifier during the challenge phase.

\subsubsection{Unitary operators of the malicious prover.} The malicious prover is described by the triplet of algorithms $\pdv=(\pdvcom,\pdvch{0},\pdvch{1})$, where $\pdvcom$ is an interactive algorithm that interacts with the verifier to complete the commitment phase and $\pdvch{ch}$ is a unitary operator that is applied upon reception of challenge $ch\in\{0,1\}$. After applying $\pdvch{ch}$, the prover measures registers $\reg{P}\reg{A}$ in the computational basis and sends the result to the verifier. For $ch\in\{0,1\}$, the unitary operator $\pdvch{ch}$ is controlled by the transcript, 
\[
    \pdvch{ch}\coloneq \sum_{\tt}\proj{\tt}_{\reg{T}}\otimes \pdv_{ch,\tt}\enspace,
\]
where $\pdv_{ch,\tt}$ is the unitary operator that describes the action of the prover upon reception of challenge $ch$ given transcipt $\tt$.

\subsubsection{Measurements.} For each challenge $ch\in\{0,1\}$, let $\textsf{Ver-CH}_{ch}$ be a two-outcome projective measurement that verifies if the prover prepared a valid reply to challenge $ch$. Formally, $\textsf{Ver-CH}_{ch}\coloneq\left(\Pi_{ch},\idt-\Pi_{ch}\right)$ such that
\[
    \Pi_{ch}\coloneq \sum_\tt\proj{\tt}_{\reg{T}}\otimes\Pi_{ch,\tt}\enspace,
\]
where $\Pi_{ch,\tt}\in\Proj{\reg{PA}}$ projects on valid replies to challenge $ch$ given transcript $\tt$. When measuring a state with $\textsf{Ver-CH}_{ch}$, we say that the measurement is successful if $\Pi_{ch}$ is observed. Let $\textsf{Ver-COM}$ be a two-outcome projective measurement that verifies if the opening information in register $\reg{A}$ for the commitment to adjacency matrix $A$ is valid. Formally, $\textsf{Ver-COM}\coloneq(\Gamma,\idt-\Gamma)$ such that 
\[
    \Gamma\coloneq \sum_{\tt=(\tt_{1,2},\dots,\tt_{\sizeG-1,\sizeG})}\proj{\tt}_{\reg{T}}\bigotimes_{(i,j)\in\bI} \Gamma_{\tt,i,j}\enspace,
\]
where for all pairs $(i,j)\in\bI$, $\Gamma_{\tt,i,j}\in\Proj{\reg{A}_{i,j}}$ projects onto valid opening information of the commitment to $A(i,j)$ given transcript $\tt_{i,j}$. Note that the image of $\Pi_{0}$ is a subspace of the image of $\Gamma$, since a valid reply challenge to $0$ includes a valid opening for the whole adjacency matrix $A$. 


\subsubsection{The adversary.} Let $\pdv$ and $G\notin\textsc{hamcycle}$ be such that $[\pdv(1^\params,G)\rightleftarrows \textsf{V}(1^\params,G)]$ outputs $1$ with probability $1-s(\params)$ for all $\params\in\mathbb{N}$. Let us build an adversary $\sdv_{\pdv}$ with oracle access to $\pdv$ that breaks the open-once binding property of the underlying commitment scheme $\protpi=(\setup,\SNDC,\RCPC,\SNDO,\RCPO)$.
The adversary $\sdv_{\pdv}$ interacts with receiver $\RCPC$ to execute the committing phase of $\protpi$.
To do so, it uses $\pdvcom$ for the commitment phase of $\prothc{\protpi}$ on graph $G$. It plays honestly the role of the verifier except for $A(c,d)$ for randomly selected pair $(c,d)\in_R\bI$. For the bit $A(c,d)$, it sends the messages it gets from $\pdvcom$ to $\RCPC$ and sends back to $\pdvcom$ whatever $\RCPC$ replied. In doing so, $\sdv_{\pdv}$ completes its committing phase with $\RCPC$. Then, it challenges $\pdv$ with challenge $0$ and measures register $\reg{A}_{c,d}$ to obtain an opening. Subsequently, it undoes $\pdv$'s action, challenges $\pdv$ with challenge $1$, and measures register $\reg{A}_{c,d}$ to obtain a second opening. If both openings are distinct and valid, then $\sdv_{\pdv}$ has succeeded. We now give a formal description of $\sdv_{\pdv}$.

\begin{enumerate}
    \item []$\sdv_{\pdv}(1^{\params})\coloneq (1) \text{ Pick $(c,d)\in_R \bI_{\sizeG}$}$.
    \item[]\hspace{0.19cm}\hspace{1.44cm}$(2)$ Execute $[\pdvcom(1^\params,G)\rightleftarrows \textsf{V}(1^\params,G)]$ and play the role of the verifier $\textsf{V}$ honestly for all bits
    \item[]\hspace{0.19cm}\hspace{2cm}in the adjacency matrix except for the bit $A(c,d)$. For that position, send the messages
    \item[]\hspace{0.19cm}\hspace{2cm}of $\pdvcom$ to $\RCPC$ and send back to $\pdvcom$ whatever $\RCPC$ replies.
    \item[]\hspace{0.19cm}\hspace{1.44cm}$(3)$ Apply $\pdvch{0}$ and measure the state with $\textsf{Ver-COM}$.
    \item[]\hspace{0.19cm}\hspace{1.44cm}$(4)$ Measure register $\reg{A}$ in the computational basis, let $(m_{c,d},u_{c,d})$ be the classical outcome 
    \item[]\hspace{0.19cm}\hspace{2cm}of the measurement of register $\reg{A}_{c,d}$.
    \item[]\hspace{0.19cm}\hspace{1.44cm}$(5)$ Measure the state with $\textsf{Ver-CH}_0$, abort if the measurement is not a success.
    \item []\hspace{0.19cm}\hspace{1.44cm}$(6)$ Apply $\pdvch{1}\left(\pdvch{0}\right)^*$ and measure the state with $\textsf{Ver-CH}_1$, abort if the measurement is not
    \item []\hspace{0.19cm}\hspace{2cm} a success.
    \item []\hspace{0.19cm}\hspace{1.44cm}$(7)$ Measure registers $\reg{P}\reg{A}$ in the computational basis. Let $(m'_{c,d},u'_{c,d})$ be the outcome of the 
    \item[]\hspace{0.19cm}\hspace{2cm}measurement of register $\reg{A}_{c,d}$.
    \item []\hspace{0.19cm}\hspace{1.44cm}$(8)$ Return $(u_{c,d},u'_{c,d})$.
\end{enumerate}
The following lemma lower bounds the probability that $\sdv$ does not abort.

\begin{lemma}\label{lem:HC}
    Let $\textsf{V}$ be the honest verifier of $\prothc{\protpi}$ and let $\protpi$ be a publicly verifiable post-quantum computational open-once binding bit commitment scheme that has unique openings. Let QPT (potentially malicious) prover $\pdv$ and graph $G\notin\textsc{hamcycle}$ be such that 
    \[
        \Pr\Big[1\gets\left[\pdv(1^\params,G)\rightleftarrows\textsf{V}(1^\params,G))\right]\Big]=1-s(\params)\enspace, 
    \]
    then the adversary $\sdv_{\pdv}$ satisfies
     \[
        \Pr\left[\bot\gets \sdv_{\pdv}(1^\params)\right]\leq 2\sqrt{2s(\params)}+4s(\params)+\negl[\params]\enspace.
     \]
\end{lemma}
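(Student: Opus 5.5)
The plan is to analyse the two abort points of $\sdv_{\pdv}$ separately: Step~(5), where $\textsf{Ver-CH}_0$ fails, and Step~(6), where $\textsf{Ver-CH}_1$ fails. The argument is purely in terms of projectors, and the computational assumption is used only once, through collapse-binding, to remove the measurement in Step~(4).

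\textbf{Setup.} Purify everything and let $\ket{\phi}$ be the joint state (on $\reg{T}$, the prover's registers and the purification of the verifier's coins) after the commitment phase in Step~(2). From the point of view of $\pdv$, Step~(2) is exactly an honest run of $\textsf{V}$, so $\ket{\phi}$ is the state of a real execution. Set
\[
p_{ch}\coloneq \|\Pi_{ch}\,\pdvch{ch}\ket{\phi}\|^2 .
\]
The hypothesis on $\pdv$ gives $\frac12(p_0+p_1)=1-s(\params)$. Hence $p_0\ge 1-2s$ and $p_1\ge 1-2s$ (write $s=s(\params)$).

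\textbf{Step (5).} Two facts make this abort easy to bound.
\begin{itemize}
\item The image of $\Pi_0$ lies in the image of $\Gamma$, so $\Pi_0\Gamma=\Pi_0$. Measuring $\textsf{Ver-COM}$ before $\textsf{Ver-CH}_0$ therefore changes neither the probability of observing $\Pi_0$ nor the post-measurement state on that branch.
\item $\Pi_{0,\tt}$ is diagonal in the computational basis of $\reg{P}\reg{A}$, since validity is a classical predicate. So the computational-basis measurement of $\reg{A}$ in Step~(4) commutes with $\textsf{Ver-CH}_0$.
\end{itemize}
Together these give that the probability of aborting at Step~(5) is exactly $1-p_0\le 2s$.

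\textbf{Step (6), ideal version.} First analyse Step~(6) in a hybrid where the measurement of $\reg{A}$ in Step~(4) is omitted. On the surviving branch the state before $\pdvch{1}(\pdvch{0})^*$ is $\Pi_0\pdvch{0}\ket{\phi}/\sqrt{p_0}$. After applying $(\pdvch{0})^*$ its fidelity with $\ket{\phi}$ is
\[
\bigl|\bra{\phi}(\pdvch{0})^*\Pi_0\pdvch{0}\ket{\phi}\bigr|/\sqrt{p_0}=\sqrt{p_0},
\]
so its purified distance to $\ket{\phi}$ is $\sqrt{1-p_0}\le\sqrt{2s}$. By Lemma~\ref{lem:PurifiedDistance}, applying $\pdvch{1}$ and measuring $\Pi_1$ succeeds with probability at least
\[
\bigl(\sqrt{p_1}-\sqrt{2s}\bigr)^2\ \ge\ 1-2s-2\sqrt{2s}.
\]
Adding the $2s$ from Step~(5), the total abort probability in the hybrid is at most $2\sqrt{2s}+4s$.

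\textbf{Step (6), real version.} It remains to show that reinserting the Step~(4) measurement changes the probability of passing Steps~(5)--(6) by at most $\negl[\params]$. I would reduce this to the collapse-binding of the $|\mathcal{I}|$-fold parallel composition $\protpi^{|\mathcal{I}|}$. That property holds by Lemma~\ref{lem:ClassBindingUniqueOpeningsImpliesCollapseBinding} together with Lemma~\ref{lem:compositioncollapse}, using public verifiability.

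The collapse adversary works as follows.
\begin{itemize}
\item It runs $\pdvcom$ against the real receiver of $\protpi^{|\mathcal{I}|}$, so that receiver plays all the commitment positions.
\item It applies $\pdvch{0}$ and hands over $\reg{M}\reg{U}$ as the message and opening registers.
\item The challenger measures the validity measurement, which coincides with $\textsf{Ver-COM}$ since $\protpi$ is publicly verifiable, and on success possibly measures $\reg{M}$.
\item The distinguisher measures $\textsf{Ver-CH}_0$, applies $\pdvch{1}(\pdvch{0})^*$, measures $\textsf{Ver-CH}_1$, and outputs whether both succeed.
\end{itemize}

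Two points need checking in this reduction.
\begin{itemize}
\item \emph{Measuring $\reg{M}$ versus $\reg{A}$.} By unique openings, on the image of $\Gamma$ the content of $\reg{U}$ is a function of $(\tt,\reg{M})$. Hence measuring $\reg{M}$ there has the same effect as measuring all of $\reg{A}$.
\item \emph{Outcome $\idt-\Gamma$.} On this outcome Step~(4) may also disturb the state. This mass is at most $1-p_0\le 2s$, and on that branch $\textsf{Ver-CH}_0$ fails anyway, because $\Pi_0\le\Gamma$. So that branch is already counted in the Step~(5) abort.
\end{itemize}
The collapse-binding advantage is $\negl[\params]$, so the two acceptance probabilities differ by at most $\negl[\params]$. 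This gives the claimed bound $2\sqrt{2s(\params)}+4s(\params)+\negl[\params]$.

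\textbf{Main obstacle.} I expect the delicate part to be this last reduction. One has to verify that the collapse experiment's measurement of $\reg{M}$, followed only on the valid branch, reproduces exactly the joint action of Steps~(3)--(4) on every branch that can survive Step~(5). One also has to confirm that the reduction is efficient, which it is, since $\pdvch{0}$, $\pdvch{1}$ and the verification projectors are all efficient.
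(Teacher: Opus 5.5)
Your proof is correct, but it is organised differently from the paper's. The paper works with trace distance throughout. It starts from $\tr(\Pi_1\pdvch{1}\rho(\pdvch{1})^*)\ge 1-2s$ and inserts the operations of $\sdv_{\pdv}$ one at a time, paying for each:
\begin{itemize}
\item the $\textsf{Ver-COM}$ measurement costs $\TD(\rho_0,\sigma)\le\sqrt{2s}$, via the Almost-As-Good-As-New Lemma~\ref{lem:HC2};
\item the measurement of $\reg{A}$ costs a negligible term, by collapse-binding applied to the distinguisher \emph{apply $\pdvch{1}(\pdvch{0})^*$ and measure $\Pi_1$};
\item the $\textsf{Ver-CH}_0$ measurement costs another $\sqrt{2s}$;
\item the $(\idt-\Pi_0)$ branch is discarded at cost $2s$.
\end{itemize}
You instead use collapse-binding once, with a distinguisher that performs both verification measurements, to remove Step~(4) from the whole acceptance event. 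You then analyse the ideal hybrid exactly. Since $\Pi_0\Gamma=\Pi_0$, $\textsf{Ver-COM}$ costs nothing on the surviving branch. The only remaining disturbance is the post-selection on $\Pi_0$, whose purified distance $\sqrt{1-p_0}$ you feed into Lemma~\ref{lem:PurifiedDistance}. Your $(\idt-\Gamma)$ bullet implicitly uses that the computational-basis measurement of $\reg{A}$ commutes with $\Gamma$; this holds and is worth stating.

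Your route buys two things.
\begin{itemize}
\item The reduction to collapse-binding is explicit. It covers the invalid branch and identifies measuring $\reg{M}$ with measuring $\reg{A}$ via unique openings; the paper only asserts that $\sigma$ and $\tau$ are indistinguishable.
\item The bound is sharper. For $s\le\nicefrac{1}{4}$ you have $(\sqrt{p_1}-\sqrt{1-p_0})^2\ge(\sqrt{1-2s}-\sqrt{2s})^2\ge 1-2\sqrt{2s}$. So your relaxation to $1-2s-2\sqrt{2s}$ is unnecessary, and without it the abort probability is at most $2s+2\sqrt{2s}$ plus a negligible term. That would slightly improve the soundness constant in Theorem~\ref{theo:soundness}.
\end{itemize}
The paper's route buys a simpler collapse distinguisher, which does not need to implement $\textsf{Ver-CH}_0$, and it uses only elementary trace-distance estimates.
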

\begin{proof}[sketch]
    Since $\pdv$ convinces $\textsf{V}$ with probability $1-s(\params)$, and since Image$(\Pi_{0})\subseteq\text{Image}(\Gamma)$, it follows that the probability that the measurement at Step~$(3)$ fails is at most $2s(\params)$, because the honest verifier sends each challenge with probability $\nicefrac{1}{2}$. If that measurement is a success, then measuring register $\reg{A}$ at Step $(4)$ does not affect the state in a detectable way. Indeed, since the commitment scheme has unique openings, measuring register $\reg{A}$ is as though we had only measured register $\reg{M}$, and by Lemma~\ref{lem:ClassBindingUniqueOpeningsImpliesCollapseBinding} and Lemma~\ref{lem:compositioncollapse}, the $|\bI|$-fold parallel composition of $\protpi$ is collapse-binding, which means measuring register $\reg{M}$ is not detectable. At step $(5)$, the probability that the reply observed is not valid is a small function of $s(\params)$ for two reasons. First, the probability with which $\pdv$ convinces $\textsf{V}$ is at least $1-2s(n)$. Second, the actions of $\sdv_{\pdv}$ so far (measuring register $\reg{A}$) have not affected the state that much. Hence, if $\rho$ is $\pdv$'s state after the commitment phase and $\tau_0$ is $\sdv_{\pdv}$'s state after Step $(5)$, then
    \[
        \pdvch{0}\,\rho\,\left(\pdvch{0}\right)^*\approx \tau_0\enspace.
    \]
    From there, $\sdv_\pdv$ applies $\pdvch{1}\left(\pdvch{0}\right)^*$, and gets a state $\tau_1$ such that
    \[
        \pdvch{1}\,\rho\,\left(\pdvch{1}\right)^*\approx\tau_1\enspace.    
    \]
    Since the probability of failure when measuring $\pdvch{1}\,\rho\,\left(\pdvch{1}\right)^*$ with $\textsf{Ver-CH}_1$ is at most $2s(\params)$ for the same reason as above, and $\tau_1$ is \emph{close} to $\pdvch{1}\,\rho\,\left(\pdvch{1}\right)^*$, it follows that the probability of failure when measuring $\tau_1$ with $\textsf{Ver-CH}_1$ is also a small function of $s(\params)$. The formal proof is provided in App.~\ref{app:ZKA}.
    \qedSketch
\end{proof}
It remains to show that if $\sdv_\pdv$ does not abort, then it outputs simultaneously a valid opening for $0$ and a valid opening for $1$ with non-negligible probability, which should not possible if the bit commitment used is open-once binding.

\begin{theorem}\label{theo:soundness}
    Let $\protpi$ be a publicly verifiable classical bit commitment scheme that has unique openings. If $\protpi$ is post-quantum computational open-once binding, then $\prothc{\protpi}$ is an argument system for $\textsc{hamcycle}$ with soundness error $\nicefrac{14}{15}$.
\end{theorem}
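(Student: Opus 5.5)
We argue by contrapositive. Suppose a QPT prover $\pdv$ and a sequence of graphs $G\notin\textsc{hamcycle}$ make $\textsf{V}$ accept with probability $1-s(\params) > \nicefrac{14}{15}$, i.e. $s(\params) < \nicefrac{1}{15}$ for infinitely many $\params$. Completeness is immediate from the honest prover's behaviour, so only soundness needs work. We run the adversary $\sdv_{\pdv}$ built above and show that it wins $\textnormal{Open-Both}_{\protpi,\sdv_{\pdv}}$ with non-negligible probability, contradicting open-once binding. Lemma~\ref{lem:HC} gives the probability that $\sdv_{\pdv}$ does not abort:
\[
1-2\sqrt{2s}-4s-\negl[\params].
\]
For $s<\nicefrac{1}{15}$ this is bounded below by a positive constant, since $2\sqrt{2/15}+4/15\approx 0.997<1$. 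The constant $\nicefrac{14}{15}$ is presumably chosen so that this quantity stays bounded away from $0$.

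Next we analyse the case where $\sdv_{\pdv}$ does not abort, so both $\textsf{Ver-CH}_0$ and $\textsf{Ver-CH}_1$ succeeded. The measured outcome of step~(4), together with the success of $\textsf{Ver-CH}_0$, gives a full valid opening of some matrix $A$ and a permutation $\sigma$ with $A$ the adjacency matrix of $\sigma(G)$. Step~(7) gives a cycle $C=\sigma(H)$ and valid openings of value $1$ on every edge of $C$. All these openings refer to the same transcript $\tt$, and by unique openings each position has at most one valid opening per bit value. Since $G\notin\textsc{hamcycle}$, $\sigma(G)$ has no Hamiltonian cycle, so at least one pair $(i,j)$ on $C$ satisfies $A(i,j)=0$ while the step~(7) opening gives $1$. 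At such a position we hold a valid opening of $0$ from step~(4) and a valid opening of $1$ from step~(7).

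We still have to ensure the conflicting position coincides with the embedded one $(c,d)$. The position $(c,d)$ is chosen uniformly at random and embedded so that $\pdv$'s view is identical to a real execution, since the honest receiver $\RCPC$ and the simulated honest verifier behave identically. Hence $(c,d)$ is independent of $\pdv$'s behaviour, and it hits a conflicting position with probability at least $1/|\bI|$. Because the scheme is publicly verifiable, $\sdv_{\pdv}$ needs no private receiver information for the other positions. The success probability is therefore at least
\[
\bigl(1-2\sqrt{2s}-4s-\negl[\params]\bigr)/|\bI|,
\]
which is non-negligible since $|\bI|\le\sizeG^2$ is polynomial. This contradicts Definition~\ref{def:openonceBinding}, giving soundness error at most $\nicefrac{14}{15}$ up to the limit argument on $s$.

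The main obstacle is the independence claim for $(c,d)$. The conflicting position is a random variable determined by measurements that occur after the embedding, so we must argue that the joint state, conditioned on the choice of $(c,d)$, has the same distribution as in the unembedded run. This lets us sum over positions: the probability of success equals $\frac{1}{|\bI|}\sum_{(c,d)}\Pr[\text{no abort and }(c,d)\text{ conflicting}]\ge \Pr[\text{no abort}]/|\bI|$. It is also worth checking that Lemma~\ref{lem:HC} indeed holds when one position is run with the external $\RCPC$, which again follows from this identical-view argument.
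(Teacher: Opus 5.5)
Your proposal is correct and follows the paper's proof essentially step for step: the non-abort bound of Lemma~\ref{lem:HC}; a conflicting edge on the opened cycle, forced because the graph opened under challenge $0$ is isomorphic to $G$ and hence non-Hamiltonian; and a $1/|\bI|$ chance that the embedded position $(c,d)$ is that edge. Your explicit summation over positions and your check $2\sqrt{2/15}+4/15<1$ are slightly more careful than the paper's closing line, which doubles these constants to $4\sqrt{2s}+8s$, an expression that is negative at $s=\nicefrac{1}{15}$. One small correction: public verifiability is needed for the embedded position $(c,d)$, where $\sdv_{\pdv}$ must implement $\textsf{Ver-COM}$, $\textsf{Ver-CH}_0$ and $\textsf{Ver-CH}_1$ without $\RCPC$'s private information, and not for the other positions, where $\sdv_{\pdv}$ plays the receiver itself and already holds that information.
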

\begin{proof}
Let $\protpi=\left(\setup,\SNDC,\RCPC,\SNDO,\RCPO\right)$. Suppose $\sdv_\pdv(1^\params)$ finishes its execution without aborting. Let $\tt$ be the transcript of the commitment phase. Let the classical outcome of the measurement of register $\reg{A}$ in the computational basis at Step $(4)$ be $(A^0(1,2), u_{A^0(1,2)}, \dots,A^0(\sizeG-1,\sizeG), u_{A^0(\sizeG-1,\sizeG)} )$, where for all pairs $(i,j)\in\bI_\sizeG$, $\RCPO\left(A^0(i,j),u_{A^0(i,j)},\tt_{i,j}\right)=1$ (i.e.~the opening of every bit of the adjacency matrix is valid), and the matrix $A^0\coloneq(A^0(i,j))_{(i,j)\in\bI_\sizeG}$ is the adjacency matrix of a graph that is isomorphic to $G$. Indeed, at Step $(5)$, the state was measured with $\textsf{Ver-CH}_0$ and the measurement was successful, which means that the content of register $\reg{P}\reg{A}$ is a valid reply to challenge $0$. Therefore, given transcript $\tt$, the content of register $\reg{A}$ is a valid opening to the commitment, and the content of register $\reg{P}$ is a permutation $\sigma$ such that $\sigma(G)$ is the graph that is described by the adjacency matrix in register $\reg{A}$. Hence, the graph described by the adjacency matrix $A^0$, let us call it $G_0$, is isomorphic to $G$.

At Step $(7)$, let $H=(s_0,s_1,\dots,s_{\sizeG-1})$ be the classical outcome of the measurement of $\reg{P}$, and let $(A^1(1,2), u_{A^1(1,2)}, \dots,A^1(\sizeG-1,\sizeG), u_{A^1(\sizeG-1,\sizeG)} )$ be the classical outcome of the measurement of $\reg{A}$. Since at Step $(6)$, the measurement with $\textsf{Ver-CH}_1$ was successful, it follows that $H$ is a Hamiltonian cycle in the graph described by the adjacency matrix $A^1=(A^1(i,j))_{(i,j)\in\bI}$. Let us call that graph $G_1$. Furthermore, $$\RCPO\left(A^1(s_i,s_{i+1\text{(mod $\sizeG$)}}),u_{A^1(s_i,s_{i+1\text{(mod $\sizeG$)}})},\tt_{s_i,s_{i+1\text{(mod $\sizeG$)}}}\right)=1$$ for all $i\in\{0,\dots,\sizeG-1\}$. The graph $G_1$ has a Hamiltonian cycle.


Recall that $G_0$ and $G$ are isomorphic, hence $G_0$ does not have a Hamiltonian cycle, because $G$ does not have one. Seeing that $G_1$ has a Hamiltonian cycle, it follows that $G_0\neq G_1$. In particular, it means that for at least one $i\in\{0,\dots,\sizeG-1\}$, $s_is_{i+1\text{(mod $\sizeG$)}}\notin E(G_0)$, or else $H$ would be a Hamiltonian cycle in $G_0$. Let $s_\ell s_{\ell'}$ be such an edge where $\ell'\coloneq \ell+1\text{(mod $\sizeG$)}$, so $s_\ell s_{\ell'}\in E(G_1)$, but $s_\ell s_{\ell'}\notin E(G_0)$. It follows that $A^1(s_\ell,s_{\ell'})=1$ and $A^0(s_\ell,s_{\ell'})=0$ even though $\RCPO(A^1(s_\ell,s_{\ell'}),u_{A^1(s_\ell,s_{\ell'})},\tt_{s_\ell,s_{\ell'}})=\RCPO(A^0(s_\ell,s_{\ell'}),u_{A^0(s_\ell,s_{\ell'})},\tt_{s_\ell,s_{\ell'}})=1$. 

Now, at Step $(1)$, $\sdv_\pdv$ chose $(c,d)\in\bI$ uniformly at random. The choice of $(c,d)$ has no impact on the procedure, since from the point of view of $\pdv$, it is interacting with the honest verifier. It follows that if $\sdv_\pdv(1^\params)$ ended without aborting, the probability that the position $(s_\ell,s_{\ell'})$ such that 
\[
    \RCPO(A^1(s_\ell,s_{\ell'}),u_{A^1(s_\ell,s_{\ell'})},\tt_{s_\ell,s_{\ell'}})=\RCPO(A^0(s_\ell,s_{\ell'}),u_{A^0(s_\ell,s_{\ell'})},\tt_{s_\ell,s_{\ell'}})=1\enspace,
\]
even though
\[
A^0(s_\ell,s_{\ell'})\neq A^1(s_\ell,s_{\ell'})\enspace,
\] 
is the randomly chosen $(c,d)$ is $\frac{1}{|\bI|}=\frac{2}{\sizeG^2-2\sizeG}$. If it is, then $\sdv_\pdv$ has found two valid openings for transcript $\tt_{c,d}$, which is the one that was established interacting with $\RCPC$. Thus, the probability that $\sdv_\pdv$ wins the $\textnormal{Open-Both}_{\protpi,\sdv_\pdv}(n)$ experiment is at least
\begin{equation}
    \frac{2}{\sizeG^2-2\sizeG}(1-2\sqrt{2s(\params)}-4s(\params)-\negl[\params])\enspace,\label{eq:soundness8}
\end{equation}
where $(1-2\sqrt{2s(\params)}-4s(\params)-\negl[\params])$ comes from Lemma~\ref{lem:HC}. Since $\protpi$ is computational open-once binding, there can exist no QPT malicious prover that can cheat the soundness of the argument system with probability $1-s(\params)$ for $s(\params)$ makes (\ref{eq:soundness8}) non-negligible. For $s(\params) =  \nicefrac{1}{15}$, then $\frac{2}{\sizeG^2-2\sizeG}(1-4\sqrt{2s(\params)}-8s(\params)-\negl[\params])$ is non-negligible. Therefore, there is no QPT $\pdv$ that can cheat the verifier with probability at least $1-s(\params)$, for all $s(\params)\leq \nicefrac{1}{15}$.
\qed
\end{proof}

\subsection{Zero-Knowledge}\label{subsec:zk}


In this section, we show that if the classical bit commitment scheme
$\protpi$
is perfectly hiding, then the argument system $\prothc{\protpi}$ (Protocol~\ref{protocol:HCproofsystem}) is post-quantum perfect zero-knowledge.
First, we give the definition of post-quantum perfect zero-knowledge for $\textsf{NP}$ relations.

\begin{definition}[Post-quantum~perfect~zero-knowledge]
    A classical interactive argument system $(\textsf{P},\textsf{V}\mspace{2mu})$ for an $\textsf{NP}$ relation $\mathscr{R}_{\mathcal{L}}$ is \emph{post-quantum perfect zero-knowledge} if it holds that, for every QPT verifier $\vdv$, there exists a QPT simulator $\textsf{S}_{\vdv}$ that satisfies the following requirements. 
    \begin{enumerate}
        \item The verifier $\vdv$ and simulator $\textsf{S}_{\vdv}$ have the same number of auxiliary input and output qubits.
        \item Let $\Phi^{\params}_{x,w}$ be the channel induced by the interaction between $\vdv(1^\params,x)$ and $\textsf{P}(1^\params,x,w)$, and $\Psi^{\params}_{x}$ be the channel induced by the simulator $\textsf{S}_{\vdv}$ on input $(1^\params, x)$ (see Fig.~\ref{fig:QZK}, inspired
        from~\cite{W09}),
        then for every value of the security parameter~$\params$ and for every $(x,w) \in \mathscr{R}_{\mathcal{L}}$,
        \[
            \Phi^{\params}_{x,w}=\Psi^{\params}_{x}\enspace.
        \]
    \end{enumerate}
\end{definition}
\begin{figure}
    \centering
    \resizebox{0.5\linewidth}{!}{\begin{tikzpicture}[
    x=1cm, y=1cm,
    >=Stealth,
    wire/.style={->, thin},
    party/.style={
        draw,
        fill=gray!20,
        minimum width=0.85cm,
        minimum height=1.65cm,
        inner sep=3pt
    }
]
    \draw[fill=gray!3] (0,0) rectangle (4.8,3.05);

    \node[party] (V) at (1.25,1.2) {$\mathcal{V}$};
    \node[party] (P) at (3.75,1.2) {$\textsf{P}$};

    \node at (0.42,3.8) {$\rho$};
    \draw[wire] (0.42,3.52) |- (V.west);

    \node at (1.25,3.8) {$(1^\params,x)$};
    \draw[wire] (1.25,3.52) -- (V.north);
    \draw[wire] (1.25,2.67) -| ([xshift=-2mm]P.north);

    \node at (3.75,3.8) {$w$};
    \draw[wire] (3.75,3.52) -- (P.north);

    \draw[wire]
        ([yshift=3mm]P.west) --
        ([yshift=3mm]V.east);
    \draw[wire] (V.east) -- (P.west);
    \draw[wire]
        ([yshift=-3mm]P.west) --
        ([yshift=-3mm]V.east);

    \draw[wire] (V.south) -- (1.25,-0.48)
        node[below] {$\Phi^{\params}_{x,w}(\rho)$};

    \begin{scope}[xshift=5.8cm]
        \draw[fill=gray!3] (0,0) rectangle (4.8,3.05);

        \node[party, minimum width=1.1cm]
            (S) at (1.25,1.2) {$\textsf{S}_{\mathcal{V}}$};

        \node at (0.42,3.8) {$\rho$};
        \draw[wire] (0.42,3.52) |- (S.west);

        \node at (1.25,3.8) {$(1^\params,x)$};
        \draw[wire] (1.25,3.52) -- (S.north);

        \draw[wire] (S.south) -- (1.25,-0.48)
            node[below] {$\Psi^{\params}_{x}(\rho)$};
    \end{scope}
\end{tikzpicture}}
    \caption{\small On auxiliary input state $\rho$ and security parameter~$n$, $\Phi^{\params}_{x,w}(\rho)$ denotes $\vdv$'s resulting state after the interaction $[\textsf{P}(1^\params,x,w) \rightleftarrows \vdv(1^\params,x,\rho)]$, whereas $\Psi^{\params}_{x}(\rho)$ denotes the output state of $\textsf{S}_{\vdv}$ on input $(1^{\params},x,\rho)$.}
    \label{fig:QZK}
\end{figure}
Let $\mathscr{R}_{\mathcal{L}}$ be an $\textsf{NP}$ relation and let $\vdv$ be a (possibly dishonest) verifier for an interactive argument system for the relation~$\mathscr{R}_{\mathcal{L}}$. We denote by $\reg{V}$ the auxiliary input register of $\vdv$ and by $\reg{W}$ its work register, initially in state~$\ket{0}_{\reg{W}}$.
For any \mbox{$(x,w) \in\mathscr{R}_{\mathcal{L}}$} and security parameter $\params\in\mathbb{N}$, the interaction between $\vdv(1^\params,x)$ and the honest prover with input $(1^\params,x,w)$ induces a channel~$\Phi^{\params}_{x,w}$ mapping states on register $\reg{V}$ to states on registers $\reg{V}\reg{W}$. Consequently, an appropriate simulator~$\textsf{S}_{\vdv}$ on input~$(1^\params,x)$ must induce a channel $\Psi^{\params}_{x}$ with the same input and output registers. 
In particular, if $\Phi^{\params}_{x,w}(\rho)=\Psi^{\params}_{x}(\rho)$ for every state $\rho \in \Dst{\reg{V}}$, the simulator reproduces exactly the state that $\vdv$ obtains after interacting with the honest prover, even though the simulator does not have access to the witness $w$ certifying that $x \in \mathcal{L}$. Therefore, $\vdv$ learns nothing from this interaction that it could not have learned on its own, apart from the fact that $x \in \mathcal{L}$.
Theorem~\ref{theo:zk}, below, establishes that given a perfectly hiding bit commitment scheme $\protpi$, the argument system $\prothc{\protpi}$ is post-quantum perfect zero-knowledge.
We provide a proof sketch explaining how the simulator works and giving intuition as to why the interaction it simulates for a verifier $\vdv$ is identical to the real interaction between $\vdv$ and the honest prover.
The complete proof is given in App.~\ref{app:ZKA:ZK}.
Beforehand, we state Watrous's rewinding lemma~\cite{W09} under the specific conditions in which it will be invoked in the proof of Theorem~\ref{theo:zk}.
\begin{lemma}[Quantum~Rewinding~Lemma,~exact~case~\cite{W09}, adapted]\label{lem:watrousrewinding}
    Let $Q$ be a unitary on registers $\reg{V}\reg{K}$ and let $\{\proj{1}\otimes\idt,\proj{0}\otimes\idt\}$ be a projective measurement in the computational basis of the first qubit of $\reg{V}\reg{K}$. For every state $\ket{\psi}_{\reg{V}}$ on register $\reg{V}$ and for every $b \in \{0,1\}$, define
    \begin{equation*}
        p_{b}(\psi) \coloneq \left\|\left(\proj{b}\otimes\idt\right)Q\left(\ket{\psi}_{\reg{V}}\ket{0}_{\reg{K}}\right)\right\|^{2}\text{ and }\,\ket{\psi(b)}_{\reg{V}\reg{K}}\coloneq \frac{\left(\proj{b}\otimes\idt\right)Q\left(\ket{\psi}_{\reg{V}}\ket{0}_{\reg{K}}\right)}{\sqrt{p_{b}(\psi)}}\enspace.
    \end{equation*}
    If $p_{0}(\psi) = p_{1}(\psi) = \frac{1}{2}$ for all choices of input $\ket{\psi}_{\reg{V}}$, then the unitary $U_{\textnormal{Wat}}\coloneq Q\left(\idt_{\reg{V}}\otimes \left(2\proj{0}_{\reg{K}}-\idt_{\reg{K}}\right)\right)Q^*$
    satisfies
    \begin{equation*}
        \ket{\psi(1)} = U_{\textnormal{Wat}} \ket{\psi(0)}\enspace,
    \end{equation*}
    for all states $\ket{\psi}_{\reg{V}}$.
\end{lemma}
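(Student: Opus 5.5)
The plan is to work in the two-dimensional decomposition that Watrous uses, using the hypothesis that $p_0(\psi)=p_1(\psi)=\nicefrac12$ for every input. Write $\Lambda \coloneq \proj{0}_{\reg{K}}$ (acting as $\idt_{\reg{V}}\otimes\proj{0}_{\reg{K}}$) and $\Delta_b \coloneq Q^*(\proj{b}\otimes\idt)Q$. The hypothesis says that
\[
\bigl\|\Delta_b(\ket{\psi}\ket{0})\bigr\|^2=\tfrac12
\]
for every unit vector $\ket{\psi}$. Equivalently, the operator $\Lambda\Delta_b\Lambda$ restricted to the image of $\Lambda$ has constant quadratic form $\nicefrac12$.

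The first step is to upgrade this quadratic-form statement to the operator identity $\Lambda\Delta_b\Lambda=\frac12\Lambda$. This works because a Hermitian operator whose quadratic form is constant on all unit vectors of a subspace equals that constant times the identity on the subspace, by polarization. This is the only place where the ``for all inputs'' assumption is used, and it is the key structural fact.

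The second step is a direct computation. Since $U_{\textnormal{Wat}}=Q(2\Lambda-\idt)Q^*$, we have
\[
U_{\textnormal{Wat}}\,Q\Delta_0\ket{\psi}\ket{0}=Q(2\Lambda-\idt)\Delta_0\Lambda\ket{\psi}\ket{0}.
\]
I would expand $2\Lambda\Delta_0\Lambda-\Delta_0\Lambda$ and use $\Lambda\Delta_0\Lambda=\frac12\Lambda$ to replace the first term by $\Lambda$. This gives
\[
(2\Lambda-\idt)\Delta_0\Lambda=\Lambda-\Delta_0\Lambda=(\idt-\Delta_0)\Lambda=\Delta_1\Lambda.
\]
Hence $U_{\textnormal{Wat}}(\proj{0}\otimes\idt)Q\ket{\psi}\ket{0}=(\proj{1}\otimes\idt)Q\ket{\psi}\ket{0}$, using $Q\Delta_b=(\proj{b}\otimes\idt)Q$.

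The last step is normalisation. Because $p_0(\psi)=p_1(\psi)$, dividing both sides by $\sqrt{p_0(\psi)}=\sqrt{p_1(\psi)}$ yields $\ket{\psi(1)}=U_{\textnormal{Wat}}\ket{\psi(0)}$.

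The only nontrivial point is the polarization step. Here I would note that $\Lambda\Delta_b\Lambda-\frac12\Lambda$ is Hermitian and has a vanishing quadratic form on all of the image of $\Lambda$. Its quadratic form also vanishes on the orthogonal complement, since $\Lambda$ kills those vectors. A Hermitian operator with identically zero quadratic form over $\mathbb C$ is zero, so the identity holds. This is a rigidity fact rather than an approximation, so no error terms arise; this is consistent with Lemma~\ref{lem:WatRefl}, where $\|(\idt-\Gamma)\Delta_1\ket{\psi}\|^2=\nicefrac14$ and $p=\nicefrac12$ give fidelity $1$.
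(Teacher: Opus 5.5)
Your proof is correct. The paper gives no proof of this lemma and simply quotes it from Watrous~\cite{W09}; your argument is the standard one in substance. The input-independence of $p_0$ is upgraded by polarization to $\Lambda Q^{*}(\proj{0}\otimes\idt)Q\Lambda=\tfrac12\Lambda$, and your identity $(2\Lambda-\idt)\Delta_0\Lambda=\Delta_1\Lambda$ is a compact substitute for tracking the state in the two-dimensional space spanned by $\ket{\psi}\ket{0}$ and $Q^{*}\ket{\psi(0)}$.

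One wording fix in your last paragraph. A quadratic form vanishing on $\mathrm{Im}\,\Lambda$ and on its orthogonal complement separately does not, by itself, make it vanish everywhere: consider $\ketbra{0}{1}+\ketbra{1}{0}$ with the subspaces spanned by $\ket{0}$ and by $\ket{1}$. Here, however, $A=\Lambda\Delta_b\Lambda-\tfrac12\Lambda$ satisfies $A=\Lambda A\Lambda$, so the cross terms vanish and your conclusion $A=0$ stands.
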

We now state the zero-knowledge property of $\prothc{\protpi}$.

\begin{theorem}\label{theo:zk}
    Let $\protpi$ be a classical bit commitment scheme. If $\protpi$ is perfectly hiding, then the argument system $\prothc{\protpi}$ is post-quantum perfect zero-knowledge.
\end{theorem}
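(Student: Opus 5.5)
The plan is the standard Blum-style simulator combined with Watrous's rewinding lemma (Lemma~\ref{lem:watrousrewinding}) in its exact form. Perfect hiding of $\protpi$ turns the verifier's challenge into an exactly unbiased bit, and that exactness lets rewinding be perfect.

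\emph{Single-attempt simulator.} The simulator $\textsf{S}_{\vdv}$ first picks a guess $b\in_R\{0,1\}$. If $b=0$, it samples $\sigma\in_R S_\sizeG$ and commits honestly to the adjacency matrix of $\sigma(G)$. If $b=1$, it commits to the adjacency matrix of a random $\sizeG$-cycle, i.e.\ of $\tau(C_\sizeG)$ for $\tau\in_R S_\sizeG$. In both cases it plays $\SNDC$ against $\vdv$ for all $|\bI|$ positions. It then runs $\vdv$ to obtain the challenge $ch$. If $ch=b$, it answers exactly as the honest prover would: it opens everything together with $\sigma$, or it opens only the cycle positions together with $\tau(C_\sizeG)$. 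We write this as a unitary $Q$ on $\reg{V}\reg{K}$, where $\reg{K}$ collects $b$, the simulator's randomness and the working registers. The first qubit measured is $ch\oplus b$, and outcome $0$ means success.

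\emph{Step 1: the success probability is exactly $\nicefrac12$.} Apply Lemma~\ref{lem:compositioncollapse} for perfect hiding of the $|\bI|$-fold composition. The verifier's view after the commitment phase is then identical whether the committed matrix is that of $\sigma(G)$ or that of a random cycle. Hence the reduced state on $\vdv$'s registers, and in particular the distribution of $ch$, is independent of $b$. This gives $p_0(\psi)=p_1(\psi)=\nicefrac12$ for every input $\ket{\psi}$. This must hold for every pure input and for the coherent (purified) version of the commitment phase. It follows because perfect hiding is a statement about the receiver's view for arbitrary, possibly unbounded, receivers: the map from $\vdv$'s input to its post-commit state is the same channel for both committed messages.

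\emph{Step 2: the conditioned output equals the real interaction.} I would show that the state $\ket{\psi(0)}$, traced down to $\vdv$'s output registers, equals $\Phi^\params_{G,H}(\psi)$. Decompose on the value of $ch$. When $ch=0$, the simulated transcript is distributed exactly as the honest one, since both commit to $\sigma(G)$ with a uniform $\sigma$. When $ch=1$, the honest prover opens the cycle $\sigma(H)$ in the matrix of $\sigma(G)$, whose marginal is a uniformly random Hamiltonian cycle. The simulator opens the cycle in $\tau(C_\sizeG)$, which has the same distribution. The unopened commitments in the two cases are perfectly indistinguishable because of perfect hiding. To make this rigorous, I would argue that conditioned on $ch$ and on the opened values, the receiver-side states coincide. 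Perfect hiding of each component commitment, combined with unique openings and public verifiability, ensures that the opened positions pin down the same conditional states.

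\emph{Step 3: assembling the simulator.} Lemma~\ref{lem:watrousrewinding} gives $\ket{\psi(0)}=U_{\textnormal{Wat}}\ket{\psi(1)}$, with the outcome labels swapped as needed. The simulator applies $Q$ and measures. On failure it applies $U_{\textnormal{Wat}}$, which produces the success state exactly. It then outputs $\vdv$'s registers, so $\Psi^\params_x=\Phi^\params_{x,w}$ and the simulator is QPT. I expect the main obstacle to be Step~2: conditional equality of states is subtler than equality of marginal views, because the opening information must be consistent with the unopened commitments. I would first try to handle it with a hybrid that changes the committed entries off the cycle one position at a time, using perfect hiding of each component.
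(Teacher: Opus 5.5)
Your proposal is correct and follows essentially the paper's route: a coherent Blum-style simulator whose guess matches $\vdv$'s challenge with probability exactly $\nicefrac{1}{2}$ on every input, by perfect hiding, followed by the exact case of Watrous's rewinding lemma (Lemma~\ref{lem:watrousrewinding}) to map the failure branch onto the success branch. The differences are cosmetic: the paper commits to $K_{\sizeG}$ rather than a random cycle graph, and it merely asserts the conditional equality you isolate in Step~2, namely~\eqref{eq:gutfeeling}, for which your off-cycle hybrid is the right justification; that hybrid uses only perfect hiding against unbounded receivers, so drop the appeals to unique openings, public verifiability and Lemma~\ref{lem:compositioncollapse} (which presupposes collapse-binding), since none of these are hypotheses of this theorem.
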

\begin{proof}[sketch]
    To reproduce the interaction with an honest prover holding a witness, the simulator \emph{guesses} the verifier's challenge $ch \in \{0,1\}$. Indeed, it is easy to play the role of the prover if one knows in advance which challenge the verifier will ask. 
    To build intuition, we first describe the simulator for classical verifiers.
    The classical simulator, when the input is a graph $G$ with $\sizeG$ vertices, begins by randomly sampling a permutation of the vertices \mbox{$\sigma \in_{\textnormal{R}} S_{\sizeG}$} and a random guess \mbox{$ch' \in_{\textnormal{R}} \{0,1\}$} for the verifier's challenge.
    If $ch' = 0$, the simulator interacts with the verifier exactly as the honest prover in $\prothc{\protpi}$ (see Protocol~\ref{protocol:HCproofsystem}) by committing to the adjacency matrix of $\sigma(G)$. 
    On the other hand, if $ch' = 1$, it does so by committing to the adjacency matrix of the complete graph with $\sizeG$ vertices.
    Since the commitment is perfectly hiding, the commitment phase of the protocol, from the verifier's point of view, is identical to one executed with the honest prover.
    
    Then, the verifier picks its challenge $ch$ (which cannot depend on $ch'$ because $\protpi$ is perfectly hiding).
    When $ch = ch'$, the simulator guessed the verifier's challenge correctly and is able to reply properly, reproducing the interaction with an honest prover holding a witness.
    Indeed, if $ch = ch' = 0$, the simulator sends the permutation $\sigma$ and opens the commitments to the adjacency matrix of $\sigma(G)$ to the verifier. As~it picked the permutation randomly, this is identical to the response that an honest prover would have given when $ch = 0$. If $ch = ch' = 1$, the simulator selects a random Hamiltonian cycle $H$ in the complete graph, sends $H$ and opens the corresponding edges in the adjacency matrix. Again, this is identical to the response that an honest prover would have given when $ch = 1$, since a random Hamiltonian cycle in the complete graph with $\sizeG$ vertices has the same distribution as a random permutation of a fixed Hamiltonian cycle in~$G$.
    When $ch \neq ch'$, the simulator's guess is wrong. In that case, the simulator (classically) rewinds the verifier back to the beginning of the protocol and tries again.

    
    The quantum simulator operates similarly to the classical simulator. On input graph $G$ with $\sizeG$ vertices, it prepares a uniform superposition of all permutations $\sigma \in S_{\sizeG}$ in register $\reg{P}$ and of all guesses $ch' \in \{0,1\}$ in register $\reg{C}'$.
    Then, it interacts with the verifier $\mathcal{V}$ to commit coherently to the adjacency matrix of $\sigma(G)$ when register $\reg{P}$ holds $\sigma$ and register $\reg{C}'$ holds 0, and to the adjacency matrix of the complete graph with $\sizeG$ vertices when register $\reg{C}'$ holds 1. Afterwards, the verifier produces its challenge $ch \in \{0,1\}$ in register $\reg{C}$.
    Thus, the simulator can check coherently whether its guess in register $\reg{C}'$ matches $\mathcal{V}$'s challenge in register~$\reg{C}$.
    Since the commitment is perfectly hiding, there can be no correlation between $\mathcal{V}$'s challenge and the simulator's guess. Therefore, for any auxiliary input state $\rho$, the probability $p(\rho)$ that the simulator's guess matches $\mathcal{V}$'s challenge is
    \begin{align*}
        p(\rho) &= \Pr\!\left[0\gets\vdv(\rho)\right] \cdot  \Pr\!\left[\text{Register $\reg{C}'$ holds $0$}\right] +\Pr\!\left[1\gets\vdv(\rho)\right] \cdot \Pr\!\left[\text{Register $\reg{C}'$ holds $1$}\right]\\
        &= \Pr\!\left[0\gets\vdv(\rho)\right] \cdot \frac{1}{2} + \Pr\!\left[1\gets\vdv(\rho)\right] \cdot \frac{1}{2}\\
        &=\frac{1}{2} \enspace.
    \end{align*}
    Let $\rho'$ be the joint state of the simulator and the verifier after the verifier has produced its challenge, given auxiliary input state $\rho$. Moreover, let $\rho'_{1}$ be the resulting state when the verifier measures that its guess matches $\mathcal{V}$'s challenge, and $\rho'_{0}$ when its guess differs from $\mathcal{V}$'s challenge.
    By Watrous's rewinding lemma (see Lemma~\ref{lem:watrousrewinding}), since $p(\rho) = \nicefrac{1}{2}$ for every auxiliary input state $\rho$, there exists an efficient unitary $U_{\textnormal{Wat}}$, such that $\rho'_{1} = U_{\textnormal{Wat}} \rho'_{0} (U_{\textnormal{Wat}})^{*}$.
    
    After the verifier produced its challenge, the simulator measures whether its guess matches the verifier's challenge. If it does not match, the simulator simply applies $U_{\textnormal{Wat}}$. Hence, regardless of the measurement outcome, the resulting joint state is $\rho'_{1}$, exactly \emph{as if} the simulator had measured that its guess matches the verifier's challenge.
    The simulator easily succeeds when its guess matches the verifier's challenge: It simply measures the registers $\reg{P}$ and $\reg{C}'$ and then responds as in the classical case.
    The formal proof is provided in App.~\ref{app:ZKA:ZK}.

    \qedSketch
\end{proof}


\section*{Disclaimer}
During our work and redaction, generative AI has only been used to solve minor \LaTeX\ issues and, more importantly,  to
produce 
 a \textsc{hamcycle}:\,\raisebox{-0.5ex}{\includegraphics[height=1.5em]{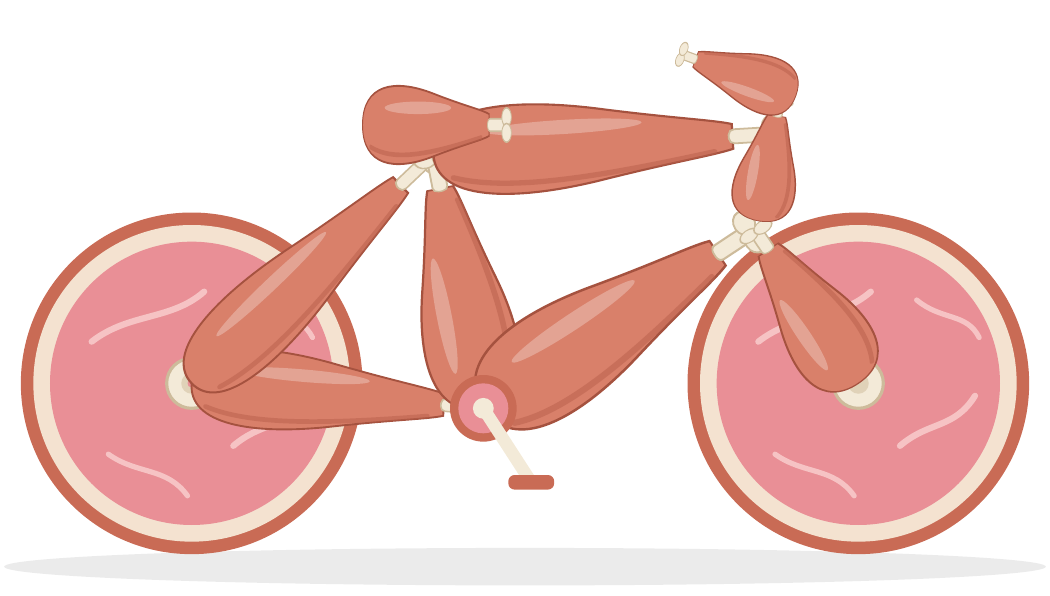}}\,.
\newpage
\printbibliography

@inproceedings{CMSZ21,
  author    = {Alessandro Chiesa and
               Fermi Ma and
               Nicholas Spooner and
               Mark Zhandry},
  title     = {Post-Quantum Succinct Arguments:
               Breaking the Quantum Rewinding Barrier},
  booktitle = {2021 IEEE 62nd Annual Symposium on
               Foundations of Computer Science (FOCS)},
  pages     = {49--58},
  publisher = {IEEE},
  year      = {2021},
  doi       = {10.1109/FOCS52979.2021.00014},
  url       = {https://doi.org/10.1109/FOCS52979.2021.00014}
}

@phdthesis{M13,
  author        = {Marco Tomamichel},
  title         = {A Framework for Non-Asymptotic
                   Quantum Information Theory},
  school        = {ETH Zurich},
  year          = {2012},
  type          = {Doctoral thesis},
  note          = {Diss. ETH No. 20213},
  doi           = {10.3929/ethz-a-7356080},
  url           = {https://doi.org/10.3929/ethz-a-7356080},
  eprint        = {1203.2142},
  archivePrefix = {arXiv},
  primaryClass  = {quant-ph}
}

@article{W09,
  author  = {John Watrous},
  title   = {{Zero-Knowledge} against Quantum Attacks},
  journal = {SIAM Journal on Computing},
  volume  = {39},
  number  = {1},
  pages   = {25--58},
  year    = {2009},
  doi     = {10.1137/060670997},
  url     = {https://doi.org/10.1137/060670997}
}

@article{J94,
  author  = {Richard Jozsa},
  title   = {Fidelity for Mixed Quantum States},
  journal = {Journal of Modern Optics},
  volume  = {41},
  number  = {12},
  pages   = {2315--2323},
  year    = {1994},
  doi     = {10.1080/09500349414552171}
}

@InProceedings{NOVY92,
  author    = {Moni Naor and Rafail Ostrovsky and Ramarathnam Venkatesan and Moti Yung},
  booktitle = {Advances in Cryptology---{CRYPTO '92}, 12th Annual International Cryptology Conference, Santa Barbara, California, USA, August 16-20, 1992, Proceedings},
  title     = {Perfect Zero-Knowledge Arguments for {NP} Can Be Based on General Complexity Assumptions (Extended Abstract)},
  year      = {1992},
  pages     = {196--214},
  publisher = {Springer},
  series    = {Lecture Notes in Computer Science},
  volume    = {740},
  doi       = {10.1007/3-540-48071-4_14},
  url       = {https://doi.org},
}

@Article{BCC88,
  author        = {Gilles Brassard and David Chaum and Claude Cr{\'e}peau},
  journal       = {Journal of Computer and System Sciences},
  title         = {Minimum disclosure proofs of knowledge},
  year          = {1988},
  issn          = {0022-0000},
  number        = {2},
  pages         = {156-189},
  volume        = {37},
  doi           = {https://doi.org/10.1016/0022-0000(88)90005-0},
  url           = {https://www.sciencedirect.com/science/article/pii/0022000088900050},
}

@InProceedings{NOV06,
  author        = {Nguyen, Minh-Huyen and Ong, Shien Jin and Vadhan, Salil},
  booktitle     = {2006 47th Annual IEEE Symposium on Foundations of Computer Science (FOCS'06)},
  title         = {Statistical zero-knowledge arguments for NP from any one-way function},
  year          = {2006},
  organization  = {IEEE},
  pages         = {3--14},
}

@Article{NOVY98,
  author  = {Moni Naor and Rafail Ostrovsky and Ramarathnam Venkatesan and Moti Yung},
  journal = {Journal of Cryptology},
  title   = {Perfect Zero-Knowledge Arguments for {NP} Can Be Based on Any One-Way Permutation},
  year    = {1998},
  number  = {2},
  pages   = {87--108},
  volume  = {11},
  doi     = {10.1007/s001459900037},
  url     = {https://link.springer.com/article/10.1007/s001459900037},
}

@Article{HR14,
  author    = {Iftach Haitner and Omer Reingold},
  journal   = {Journal of Cryptology},
  title     = {A New Interactive Hashing Theorem},
  year      = {2014},
  number    = {4},
  pages     = {717--753},
  volume    = {27},
  doi       = {10.1007/s00145-012-9139-0},
  publisher = {Springer},
}

@InProceedings{DMS00,
  author        = {Paul Dumais and Dominic Mayers and Louis Salvail},
  booktitle     = {Advances in Cryptology - {EUROCRYPT} 2000, International Conference on the Theory and Application of Cryptographic Techniques, Bruges, Belgium, May 14-18, 2000, Proceeding},
  title         = {Perfectly Concealing Quantum Bit Commitment from any Quantum One-Way Permutation},
  year          = {2000},
  editor        = {Bart Preneel},
  pages         = {300--315},
  publisher     = {Springer},
  series        = {Lecture Notes in Computer Science},
  volume        = {1807},
  bibsource     = {dblp computer science bibliography, https://dblp.org},
  doi           = {10.1007/3-540-45539-6\_21},
  url           = {https://doi.org/10.1007/3-540-45539-6\_21},
}

@InProceedings{ARU14,
  author    = {Andris Ambainis and Ansis Rosmanis and Dominique Unruh},
  booktitle = {Proceedings of the 2014 IEEE 55th Annual Symposium on Foundations of Computer Science (FOCS)},
  title     = {Quantum Attacks on Classical Proof Systems: The Hardness of Quantum Rewinding},
  year      = {2014},
  pages     = {474--483},
  doi       = {10.1109/FOCS.2014.57},
  url       = {https://doi.org/10.1109/FOCS.2014.57},
}

@InProceedings{U16a,
  author       = {Unruh, Dominique},
  booktitle    = {Advances in Cryptology--EUROCRYPT 2016: 35th Annual International Conference on the Theory and Applications of Cryptographic Techniques, Vienna, Austria, May 8-12, 2016, Proceedings, Part II 35},
  title        = {Computationally Binding Quantum Commitments},
  year         = {2016},
  organization = {Springer},
  pages        = {497--527},
  url          = {https://eprint.iacr.org/2015/361.pdf},
  urldate      = {2022-12-19},
}

@InCollection{U12,
  author     = {Unruh, Dominique},
  booktitle  = {Proc. EUROCRYPT},
  publisher  = {Springer Berlin Heidelberg},
  title      = {Quantum Proofs of Knowledge},
  year       = {2012},
  editor     = {Pointcheval, David and Johansson, Thomas},
  isbn       = {978-3-642-29010-7},
  note       = {IACR ePrint 2010/212},
  pages      = {135-152},
  series     = {Lecture Notes in Computer Science},
  volume     = {7237},
  doi        = {10.1007/978-3-642-29011-4_10},
}

@InProceedings{Blum86,
  author    = {Blum, Manuel},
  booktitle = {Proceedings of the International Congress of Mathematicians (ICM)},
  title     = {How to Prove a Theorem So No One Else Can Claim It},
  year      = {1986},
  address   = {Berkeley, California, USA},
  pages     = {1444--1451},
  publisher = {American Mathematical Society},
  volume    = {2},
}

@Article{Kent05,
  author        = {Kent, Adrian},
  journal       = {J. Cryptolog.},
  title         = {{Secure classical bit commitment over finite channels}},
  year          = {2005},
  pages         = {313--335},
  volume        = {18},
  archiveprefix = {arXiv},
  doi           = {10.1007/s00145-005-0905-8},
  eprint        = {quant-ph/9906103},
  reportnumber  = {DAMTP-1999-47, DAMTP-1999-47},
}

@InProceedings{CSST11,
  author        = {Claude Cr{\'e}peau and Louis Salvail and Jean-Raymond Simard and Alain Tapp},
  booktitle     = {ASIACRYPT},
  title         = {Two Provers in Isolation},
  year          = {2011},
  pages         = {407-430},
  publisher     = {Springer},
  series        = {Lecture Notes in Computer Science},
  volume        = {7073},
  doi           = {10.1007/978-3-642-25385-0_22},
  url           = {https://www.iacr.org/archive/asiacrypt2011/70730403/70730403.pdf},
}

@InProceedings{DS23,
  author    = {Marcel Dall'Agnol and Nicholas Spooner},
  booktitle = {18th Conference on the Theory of Quantum Computation, Communication and Cryptography (TQC 2023)},
  title     = {On the Necessity of Collapsing for Post-Quantum and Quantum Commitments},
  year      = {2023},
  address   = {Dagstuhl, Germany},
  pages     = {2:1--2:23},
  publisher = {Schloss Dagstuhl--Leibniz-Zentrum für Informatik},
  series    = {Leibniz International Proceedings in Informatics (LIPIcs)},
  volume    = {266},
  doi       = {10.4230/LIPIcs.TQC.2023.2},
  url       = {https://drops.dagstuhl.de/entities/document/10.4230/LIPIcs.TQC.2023.2},
}

@Article{HNORV09,
  author     = {Haitner, Iftach and Nguyen, Minh-Huyen and Ong, Shien Jin and Reingold, Omer and Vadhan, Salil},
  journal    = {SIAM Journal on Computing},
  title      = {Statistically Hiding Commitments and Statistical Zero-Knowledge Arguments from Any One-Way Function},
  year       = {2009},
  issn       = {0097-5397},
  pages      = {1153--1218},
  volume     = {39(3)},
  doi        = {10.1137/080725404},
}

@InProceedings{Zha20,
  author    = {Zhandry, Mark},
  booktitle = {Proceedings of the 18th Theory of Cryptography Conference (TCC ’20)},
  title     = {Schrödinger’s Pirate: How to Trace a Quantum Decoder},
  year      = {2020},
  pages     = {61--91},
}

@Article{FV99,
  author        = {Fuchs, Christopher A. and van de Graaf, Jeroen},
  journal       = {IEEE Transactions on Information Theory},
  title         = {Cryptographic distinguishability measures for quantum-mechanical states},
  year          = {1999},
  issn          = {0018-9448},
  number        = {4},
  pages         = {1216-1227},
  volume        = {45},
  archiveprefix = {arXiv},
  doi           = {10.1109/18.761271},
  eprint        = {quant-ph/9712042},
}

@Article{helstrom,
  author     = {Helstrom, Carl W.},
  journal    = {Journal of Statistical Physics},
  title      = {Quantum detection and estimation theory},
  year       = {1969},
  issn       = {0022-4715},
  number     = {2},
  pages      = {231--252},
  volume     = {1},
  doi        = {10.1007/BF01007479},
  publisher  = {Kluwer Academic Publishers-Plenum Publishers},
}

@InProceedings{CAO22,
  author     = {Cao, Shujiao and Xue, Rui},
  booktitle  = {Advances in Cryptology -- {CRYPTO} 2022},
  title      = {The Gap Is Sensitive to Size of Preimages: Collapsing Property Doesn't Go Beyond Quantum Collision-Resistance for Preimages Bounded Hash Functions},
  year       = {2022},
  address    = {Cham},
  editor     = {Dodis, Yevgeniy and Shrimpton, Thomas},
  pages      = {564--595},
  publisher  = {Springer Nature Switzerland},
  series     = {Lecture Notes in Computer Science},
  doi        = {10.1007/978-3-031-15982-4_19},
  isbn       = {978-3-031-15982-4},
  langid     = {english},
  shorttitle = {The Gap Is Sensitive to Size of Preimages},
}

@InProceedings{Z22new,
  author    = {Zhandry, Mark},
  booktitle = {Advances in Cryptology -- CRYPTO 2022},
  title     = {New Constructions of Collapsing Hashes},
  year      = {2022},
  address   = {Cham},
  editor    = {Dodis, Yevgeniy and Shrimpton, Thomas},
  pages     = {596--624},
  publisher = {Springer Nature Switzerland},
  isbn      = {978-3-031-15982-4},
}

@Article{A05,
  author    = {Aaronson, Scott},
  journal   = {Theory of Computing},
  title     = {Limitations of Quantum Advice and One-Way Communication},
  year      = {2005},
  issn      = {1557-2862},
  number    = {1},
  pages     = {1–28},
  volume    = {1},
  doi       = {10.4086/toc.2005.v001a001},
  publisher = {Theory of Computing Exchange},
  url       = {http://dx.doi.org/10.4086/toc.2005.v001a001},
}

@misc{ANTV98,
      title={Dense Quantum Coding and a Lower Bound for 1-way Quantum Automata}, 
      author={Andris Ambainis and Ashwin Nayak and Amnon Ta-Shma and Umesh Vazirani},
      year={1998},
      eprint={quant-ph/9804043},
      archivePrefix={arXiv},
      primaryClass={quant-ph},
      url={https://arxiv.org/abs/quant-ph/9804043}, 
}
\newpage
\appendix

\section{Proof of Theorem~\ref{thm:rec_rewinding}}
\label{proof:thm1}
In this section, we formally prove Theorem~\ref{thm:rec_rewinding}. First, we recall the statement of the theorem.
\begin{restatetheorem}{thm:rec_rewinding}
    Let $\Expt$ be an experiment with challenge set $\mathrm H$ and answer set $\mathsf A$. For each $h \in \mathrm H$, let $\mathbf{Acc}(h) \subseteq \mathsf A$ denote the set of accepting answers to challenge $h$ and let $U_h \in \Unitary{\reg{R}\reg{S}}$.
Let $P_{h}^{\textnormal{rec}}$ be the damaging measurement defined as in \eqref{recording_measurement}. Set $K\coloneq \max_h |\mathbf{Acc}(h)|+1$. 
For every integers $N,T\geq 1$ and every $\varepsilon,\delta \in (0,\frac12]$, the algorithm $\textnormal{\textsf{Rec}}^{\varepsilon,\delta,T,N}_{\Expt}$ (Algorithm~\ref{algo:Rec}), which takes state $\sigma\in\Dst{\reg{R}\reg{S}}$ as input and outputs a list $\mathfrak{R}$ of recorded answers, satisfies
\begin{equation*}
    \mathbb{E}[|\mathfrak{R}|] \geq N \left( \Pr[\Expt(\sigma)=1] - \mu \right)
\end{equation*}
where
\begin{equation*}
    \mu = N \left( \varepsilon + \frac K2 \left( \delta + \frac{1}{T}\right) + 2 \sqrt{\delta} \right) \enspace.
\end{equation*}
Moreover, the algorithm has circuit complexity in $O(N\cdot T \cdot\sizeC{\Expt}\cdot \frac1\varepsilon \log \frac1\delta )$.
\end{restatetheorem}
We begin 
by stating two lemmas from \cite{CMSZ21} that will be used in the proof.
\begin{lemma}[Lemma 4.9 of~\cite{CMSZ21}, adapted] \label{lem:valest}
        For every $\varepsilon,\delta>0$, there exists a measurement procedure $\textnormal{ValEst}^{\varepsilon,\delta}$ that takes an experiment $\Expt$ as a parameter and a quantum state as input. It outputs a rational number $p \in [\nicefrac{-1}{2},\nicefrac{3}{2}]$, called the \emph{estimate}, alongside a quantum state $\rho$. For every experiment $\Expt$ and state $\sigma$ on register $\reg{S}$, 
    \begin{enumerate}
        \item $\textnormal{ValEst}_{\Expt}^{\varepsilon,\delta}$ has quantum circuit complexity $O\!\left(\mathrm{C}_{\Expt} \cdot \frac{1}{\varepsilon}\log(\frac{1}{\delta})\right)$ \label{property:1ofValEst} \\
        \item $\underset{(p,\rho) \leftarrow \textnormal{ValEst}_{\Expt}^{\varepsilon,\delta}(\sigma)}{\mathbb{E}}[p] = \Pr[\Expt(\sigma) = 1]$;\label{property:2ofValEst}\\
        \item $\textnormal{ValEst}_{\Expt}^{\varepsilon,\delta}$ is $(\varepsilon,\delta)$-almost-projective;\label{property:3ofValEst}\\
        \item For every $p' \in \mathbb{R}$, if $\Pr\limits_{(p,\rho) \leftarrow \textnormal{ValEst}_{\Expt}^{\varepsilon,\delta}(\sigma)}[p \geq p'] \geq 1 - \gamma$, then $\Pr[\Expt(\sigma)=1] \geq p' - \gamma - \varepsilon - \delta$; \label{property:4ofValEst}\\
        \item $\Pr\limits_{(p,\rho) \leftarrow \textnormal{ValEst}_{\Expt}^{\varepsilon,\delta}(\sigma)}[\Expt(\rho)=1] \geq \Pr[\Expt(\sigma)=1] - \delta$; \label{property:5ofValEst}
    \end{enumerate}
\end{lemma}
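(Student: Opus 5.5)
The plan is to follow the construction of~\cite{CMSZ21} through Jordan's lemma, adapted to experiments $\Expt$ of the form of Experiment~\ref{exp:GeneralExpt}.

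\textbf{Coherent experiment and Jordan blocks.} First, make $\Expt$ coherent. Adjoin a challenge register $\reg{H}$ and let $A \coloneq \ket{+_{\mathrm H}}\!\bra{+_{\mathrm H}}_{\reg{H}} \otimes \idt$, where $\ket{+_{\mathrm H}}$ is the uniform superposition over $\mathrm H$. Let
\[
B \coloneq \sum_{h} \ket{h}\!\bra{h}_{\reg{H}} \otimes (U_h)^{*} \Bigl(\sum_{r \in \mathbf{Acc}(h)} \ket{r}\!\bra{r}_{\reg{R}} \otimes \idt\Bigr) U_h
\]
be the coherent acceptance projector. Then $\Pr[\Expt(\sigma)=1] = \tr(B(\ket{+}\!\bra{+}\otimes\sigma))$, and on the image of $A$ the compressed operator $ABA$ equals $\ket{+}\!\bra{+}\otimes\Xi$. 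Jordan's lemma decomposes the space into invariant blocks of dimension at most two. On each block $j$, $ABA$ acts as $p_j = \cos^2\theta_j$ times the identity on the $A$-vector. Writing $\ket{+}\ket{\sigma}$ in the block basis gives a distribution $\{w_j\}$ over blocks with $\sum_j w_j p_j = \Pr[\Expt(\sigma)=1]$.

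\textbf{Defining $\textnormal{ValEst}$.} Following Marriott--Watrous, $\textnormal{ValEst}^{\varepsilon,\delta}_{\Expt}$ prepares $\ket{+}_{\reg{H}}$ and alternates the measurements $\{A,\idt-A\}$ and $\{B,\idt-B\}$ for $t = \Theta(\frac1\varepsilon\log\frac1\delta)$ rounds. It outputs the fraction $p$ of consecutive pairs whose outcomes ``agree'' (appropriately defined). It then performs a final correction: keep alternating, up to the same budget, until outcome $A$ is obtained, and discard $\reg{H}$.

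Each alternation costs $O(\sizeC{\Expt})$, and this gives Property~\ref{property:1ofValEst}. Since blocks are invariant, the outcome sequence restricted to block $j$ is a Markov chain in which each transition agrees with probability exactly $p_j$. Hence $\mathbb{E}[p \mid j] = p_j$, and averaging over $w_j$ yields Property~\ref{property:2ofValEst}. The estimate may be slightly outside $[0,1]$ after the finite-sample correction CMSZ use, which explains the range $[-\nicefrac12,\nicefrac32]$.

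\textbf{Concentration and almost-projectivity (main obstacle).} For Property~\ref{property:3ofValEst}, the key point is that the block index $j$ is never disturbed. So a second run of $\textnormal{ValEst}$ again samples an estimate concentrated around the same $p_j$. It therefore suffices to show that, within a block, the agreement count over $t$ steps is within $\varepsilon/2$ of $p_j$ except with probability $\delta/2$. The difficulty is that these steps are correlated rather than i.i.d.; this is the step I expect to be hardest. I would first use CMSZ's reformulation, which writes the Marriott--Watrous sequence as a product of independent geometric run lengths in the rotation picture. I would then apply a Chernoff bound to that representation, or alternatively replace the alternating sequence by phase estimation on $(2A-\idt)(2B-\idt)$, whose eigenphases $\pm 2\theta_j$ determine $p_j$.

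\textbf{Properties~\ref{property:4ofValEst} and~\ref{property:5ofValEst}.} Property~\ref{property:4ofValEst} follows by averaging: outside the $\gamma$ bad mass the estimate is at least $p'$, concentration loses $\varepsilon$ and $\delta$, and estimates are bounded so the tail contributes at most $\gamma$. For Property~\ref{property:5ofValEst}, if the final correction lands in the image of $A$, the output state has the same block weights $w_j$. The state then lies in $A$, so re-running $\Expt$ succeeds with probability $\sum_j w_j p_j$ exactly. The correction fails with probability at most $\delta$ for the chosen $t$, which costs at most $\delta$ in the success probability.
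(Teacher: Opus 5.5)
The paper does not prove this lemma. It imports Lemma~4.9 of~\cite{CMSZ21} and only re-expresses the cost through $\mathrm{C}_{\Expt}$. Your reformulation of $\Expt$ via the projectors $A,B$ with $ABA=\ket{+}\!\bra{+}\otimes\Xi$ is the right bridge to that lemma. The procedure you then build from scratch, however, does not satisfy the statement.

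First, the obstacle you flag is not the real one. Conditioned on the Jordan block $j$, the agreement indicators of the alternating sequence are exactly i.i.d.\ Bernoulli$(p_j)$: from either basis vector of a two-dimensional block, the probability of landing on the equally-labelled vector of the other basis is $p_j$. So correlations are harmless. What fails is the step count. After $t$ steps the agreement fraction has standard deviation $\sqrt{p_j(1-p_j)/t}$, so $(\varepsilon,\delta)$-almost-projectivity forces $t=\Omega(\varepsilon^{-2}\log\frac1\delta)$, already for $p_j=\nicefrac{1}{2}$. With your $t=\Theta(\frac1\varepsilon\log\frac1\delta)$ and fixed $\delta$, two consecutive estimates differ by more than $\varepsilon$ with probability tending to $1$ as $\varepsilon\to0$. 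Hence Properties~\ref{property:1ofValEst} and~\ref{property:3ofValEst} cannot hold together for this procedure. The $\frac1\varepsilon$ dependence needs coherent phase estimation on $(2A-\idt)(2B-\idt)$, which you mention only in passing. Once you switch to it, your proof of the exact identity in Property~\ref{property:2ofValEst}, which rested on the Bernoulli agreement counts, no longer applies, and you give no replacement.

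Second, the final correction does not fail with probability at most $\delta$. This breaks Property~\ref{property:5ofValEst}, and for small $\delta$ also Property~\ref{property:3ofValEst}, whatever $t$ is. Leaving the image of $A$, and re-entering it, each require an odd number of disagreements.
\begin{itemize}
\item Take a block with $1-p_j=\frac{1}{2t}$. After $t$ steps the chain is outside the image of $A$ with probability $\frac{1-(2p_j-1)^t}{2}\approx\frac{1-e^{-1}}{2}$.
\item Its parity then flips with probability only $2p_j(1-p_j)$ per pair of further steps. So it is still stranded after another $t$ steps with probability about $e^{-1/2}$.
\item The overall failure probability is therefore a constant of roughly $0.19$, independent of $t$. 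A truncated return of this kind needs a budget of order $t/\delta$, not $O(t)$.
\end{itemize}
Discarding $\reg{H}$ at that point moves the state to a different block. Concretely, take two challenges, $\reg{S}$ trivial, $U_{h_1}=\idt$, $U_{h_2}$ a small rotation of $\reg{R}$, and $\mathbf{Acc}(h_1)=\mathbf{Acc}(h_2)=\{0\}$. The stranded vector is $\ket{-}\otimes\ket{\chi}$, where $\ket{\chi}$ is the other eigenvector of $\Xi$, with eigenvalue $1-p_j\approx 0$. So with constant probability both the success probability and the next estimate drop from about $1$ to about $0$.

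Guaranteeing the return to the image of $A$ within $O(\frac1\varepsilon\log\frac1\delta)$ steps, while keeping the estimate exactly unbiased, is the non-trivial content the paper takes from~\cite{CMSZ21}. Your sketch does not supply it. Either cite that lemma as the paper does, or give a construction that actually achieves these guarantees.
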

In \cite{CMSZ21}, the complexity is expressed in terms of the circuit size of the classical predicate $f$ and the number of calls to the player's unitary strategy. In this work, we formulate the result in terms of experiments rather than games and account for these costs using a single parameter, $\textnormal{C}_{\Expt}$. This parameter denotes the circuit complexity of a chosen coherent implementation of the experiment, including the application of the player's unitary, the coherent implementation of the recording measurement, and the reversible evaluation of the acceptance predicate. Its value therefore depends on the chosen implementation.  

Before stating the second lemma, we introduce the following procedure $\textnormal{RepairExp}^{\varepsilon,\delta,T}_{\Expt,P}$.



\SetAlgorithmName{Procedure}{}{}
\begin{algorithm}[H]
\SetNoFillComment
\DontPrintSemicolon
\SetAlgoLined
\KwInputs{$\sigma \in \Dst{\reg{R}  \reg{S}}$}
$(p,\rho)$ $\gets$ $V^{\varepsilon,\delta}(\sigma)$\;
$(k,\hat{\rho})$ $\gets$ $P(\rho)$\;
$\tau \gets \textnormal{Repair}^{T}_{V^{\varepsilon,\delta},P}(p,k,\hat{\rho})$\;
\Return{$(p,\tau)$}
\caption{ $\textnormal{RepairExp}^{T}_{V^{\varepsilon,\delta},P}(p,k,\sigma)$} 

\end{algorithm}

\begin{lemma}[Lemma 4.10 of~\cite{CMSZ21}, adapted]\label{lem:repair} 
    Let $V^{\varepsilon,\delta}$ be a $(\varepsilon,\delta)$-almost projective measurement and $P$ a projective measurement with $K$ outcomes. Let $T$ be a positive integer. \\
    Then,
    $
        \textnormal{RepairExp}^{T}_{V^{\varepsilon,\delta},P}
    $
    is
   $ 
        \left(2\varepsilon,K\left(\delta+\frac{1}{T}\right)+4\sqrt{\delta}\right)\text{-almost-projective.}
   $
   
       \noindent Moreover, the procedure makes $O(T)$ calls to $V^{\varepsilon,\delta}$ and $P$. Consequently, if $\mathrm{C}_{P}$ and $\mathrm{C}_{V^{\varepsilon,\delta}}$ denote the circuit complexity of $P$ and $V^{\varepsilon,\delta}$ respectively, then $\textnormal{RepairExp}^{T}_{V^{\varepsilon,\delta},P}$ has circuit complexity 
    $ O\!\left( T \left( \mathrm{C}_{V^{\varepsilon,\delta}} +  \mathrm{C}_P \right) \right) $.
\end{lemma}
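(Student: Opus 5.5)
The statement to prove is Lemma~\ref{lem:repair}: the procedure $\textnormal{RepairExp}^{T}_{V^{\varepsilon,\delta},P}$ is $(2\varepsilon, K(\delta+\frac1T)+4\sqrt{\delta})$-almost-projective, with circuit complexity $O(T(\mathrm{C}_{V^{\varepsilon,\delta}}+\mathrm{C}_P))$. The plan is to follow the argument of Lemma~4.10 of~\cite{CMSZ21}, only re-expressing its guarantees in our vocabulary of experiments and almost-projective measurements. I would not re-derive the Marriott--Watrous machinery from scratch.

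The first step is to unfold what must be shown. We apply $\textnormal{RepairExp}$ to $\sigma$, obtaining $(p,\tau)$, and then apply it again to $\tau$, obtaining $(p',\tau')$; we need $\Pr[|p-p'|\le 2\varepsilon]\ge 1-K(\delta+\frac1T)-4\sqrt\delta$. Since the second call starts by applying $V^{\varepsilon,\delta}$ to $\tau$ and outputs that estimate, $p'$ is just the outcome of $V^{\varepsilon,\delta}$ on the repaired state $\tau$. So it suffices to show: with the stated probability, a fresh application of $V^{\varepsilon,\delta}$ on $\tau$ returns a value within $2\varepsilon$ of $p$.

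The second step is the core, and it is taken from~\cite{CMSZ21}. Coherently, $V^{\varepsilon,\delta}$ is a projective measurement $\{\Lambda_x\}$ on an enlarged space, followed by classical post-processing to an estimate. For the observed damage outcome $k$, $\textnormal{Repair}$ alternates between the projector $P_k$ and the binary projector ``the estimate lies in the $\varepsilon$-window around $p$''. It stops as soon as the latter succeeds, or after $T$ rounds. I would decompose the state via Jordan's lemma into two-dimensional blocks for this pair of projectors and use the standard fact that the alternating process fails after $T$ steps with probability at most $1/T$, summed over outcomes. The bound has three pieces:
\begin{itemize}
\item The first application of $V$ is $(\varepsilon,\delta)$-almost-projective. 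Hence the pre-damage state lies, except with probability $\delta$, in the window, and by a gentle-measurement argument within trace distance $2\sqrt\delta$ of a state fully inside it. This is where the $4\sqrt\delta$ term (two such uses) comes from.
\item Each of the $K$ possible damage outcomes contributes a $\delta + 1/T$ failure term. A union over outcomes gives the factor $K$.
\item Finally, window width $\varepsilon$ on each side of the restored value, combined with almost-projectivity, yields the $2\varepsilon$ closeness.
\end{itemize}

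The complexity claim is immediate. There are at most $T$ alternations, each invoking one $V$ and one $P$ (made coherent and uncomputed), which gives $O(T(\mathrm{C}_{V^{\varepsilon,\delta}}+\mathrm{C}_P))$.

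The main obstacle is the second step in our generality. Here $V^{\varepsilon,\delta}$ is only assumed almost-projective, not truly projective, so the Jordan-block argument has to be run on its coherent projective implementation and the $\delta$-slack has to be tracked. Where that bookkeeping is delicate, I would defer to the proof in~\cite{CMSZ21} verbatim, noting that it never used the game structure, only these two properties of $V$ and $P$.
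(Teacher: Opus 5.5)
Your proposal matches the paper: it does not reprove this lemma but imports it from Lemma~4.10 of \cite{CMSZ21}. Its only adaptation is to observe that the number of calls to $V^{\varepsilon,\delta}$ and $P$, which \cite{CMSZ21} states in expectation, is in fact $O(T)$ in the worst case, so the circuit complexity is $O(T(\mathrm{C}_{V^{\varepsilon,\delta}}+\mathrm{C}_P))$. Your ``at most $T$ alternations'' bound and your decision to defer the $\delta$-bookkeeping of the Jordan-block argument to \cite{CMSZ21} are consistent with exactly that.
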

Note that in \cite{CMSZ21}, the number of calls to $V^{\varepsilon,\delta}$ and $P$ is expressed as an expected value, but a closer look at their procedure $\textnormal{Repair}$ shows that the number of calls is in fact $O(T)$.\\
As we discussed in Sec.~\ref{subsec:ResultCMSZ21}, we set
$$V^{\varepsilon,\delta}=\textnormal{ValEst}_{\Expt}^{\varepsilon,\delta} \quad \text{and} \quad P=P^{\textnormal{rec}}_{h_j} \enspace,$$
where $P^{\textnormal{rec}}_{h_j}$ is defined in \eqref{recording_measurement} and $\textnormal{ValEst}_{\Expt}^{\varepsilon,\delta}$ is introduced in Lemma~\ref{lem:valest}. We can simplify the notation in this case as follows
$$ \textnormal{RepairExp}^{\varepsilon,\delta,T}_{\Expt,P}\coloneq\textnormal{RepairExp}^{T}_{\textnormal{ValEst}_{\Expt}^{\varepsilon,\delta},P} \enspace.$$

We now establish two claims needed for the proof of Theorem~\ref{thm:rec_rewinding}. For every $i \in \{1,\cdots,n-1\}$, denote by $p_i$ and $\tau_i$, respectively, the estimate and the state at round $i$, defined in the recording Algorithm~\ref{algo:Rec}. Moreover, we define
$$\delta_{\textnormal{Repair}}\coloneq K\left(\delta+\frac1T\right)+4\sqrt\delta \enspace,$$
where $K= \max_h |\mathbf{Acc}(h)|+1$.
\begin{cl}
\label{claim1}
For every $i\in[n-1]$,
$$\Pr[p_{i+1}\ge p_i-2\epsilon]\ge 1-\delta_{\textnormal{Repair}} \enspace.$$
\end{cl}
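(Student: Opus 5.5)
The plan is to recognize that the passage from $p_i$ to $p_{i+1}$ in Algorithm~\ref{algo:Rec} is exactly one run of the procedure $\textnormal{RepairExp}^{\varepsilon,\delta,T}_{\Expt,P^{\textnormal{rec}}_{h}}$ followed by one more application of $\textnormal{ValEst}_{\Expt}^{\varepsilon,\delta}$. Then Lemma~\ref{lem:repair} is invoked with $V^{\varepsilon,\delta}=\textnormal{ValEst}^{\varepsilon,\delta}_{\Expt}$, which is $(\varepsilon,\delta)$-almost-projective by Property~\ref{property:3ofValEst} of Lemma~\ref{lem:valest}, and with $P=P^{\textnormal{rec}}_h$.

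Concretely, in iteration $i$ the algorithm computes $(p_i,\sigma_i)\gets\textnormal{ValEst}(\tau_i)$, samples $h$, applies $P^{\textnormal{rec}}_h$ to get $(k_i,\hat\sigma_i)$, and sets $\tau_{i+1}\gets\textnormal{Repair}(p_i,k_i,\hat\sigma_i)$. Next iteration starts with $(p_{i+1},\sigma_{i+1})\gets\textnormal{ValEst}(\tau_{i+1})$. Hence, conditioned on a fixed $h$, the pair $(p_i,\tau_{i+1})$ is distributed exactly as the output of $\textnormal{RepairExp}^{T}_{\textnormal{ValEst},P^{\textnormal{rec}}_h}(\tau_i)$. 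The value $p_{i+1}$ is then the outcome of applying $\textnormal{ValEst}$ to the post-measurement state.

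The almost-projective statement in Definition~\ref{def:AlmostProjectiveMeasurement} concerns applying the same measurement twice. So I would first observe that the outcome distribution of $\textnormal{RepairExp}$ on its own output state, restricted to the first step, is the $\textnormal{ValEst}$ outcome. More carefully, I would unfold the proof of Lemma~4.10 in~\cite{CMSZ21}: their guarantee is precisely that the estimate obtained by a fresh $\textnormal{ValEst}$ after repair is $2\varepsilon$-close to the original estimate $p$, except with probability $\delta_{\textnormal{Repair}}$. This is the formulation in which their lemma is actually proved, namely "estimate, damage, repair, estimate."

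The main obstacle is exactly this mismatch between "$\textnormal{RepairExp}$ is almost-projective" and "the next $\textnormal{ValEst}$ outcome is close." I would resolve it either by citing the intermediate statement of~\cite{CMSZ21}, or by noting that $\textnormal{RepairExp}$ applied to $\tau_{i+1}$ starts with $\textnormal{ValEst}(\tau_{i+1})$, whose outcome is exactly $p_{i+1}$. Almost-projectivity of $\textnormal{RepairExp}$ applied twice, starting from $\tau_i$, then gives $|p_i-p_{i+1}|\le 2\varepsilon$ with probability at least $1-\delta_{\textnormal{Repair}}$. This in particular yields $p_{i+1}\ge p_i-2\varepsilon$.

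One subtlety is that the second $\textnormal{RepairExp}$ would use a freshly sampled $h'$. I would handle this by noting that its first outcome is produced before $h'$ is used, so its distribution is independent of $h'$. Finally, the bound is uniform in $h$, so averaging over $h\in_R\mathrm H$ yields the claim. The bound also holds uniformly over the state $\tau_i$ produced by earlier iterations, since almost-projectivity holds for every input state.
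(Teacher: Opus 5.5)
Your proposal is correct and follows the paper's argument. Like the paper, you condition on the history and on $h$, apply Lemma~\ref{lem:repair} with $V^{\varepsilon,\delta}=\textnormal{ValEst}_{\Expt}^{\varepsilon,\delta}$ (almost-projective by Property~\ref{property:3ofValEst} of Lemma~\ref{lem:valest}) and $P=P^{\textnormal{rec}}_h$, and then average over $h$ using $|\mathbf{Acc}(h)|+1\le K$. Your remark that a second $\textnormal{RepairExp}$ on $\tau_{i+1}$ outputs exactly the $\textnormal{ValEst}$ outcome $p_{i+1}$, regardless of which damaging measurement it uses afterwards, spells out a step that the paper's one-line invocation of Lemma~\ref{lem:repair} leaves implicit.
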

\begin{proof}
    Condition on the history before the round $j$ and on the question $h_j$. The round consists of applying $\textnormal{ValEst}$, then the projective measurement $P^{\textnormal{rec}}_{h_j}$, and then  $\textnormal{Repair}^{T}_{P^{\textnormal{rec}}_{h_j},\textnormal{ValEst}_{\Expt}^{\varepsilon,\delta}}$. By Property~\ref{property:3ofValEst} of Lemma~\ref{lem:valest}, $\textnormal{ValEst}_{\Expt}^{\varepsilon,\delta}$   is $(\varepsilon,\delta)$-almost-projective. Applying Lemma~\ref{lem:repair} with $V^{\varepsilon,\delta}=\textnormal{ValEst}_{\Expt}^{\varepsilon,\delta}$ and $P=P^{\textnormal{rec}}_{h_j}$ which has $K_j=|\mathbf{Acc}(h_j)|+1$, gives
 $$\Pr[|p_{j+1}-p_j|\le 2\epsilon]
\ge
1-\delta_{\textnormal{Repair}}\enspace.$$
The claim follows directly.
\qed
\end{proof}

\begin{cl}
\label{cl:probofsucc}
    For every $j \in \{1, \dots,N\}$,
    $$\Pr[\Expt(\tau_j)=1]
\ge
\Pr[\Expt(\tau_1)=1]-2(j-1)\varepsilon-(j-1)\delta_{\textnormal{Repair}}\enspace.
$$
\end{cl}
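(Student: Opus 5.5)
The plan is to chain Claim~\ref{claim1} across rounds and translate the resulting lower bound on estimates into a lower bound on success probabilities, using Properties~\ref{property:2ofValEst}, \ref{property:4ofValEst} and \ref{property:5ofValEst} of Lemma~\ref{lem:valest}. The cleanest route is to work with expectations of the estimates rather than high-probability events. Since the estimates lie in $[\nicefrac{-1}{2},\nicefrac{3}{2}]$, Claim~\ref{claim1} gives, for each $i$,
\begin{equation*}
    \mathbb{E}[p_{i+1}] \geq \mathbb{E}[p_i] - 2\varepsilon - 2\delta_{\textnormal{Repair}}\enspace.
\end{equation*}
This holds because on the good event $p_{i+1}\ge p_i-2\varepsilon$, while on the bad event (probability at most $\delta_{\textnormal{Repair}}$) the drop $p_i-p_{i+1}$ is at most $2$.

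That constant factor loses a factor $2$ on $\delta_{\textnormal{Repair}}$ compared to the target. I would therefore refine the step before settling for it: in the proof of Lemma~4.10 of~\cite{CMSZ21}, the almost-projective guarantee of $\textnormal{RepairExp}$ is really a one-sided statement on how far the repaired estimate falls. I would aim to extract directly
\begin{equation*}
    \mathbb{E}[p_{i+1}] \geq \mathbb{E}[p_i]-2\varepsilon-\delta_{\textnormal{Repair}}\enspace,
\end{equation*}
for example by noting that the bad-event loss can be charged against an estimate difference bounded by $1$ when both values are clipped to $[0,1]$. If that fails, I would accept the constant $2$; Theorem~\ref{thm:rec_rewinding} is insensitive to it up to constants.

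Next, I would relate estimates to success probabilities. Property~\ref{property:2ofValEst} gives $\mathbb{E}[p_j]=\mathbb{E}\bigl[\Pr[\Expt(\tau_j)=1]\bigr]$, where $p_j$ is the estimate produced by applying $\textnormal{ValEst}$ to $\tau_j$ and the outer expectation is over the history producing $\tau_j$. Telescoping the per-round bound from $1$ to $j$ then yields
\begin{equation*}
    \Pr[\Expt(\tau_j)=1]=\mathbb{E}[p_j]\ge \mathbb{E}[p_1]-(j-1)(2\varepsilon+\delta_{\textnormal{Repair}})\enspace,
\end{equation*}
with $\mathbb{E}[p_1]=\Pr[\Expt(\tau_1)=1]$. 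Here $\Pr[\Expt(\tau_j)=1]$ should be read as averaged over the mixture $\tau_j$; by linearity of $\operatorname{Tr}[\Xi\,\cdot\,]$, this equals the success probability on the averaged state.

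The main obstacle is the bookkeeping in that telescoping step. The estimate $p_i$ appearing in Claim~\ref{claim1} is the output of $\textnormal{ValEst}$ on $\tau_i$, while $p_{i+1}$ comes from a fresh $\textnormal{ValEst}$ on the repaired state $\tau_{i+1}$. The expectation identity of Property~\ref{property:2ofValEst} must be applied to exactly these fresh applications, conditioning on the history so that the repair output is a fixed state. Once that is set up, the per-round inequality together with induction on $j$ gives the claim.
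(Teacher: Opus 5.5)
Your outline matches the paper's proof in substance. You chain Claim~\ref{claim1}, use boundedness of the estimates to pass from a high-probability statement to one about expectations, and invoke Property~\ref{property:2ofValEst} of Lemma~\ref{lem:valest} to replace $\mathbb{E}[p_j]$ and $\mathbb{E}[p_1]$ by success probabilities. Properties~\ref{property:4ofValEst} and~\ref{property:5ofValEst} play no role here. The paper orders the bookkeeping differently. It first takes a union bound over the $j-1$ rounds, giving the single event $E_j=\{p_j\ge p_1-2(j-1)\varepsilon\}$ with $\Pr[E_j^c]\le (j-1)\delta_{\textnormal{Repair}}$. It then pays the width of the estimates' range once, via $p_j\ge p_1-2(j-1)\varepsilon-\mathbf{1}_{E_j^c}$. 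Your per-round telescoping pays it $j-1$ times, each with probability $\delta_{\textnormal{Repair}}$, and gives exactly the same bound.

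The gap is the coefficient of $(j-1)\delta_{\textnormal{Repair}}$, and your clipping fix does not close it. Clipping to $[0,1]$ does preserve the good events, since clipping is monotone and $1$-Lipschitz. But clipped estimates are no longer unbiased, so Property~\ref{property:2ofValEst} no longer takes you back to $\Pr[\Expt(\tau_j)=1]$. The correction terms $\mathbb{E}[\max(-p_j,0)]$ and $\mathbb{E}[\max(p_1-1,0)]$ are not controlled by anything in Lemma~\ref{lem:valest}.

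The paper gets coefficient $1$ simply by asserting $p_j,p_1\in[0,1]$, so the drop on the bad event is at most $1$. That assertion conflicts with the range $[\nicefrac{-1}{2},\nicefrac{3}{2}]$ stated in Lemma~\ref{lem:valest}, which is exactly where your factor $2$ comes from. So there are two cases:
\begin{itemize}
\item If you adopt $p\in[0,1]$ as the paper does, your per-round inequality becomes $\mathbb{E}[p_{i+1}]\ge\mathbb{E}[p_i]-2\varepsilon-\delta_{\textnormal{Repair}}$, and telescoping proves the claim as stated.
\item With the range as literally stated, both your argument and the paper's only give $2(j-1)\delta_{\textnormal{Repair}}$.
\end{itemize}

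That weaker bound is harmless downstream, but not free. It doubles the $\delta$, $1/T$ and $\sqrt{\delta}$ terms of $\mu$ in Theorem~\ref{thm:rec_rewinding}. With the explicit constants of Theorem~\ref{thm:qInvProbabilitySucess}, the final bound then fails unless you compensate, for instance by replacing $(\delta,T)$ with $(\delta/4,2T)$. That change affects the running time only by constant factors.
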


\begin{proof}
    By Claim~\ref{claim1} and a union bound, we get
\begin{equation}
\Pr\left[p_j\ge p_1-2(j-1)\varepsilon\right]
\ge 1-(j-1)\delta_{\textnormal{Repair}}\enspace.
\end{equation}
Let $E_j :=
\left\{
p_j \ge p_1-2(j-1)\varepsilon
\right\}$, and $E_j^c$ be its complementary event. Since $p_j,p_1\in[0,1]$, we have 
\[
p_j
\ge
p_1-2(j-1)\varepsilon-\mathbf 1_{E_j^c} \enspace.
\]
Taking expectations, 
$$
\mathbb E[p_j]
\ge
\mathbb E[p_1]-2(j-1)\varepsilon-(j-1)\delta_{\textnormal{Repair}}\enspace,
$$
given that $\mathbb E[\mathbf 1_{E_j^c}]=\Pr[E_j^c]=1-\Pr\left[p_j\ge p_1-2(j-1)\varepsilon\right] \le (j-1)\delta_{\textnormal{Repair}}$.

\noindent By Property~\ref{property:2ofValEst} of Lemma~\ref{lem:valest}, we get
$$\mathbb E[p_j]=\Pr[\Expt(\tau_j)=1],
\qquad
\mathbb E[p_1]=\Pr[\Expt(\tau_1)=1] \enspace.$$
The claim follows directly.
\qed
\end{proof}


Now, consider attempt $j$. Since the damaging measurement $P^{\textnormal{rec}}_{h_j}$ is applied after $\textnormal{ValEst}_{\Expt}^{\varepsilon,\delta}$ in the procedure $\mathrm{Rec}^{\varepsilon,\delta,T,N}_{\Expt}$, Property~\ref{property:5ofValEst} of Lemma~\ref{lem:valest} gives
$$\Pr\limits_{(p_{j+1},\sigma_{j}) \leftarrow \textnormal{ValEst}_{\Expt}^{\varepsilon,\delta}(\tau_{j})}[\Expt(\sigma_{j})=1] \geq \Pr[\Expt(\tau_{j})=1] - \delta \enspace.$$
It follows from Claim~\ref{cl:probofsucc} that
$$\Pr[\Expt(\sigma_{j})=1]
\ge
\Pr[\Expt(\tau_1)=1]-2(j-1)\varepsilon-(j-1)\delta_{\textnormal{Repair}}-\delta\enspace.$$
Since $|\mathcal W|=\sum_{i=1}^N \mathbf 1_{\Expt(\sigma_i)=1}$, it holds that
\begin{equation*}
    \begin{aligned}
        \mathbb E[|\mathfrak R|]&=\sum_{i=1}^N \Pr[\Expt(\sigma_i)=1] \\
        &\ge N \Pr[\Expt(\tau_1)=1]-(N-1)N\varepsilon-(N-1)N\frac{\delta_{\textnormal{Repair}}}{2}-N\delta \\
        &=N\left(\Pr[\Expt(\tau_1)=1] - (N-1)(\varepsilon+\frac{\delta_{\textnormal{Repair}}}{2})-\delta\right)\\
        &\ge N\left(\Pr[\Expt(\tau_1)=1] - N\varepsilon-(N-1)\frac{\delta_{\textnormal{Repair}}}{2}-\delta\right)\\
        &\ge N\left(\Pr[\Expt(\tau_1)=1] - N\left(\varepsilon+\frac{\delta_{\textnormal{Repair}}}{2}\right)\right)\\
        &=N\left(\Pr[\Expt(\tau_1)=1]-N \left(\varepsilon + \frac K2\left(\delta+\frac1T\right)+2\sqrt\delta\right)\right) \enspace,
    \end{aligned}
\end{equation*}
where in the fifth inequality, we used the fact that $\delta \le \frac{\delta_{\textnormal{Repair}}}{2}$, since $\delta \le1$. 
\paragraph{Runtime.}
By Property~\ref{property:1ofValEst} of Lemma~\ref{lem:valest}, the quantum circuit complexity of $\textnormal{ValEst}_{\Expt}^{\varepsilon,\delta}$ is
\begin{equation}
\label{eq:compofvalest}
O\!\left(\sizeC{\Expt} \cdot \frac{1}{\varepsilon}\log\frac{1}{\delta}\right)\enspace.
\end{equation}
Combining Lemma~\ref{lem:repair} with~\eqref{eq:compofvalest}, we obtain the following bound on the circuit complexity of $\textnormal{RepairExp}^{\varepsilon,\delta,T}_{\Expt,P}$:
\begin{equation}
\label{eq:compofrepair}
O\!\left(\sizeC{\Expt} \cdot \left(\frac{1}{\varepsilon}\log\frac{1}{\delta}+1\right)\right)=O\!\left(\sizeC{\Expt} \cdot \frac{1}{\varepsilon}\log\frac{1}{\delta}\right) \enspace,
\end{equation}
since $\delta, \varepsilon \in (0,\frac12]$.\\
Consequently, $\mathrm{Rec}^{\varepsilon,\delta,T,N}_{\Expt}$ can be implemented with quantum circuit complexity
$$O\!\left(N\cdot T\cdot \sizeC{\Expt} \cdot \frac{1}{\varepsilon}\log\frac{1}{\delta}\right) \enspace.$$ \qed

\section{Results on the Collapse-Binding Property}
\label{app:CollapseBinding}

First, we recall a useful fact about projectors and projective measurements.

\begin{lemma}[\cite{DS23,Z22new,CAO22}] \label{lem:DS23}
    Let $\Delta = \{\Delta_{k}\}_{k\in\mathcal{K}} \cup \{ {\Delta}^\bot\}$ be a projective measurement and $\Gamma$ be a projector. For every positive operator~$\sigma$ such that $\sum\limits_{k \in \mathcal{K}} \tr(\Delta_{k} \sigma) = \tr(\sigma)$,
    \begin{equation*}
        \sum_{\substack{k,k':\\k\neq k'}} \tr\!\left( \Delta_{k'} \Gamma \Delta_{k} \sigma \Delta_{k} \Gamma \right) \geq \frac{1}{|\mathcal{K}| \cdot \tr(\sigma)} \tr\!\left( \Gamma \left(  \sigma - \sum\limits_{k \in \mathcal{K}} \Delta_{k} \sigma \Delta_{k} \right) \right)^2 \enspace.
    \end{equation*}
\end{lemma}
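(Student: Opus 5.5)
The plan is to reduce the statement to a single application of the Cauchy--Schwarz inequality in the Hilbert--Schmidt inner product $\langle X,Y\rangle \coloneq \tr(X^{*}Y)$, after rewriting both sides in terms of the operators
\begin{equation*}
    A_k \coloneq \Delta_k \sigma^{1/2}, \qquad k \in \mathcal{K} \enspace.
\end{equation*}

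First I will use the support hypothesis. Let $P \coloneq \sum_{k\in\mathcal{K}} \Delta_k = \idt - \Delta^{\bot}$. The assumption $\sum_k \tr(\Delta_k\sigma) = \tr(\sigma)$ gives $\tr(\Delta^{\bot}\sigma)=0$. Since $\sigma \geq 0$, this forces $\Delta^{\bot}\sigma^{1/2}=0$, hence $\sigma^{1/2} = P\sigma^{1/2}$ and $\sigma = P\sigma P$. Expanding $P\sigma P = \sum_{k,k'}\Delta_k\sigma\Delta_{k'}$ yields the identity
\begin{equation*}
    \tr\!\Bigl(\Gamma\Bigl(\sigma - \sum_{k}\Delta_k\sigma\Delta_k\Bigr)\Bigr) = \sum_{k\neq k'} \tr(\Gamma\Delta_k\sigma\Delta_{k'}) = \sum_{k\neq k'} \langle A_{k'}, \Gamma A_k\rangle \enspace.
\end{equation*}
Because $A_{k'} = \Delta_{k'}A_{k'}$ and $\Delta_{k'}$ is a projector, each term can be rewritten as $\langle A_{k'}, \Delta_{k'}\Gamma A_k\rangle$.

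In the same notation, the left-hand side of the lemma becomes a sum of squared norms. Using $\tr(\Delta_{k'}\Gamma\Delta_k\sigma\Delta_k\Gamma) = \tr(\Delta_{k'}\Gamma A_kA_k^{*}\Gamma\Delta_{k'})$, we get
\begin{equation*}
    \sum_{k\neq k'} \tr\!\left(\Delta_{k'}\Gamma\Delta_k\sigma\Delta_k\Gamma\right) = \sum_{k\neq k'} \|\Delta_{k'}\Gamma A_k\|_2^2 \enspace.
\end{equation*}

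Next I will apply Cauchy--Schwarz over the index set $\{(k,k') : k\neq k'\}$ to the quantity $X \coloneq \sum_{k\neq k'}\langle A_{k'},\Delta_{k'}\Gamma A_k\rangle$. This gives
\begin{equation*}
    |X|^2 \leq \Bigl(\sum_{k\neq k'}\|A_{k'}\|_2^2\Bigr)\Bigl(\sum_{k\neq k'}\|\Delta_{k'}\Gamma A_k\|_2^2\Bigr) \enspace.
\end{equation*}
The first factor equals $(|\mathcal{K}|-1)\sum_{k'}\tr(\Delta_{k'}\sigma) = (|\mathcal{K}|-1)\tr(\sigma)$, which is at most $|\mathcal{K}|\tr(\sigma)$. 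The second factor is exactly the left-hand side of the lemma.

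Finally, $X$ is real: it equals $\tr(\Gamma\tau)$ with $\tau \coloneq \sigma-\sum_k\Delta_k\sigma\Delta_k$ Hermitian and $\Gamma$ a projector. Therefore $|X|^2$ is the square appearing on the right-hand side, and rearranging (the case $\tr(\sigma)=0$ being trivial) concludes the proof.

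The only delicate step is the reduction to $\sigma = P\sigma P$. Without it, cross terms involving $\Delta^{\bot}$ would appear in the expansion of $\tr(\Gamma\sigma)$, and the identity with the sum over $k\neq k'$ would fail. Everything after that point is routine.
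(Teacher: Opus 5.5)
Your proof is correct and complete. Note that the paper never proves this lemma: it imports it from \cite{DS23,Z22new,CAO22}, so there is no in-paper argument to compare against.

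Your route is self-contained. You write both sides in terms of $A_k=\Delta_k\sigma^{1/2}$ and apply a single Cauchy--Schwarz inequality in the Hilbert--Schmidt inner product. This handles unnormalised $\sigma$ directly, with no reduction to pure states followed by averaging. That is exactly the form needed in Appendix~\ref{app:CollapseBinding}, where the lemma is applied with $\sigma=V_1^{\mathtt{c}}\sigma^{\mathtt{c}}V_1^{\mathtt{c}}$, $\Delta_b=P_b^{\mathtt{c}}$ and $\Delta^{\bot}=V_0^{\mathtt{c}}$.

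You also correctly isolate the only place the hypothesis is used. It gives $\Delta^{\bot}\sigma^{1/2}=0$, which removes the cross terms involving $\Delta^{\bot}$. Your count of the first Cauchy--Schwarz factor, $(|\mathcal{K}|-1)\tr(\sigma)$, actually yields a slightly stronger bound with $|\mathcal{K}|-1$ in place of $|\mathcal{K}|$ when $|\mathcal{K}|\geq 2$. When $|\mathcal{K}|=1$, both sides vanish.
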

We recall the statement of Lemma~\ref{lem:ClassBindingUniqueOpeningsImpliesCollapseBinding} and give its formal proof below.
\begin{restatelemma}{lem:ClassBindingUniqueOpeningsImpliesCollapseBinding}
    If classical bit commitment scheme $\protpi$ with unique openings is post-quantum computational open-once binding, then $\protpi$ is post-quantum computational collapse-binding.
\end{restatelemma}
\begin{proof}
    We proceed by  proving the contrapositive. Assume that $\protpi$ has unique openings and that QPT adversary $\sdv$ breaks the collapse-binding criterion. Since the scheme has unique openings, for each value of the classical information $\mathtt{c} \coloneq (\textnormal{Comm},\textnormal{Priv}_{\textsf{R}},\textnormal{Public}_{n})$ after the committing phase and for each bit~$a$, there exists at most one opening $m_{a}^{\mathtt{c}}$ which successfully opens bit~$a$. Thus, we define the following projectors,
    \begin{equation*}
        P_{0}^{\mathtt{c}} \coloneq \idt_{\reg{S}}\otimes\ket{0,m_{0}^{\mathtt{c}}}\!\bra{0,m_{0}^{\mathtt{c}}}_{\reg{M}\reg{U}} \quad\text{ and }\quad P_{1}^{\mathtt{c}} \coloneq \idt_{\reg{S}}\otimes\ket{1,m_{1}^{\mathtt{c}}}\!\bra{1,m_{1}^{\mathtt{c}}}_{\reg{M}\reg{U}} \enspace,
    \end{equation*}
    with the convention that if there is no valid opening for bit~$a$ given $\mathtt{c}$, then $P_{a}^{\mathtt{c}}$ is the null projector. Thus, the projector (used in Step~\ref{step:CollapseBindingStep3} of Experiment~\ref{exp:CollapseBinding}), which verifies if the opening information is valid, is simply
    \begin{equation*}
        V_{1}^{\mathtt{c}} = P_{0}^{\mathtt{c}} + P_{1}^{\mathtt{c}} \enspace.
    \end{equation*}
    Let $\sigma^{\mathtt{c}}_{\reg{S}\reg{M}\reg{U}}$ be $\sdv$'s state at Step~\ref{step:CollapseBindingStep2} of Experiment~\ref{exp:CollapseBinding} when the classical information is $\mathtt{c}$. Then, the state, from the adversary's point of view, after Step~\ref{step:CollapseBindingStep3}, which is also the state after Step~\ref{step:CollapseBindingStep4} when $b=0$, is given by
    \begin{equation*}
        \rho^{\mathtt{c},b=0}_{\reg{S}\reg{M}\reg{U}} = V_{1}^{\mathtt{c}} \sigma^{\mathtt{c}} V_{1}^{\mathtt{c}} + V_{0} ^{\mathtt{c}} \sigma^{\mathtt{c}} V_{0}^{\mathtt{c}} \enspace.
    \end{equation*}
    In particular,
    \begin{equation*}
        V_{1}^{\mathtt{c}} \sigma^{\mathtt{c}} V_{1}^{\mathtt{c}} = P_{0}^{\mathtt{c}} \sigma^{\mathtt{c}} P_{0}^{\mathtt{c}} + P_{0}^{\mathtt{c}} \sigma^{\mathtt{c}} P_{1}^{\mathtt{c}} + P_{1}^{\mathtt{c}} \sigma^{\mathtt{c}} P_{0}^{\mathtt{c}} + P_{1}^{\mathtt{c}} \sigma^{\mathtt{c}} P_{1}^{\mathtt{c}} \enspace.
    \end{equation*}
    On the other hand, when $b=1$, the state from the adversary's point of view after Step~\ref{step:CollapseBindingStep4} is
    \begin{equation*}
        \rho^{\mathtt{c},b=1}_{\reg{S}\reg{M}\reg{U}} = P_{0}^{\mathtt{c}} V_{1}^{\mathtt{c}}\sigma^{\mathtt{c}}V_{1}^{\mathtt{c}} P_{0}^{\mathtt{c}} + P_{1}^{\mathtt{c}} V_{1}^{\mathtt{c}}\sigma^{\mathtt{c}}V_{1}^{\mathtt{c}} P_{1}^{\mathtt{c}} + V_{0} ^{\mathtt{c}} \sigma^{\mathtt{c}} V_{0}^{\mathtt{c}} = P_{0}^{\mathtt{c}} \sigma^{\mathtt{c}}P_{0}^{\mathtt{c}} + P_{1}^{\mathtt{c}} \sigma^{\mathtt{c}}P_{1}^{\mathtt{c}} + V_{0} ^{\mathtt{c}} \sigma^{\mathtt{c}} V_{0}^{\mathtt{c}} \enspace.
    \end{equation*}
    We assume that $\sdv$ distinguishes between the $b=0$ and $b=1$ cases with non-negligible probability of success. 
    Thus, there exists a QPT projective measurement $D = \{D_1,D_0\}$ and a polynomial function $p$ such that
    \begin{equation*}
        \frac{1}{2} \sum_{\mathtt{c}} \Pr[\mathtt{c}] \tr\!\left( D_0 \rho^{\mathtt{c},b=0}_{\reg{S}\reg{M}\reg{U}} \right) + \frac{1}{2} \sum_{\mathtt{c}} \Pr[\mathtt{c}] \tr\!\left( D_1 \rho^{\mathtt{c},b=1}_{\reg{S}\reg{M}\reg{U}} \right) \geq \frac{1}{2} + \frac{1}{p(n)} \enspace,
    \end{equation*}
    for infinitely many $n$,
    where the classical information after the committing phase is $\mathtt{c}$ with probability $\Pr[\mathtt{c}]$ after the interaction between $\sdv$ and $\textsf{R}_{\textnormal{com}}$. Equivalently,
    \begin{equation*}
        \frac{2}{p(n)} \leq \sum_{\mathtt{c}} \Pr[\mathtt{c}] \tr\!\left( D_0 \left( \rho^{\mathtt{c},b=0}_{\reg{S}\reg{M}\reg{U}} - \rho^{\mathtt{c},b=1}_{\reg{S}\reg{M}\reg{U}} \right) \right) \enspace,
    \end{equation*}
    and moreover,
    \begin{equation*}
        \frac{2}{p(n)} \leq \sum_{\mathtt{c}} \Pr[\mathtt{c}] \tr\!\left( D_0 \left( P_{0}^{\mathtt{c}} \sigma^{\mathtt{c}} P_{1}^{\mathtt{c}} + P_{1}^{\mathtt{c}} \sigma^{\mathtt{c}} P_{0}^{\mathtt{c}} \right) \right) \enspace.
    \end{equation*}
    We construct a QPT adversary $\sdv'$ which break the open-once binding criterion:
    \begin{enumerate}
        \item[] $\sdv'(\textnormal{Public}_{n})$: $(1)$ Execute the committing phase exactly as $\sdv(\textnormal{Public}_{n})$.
        \item[] \hphantom{$\sdv'(\textnormal{Public}_{n})$:} $(2)$ Measure registers $\reg{M}$ and $\reg{U}$ in the computational basis to obtain outcome $(b,u)$.
        \item[] \hphantom{$\sdv'(\textnormal{Public}_{n})$:} $(3)$ Apply unitary $\idt - 2D_{0}$.
        \item[] \hphantom{$\sdv'(\textnormal{Public}_{n})$:} $(4)$ Measure registers $\reg{M}$ and $\reg{U}$ in the computational basis to obtain outcome $(b',u')$.
        \item[] \hphantom{$\sdv'(\textnormal{Public}_{n})$:} $(5)$ Output $(u,u')$.
    \end{enumerate}
    By construction, the probability that $\sdv'$ produces both valid openings is
    \begin{align*}
         \Pr[\textnormal{Open-Both}_{\Pi,\sdv'}(n) = 1] &= \sum_{\mathtt{c}} \Pr[\mathtt{c}] \sum_{b \in \{0,1\}} \tr\left( P_{b \oplus 1}^{\mathtt{c}} (\idt - 2D_{0}) P_{b}^{\mathtt{c}} \sigma^{\mathtt{c}} P_{b}^{\mathtt{c}} (\idt - 2D_{0})  \right)\\
         &= 4 \sum_{\mathtt{c}} \Pr[\mathtt{c}] \sum_{b \in \{0,1\}} \tr\left( P_{b \oplus 1}^{\mathtt{c}} D_{0} P_{b}^{\mathtt{c}} \sigma^{\mathtt{c}} P_{b}^{\mathtt{c}} D_{0}  \right)\\
         &= 4 \sum_{\mathtt{c}} \Pr[\mathtt{c}] \sum_{b \in \{0,1\}} \tr\left( P_{b \oplus 1}^{\mathtt{c}} D_{0} P_{b}^{\mathtt{c}} V_{1}^{\mathtt{c}} \sigma^{\mathtt{c}} V_{1}^{\mathtt{c}} P_{b}^{\mathtt{c}} D_{0}  \right) \enspace,
    \end{align*}
    where the second equality follows from the orthogonality of $P_{0}^{\mathtt{c}}$ and $P_{1}^{\mathtt{c}}$ whereas the last equality follows from the fact that $V_{1}^{\mathtt{c}} P_{b}^{\mathtt{c}} = V_{1}^{\mathtt{c}}  P_{b}^{\mathtt{c}} =  P_{b}^{\mathtt{c}}$. Thus, using Lemma~\ref{lem:DS23},
    \begin{align*}
        \Pr[\textnormal{Open-Both}_{\Pi,\sdv'}(n) = 1] &\geq 4 \sum_{\mathtt{c}} \Pr[\mathtt{c}] \frac{1}{2 \tr\!\left( V_{1}^{\mathtt{c}} \sigma^{\mathtt{c}} V_{1}^{\mathtt{c}} \right)} \tr\!\Big( D_0 \left(  V_{1}^{\mathtt{c}} \sigma^{\mathtt{c}} V_{1}  - \left( P_{0}^{\mathtt{c}} V_{1}^{\mathtt{c}} \sigma^{\mathtt{c}} V_{1} P_{0}^{\mathtt{c}} + P_{1}^{\mathtt{c}} V_{1}^{\mathtt{c}} \sigma^{\mathtt{c}} V_{1} P_{1}^{\mathtt{c}} \right) \right) \Big)^2\\
        &= 2 \sum_{\mathtt{c}} \Pr[\mathtt{c}] \frac{1}{\tr\!\left( V_{1}^{\mathtt{c}} \sigma^{\mathtt{c}} V_{1}^{\mathtt{c}} \right)} \tr\!\Big( D_0 \left( P_{0}^{\mathtt{c}} \sigma^{\mathtt{c}} P_{1}^{\mathtt{c}} + P_{1}^{\mathtt{c}} \sigma^{\mathtt{c}} P_{0}^{\mathtt{c}} \right) \Big)^2\\
        &\geq 2 \sum_{\mathtt{c}} \Pr[\mathtt{c}] \tr\!\Big( D_0 \left( P_{0}^{\mathtt{c}} \sigma^{\mathtt{c}} P_{1}^{\mathtt{c}} + P_{1}^{\mathtt{c}} \sigma^{\mathtt{c}} P_{0}^{\mathtt{c}} \right) \Big)^2\\
        &\geq 2 \left( \sum_{\mathtt{c}} \Pr[\mathtt{c}] \tr\!\left( D_0 \left( P_{0}^{\mathtt{c}} \sigma^{\mathtt{c}} P_{1}^{\mathtt{c}} + P_{1}^{\mathtt{c}} \sigma^{\mathtt{c}} P_{0}^{\mathtt{c}} \right) \right) \right)^2\\
        &\geq 2 \left( \frac{2}{p(n)} \right)^2\\
        &= \frac{8}{p(n)^2} \enspace,
    \end{align*}
    where we have used that $\tr\!\left( V_{1}^{\mathtt{c}} \sigma^{\mathtt{c}} V_{1}^{\mathtt{c}} \right) \leq 1$ and Cauchy-Schwarz. This concludes the proof.
    \qed
\end{proof}
We recall the statement of Lemma~\ref{lem:compositioncollapse} and give its formal proof below.
\begin{restatelemma}{lem:compositioncollapse}
    Let $\protpi$ be a publicly verifiable post-quantum computational collapse-binding classical commitment scheme and let $k = k(n)$ be a polynomially bounded integer. Then, $\protpi^{k}$ is post-quantum computational collapse-binding. Moreover, if $\protpi$ is statistically (resp. perfectly) hiding, then~$\protpi^{k}$ is statistically (resp. perfectly) hiding.
\end{restatelemma}
\begin{proof}
    An equivalent definition to Definition~\ref{def:CollapseBinding} for collapse-binding commitments is that $\protpi$ is collapse-binding if for all $\sdv=(\sdv_{\text{com}},\sdv_{\text{dec}})$ where $\sdv_{\text{com}}$ and $\sdv_{\text{dec}}$ are QPT algorithms, 
    \[
        \left|\Pr_{\text{Game}^1_{\protpi,\sdv}(n)}\left[b=1\right]-\Pr_{\text{Game}^2_{\protpi,\sdv}(n)}\left[b=1\right]\right|\leq\negl\enspace,
    \]
    where $\text{Game}^1_{\protpi,\sdv}$ is described by Experiment~\ref{exp:game1} and $\text{Game}^2_{\protpi,\sdv}$ is described by Experiment~\ref{exp:game2}.

    \SetAlgorithmName{Experiment}{}{}
    {\begin{algorithm}[H]
        \SetNoFillComment
        \DontPrintSemicolon
        \SetAlgoLined
        $\textnormal{Public}_{n} \gets \setup(1^n)$\;
        $(\textnormal{Comm},\sigma_{\reg{S}\reg{M}\reg{U}}, \textnormal{Priv}_{\textsf{R}}) \gets [\sdv_{\text{com}}(\textnormal{Public}_{n}) \rightleftarrows \textsf{R}_{\textnormal{com}}(\textnormal{Public}_{n})]$ \;
        $(v,\rho_{\reg{S}\reg{M}\reg{U}}) \gets V_{\protpi}^{\textnormal{Comm},\textnormal{Public}_{n}}(\sigma_{\reg{S}\reg{M}\reg{U}})$\;
        \lIf{$v=0$}{\Return $0$}
        $(m,\rho_{\reg{S}\reg{M}\reg{U}}) \gets \CompMeas_{\reg{M}}\!\left(\rho_{\reg{S}\reg{M}\reg{U}}\right)$\;
        $b \gets \sdv_{\text{dec}}(\textnormal{Comm},\rho_{\reg{S}\reg{M}\reg{U}},\textnormal{Public}_{n})$\;
        \Return $b$
    \caption{$\textnormal{Game}^1_{\protpi,\sdv}(n)$}
    \label{exp:game1}
    \end{algorithm}}
    
    \noindent The difference between $\textnormal{Game}^1_{\protpi,\sdv}$ and $\textnormal{Game}^2_{\protpi,\sdv}$ is that even when $v=1$, register $\reg{M}$ is not measured in $\textnormal{Game}^2_{\protpi,\sdv}$.
   
    \SetAlgorithmName{Experiment}{}{}
    \begin{algorithm}[H]
        \SetNoFillComment
        \DontPrintSemicolon
        \SetAlgoLined
        $\textnormal{Public}_{n} \gets \setup(1^n)$\;
        $(\textnormal{Comm},\sigma_{\reg{S}\reg{M}\reg{U}}, \textnormal{Priv}_{\textsf{R}}) \gets [\sdv_{\text{com}}(\textnormal{Public}_{n}) \rightleftarrows \textsf{R}_{\textnormal{com}}(\textnormal{Public}_{n})]$ \;
        $(v,\rho_{\reg{S}\reg{M}\reg{U}}) \gets V_{\protpi}^{\textnormal{Comm},\textnormal{Public}_{n}}(\sigma_{\reg{S}\reg{M}\reg{U}})$\;
        \lIf{$v=0$}{\Return $0$}
        $b \gets \sdv_{\text{dec}}(\textnormal{Comm},\rho_{\reg{S}\reg{M}\reg{U}},\textnormal{Public}_{n})$\;
        \Return $b$
    \caption{$\textnormal{Game}^2_{\protpi,\sdv}(n)$}
    \label{exp:game2}
    \end{algorithm}
    
    \noindent Suppose $\protpi$ is collapse-binding, we show that its $k$-fold parallel composition $\protpi^k$ is also collapse-binding, by demonstrating that for all QPT adversary $\sdv=(\sdv_{\text{com}},\sdv_{\text{dec}})$ it is true that
    \[
         \left|\Pr_{\text{Game}^1_{\protpi^k,\sdv}(n)}\left[b=1\right]-\Pr_{\text{Game}^2_{\protpi^k,\sdv}(n)}\left[b=1\right]\right|\leq\negl\enspace.
    \]

    \noindent In $\protpi^k$, register $\reg{M}=\reg{M}_{1:k}= \bigotimes_{i=1}^{k}\reg{M}_i$ and register $\reg{U}=\reg{U}_{1:k}= \bigotimes_{i=1}^{k}\reg{U}_i$. Consider the hybrid experiment $\text{Hybrid-Game}_{\protpi,\sdv,j}$ for $j\in\{0,\dots,k\}$ described by Experiment~\ref{exp:hybridgame}. The experiment $\textnormal{Hybrid-Game}_{\protpi,\sdv,j}$ is like $\textnormal{Game}^1_{\protpi,\sdv}$ except that you do not measure register $\reg{M}_{j+1:k}$.
     
    \SetAlgorithmName{Experiment}{}{}
    \begin{algorithm}
        \SetNoFillComment
        \DontPrintSemicolon
        \SetAlgoLined
        $\textnormal{Public}_{n} \gets \setup(1^n)$\;
        $(\textnormal{Comm},\sigma_{\reg{S}\reg{M}\reg{U}}, \textnormal{Priv}_{\textsf{R}}) \gets [\sdv_{\text{com}}(\textnormal{Public}_{n}) \rightleftarrows \textsf{R}^k_{\textnormal{com}}(\textnormal{Public}_{n})]$ \;
        $(v,\rho_{\reg{S}\reg{M}\reg{U}}) \gets V_{\protpi^k}^{\textnormal{Comm},\textnormal{Public}_{n}}(\sigma_{\reg{S}\reg{M}\reg{U}})$\;
        \lIf{$v=0$}{\Return $0$}
        $(m,\rho_{\reg{S}\reg{M}\reg{U}}) \gets \CompMeas_{\reg{M}_{1:j}}\!\left(\rho_{\reg{S}\reg{M}\reg{U}}\right)$\;
        $b \gets \sdv_{\text{dec}}(\textnormal{Comm},\rho_{\reg{S}\reg{M}\reg{U}},\textnormal{Public}_{n})$\;
        \Return $b$
    \caption{$\textnormal{Hybrid-Game}_{\protpi^k,\sdv,j}(n)$}
    \label{exp:hybridgame}
    \end{algorithm}

    Note that $\textnormal{Hybrid-Game}_{\protpi^k,\sdv,k}$ is the same as $\textnormal{Game}^1_{\protpi^k,\sdv}$, and that $\textnormal{Hybrid-Game}_{\protpi^k,\sdv,0}$ is the same as $\textnormal{Game}^2_{\protpi^k,\sdv}$, therefore
    \begin{equation}
        \hspace{-1cm}\Pr_{\textnormal{Hybrid-Game}_{\protpi^k,\sdv,k}(n)}\left[b=1\right]=\Pr_{\textnormal{Game}^1_{\protpi^k,\sdv}(n)}\left[b=1\right]\;\text{ and }\; \Pr_{\textnormal{Hybrid-Game}_{\protpi^k,\sdv,0}(n)}\left[b=1\right]=\Pr_{\textnormal{Game}^2_{\protpi^k,\sdv}(n) }\left[b=1\right]\enspace.\label{eq:comp3}
    \end{equation}
   Let us define a new QPT adversary $\sdv'=(\sdv'_{\text{com}},\sdv'_{\text{dec}})$ for experiments $\textnormal{Game}^1_{\protpi}$ and $\textnormal{Game}^2_{\protpi}$ built using $\sdv=(\sdv_{\text{com}},\sdv_{\text{dec}})$. 
    \begin{enumerate}
        \item []$\sdv'_{\text{com}}(\text{Public}_n)\coloneq$ $(1)$ Pick $j\in_R\{1,\dots,k\}$, define new register $\reg{J}$, and set $\proj{j}_{\reg{J}}$.
        \item[]\hspace{2.75cm}$(2)$ Execute $[\sdv_{\text{com}}(\text{Public}_n)\rightleftarrows \textsf{R}^k_{\text{com}}(\text{Public}_n)]$. It plays the role of $\textsf{R}_{\text{com}}^k$, except for 
        \item[]\hspace{3.35cm}the $j^{\text{th}}$ message being committed: upon reception of $(m_{1,i},m_{2,i},\dots, m_{k,i})$ from 
        \item[]\hspace{3.35cm}$\sdv_{\text{com}}$, it sends $m_{j,i}$ to $\textsf{R}_{\text{com}}$. When it receives reply $m'_{j,i}$, it computes reply $m'_{a,i}$
        \item[]\hspace{3.35cm}for all $a\in\{1,\dots,j-1,j+1,\dots,k\}$ and sends $(m'_{1,i},\dots m'_{k,i})$ to $\sdv_{\text{com}}$. Let 
        \item[]\hspace{3.35cm}$\text{Comm}\coloneq\text{Comm}^1,\dots,\text{Comm}^k$ be the resulting public communication.
        \item[]\hspace{2.75cm}$(3)$ Measure the state with $V_{\protpi^k}^{\textnormal{Comm},\textnormal{Public}_{n}}$, if the classical outcome is $0$ then set a new
        \item[]\hspace{3.35cm}register $\reg{O}$ to $\ket{\bot}_{\reg{O}}$.
        \item[]\hspace{2.75cm} $(4)$ Measure register $\reg{M}_{1:j-1}$ in the computational basis.
        \item[]\hspace{2.75cm} $(5)$ Set $\text{Comm'}\coloneq \text{Comm}^j$, set new register $\reg{T}_i$ to $\ket{\text{Comm}^i}$ for all 
        \item[]\hspace{3.35cm}$i\in\{1,\dots,j-1,j+1,\dots,k\}$, and rename registers to 
        \item[]\hspace{3.35cm}$\reg{S}'\coloneq \reg{O}\reg{J}\reg{S}\reg{T}_{1:j-1}\reg{T}_{j+1:k}\reg{M}_{1:j-1}\reg{M}_{j+1:k}\reg{U}_{1:j-1}\reg{U}_{j+1:k}$, 
        $\reg{M}'\coloneq \reg{M}_j$ and $\reg{U}'\coloneq\reg{U}_j$.
        \item[]\hspace{2.88cm}$(6)$ Output Comm and the state in register $\reg{S}'\reg{M}'\reg{U}'$.
    \end{enumerate}
    Now let us describe $\sdv'_{\text{dec}}$, 
    \begin{enumerate}
        \item []$\sdv'_{\text{dec}}(\text{Comm},\rho_{\reg{S}'\reg{M}'\reg{U}'},\text{Public}_n)\coloneq$ $(1)$ Measure register $\reg{O}$, if the outcome is $\bot$ then return $0$.
        \item []\hspace{4.86cm}$(2)$ Return $\sdv_{\text{dec}}(\text{Comm},\rho_{\reg{SMU}},\text{Public}_n)$.
    \end{enumerate}
    Since $\protpi$ is collapsing, 
    \begin{equation}
        \left|\Pr_{\textnormal{Game}^1_{\protpi,\sdv'}(n)}\left[b=1\right]-\Pr_{\textnormal{Game}^2_{\protpi,\sdv'}(n)}\left[b=1\right]\right|\leq\negl\enspace.\label{eq:comp4}
    \end{equation}
    Furthermore, note that for any fixed $j\in\{1,\dots,k\}$, $\textnormal{Game}^1_{\protpi,\sdv'}$ is the same as $\textnormal{Hybrid-Game}_{\protpi^k,\sdv,j}$ and $\textnormal{Game}^2_{\protpi,\sdv'}$ is the same as $\textnormal{Hybrid-Game}_{\protpi^k,\sdv,j-1}$, therefore
    \[
    \Pr_{\textnormal{Game}^2_{\protpi,\sdv'}(n)}\left[b=1\right]=\sum_{j=1}^k\frac{1}{k}\Pr_{\textnormal{Hybrid-Game}_{\protpi^k,\sdv,j}(n)}\left[b=1\right]\enspace,
    \]
    and
    \[
    \Pr_{\textnormal{Game}^1_{\protpi,\sdv'}(n)}\left[b=1\right]=\sum_{j=1}^k\frac{1}{k}\Pr_{\textnormal{Hybrid-Game}_{\protpi^k,\sdv,j-1}(n)}\left[b=1\right]\enspace.
    \]
    Hence,
    \begin{align}
        \hspace{-1cm}\left|\Pr_{\textnormal{Game}^1_{\protpi,\sdv'}(n)}\left[b=1\right]-\Pr_{\textnormal{Game}^2_{\protpi,\sdv'}(n)}\left[b=1\right]\right|&=\left|\sum_{j=1}^k\frac{1}{k}\Pr_{\textnormal{Hybrid-Game}_{\protpi^k,\sdv,j}(n)}\left[b=1\right] - \sum_{j=1}^k\frac{1}{k}\Pr_{\textnormal{Hybrid-Game}_{\protpi^k,\sdv,j-1}(n)}\left[b=1\right]\right| \notag\\
        &= \frac{1}{k}\left|\Pr_{\textnormal{Hybrid-Game}_{\protpi^k,\sdv,k}(n)}\left[b=1\right] -\Pr_{\textnormal{Hybrid-Game}_{\protpi^k,\sdv,0}(n)}\left[b=1\right]\right|\notag\\
        &=\frac{1}{k}\left|\Pr_{\textnormal{Game}^1_{\protpi^k,\sdv} }\left[b=1\right]-\Pr_{\textnormal{Game}^2_{\protpi^k,\sdv} }\left[b=1\right]\right|\label{eq:comp1}\\
        &\leq\negl\enspace,\label{eq:comp2}
    \end{align}
    where~\eqref{eq:comp1} follows from~\eqref{eq:comp3}, and~\eqref{eq:comp2} follows from~\eqref{eq:comp4}. Since $k$ is a polynomially bounded integer, it follows that $\protpi^k$ is collapse-binding.  
    \qed
\end{proof}

\section{Analysis of $\qsearcher$} \label{app:AnalysisqSearcher}

First, we establish some properties concerning the challenges and the set of consistent $y$'s for each round.
\begin{lemma} \label{lem:app:ChallengesProperties}
    For every round $1 \leq i \leq n-1$, let \mbox{$\tt = \bigl( (h_{1},\hdots,h_{i-1}) , (z_{1},\hdots,z_{i-1}) \bigr) \in \TT{i}$} be a transcript at the beginning of round $i$. Denote
    \begin{equation*}
        \textnormal{span}(h_{1},\hdots,h_{i-1}) \coloneq \{ a_1 \cdot h_{1} + \hdots + a_{i-1} \cdot h_{i-1} : a_{j} \in \{0,1\} \text{ for } 1 \leq j \leq i-1 \} \enspace,
    \end{equation*}
    where ``$+$'' is taken over $\mathbb{F}_{2}^{n}$. Then, for $h \in \{0,1\}^{n}$, if $h \notin \textnormal{span}(h_{1},\hdots,h_{i-1})$, then there exists \mbox{$r \in \{0,1\}^{n}$} such that $\langle h,r\rangle = 1$ and
    \begin{equation*}
        \cons{\tt}  = \{ r + y : y \in \cons{\tt} \} \enspace.
    \end{equation*}
    Moreover, for $h,h' \in \HH{i}$,
    \begin{equation*}
        \underset{y \in_{\textnormal{R}} \cons{\tt}}{\mathbb{E}} \left[(-1)^{\langle h + h', y \rangle}\right] = \begin{cases}
             1 &\text{ if } h = h' \enspace,\\
             0 &\text{ if } h \neq h' \enspace.
         \end{cases}
    \end{equation*}
\end{lemma}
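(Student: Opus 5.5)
The plan is to use linear algebra over $\mathbb{F}_2$. First I would record that $\cons{\tt}$ is an affine subspace. Indeed, $\cons{\tt}$ is the solution set of the linear system $\langle h_j, y\rangle = z_j$ for $1\le j\le i-1$. Each $h_j\in\HH{j}$ has the form $0^{j-1}\|1\|r_j$, so the matrix with rows $h_1,\dots,h_{i-1}$ is in row-echelon form with pivots in positions $1,\dots,i-1$. The $h_j$ are therefore linearly independent, and the system is consistent. Hence $\cons{\tt} = y^\ast + W$, where $y^\ast$ is any solution and $W \coloneq \{w : \langle h_j,w\rangle = 0 \ \forall j\le i-1\}$ is the orthogonal complement of $\textnormal{span}(h_1,\dots,h_{i-1})$. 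We have $\dim W = n-(i-1)$, consistent with $|\cons{\tt}| = 2^{n-(i-1)}$.

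For the first claim, take $h\notin \textnormal{span}(h_1,\dots,h_{i-1})$. Since $W^\perp = \textnormal{span}(h_1,\dots,h_{i-1})$ (double orthogonal complement in $\mathbb{F}_2^n$, using the dimension count), $h\notin W^\perp$. So there is some $r\in W$ with $\langle h,r\rangle = 1$. Translation by an element of $W$ preserves the coset $y^\ast+W$, so $\{r+y : y\in\cons{\tt}\} = \cons{\tt}$. The one subtle point is the identity $W^\perp=\textnormal{span}(\cdot)$ over $\mathbb{F}_2$, where a subspace can intersect its complement. The equality of dimensions still holds, since $\dim W + \dim W^\perp = n$ for the nondegenerate standard form, so this step is fine.

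For the second claim, if $h=h'$ then $h+h'=0$ and every term equals $1$. If $h\neq h'$ with both in $\HH{i}$, write $g\coloneq h+h'$. The $i$-th coordinates of $h$ and $h'$ are both $1$, so $g$ has its first $i$ coordinates equal to $0$, and $g\neq 0$. Any nonzero element of $\textnormal{span}(h_1,\dots,h_{i-1})$ has a nonzero coordinate among positions $1,\dots,i-1$: if $j$ is the smallest index with $a_j=1$, then coordinate $j$ equals $1$, by the echelon structure. Hence $g\notin\textnormal{span}(h_1,\dots,h_{i-1})$. The first claim then gives $r$ with $\langle g,r\rangle=1$ and $y\mapsto y+r$ a bijection of $\cons{\tt}$. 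This yields $\mathbb{E}[(-1)^{\langle g,y\rangle}] = \mathbb{E}[(-1)^{\langle g,y+r\rangle}] = -\mathbb{E}[(-1)^{\langle g,y\rangle}]$, so the expectation is $0$.

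I expect the main obstacle to be the echelon-form observation that $h+h'$ escapes the span. Everything else is a routine coset and character-sum argument.
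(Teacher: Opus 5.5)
Your proof is correct and has the same skeleton as the paper's. You find $r$ orthogonal to every $h_j$ with $\langle h,r\rangle=1$, note that $y\mapsto y+r$ permutes $\cons{\tt}$, and use this to cancel the character sum for $h+h'$.

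Where you differ is in how $r$ is obtained. The paper asserts that $\{h_1,\dots,h_{i-1}\}$ can be completed to a basis of $\{0,1\}^n$ by vectors $r_i,\dots,r_n$, each orthogonal to all the $h_j$. Such a completion exists only if $\textnormal{span}(h_1,\dots,h_{i-1})\cap W=\{0\}$, and this can fail over $\mathbb{F}_2$. For example, with $n=3$, $i=2$ and $h_1=(1,1,0)\in\HH{1}$, the vector $h_1$ is self-orthogonal, so $h_1\in W$ and no such completion exists. Your argument via $W^\perp=\textnormal{span}(h_1,\dots,h_{i-1})$, justified by $\dim U+\dim U^\perp=n$, needs no direct-sum decomposition and is valid in general. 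This is exactly the subtlety you flagged.

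You also supply the echelon-form reason why $h+h'$ lies outside the span: a nonzero combination of the $h_j$ has a $1$ at the pivot of its smallest index. The paper only asserts this from the first $i$ coordinates of $h+h'$ being $0$.
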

\begin{proof}
    The set $\{h_1,\hdots,h_{i-1}\}$ can be completed into a basis of $\{0,1\}^{n}$, let $\{h_1,\hdots,h_{i-1},r_{i},\hdots,r_{n}\}$ be such a completion satisfying $\langle h_j, r_k \rangle = 0$ for $1 \leq j \leq i-1$  and $i \leq k \leq n$. Thus, for $h \notin \textnormal{span}(h_{1},\hdots,h_{i-1})$, there exists $k \in \{i,\hdots,n\}$ such that $\langle h , r_k\rangle = 1$.
    Let $r \coloneq r_k$. Thus,
    \begin{equation*}
        \cons{\tt}  = \{ r + y : y \in \cons{\tt} \} \enspace,
    \end{equation*}
    because for every $y \in \{0,1\}^{n}$ and $1 \leq j \leq i-1$, $\langle h_j,y \rangle = \langle h_j,r + y \rangle$, which implies that $y \in \cons{\tt}$ if and only if $r+y \in \cons{\tt}$.
    
    Consequently, for $h \neq h'$ in $\HH{i}$, $h+h' \notin \textnormal{span}(h_{1},\hdots,h_{i-1})$ since the first $i$ positions of $h+h'$ are 0's. Hence, there exists $r \in \{0,1\}$ such that $\langle h+h',r\rangle = 1$ and
    \begin{equation*}
        \cons{\tt}  = \{ r + y : y \in \cons{\tt} \} \enspace.
    \end{equation*}
    Therefore,
    \begin{equation*}
        \underset{y \in_{\textnormal{R}} \cons{\tt}}{\mathbb{E}} \left[(-1)^{\langle h + h', y \rangle}\right] = \underset{y \in_{\textnormal{R}} \cons{\tt}}{\mathbb{E}} \left[(-1)^{\langle h + h', r + y \rangle}\right] = (-1)  \cdot \hspace{-0.2cm} \underset{y \in_{\textnormal{R}} \cons{\tt}}{\mathbb{E}} \left[(-1)^{\langle h + h', y \rangle}\right] \enspace,
    \end{equation*}
    which implies that for $h \neq h'$,
    \begin{equation*}
        \underset{y \in_{\textnormal{R}} \cons{\tt}}{\mathbb{E}} \left[(-1)^{\langle h + h', y \rangle}\right] = 0 \enspace.
    \end{equation*}
    For $h=h'$,
    \begin{equation*}
        \underset{y \in_{\textnormal{R}} \cons{\tt}}{\mathbb{E}} \left[(-1)^{\langle h + h', y \rangle}\right] = \underset{y \in_{\textnormal{R}} \cons{\tt}}{\mathbb{E}} \left[(-1)^{\langle 0, y \rangle}\right] = 1 \enspace.
    \end{equation*}
    \qed
\end{proof}
Next, we prove an intermediate result about the quantities involved in the Watrous single-reflection procedure inside $\qsearcher$.
\begin{lemma}\label{lem:app:ResultsBalanced}
    For every round $1 \leq i \leq n-1$ and transcript $\tt \in \TT{i}$,
    \begin{enumerate}
        \item $
        \underset{y \in_{\textnormal{R}} \cons{\tt}}{\mathbb{E}}\left[ \|\Delta^{i,y,\tt}_{1} \ket{0}\ket{\psi^{\tt}} \|^{2}\right] = \frac{1}{2} \enspace.$
    \item $
        \underset{y \in_{\textnormal{R}} \cons{\tt}}{\mathbb{E}}\left[ \|(\idt - \Gamma^{i})\Delta^{i,y,\tt}_{1} \ket{0}\ket{\psi^{\tt}} \|^{2}\right] = \frac{1}{4} - \frac{1}{4 \cdot 2^{n-i}} \enspace.
    $
    \item
    $
        \underset{y \in_{\textnormal{R}} \cons{\tt}}{\mathbb{E}}\left[ \sqrt{\|\Delta^{i,y,\tt}_{1} \ket{0}\ket{\psi^{\tt}} \|^{2}}\right] = \frac{1}{\sqrt{2}} - \frac{1}{2\sqrt{2} \cdot 2^{n-i}} \enspace.
    $
    \end{enumerate}
\end{lemma}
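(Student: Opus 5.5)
The plan is to reduce all three items to explicit trigonometric-polynomial expressions in $y$, and then to average them over $y \in_{\textnormal{R}} \cons{\tt}$ using Lemma~\ref{lem:app:ChallengesProperties}. Fix $i$ and $\tt$, and write $\mathrm{H} \coloneq \HH{i}$ with $|\mathrm{H}| = 2^{n-i}$. Let $Z_{\reg{R}_{i}}$ be the Pauli-$Z$ on the answer qubit, and define the Hermitian unitary and the bias
\begin{equation*}
    B_{h} \coloneq (\sdv^{\tt}_{i,h})^{*} Z_{\reg{R}_{i}} \sdv^{\tt}_{i,h}, \qquad c_h \coloneq \bra{\psi^{\tt}} B_{h} \ket{\psi^{\tt}} = p^{\tt}_{h,0} - p^{\tt}_{h,1} \in [-1,1] \enspace.
\end{equation*}

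\textbf{Step 1: explicit form of $\Delta^{i,y,\tt}_{1}$.} Since $\mathsf{H}^{\otimes(n-i)}\ket{0} = |\mathrm{H}|^{-1/2}\sum_h \ket{h}$ (identifying $\reg{H}_i$ with the free bits $r$), we have
\begin{equation*}
    U^{i,\tt}\ket{0}\ket{\psi^{\tt}} = |\mathrm{H}|^{-1/2}\sum_h \ket{h}\, \sdv^{\tt}_{i,h}\ket{\psi^{\tt}} \enspace.
\end{equation*}
Projecting the answer onto $h(y)$ uses the projector $\tfrac{1}{2}\bigl(\idt + (-1)^{h(y)} Z\bigr)$. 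This gives
\begin{equation*}
    q_y \coloneq \|\Delta^{i,y,\tt}_{1}\ket{0}\ket{\psi^{\tt}}\|^{2} = \frac{1}{|\mathrm{H}|}\sum_h p^{\tt}_{h,h(y)} = \frac{1}{2} + \frac{1}{2|\mathrm{H}|}\sum_h (-1)^{h(y)} c_h \enspace.
\end{equation*}
Every $h \in \mathrm{H}$ lies outside $\textnormal{span}(h_1,\dots,h_{i-1})$, because its $i$-th bit is $1$. By Lemma~\ref{lem:app:ChallengesProperties}, the map $y \mapsto y+r$ with $\langle h, r\rangle = 1$ preserves $\cons{\tt}$, so $\mathbb{E}_y[(-1)^{h(y)}] = 0$. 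This gives item~1.

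\textbf{Step 2: item 2.} Apply $(U^{i,\tt})^{*}$ and project with $\Gamma^{i}$. Using $\bra{0}\mathsf{H}^{\otimes(n-i)}\ket{h} = |\mathrm{H}|^{-1/2}$, one gets
\begin{equation*}
    \Gamma^{i}\Delta^{i,y,\tt}_{1}\ket{0}\ket{\psi^{\tt}} = \ket{0}\otimes \frac{1}{|\mathrm{H}|}\sum_h \tfrac{1}{2}\bigl(\idt + (-1)^{h(y)}B_h\bigr)\ket{\psi^{\tt}} \enspace.
\end{equation*}
Its squared norm expands as
\begin{equation*}
    \frac{1}{4} + \frac{1}{2|\mathrm{H}|}\sum_h (-1)^{h(y)} c_h + \frac{1}{4|\mathrm{H}|^2}\sum_{h,h'} (-1)^{\langle h+h',y\rangle}\bra{\psi^{\tt}}B_hB_{h'}\ket{\psi^{\tt}} \enspace.
\end{equation*}
Averaging over $y$ with the second part of Lemma~\ref{lem:app:ChallengesProperties} kills the linear term. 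In the double sum only the diagonal terms survive, and each gives $B_h^2 = \idt$. Hence the $\Gamma$-part has average $\tfrac14 + \tfrac{1}{4|\mathrm{H}|}$. Subtracting this from item~1 via Pythagoras gives $\tfrac14 - \tfrac{1}{4\cdot 2^{n-i}}$, exactly as claimed.

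\textbf{Step 3: item 3, the main obstacle.} The square root is not linear, so the averaging trick does not apply directly. I would use the quadratic minorant
\begin{equation*}
    \sqrt{x} \;\ge\; \tfrac{3}{\sqrt2}\,x - \sqrt2\, x^2 \qquad \text{on } [0,1] \enspace.
\end{equation*}
This curve is tangent to $\sqrt{x}$ at $x = \tfrac12$; the inequality holds at the endpoints, and the difference of the two sides is nonnegative in between. Next I compute $\mathbb{E}_y[q_y^2]$ with the same orthogonality relations: $\mathbb{E}_y[q_y^2] = \tfrac14 + \tfrac{1}{4|\mathrm{H}|^2}\sum_h c_h^2 \le \tfrac14 + \tfrac{1}{4|\mathrm{H}|}$. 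Substituting both moments gives
\begin{equation*}
    \mathbb{E}_y\bigl[\sqrt{q_y}\bigr] \;\ge\; \tfrac{3}{2\sqrt2} - \sqrt2\Bigl(\tfrac14 + \tfrac{1}{4\cdot 2^{n-i}}\Bigr) = \tfrac{1}{\sqrt2} - \tfrac{1}{2\sqrt2\cdot 2^{n-i}} \enspace,
\end{equation*}
which is the displayed value.

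This argument establishes the displayed value only as a lower bound, and that is the delicate point. The "$=$" in item~3 holds when both inequalities are tight, namely $c_h^2 = 1$ for all $h$ and $q_y \in \{0, \tfrac12\}$ almost surely. In general only "$\ge$" follows, and that is the direction used in the sketch of Lemma~\ref{lem:PDqSearcherIdeal}. I would therefore prove item~3 in this $\ge$ form and check that the downstream purified-distance bound needs nothing more.
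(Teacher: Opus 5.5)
Your proposal is correct and follows the paper's proof essentially line by line: your $B_h\ket{\psi^{\tt}}$ is the paper's $\ket{d_h^{\tt}}$, items~1 and~2 come from the same two averaging identities of Lemma~\ref{lem:app:ChallengesProperties}, and your minorant $\sqrt{x}\ge\frac{3}{\sqrt{2}}x-\sqrt{2}\,x^2$ is exactly the paper's inequality $(\sqrt{2x}-1)^2\le 4(x-\frac{1}{2})^2$ rearranged, with your direct variance bound $\frac{1}{4|\mathrm{H}|^2}\sum_h c_h^2\le\frac{1}{4|\mathrm{H}|}$ playing the role of the paper's pointwise Cauchy--Schwarz step. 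You are right that item~3 can only be proved as ``$\geq$'', which is all the paper's argument yields and all Lemma~\ref{lem:PDqSearcherIdeal} uses; in fact equality is never attained, because $q_y\in\{0,\frac{1}{2}\}$ for all $y$ together with $\mathbb{E}_y[q_y]=\frac{1}{2}$ forces $q_y\equiv\frac{1}{2}$ and hence all $c_h=0$, so your two tightness conditions are incompatible (for instance, a uniformly random answerer gives $\mathbb{E}_y[\sqrt{q_y}]=\frac{1}{\sqrt{2}}$).
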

\begin{proof}
    First, observe that for every $y \in \cons{\tt}$,
    \begin{equation*}
         \|\Delta^{i,y,\tt}_{1} \ket{0}\ket{\psi^{\tt}} \|^{2} = \frac{1}{|\HH{i}|} \sum_{h \in \HH{i}} p_{h,h(y)}^{\tt} \enspace.
    \end{equation*}
    Moreover, for each $h \in \HH{i}$, $h \notin \textnormal{span}(h_{1},\hdots,h_{i-1})$ since the first $i-1$ positions of $h$ are 0's. Thus, by Lemma~\ref{lem:app:ChallengesProperties}, there exists \mbox{$r_{h} \in \{0,1\}^{n}$} such that $\langle h,r_{h}\rangle = 1$ and 
    \begin{equation*}
        \cons{\tt}  = \{ r_{h} + y : y \in \cons{\tt} \} \enspace.
    \end{equation*}
    Therefore,
    \begin{align*}
        \underset{y \in_{\textnormal{R}} \cons{\tt}}{\mathbb{E}}\left[ \|\Delta^{i,y,\tt}_{1} \ket{0}\ket{\psi^{\tt}} \|^{2}\right] &= \frac{1}{|\HH{i}|} \sum_{h \in \HH{i}} \underset{y \in_{\textnormal{R}} \cons{\tt}}{\mathbb{E}}\left[ p_{h,\langle h,y\rangle}^{\tt}\right]\\
        &= \frac{1}{|\HH{i}|} \sum_{h \in \HH{i}} \underset{y \in_{\textnormal{R}} \cons{\tt}}{\mathbb{E}}\left[ p_{h,\langle h,r_{h}+y\rangle}^{\tt}\right]\\
        &= \frac{1}{|\HH{i}|} \sum_{h \in \HH{i}} \underset{y \in_{\textnormal{R}} \cons{\tt}}{\mathbb{E}}\left[1 - p_{h,\langle h,y\rangle}^{\tt}\right]\\
        &= 1 - \underset{y \in_{\textnormal{R}} \cons{\tt}}{\mathbb{E}}\left[ \|\Delta^{i,y,\tt}_{1} \ket{0}\ket{\psi^{\tt}} \|^{2}\right] \enspace,
    \end{align*}
    which implies that
    \begin{equation*}
        \underset{y \in_{\textnormal{R}} \cons{\tt}}{\mathbb{E}}\left[ \|\Delta^{i,y,\tt}_{1} \ket{0}\ket{\psi^{\tt}} \|^{2}\right] = \frac{1}{2} \enspace.
    \end{equation*}
    Hence, in average over the consistent $y$'s, the probability of success is $\nicefrac{1}{2}$.
    We introduce some notation for the following. For each $h \in \HH{i}$, define the following (normalised) state
    \begin{equation*}
        \ket{d_{h}^{\tt}}_{\reg{S}_{i}} \coloneq \sqrt{p_{h,0}^{\tt}} (\sdv^{\tt}_{i,h})^{*} \ket{0}_{\reg{R}_{i}} \ket{\smash{\psi^{\tt \circ (h,0)}}}_{\reg{S}_{i+1}} - \sqrt{p_{h,1}^{\tt}} (\sdv^{\tt}_{i,h})^{*} \ket{1}_{\reg{R}_{i}} \ket{\smash{\psi^{\tt \circ (h,1)}}}_{\reg{S}_{i+1}} \enspace.
    \end{equation*}
    In particular,
    \begin{equation*}
        \braket{\psi^{\tt}}{d_{h}^{\tt}} = p_{h,0}^{\tt} - p_{h,1}^{\tt} \enspace.
    \end{equation*}
    Moreover, define the following (non-normalised) state
    \begin{equation*}
        \ket{D_{y}^{\tt}} \coloneq \frac{1}{|\HH{i}|} \sum_{h \in \HH{i}} (-1)^{h(y)} \ket{d_{h}^{\tt}} \enspace.
    \end{equation*}
    In particular,
    \begin{equation*}
        \braket{\psi^{\tt}}{D_{y}^{\tt}} = 2 \left( \frac{1}{|\HH{i}|} \sum_{h \in \HH{i}} p_{h,h(y)}^{\tt} \right) - 1 \enspace,
    \end{equation*}
    which is equivalent to
    \begin{equation*}
        \|\Delta^{i,y,\tt}_{1} \ket{0}\ket{\psi^{\tt}} \|^{2} = \frac{1}{2} + \frac{1}{2} \braket{\psi^{\tt}}{D_{y}^{\tt}} \enspace.
    \end{equation*}
    Moreover, a direct computation yields
    \begin{equation*}
        \Delta^{i,y,\tt}_{1} \ket{0}\ket{\psi^{\tt}} = \sum_{h \in \HH{i}} \frac{\sqrt{p_{h,h(y)}^{\tt}}}{\sqrt{|\HH{i}|}} \mathsf{H}^{\otimes (n-i)} \ket{h}_{\reg{H}_{i}} (\sdv^{\tt}_{i,h})^{*} \ket{h(y)}_{\reg{R}_{i}} \ket{\smash{\psi^{\tt \circ (h,h(y))}}}_{\reg{S}_{i+1}} \enspace,
    \end{equation*} 
    and furthermore
    \begin{equation*}
        \Gamma^{i}\Delta^{i,y,\tt}_{1} \ket{0}\ket{\psi^{\tt}} = \ket{0}_{\reg{H}_{i}} \frac{1}{|\HH{i}|}\sum_{h \in \HH{i}} \sqrt{p_{h,h(y)}^{\tt}}  (\sdv^{\tt}_{i,h})^{*} \ket{h(y)}_{\reg{R}_{i}} \ket{\smash{\psi^{\tt \circ (h,h(y))}}}_{\reg{S}_{i+1}} \enspace.
    \end{equation*}
    Thus,
    \begin{align*}
        \| \Gamma^{i}\Delta^{i,y,\tt}_{1} \ket{0}\ket{\psi^{\tt}} \|^2  &= \| \frac{1}{|\HH{i}|}\sum_{h \in \HH{i}} \sqrt{p_{h,h(y)}^{\tt}}  (\sdv^{\tt}_{i,h})^{*} \ket{h(y)}_{\reg{R}_{i}} \ket{\smash{\psi^{\tt \circ (h,h(y))}}}_{\reg{S}_{i+1}} \|^2\\
        &= \| \frac{1}{|\HH{i}|}\sum_{h \in \HH{i}} \frac{1}{2}\left( \ket{\psi^{\tt}} + (-1)^{h(y)} \ket{d_{h}^{\tt}} \right) \|^2\\
        &= \frac{1}{4} + \frac{1}{4} \| \ket{\smash{D_{y}^{\tt}}} \|^2 + \frac{1}{2} \braket{\psi^{\tt}}{D_{y}^{\tt}} \enspace.
    \end{align*}
    Therefore,
    \begin{equation*}
        \| (\idt-\Gamma^{i})\Delta^{i,y,\tt}_{1} \ket{0}\ket{\psi^{\tt}} \|^2 = \| \Delta^{i,y,\tt}_{1} \ket{0}\ket{\psi^{\tt}} \|^2 -  \| \Gamma^{i}\Delta^{i,y,\tt}_{1} \ket{0}\ket{\psi^{\tt}} \|^2 = \frac{1}{4} - \frac{1}{4} \| \ket{\smash{D_{y}^{\tt}}} \|^2 \enspace.
    \end{equation*}
    We now show that $\| \ket{\smash{D_{y}^{\tt}}} \|^2$ is small, in expected value.
    \begin{align*}
        \underset{y \in_{\textnormal{R}} \cons{\tt}}{\mathbb{E}}\left[ \|\ket{\smash{D_{y}^{\tt}}} \|^{2}\right] &= \frac{1}{|\HH{i}|^{2}} \sum_{h,h' \in \HH{i}} \braket{d_{h}^{\tt}}{d_{h'}^{\tt}} \underset{y \in_{\textnormal{R}} \cons{\tt}}{\mathbb{E}}\left[ (-1)^{\langle h + h',y\rangle}\right]\\
        &= \frac{1}{|\HH{i}|^{2}} \sum_{h \in \HH{i}} \braket{d_{h}^{\tt}}{d_{h}^{\tt}}\\
        &= \frac{1}{|\HH{i}|} \enspace,
    \end{align*}
    where the first equality follows from the definition of $\ket{\smash{D_{y}^{\tt}}}$ and the second from Lemma~\ref{lem:app:ChallengesProperties}.
    Therefore (recall that $|\HH{i}| = 2^{n-i}$),
    \begin{equation*}
        \underset{y \in_{\textnormal{R}} \cons{\tt}}{\mathbb{E}}\left[ \|(\idt - \Gamma^{i})\Delta^{i,y,\tt}_{1} \ket{0}\ket{\psi^{\tt}} \|^{2}\right] = \frac{1}{4} - \frac{1}{4 \cdot 2^{n-i}} \enspace.
    \end{equation*}
    Finally, using Cauchy--Schwarz inequality,
    \begin{equation*}
        \left(\|\Delta^{i,y,\tt}_{1} \ket{0}\ket{\psi^{\tt}} \|^{2} - \frac{1}{2} \right)^2 = \frac{1}{4} \braket{\psi^{\tt}}{D_{y}^{\tt}}^{2} \leq \frac{1}{4}\|\ket{\smash{D_{y}^{\tt}}} \|^{2} \enspace,
    \end{equation*}
    and since for any $x \geq 0$, $(\sqrt{2x} - 1)^2 \leq 4(x-\nicefrac{1}{2})^2$, we conclude that
    \begin{equation*}
        \underset{y \in_{\textnormal{R}} \cons{\tt}}{\mathbb{E}}\left[ \sqrt{\|\Delta^{i,y,\tt}_{1} \ket{0}\ket{\psi^{\tt}} \|^{2}}\right] = \frac{1}{\sqrt{2}} - \frac{1}{2\sqrt{2} \cdot 2^{n-i}} \enspace.
    \end{equation*}
    \qed
\end{proof}
We recall the statement of Lemma~\ref{lem:PDqSearcherIdeal} and give its formal proof below.
\begin{restatelemma}{lem:PDqSearcherIdeal}
    Define
    \begin{equation*}
        \sigma_{\reg{Y}\reg{S}_{1}} \coloneq \frac{1}{2^{n}} \hspace{-0.2cm} \sum\limits_{y \in \{0,1\}^{n}} \hspace{-0.2cm} \ket{y}\!\bra{y}_{\reg{Y}} \otimes \ket{\psi_{\textnormal{aux}}}\!\bra{\psi_{\textnormal{aux}}} \enspace,
    \end{equation*}
    where $\ket{\psi_{\textnormal{aux}}}$ is $\sdv$'s auxiliary state, then
    \begin{equation*}
        \PD\!\left(\mathtt{c}\textnormal{-}\qsearcher^{\sdv}_{R}(\sigma),\sigma^{R+1}_{\textnormal{Ideal}}\right) \leq \frac{9}{\sqrt{2^{n-R}}} \enspace.
    \end{equation*}
\end{restatelemma}
\begin{proof}
    We proceed by a hybrid argument.
    For each round $1 \leq i \leq R$, we denote by $\textnormal{Real}_{i}$ the channel implemented during round $i$ of $\mathtt{c}\textnormal{-}\qsearcher$ (see Procedure~\ref{proc:TildeqSearcher}).
    More precisely, for every state $\rho \in \Dst{\reg{Y}\reg{T}_{i}\reg{S}_{i}}$,
    \begin{equation*}
        \textnormal{Real}_{i}(\rho) \coloneq  M^{i}\!\left( \mathtt{c}\textnormal{-}U^{i} \, \mathtt{c}\textnormal{-}\textnormal{WatRefl}^{i}(\ket{0}\!\bra{0}_{\reg{H}_{i}} \otimes \rho) (\mathtt{c}\textnormal{-}U^{i})^{*}\right) \enspace.
    \end{equation*}
    Hence, for each round $1 \leq i \leq R$,
    \begin{align*}
        \PD\!\left(\mathtt{c}\textnormal{-}\qsearcher^{\sdv}_{i}(\sigma),\sigma^{i+1}_{\textnormal{Ideal}}\right) &= \PD\!\left(\textnormal{Real}_{i}(\mathtt{c}\textnormal{-}\qsearcher^{\sdv}_{i-1}(\sigma)),\sigma^{i+1}_{\textnormal{Ideal}}\right)\\
        &\leq \PD\!\left(\textnormal{Real}_{i}(\mathtt{c}\textnormal{-}\qsearcher^{\sdv}_{i-1}(\sigma)),\textnormal{Real}_{i}(\sigma^{i}_{\textnormal{Ideal}})\right) + \PD\!\left(\textnormal{Real}_{i}(\sigma^{i}_{\textnormal{Ideal}}),\sigma^{i+1}_{\textnormal{Ideal}}\right)\\
        &\leq \PD\!\left(\mathtt{c}\textnormal{-}\qsearcher^{\sdv}_{i-1}(\sigma),\sigma^{i}_{\textnormal{Ideal}}\right) + \PD\!\left(\textnormal{Real}_{i}(\sigma^{i}_{\textnormal{Ideal}}),\sigma^{i+1}_{\textnormal{Ideal}}\right) \enspace,
    \end{align*}
    where the first inequality follows from the triangle inequality and the second from the contractivity of the purified distance under CPTP maps. Since $\mathtt{c}\textnormal{-}\qsearcher^{\sdv}_{0}(\sigma) = \sigma^{1}_{\textnormal{Ideal}}$ (they are both equal to the initial state~$\sigma$), a simple recurrence yields 
    \begin{equation*}
        \PD\!\left(\mathtt{c}\textnormal{-}\qsearcher^{\sdv}_{R}(\sigma),\sigma^{R+1}_{\textnormal{Ideal}}\right) \leq \sum_{i=1}^{R}\PD\!\left(\textnormal{Real}_{i}(\sigma^{i}_{\textnormal{Ideal}}),\sigma^{i+1}_{\textnormal{Ideal}}\right) \enspace.
    \end{equation*}
    Hence, we only need to analyse one round at a time. There are two sources for the noise accumulated in each round. First, the Watrous single-reflection procedure, when applied on $\ket{\psi^{\tt}}$, \emph{attempts} at producing a state close to having measured $\ket{\psi^{\tt}}$ with $\Delta^{i,y,\tt}$ and obtained outcome 1. 
    Second, even if for every $y$ and for every transcript $\tt$, we were able to obtain outcome~1 after the measurement $\Delta^{i,y,\tt}$, the distribution on the challenges and answers is different than in $\pibind$ (which is how they are distributed in $\sigma^{i+1}_{\textnormal{Ideal}})$. Indeed, after measuring $\Delta^{i,y,\tt}_{1}$, the amplitude associated to each challenge $h \in \HH{i}$ is renormalised by the probability to answer $h(y)$, which is not the case in $\pibind$.

    Let $\sigma^{i}_{\textnormal{All-1}}$ be the state that we would obtain if, in every branch of $\sigma^{i}_{\textnormal{Ideal}}$, the measurement $\Delta^{i,y,\tt}$ yielded outcome~1. More precisely,
    \begin{equation*}
        \sigma^{i}_{\textnormal{All-1}} \coloneq \sum_{\tt \in \TT{i}} \Pr_{\novy}[\tt] \ket{\tt}\!\bra{\tt}_{\reg{T}_{i}}  \otimes \frac{1}{2^{n-(i-1)}} \hspace{-0.3cm} \sum_{y \in \cons{\tt}} \hspace{-0.4cm} \ket{y}\!\bra{y}_{\reg{Y}} \otimes \frac{\Delta^{i,y,\tt}\ket{0,\psi^{\tt}}\!\bra{0,\psi^{\tt}}_{\reg{H}_{i}\reg{S}_{i}}\Delta^{i,y,\tt}}{\|\Delta^{i,y,\tt} \ket{0,\psi^{\tt}}_{\reg{H}_{i}\reg{S}_{i}} \|^2} \enspace,
    \end{equation*}
    where $\ket{0,\psi^{\tt}}_{\reg{H}_{i}\reg{S}_{i}}$ denotes $\ket{0}_{\reg{H}_{i}}\ket{\psi^{\tt}}_{\reg{S}_{i}}$.
    Thus, by the triangle inequality and the contractivity under CPTP maps,
    \begin{align*}
        \PD\!\left(\textnormal{Real}_{i}(\sigma^{i}_{\textnormal{Ideal}}),\sigma^{i+1}_{\textnormal{Ideal}}\right) &\leq \PD\!\left(\textnormal{Real}_{i}(\sigma^{i}_{\textnormal{Ideal}}),M^{i}\!\left( \mathtt{c}\textnormal{-}U^{i}  \sigma^{i}_{\textnormal{All-1}}(\mathtt{c}\textnormal{-}U^{i})^{*}\right)\right) + \PD\!\left(M^{i}\!\left( \mathtt{c}\textnormal{-}U^{i}  \sigma^{i}_{\textnormal{All-1}}(\mathtt{c}\textnormal{-}U^{i})^{*}\right),\sigma^{i+1}_{\textnormal{Ideal}}\right)\\
        &\leq \PD\!\left(\mathtt{c}\textnormal{-}\textnormal{WatRefl}^{i}(\ket{0}\!\bra{0}_{\reg{H}_{i}} \otimes \sigma^{i}_{\textnormal{Ideal}}),\sigma^{i}_{\textnormal{All-1}}\right) + \PD\!\left(M^{i}\!\left( \mathtt{c}\textnormal{-}U^{i}  \sigma^{i}_{\textnormal{All-1}}(\mathtt{c}\textnormal{-}U^{i})^{*}\right),\sigma^{i+1}_{\textnormal{Ideal}}\right) \enspace.
    \end{align*}
    First, consider the left-hand term. Recall that the ideal state at the beginning of round~$i$ is
    \begin{equation*}
        \sigma^{i}_{\textnormal{Ideal}} = \sum_{\tt \in \TT{i}} \Pr_{\novy}[\tt] \ket{\tt}\!\bra{\tt}_{\reg{T}_{i}}  \otimes \frac{1}{2^{n-(i-1)}} \hspace{-0.3cm} \sum_{y \in \cons{\tt}} \hspace{-0.4cm} \ket{y}\!\bra{y}_{\reg{Y}} \otimes \ket{\psi^{\tt}}\!\bra{\psi^{\tt}}_{\reg{S}_{i}} \enspace.
    \end{equation*}
    Thus, by applying the controlled version of the Watrous single-reflection procedure,
    \begin{equation*}
        \mathtt{c}\textnormal{-}\textnormal{WatRefl}^{i}(\ket{0}\!\bra{0}_{\reg{H}_{i}} \otimes \sigma^{i}_{\textnormal{Ideal}}) = \sum_{\tt \in \TT{i}} \Pr_{\novy}[\tt] \ket{\tt}\!\bra{\tt}_{\reg{T}_{i}}  \otimes \frac{1}{2^{n-(i-1)}} \hspace{-0.1cm} \sum_{y \in \cons{\tt}} \hspace{-0.1cm} \ket{y}\!\bra{y}_{\reg{Y}} \otimes \textnormal{WatRefl}^{\Gamma^{i}}_{\Delta^{i,y,\tt}}(\ket{0,\psi^{\tt}}\!\bra{0,\psi^{\tt}}_{\reg{H}_{i}\reg{S}_{i}}) \enspace.
    \end{equation*}
    Since $\mathtt{c}\textnormal{-}\textnormal{WatRefl}^{i}(\ket{0}\!\bra{0}_{\reg{H}_{i}} \otimes \sigma^{i}_{\textnormal{Ideal}})$ and $\sigma^{i}_{\textnormal{All-1}}$ are both cq-states in the same basis,
    \begin{align*}
        \F\!\left(\mathtt{c}\textnormal{-}\textnormal{WatRefl}^{i}(\ket{0}\!\bra{0}_{\reg{H}_{i}} \otimes \sigma^{i}_{\textnormal{Ideal}}),\sigma^{i}_{\textnormal{All-1}}\right) &= \sum_{\tt \in \TT{i}} \Pr_{\novy}[\tt] \cdot \frac{1}{2^{n-(i-1)}} \cdot\\
        & \hspace{1cm}\sum_{y \in \cons{\tt}}  \hspace{-0.1cm}\F\!\left(\textnormal{WatRefl}^{\Gamma^{i}}_{\Delta^{i,y,\tt}}(\ket{0,\psi^{\tt}}\!\bra{0,\psi^{\tt}}_{\reg{H}_{i}\reg{S}_{i}}),\frac{\Delta^{i,y,\tt}\ket{0,\psi^{\tt}}\!\bra{0,\psi^{\tt}}_{\reg{H}_{i}\reg{S}_{i}}\Delta^{i,y,\tt}}{\|\Delta^{i,y,\tt} \ket{0,\psi^{\tt}}_{\reg{H}_{i}\reg{S}_{i}} \|^2}\right)\\
        &\geq \sum_{\tt \in \TT{i}} \Pr_{\novy}[\tt] \cdot \frac{1}{2^{n-(i-1)}}\sum_{y \in \cons{\tt}}  \hspace{-0.4cm} \sqrt{4 \|(\idt - \Gamma^{i})\Delta^{i,y,\tt}_{1} \ket{0}\ket{\psi^{\tt}} \|^{2}}\\
        &\geq \sum_{\tt \in \TT{i}} \Pr_{\novy}[\tt] \cdot \frac{1}{2^{n-(i-1)}}\sum_{y \in \cons{\tt}}  \hspace{-0.4cm} 4 \|(\idt - \Gamma^{i})\Delta^{i,y,\tt}_{1} \ket{0}\ket{\psi^{\tt}} \|^{2}\\
        &= 4 \sum_{\tt \in \TT{i}} \Pr_{\novy}[\tt] \underset{y \in_{\textnormal{R}} \cons{\tt}}{\mathbb{E}}\left[ \|(\idt - \Gamma^{i})\Delta^{i,y,\tt}_{1} \ket{0}\ket{\psi^{\tt}} \|^{2}\right]\\
        &= 4 \sum_{\tt \in \TT{i}} \Pr_{\novy}[\tt] \left( \frac{1}{4} - \frac{1}{4 \cdot 2^{n-i}} \right)\\
        &= 1 - \frac{1}{2^{n-i}} \enspace,
    \end{align*}
    where the first inequality follows from Lemma~\ref{lem:WatRefl}, the second inequality from the fact that \mbox{$\|(\idt - \Gamma^{i})\Delta^{i,y,\tt}_{1} \ket{0}\ket{\psi^{\tt}} \|^{2} \leq \nicefrac{1}{4}$} and the second to last equality follows from Lemma~\ref{lem:app:ResultsBalanced}.
    Therefore, we bound the error contribution from the first source:
    \begin{equation*}
        \PD\!\left(\mathtt{c}\textnormal{-}\textnormal{WatRefl}^{i}(\ket{0}\!\bra{0}_{\reg{H}_{i}} \otimes \sigma^{i}_{\textnormal{Ideal}}),\sigma^{i}_{\textnormal{All-1}}\right) \leq \sqrt{1 - \left( 1 - \frac{1}{2^{n-i}} \right)^2} \leq \sqrt{\frac{2}{2^{n-i}}} \enspace.
    \end{equation*}
    Now, for the second source, a direct computation yields
    \begin{equation*}
        \Delta^{i,y,\tt}_{1} \ket{0}\ket{\psi^{\tt}} = \sum_{h \in \HH{i}} \frac{\sqrt{p_{h,h(y)}^{\tt}}}{\sqrt{|\HH{i}|}} \mathsf{H}^{\otimes (n-i)} \ket{h}_{\reg{H}_{i}} (\sdv^{\tt}_{i,h})^{*} \ket{h(y)}_{\reg{R}_{i}} \ket{\smash{\psi^{\tt \circ (h,h(y))}}}_{\reg{S}_{i+1}} \enspace.
    \end{equation*}
    Hence, the application of $\mathtt{c}\textnormal{-}U^{i}$ ``undoes'' $\mathsf{H}^{\otimes (n-i)}$ and $(\sdv^{\tt}_{i,h})^{*}$, and then the dephasing channel collapses the superposition on the challenges. Thus,
    \begin{align*}
        &M^{i}\!\left(\mathtt{c}\textnormal{-}U^{i}  \sigma^{i}_{\textnormal{All-1}}(\mathtt{c}\textnormal{-}U^{i})^{*}\right)\\
        &= \sum_{\tt \in \TT{i}} \Pr_{\novy}[\tt] \ket{\tt}\!\bra{\tt}_{\reg{T}_{i}}  \otimes \frac{1}{2^{n-(i-1)}} \hspace{-0.1cm} \sum_{y \in \cons{\tt}} \hspace{-0.1cm} \ket{y}\!\bra{y}_{\reg{Y}} \otimes \sum_{h \in \HH{i}} \frac{p_{h,h(y)}^{\tt}}{|\HH{i}|} \ket{h,h(y)}\!\bra{h,h(y)}_{\reg{H}_{i}\reg{R}_{i}} \otimes \frac{\ket{\smash{\psi^{\tt \circ (h,h(y))}}}\!\bra{\smash{\psi^{\tt \circ (h,h(y))}}}_{\reg{S}_{i+1}}}{\|\Delta^{i,y,\tt} \ket{0,\psi^{\tt}} \|^2} \\
        &=\sum_{\tt \in \TT{i}} \Pr_{\novy}[\tt] \ket{\tt}\!\bra{\tt}_{\reg{T}_{i}} \otimes \sum_{\substack{h \in \HH{i}\\z\in \{0,1\}}} \frac{p_{h,z}^{\tt}}{|\HH{i}|} \ket{h,z}\!\bra{h,z}_{\reg{H}_{i}\reg{R}_{i}} \otimes \frac{1}{2^{n-(i-1)}} \hspace{-0.1cm} \sum_{y \in \cons{\tt}} \frac{\idt_{h(y) = z}}{\|\Delta^{i,y,\tt} \ket{0,\psi^{\tt}} \|^2} \ket{y}\!\bra{y}_{\reg{Y}} \otimes \ket{\smash{\psi^{\tt \circ (h,z)}}}\!\bra{\smash{\psi^{\tt \circ (h,z)}}}_{\reg{S}_{i+1}} \enspace.
    \end{align*}
    On the other hand, for the ideal state at the beginning of round $i+1$,
    \begin{align*}
        \sigma^{i+1}_{\textnormal{Ideal}} &= \sum_{\tt \in \TT{i+1}} \Pr_{\novy}[\tt] \ket{\tt}\!\bra{\tt}_{\reg{T}_{i+1}}  \otimes \frac{1}{2^{n-i}} \hspace{-0.3cm} \sum_{y \in \cons{\tt}} \hspace{-0.4cm} \ket{y}\!\bra{y}_{\reg{Y}} \otimes \ket{\psi^{\tt}}\!\bra{\psi^{\tt}}_{\reg{S}_{i+1}}\\
        &=\sum_{\tt \in \TT{i}} \Pr_{\novy}[\tt] \ket{\tt}\!\bra{\tt}_{\reg{T}_{i}} \otimes \sum_{\substack{h \in \HH{i}\\z\in \{0,1\}}} \frac{p_{h,z}^{\tt}}{|\HH{i}|} \ket{h,z}\!\bra{h,z}_{\reg{H}_{i}\reg{R}_{i}} \otimes \frac{2}{2^{n-(i-1)}} \hspace{-0.3cm} \sum_{y \in \cons{\tt}} \idt_{h(y) = z} \ket{y}\!\bra{y}_{\reg{Y}} \otimes \ket{\smash{\psi^{\tt \circ (h,z)}}}\!\bra{\smash{\psi^{\tt \circ (h,z)}}}_{\reg{S}_{i+1}} \enspace.
    \end{align*}
    Since $M^{i}\!\left(\mathtt{c}\textnormal{-}U^{i}  \sigma^{i}_{\textnormal{All-1}}(\mathtt{c}\textnormal{-}U^{i})^{*}\right)$ and $\sigma^{i+1}_{\textnormal{Ideal}}$ are both cq-states in the same basis,
    \begin{align*}
        \F\!\left( M^{i}\!\left(\mathtt{c}\textnormal{-}U^{i}  \sigma^{i}_{\textnormal{All-1}}(\mathtt{c}\textnormal{-}U^{i})^{*}\right),\sigma^{i+1}_{\textnormal{Ideal}} \right) &= \sum_{\tt \in \TT{i}} \Pr_{\novy}[\tt] \cdot \sum_{\substack{h \in \HH{i}\\z\in \{0,1\}}} \frac{p_{h,z}^{\tt}}{|\HH{i}|} \cdot \frac{\sqrt{2}}{2^{n-(i-1)}} \hspace{-0.3cm} \sum_{y \in \cons{\tt}} \frac{\idt_{h(y) = z}}{\sqrt{\|\Delta^{i,y,\tt} \ket{0,\psi^{\tt}} \|^2}}\\
        &= \sum_{\tt \in \TT{i}} \Pr_{\novy}[\tt] \cdot \frac{\sqrt{2}}{2^{n-(i-1)}} \hspace{-0.3cm} \sum_{y \in \cons{\tt}} \frac{1}{\sqrt{\|\Delta^{i,y,\tt} \ket{0,\psi^{\tt}} \|^2}} \cdot \sum_{\substack{h \in \HH{i}\\z\in \{0,1\}}} \frac{p_{h,z}^{\tt}}{|\HH{i}|} \idt_{h(y) = z}\\
        &= \sum_{\tt \in \TT{i}} \Pr_{\novy}[\tt] \cdot \frac{\sqrt{2}}{2^{n-(i-1)}} \hspace{-0.3cm} \sum_{y \in \cons{\tt}} \frac{1}{\sqrt{\|\Delta^{i,y,\tt} \ket{0,\psi^{\tt}} \|^2}} \cdot \|\Delta^{i,y,\tt} \ket{0,\psi^{\tt}} \|^2\\
        &= \sum_{\tt \in \TT{i}} \Pr_{\novy}[\tt] \cdot \sqrt{2} \underset{y \in_{\textnormal{R}} \cons{\tt}}{\mathbb{E}}\left[ \sqrt{\|\Delta^{i,y,\tt}_{1} \ket{0}\ket{\psi^{\tt}} \|^{2}}\right]\\
        &\geq \sum_{\tt \in \TT{i}} \Pr_{\novy}[\tt] \cdot \sqrt{2} \left( \frac{1}{\sqrt{2}} - \frac{1}{2\sqrt{2} \cdot 2^{n-i}} \right)\\
        &= 1 - \frac{1}{2\cdot 2^{n-i}} \enspace,
    \end{align*}
    where the inequality follows from Lemma~\ref{lem:app:ResultsBalanced}. Therefore, we bound the error contribution from the second source:
    \begin{equation*}
        \PD\!\left( M^{i}\!\left(\mathtt{c}\textnormal{-}U^{i}\,  \sigma^{i}_{\textnormal{All-1}}\,(\mathtt{c}\textnormal{-}U^{i})^{*}\right),\sigma^{i+1}_{\textnormal{Ideal}} \right) \leq \sqrt{1 - \left( 1 - \frac{1}{2\cdot 2^{n-i}} \right)^2} \leq \sqrt{\frac{1}{2^{n-i}}} \enspace.
    \end{equation*}
    Thus, by combining the two sources,
    \begin{equation*}
        \PD\!\left(\textnormal{Real}_{i}(\sigma^{i}_{\textnormal{Ideal}}),\sigma^{i+1}_{\textnormal{Ideal}}\right) \leq \frac{1 + \sqrt{2}}{\sqrt{2^{n-i}}} \enspace.
    \end{equation*}
    Finally, by combining the contributions for the $R$ rounds done in $\qsearcher$,
    \begin{align*}
        \PD\!\left(\mathtt{c}\textnormal{-}\qsearcher^{\sdv}_{R}(\sigma),\sigma^{R+1}_{\textnormal{Ideal}}\right) \leq \sum_{i=1} ^{R} \frac{1 + \sqrt{2}}{\sqrt{2^{n-i}}} \leq \frac{1+\sqrt{2}}{1-\frac{1}{\sqrt{2}}} \frac{1}{\sqrt{2^{n-R}}} \leq \frac{9}{\sqrt{2^{n-R}}} \enspace.
    \end{align*}
    \qed
\end{proof}

\section{Analysis of $\repeatqstraightline$} \label{app:AnalysisRepeatqStraightLine}

We establish some notation about the execution of $\textnormal{Rec}^{\varepsilon,\delta,T,N}_{\qstraightline^{\sdv}_{R}(\tt,\cdot)}(\ket{\psi^{\tt}}\!\bra{\psi^{\tt}})$ (see Algorithm~\ref{algo:Rec} and Experiment~\ref{exp:qStraightLine}). Let $j$ be the variable that denotes the attempt number. Moreover, let $T^{j}_{n-1}$ denote the random variable for the final transcript produced during the $\qstraightline$ execution of that attempt. Furthermore, let $\textnormal{Pair}_{j}$ be the following random variable.
\begin{equation*}
    \textnormal{Pair}_{j} \coloneq \left( \pi_{n}^{-1}(y_{0}(T^{j}_{n-1})), \pi_{n}^{-1}(y_{1}(T^{j}_{n-1}))\right) \enspace,
\end{equation*}
which corresponds to the pair of preimages that the adversary will try to produce in attempt~$j$. Also, let $\mathfrak{R}_{j}$ denote the list of recordings after attempt $j$. First, we establish some properties concerning the possible preimage pairs and the probability to reach them in the recording algorithm.

\begin{lemma} \label{lem:app:PairsProperties}
    For every $\tt \in \TT{n}$, the last bits of $y_{0}(\tt)$ and $y_{1}(\tt)$ differ, i.e.
    \begin{equation*}
        (y_{0}(\tt))_{n} \neq (y_{1}(\tt))_{n} \enspace.
    \end{equation*}
    Moreover, for every $\tt \in \TT{R+1}$, for every pair $(x_0,x_1)$ such that $\pi_{n}(x_0),\pi_{n}(x_1) \in \cons{\tt}$ and for each attempt~$j$,
    \begin{equation*}
        \Pr[\textnormal{Pair}_{j} = (x_0,x_1) \mid \mathfrak{R}_{j-1}] \leq \frac{2}{2^{n-R}} \enspace.
    \end{equation*}
\end{lemma}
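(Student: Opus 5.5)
For the first claim, I will use the structure of the challenges. After all $n-1$ rounds, the challenges $h_1,\hdots,h_{n-1}$ with $h_i = 0^{i-1}\|1\|r_i$ are linearly independent. Their span is an $(n-1)$-dimensional subspace that does not contain $e_n = 0^{n-1}\|1$: a combination $\sum a_i h_i$ with some $a_i\neq 0$ has its leading $1$ at the smallest such index $i\le n-1$, so it cannot equal $e_n$. Since $y_0(\tt)$ and $y_1(\tt)$ are the two distinct solutions of the affine system $\langle h_j,y\rangle = z_j$, their difference $d \coloneq y_0(\tt)+y_1(\tt)$ is the unique nonzero vector orthogonal to every $h_j$. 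The last bits differ exactly when $\langle e_n, d\rangle = 1$. If instead $\langle e_n,d\rangle = 0$, then $d$ would be orthogonal to the $n$-dimensional space $\textnormal{span}(h_1,\hdots,h_{n-1},e_n) = \{0,1\}^n$, which forces $d=0$, a contradiction. Hence $(y_0(\tt))_n \neq (y_1(\tt))_n$.

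For the second claim, I will condition on $\mathfrak{R}_{j-1}$ and on everything that happened before attempt $j$. The challenges $h=(h_{R+1},\hdots,h_{n-1})$ of attempt $j$ are sampled fresh and uniformly, independently of the past. Fix $y_0'=\pi_n(x_0)$ and $y_1'=\pi_n(x_1)$, both in $\cons{\tt}$. If $y_0'=y_1'$, or if they are not in the order $y_0'<_{\textnormal{lex}}y_1'$, then the pair can never occur as $\textnormal{Pair}_j$ and the probability is $0$. Otherwise, the event $\textnormal{Pair}_j=(x_0,x_1)$ requires that the final transcript have $\{y_0',y_1'\}$ as its consistent set, so every chosen $h_i$ must satisfy $\langle h_i, y_0'+y_1'\rangle = 0$. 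This is a necessary condition, whatever replies the adversary gives. Set $d' = y_0'+y_1' \neq 0$. By the first claim, $d'_n=1$, since the last bits must differ for the pair to be reachable, or else the probability is $0$. For each round $i$, $h_i = 0^{i-1}\|1\|r$ with $r$ uniform, and since the last coordinate of $r$ is uniform and $d'_n=1$, the quantity $\langle h_i,d'\rangle$ is a uniform bit. The challenges of different rounds are independent, so the probability that all $n-R-1$ constraints hold is $2^{-(n-R-1)} = 2/2^{n-R}$. Because the fresh challenges are independent of the quantum state, this bound holds regardless of the adversary's (state-dependent) answers, and hence after conditioning on $\mathfrak{R}_{j-1}$.

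The only subtle point is justifying that conditioning on $\mathfrak{R}_{j-1}$, which depends on the past measurements and repairs, does not bias the challenges of attempt $j$. I will argue this directly from Algorithm~\ref{algo:Rec}: the challenge $h$ is picked fresh after $\textnormal{ValEst}$, and the bound above holds pointwise for every fixed prior history.
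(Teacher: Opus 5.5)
Your proof is correct and follows the paper's argument. The second part matches the paper's reasoning exactly: a reachable pair forces $\langle h_i, \pi_n(x_0)+\pi_n(x_1)\rangle = 0$ in each of the $n-R-1$ rounds, and because the last bit of that difference is $1$, each constraint holds for exactly half of the fresh, independent challenges. For the first part, your basis argument (the vectors $h_1,\dots,h_{n-1},e_n$ span $\{0,1\}^n$) is only a repackaging of the paper's backward induction on coordinates, and your explicit remark that the challenge of attempt $j$ is sampled after $\mathfrak{R}_{j-1}$ is fixed usefully spells out the paper's one-line independence claim.
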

\begin{proof}
    Fix $\tt = \bigl( (h_1,\hdots,h_{n-1}),(z_1,\hdots,z_{n-1}) \bigr) \in \TT{n}$ and denote $y_0 \coloneq y_{0}(\tt)$, $y_1 \coloneq y_{1}(\tt)$. Assume, for a contradiction, that $(y_0)_{n} = (y_1)_{n}$. We prove by induction that $(y_0)_{i} = (y_1)_{i}$ for all $i$. The case where $i = n$ is taken care of by hypothesis. Assume that $(y_0)_{j} = (y_1)_{j}$ for $j \in \{i+1,\hdots,n\}$. Since $y_0,y_1 \in \cons{\tt}$, $h_i(y_0) = h_i(y_1)$. The challenge $h_i$ of round $i$ is of the form $h_i = 0^{i-1}||1||h'_i$ with $h'_i \in \{0,1\}^{n-i}$. Since the last $n-i$ bits of $y_0$ are $y_1$ are the same (by induction hypothesis), the condition $h_i(y_0) = h_i(y_1)$ implies that $(y_0)_{i} = (y_1)_{i}$, which concludes the induction step. However, by construction, $y_{0}(\tt) \neq y_{1}(\tt)$, and thus we conclude that $(y_{0}(\tt))_{n} \neq (y_{1}(\tt))_{n}$.

    For the second statement of the lemma, fix $\tt \in \TT{R+1}$ and fix $(x_0,x_1)$ such that \mbox{$\pi_{n}(x_0),\pi_{n}(x_1) \in \cons{\tt}$}. Assume that there exists $\tt' \in \TT{n}$ such that
    $\tt'$ is consistent with $\tt$
    and $(x_0,x_1) = \left(\pi_{n}^{-1}(y_{0}(\tt')), \pi_{n}^{-1}(y_{1}(\tt'))\right)$; otherwise $\Pr[\textnormal{Pair}_{j} = (x_0,x_1)] = 0$ and the lemma's statement is satisfied. Denote $y_0 \coloneq y_{0}(\tt')$ and $y_1 \coloneq y_{1}(\tt')$. At each round $i \in \{R+1,\hdots,n-1\}$, the challenge $h_i = 0^{i-1}||1||h'_i$ must be such that $h_i(y_{0}) = h_i(y_{1})$ for $(x_0,x_1)$ to be the pair of preimages at the end. This corresponds to
    \begin{equation*}
        (y_0 + y_1)_{i} + \langle h'_i , (y_0 + y_1)_{i+1}^{n} \rangle = 0 \enspace.
    \end{equation*}
    As $(y_0 + y_1)_{i+1}^{n} \neq 0^{n-i}$ (see lemma's first statement), it is clear that only half of the $|\HH{i}|$ possible $h'_i$ can satisfy the above equality. Thus,
    \begin{equation*}
        | \{ h_i \in \HH{i} : h_i(y_0) = h_i(y_1) \} | = \frac{|\HH{i}|}{2} \enspace.
    \end{equation*}
    As the challenges are sampled uniformly and independently for all rounds in all attempts,
    \begin{equation*}
        \Pr[h_i(y_0) = h_i(y_1) \text{ for } i \in \{R+1,\hdots,n-1\} \mid \mathfrak{R}_{j-1}] = \prod_{i=R+1}^{n-1} \frac{\frac{|\HH{i}|}{2}}{|\HH{i}|} = \frac{2}{2^{n-R}} \enspace.
    \end{equation*}
    Consequently,
    \begin{equation*}
        \Pr[\textnormal{Pair}_{j} = (x_0,x_1) \mid \mathfrak{R}_{j-1}] \leq \frac{2}{2^{n-R}} \enspace.
    \end{equation*}
    \qed
\end{proof}
We recall the statement of Lemma~\ref{lem:ExpectedValueDistinctR} and give its formal proof below.
\begin{restatelemma}{lem:ExpectedValueDistinctR}
         For every transcript $\tt \in \TT{R+1}$, let $\mathfrak{R}^{\tt} \gets \textnormal{Rec}^{\varepsilon,\delta,T,N}_{\qstraightline^{\sdv}_{R}(\tt,\cdot)}(\ket{\psi^{\tt}}\!\bra{\psi^{\tt}})$, then
    \begin{equation*}
        \mathbb{E}[|\textnormal{Dist}(\mathfrak{R}^{\tt})|] \geq \left(1 - \frac{2}{2^{n-R}}\right)^{N-1}  \mathbb{E}[|\mathfrak{R}^{\tt}|] \enspace.
    \end{equation*}     
\end{restatelemma}
\begin{proof}
    Denote by $S_j$ the indicator random variable corresponding to a success in attempt $j$ and by $\textnormal{NewPair}_{j}$ the indicator random variable corresponding to attempt $j$ unveiling a new preimage pair. Moreover, let $\mathfrak{R}_{j}^{\tt}$ denote the list of recordings after attempt $j$ in the recording algorithm. Consider the following probability, which corresponds to the probability that attempt $j$ unveils a new preimage pair.
    \begin{align*}
        \Pr[\textnormal{NewPair}_{j} = 1 \mid \mathfrak{R}_{j-1}^{\tt}] &= \Pr[S_j = 1 \text{ and } \textnormal{Pair}_{j} \notin \mathfrak{R}_{j-1}^{\tt} \mid \mathfrak{R}_{j-1}^{\tt}]\\
        &= \Pr[S_j = 1 \mid \mathfrak{R}_{j-1}^{\tt}] - \hspace{-0.3cm}\sum_{(x_0,x_1) \in \mathfrak{R}_{j-1}^{\tt}} \Pr[S_j = 1 \text{ and } \textnormal{Pair}_{j}  = (x_0,x_1) \mid \mathfrak{R}_{j-1}^{\tt}]\\
         &\geq \Pr[S_j = 1 \mid \mathfrak{R}_{j-1}^{\tt}] - \hspace{-0.3cm}\sum_{(x_0,x_1) \in \mathfrak{R}_{j-1}^{\tt}} \Pr[\textnormal{Pair}_{j}  = (x_0,x_1) \mid \mathfrak{R}_{j-1}^{\tt}]\\
         &\geq \Pr[S_j = 1 \mid \mathfrak{R}_{j-1}^{\tt}] - \frac{2}{2^{n-R}} |\mathfrak{R}_{j-1}^{\tt}| \enspace,
    \end{align*}
    where we've used the result of Lemma~\ref{lem:app:PairsProperties} for the last inequality. Thus, by taking the total expectation,
    \begin{align*}
      \mathbb{E}[\textnormal{NewPair}_{j}] \geq \Pr[S_j =1] - \frac{2}{2^{n-R}} \mathbb{E}[|\mathfrak{R}_{j-1}^{\tt}|] \enspace.
    \end{align*}
    Since $|\textnormal{Dist}(\mathfrak{R}_{j}^{\tt})| = |\textnormal{Dist}(\mathfrak{R}_{j-1}^{\tt})| + \textnormal{NewPair}_{j}$, we have
    \begin{align*}
        \mathbb{E}[|\textnormal{Dist}(\mathfrak{R}_{j}^{\tt})|] &= \mathbb{E}[|\textnormal{Dist}(\mathfrak{R}_{j-1}^{\tt})|] + \mathbb{E}[\textnormal{NewPair}_{j}]\\
        &\geq \left( 1 - \frac{2}{2^{n-R}} \right) \mathbb{E}[|\textnormal{Dist}(\mathfrak{R}_{j-1}^{\tt})|] + \Pr[S_{j}=1] \enspace.
    \end{align*}
    By solving the recurrence,
    \begin{align*}
        \mathbb{E}[|\textnormal{Dist}(\mathfrak{R}_{N}^{\tt})|] &\geq \sum_{j=1}^{N} \left(1 - \frac{2}{2^{n-R}}\right)^{N-j} \Pr[S_j=1]\\
        &\geq \left(1 - \frac{2}{2^{n-R}}\right)^{N-1} \sum_{j=1}^{N} \Pr[S_j=1]\\
        &= \left(1 - \frac{2}{2^{n-R}}\right)^{N-1} \mathbb{E}[|\mathfrak{R}_{N}^{\tt}|] \enspace.
    \end{align*}
    \qed
\end{proof}
We recall the statement of Lemma~\ref{lem:SizeOfXGivenDistR} and give its formal proof below.
\begin{restatelemma}{lem:SizeOfXGivenDistR}
        For every transcript $\tt \in \TT{R+1}$, let $\mathfrak{R}^{\tt} \gets \textnormal{Rec}^{\varepsilon,\delta,T,N}_{\qstraightline^{\sdv}_{R}(\tt,\cdot)}(\ket{\psi^{\tt}}\!\bra{\psi^{\tt}})$.
    Moreover, let $\mathcal{X}^{\tt}$ be the set of preimages produced by $\repeatqstraightline^{\sdv}_{\epsilon,\delta,T,R,N}(\tt,\ket{\psi^{\tt}}\!\bra{\psi^{\tt}})$ when the list of recordings produced by the recording algorithm is $\mathfrak{R}^{\tt}$. Then,
    \begin{equation*}
        |\mathcal{X}^{\tt}| \geq \frac{2|\textnormal{Dist}(\mathfrak{R}^{\tt})|}{\sqrt{N}} \enspace.
    \end{equation*}
\end{restatelemma}
\begin{proof}
    For $(x_0,x_1) \in \textnormal{Dist}(\mathfrak{R}^{\tt})$, by Lemma~\ref{lem:app:PairsProperties}, the last bits of $\pi_{n}(x_0)$ and $\pi_{n}(x_1)$ differ. Therefore, we partition $\mathcal{X}^{\tt}$ according to that property:
    \begin{align*}
        \mathcal{X}_{0}^{\tt} &\coloneq \{ x \in \mathcal{X}^{\tt} : (\pi_{n}(x))_{n} = 0 \} \enspace,\\
        \mathcal{X}_{1}^{\tt} &\coloneq \{ x \in \mathcal{X}^{\tt} : (\pi_{n}(x))_{n} = 1 \} \enspace.
    \end{align*}
    Thus, each pair $(x_0,x_1) \in \textnormal{Dist}(\mathfrak{R}^{\tt})$ have an element in each set. Hence, using the AM--GM inequality,
    \begin{align*}
        |\textnormal{Dist}(\mathfrak{R}^{\tt})| &\leq |\mathcal{X}_0^{\tt}| \cdot |\mathcal{X}_1^{\tt}|\\
        &\leq \frac{(|\mathcal{X}_0^{\tt}| + |\mathcal{X}_1^{\tt}|)^2}{4}\\
        &= \frac{|\mathcal{X}^{\tt}|^2}{4} \enspace.
    \end{align*}
    Equivalently,
    \begin{equation*}
        |\mathcal{X}^{\tt}| \geq 2 \sqrt{|\textnormal{Dist}(\mathfrak{R}^{\tt})| } \enspace.
    \end{equation*}
    Since $|\textnormal{Dist}(\mathfrak{R}^{\tt})| \leq N$, because each attempt adds at most one recording, we have
    \begin{equation*}
        \sqrt{|\textnormal{Dist}(\mathfrak{R}^{\tt})| } \geq \frac{|\textnormal{Dist}(\mathfrak{R}^{\tt})| }{\sqrt{N}} \enspace,
    \end{equation*}
    and therefore,
    \begin{equation*}
        |\mathcal{X}^{\tt}| \geq \frac{2|\textnormal{Dist}(\mathfrak{R}^{\tt})|}{\sqrt{N}} \enspace.
    \end{equation*}
    \qed
\end{proof}

\section{Success Probability and Running Time of $\qinvpi^{\sdv}$} \label{app:AnalysisqInv}

We restate Theorem~\ref{thm:qInvProbabilitySucess} and give its formal proof below.
\begin{restatetheorem}{thm:qInvProbabilitySucess}
    For $\sigma_{\textnormal{aux}}$,
    $\sdv$'s auxiliary state, and an appropriate choice of parameters,
    \begin{equation*}
        \Pr\limits_{y \in_{R} \{0,1\}^{n}}[ \mathsf{qInv}_{\pi}^{\sdv}(y,\sigma_{\textnormal{aux}}) = \pi^{-1}_{n}(y)] \geq \frac{49}{2^{21}n^{2\kappa}} \enspace.
    \end{equation*}
    Moreover, $\qinvpi^{\sdv}$ has circuit complexity 
    \begin{equation*}
        O\!\left( \sizeC{\sdv_{n}} \, n^{10\kappa + 2} \lg(n)  \right) \enspace,
    \end{equation*}
    where $\sizeC{\sdv_{n}}$ is the circuit complexity of adversary $\sdv$ with security parameter~$n$.
\end{restatetheorem}
\begin{proof}
    Recall that $\kappa$ is a known integer such that $\epsilon_{\sdv}(n) = \Pr[\pibind=1] \geq \frac{1}{n^{\kappa}}$.
    For the parameters in $\qinv$, we choose
    \begin{align*}
        R &\coloneq n - \ceil{2 \kappa \lg(n) + 21} \enspace, \\
        N &\coloneq \frac{2^{n-R}}{4} \enspace,\\
        \varepsilon &\coloneq \frac{1}{8 N n^{\kappa}} \enspace, \\
        \delta &\coloneq \frac{1}{16^2 N^2 n^{2\kappa}} \enspace, \\
        T &\coloneq 16 N^2 n^{\kappa} \enspace.
    \end{align*}
    By Lemma~\ref{lem:ProbabilityIdealqInv}, 
    \begin{equation*}
        \Pr[\textnormal{Ideal-}\mathsf{qInv}_{\pi}^{\sdv} = 1] \geq \frac{2 \alpha^{N-1} \sqrt{N}}{2^{n-R}} \left( \epsilon_{\sdv} - \mu \right) \enspace,
    \end{equation*}
    where
    \begin{equation*}
        \mu = N \left( \varepsilon + \frac{2^{n-R}}{2} \left( \delta + \frac{1}{T}\right) + 2 \sqrt{\delta} \right) \quad \text{ and }\quad \alpha = \left( 1 - \frac{2}{2^{n-R}}\right) \enspace.
    \end{equation*}
    By our choice of parameters,
    \begin{align*}
        &N \varepsilon = \frac{1}{8 n^{\kappa}} \enspace,\\
        &N \frac{2^{n-R}}{2} \delta = 2N^2 \delta = \frac{2}{16^2 n^{2 \kappa}} \leq \frac{1}{8 n^{\kappa}} \enspace,\\
        &N \frac{2^{n-R}}{2} \frac{1}{T} = \frac{2N^2}{T} = \frac{2}{16n^{\kappa}} = \frac{1}{8 n^{\kappa}} \enspace,\\
        &2 N \sqrt{\delta} = \frac{1}{8n^{\kappa}} \enspace.
    \end{align*}
    Thus,
    \begin{align*}
        \epsilon_{\sdv} - \mu \geq \frac{1}{2n^{\kappa}} \enspace.
    \end{align*}
    On the other hand,
    \begin{align*}
        \frac{2\sqrt{N}}{2^{n-R}} = \frac{1}{\sqrt{2^{n-R}}} \enspace,
    \end{align*}
    and
    \begin{align*}
        \alpha^{N-1} = \left( 1 - \frac{2}{2^{n-R}}\right)^{\frac{2^{n-R}}{4}-1} \geq 1 - \frac{2}{2^{n-R}} \cdot \left(\frac{2^{n-R}}{4}-1\right) \geq \frac{1}{2} \enspace.
    \end{align*}
    Hence,
    \begin{equation*}
        \Pr[\textnormal{Ideal-}\mathsf{qInv}_{\pi}^{\sdv} = 1] \geq \frac{1}{4 \sqrt{2^{n-R}}n^{\kappa}} \enspace.
    \end{equation*}
    By Corollary~\ref{cor:ProbSuccessqInvVSIdealqInv}, 
    \begin{equation*}
        \left| \sqrt{\Pr\limits_{y \in_{R} \{0,1\}^{n}}[ \mathsf{qInv}_{\pi}^{\sdv}(y,\sigma_{\textnormal{aux}}) = \pi^{-1}_{n}(y)]} - \sqrt{\Pr[\textnormal{Ideal-}\mathsf{qInv}_{\pi}^{\sdv} = 1]} \right| \leq \frac{9}{\sqrt{2^{n-R}}} \enspace.
    \end{equation*}
    As we chose $R = n - \ceil{2 \kappa \lg(n) + 21}$,
    \begin{equation*}
        2^{21} n^{2\kappa} \leq 2^{n-R} < 2^{22} n^{2\kappa} \enspace.
    \end{equation*}
    Thus,
    \begin{equation*}
        \sqrt{\Pr[\textnormal{Ideal-}\mathsf{qInv}_{\pi}^{\sdv} = 1]} \geq \frac{1}{2 n^{\kappa} \sqrt[4]{2^{22}}} \quad\text{ and } \quad\frac{9}{\sqrt{2^{n-R}}} \leq \frac{9}{n^{\kappa} \sqrt{2^{21}}} \enspace.
    \end{equation*}
    Therefore,
    \begin{align*}
        \Pr\limits_{y \in_{R} \{0,1\}^{n}}[ \mathsf{qInv}_{\pi}^{\sdv}(y,\sigma_{\textnormal{aux}}) = \pi^{-1}_{n}(y)] \geq \left(\frac{1}{2 n^{\kappa} \sqrt[4]{2^{22}}} - \frac{9}{n^{\kappa} \sqrt{2^{21}}}   \right)^2 = \frac{49}{2^{21} n^{2\kappa}} \enspace.
    \end{align*}
    For the circuit complexity calculation, $\qinv$ (see Algorithm~\ref{algo:qInv}) has 3 steps: $\qsearcher$, $\repeatqstraightline$ and the verification step that checks if the output of $\repeatqstraightline$ contains $\pi_{n}^{-1}(y)$.

    \noindent \textbf{Circuit complexity of $\qsearcher$.} For each round $1$ to $R$, $\qsearcher$ (see Algorithm~\ref{algo:qSearcher}) applies $\textnormal{WatRefl}^{\Gamma^{i}}_{\Delta^{i,y,\tt}}$, applies $U^{i,\tt}$, measures registers $\reg{H}_{i}\reg{R}_{i}$ in the computational basis and then updates the transcript, thus
    \begin{equation*}
        \sizeC{\qsearcher} \in \sum_{i=1}^{R} \sizeC{\textnormal{WatRefl}^{\Gamma^{i}}_{\Delta^{i,y,\tt}}} + \sizeC{U^{i,\tt}} + O(n) \enspace.
    \end{equation*}
    To implement the $\textnormal{WatRefl}^{\Gamma^{i}}_{\Delta^{i,y,\tt}}$ procedure, one needs to implement the projective measurement $\Delta^{i,y,\tt}$ and  the unitary
    $(2 \Gamma^{i} - \idt)$:
    \begin{equation*}
        \sizeC{\textnormal{WatRefl}^{\Gamma^{i}}_{\Delta^{i,y,\tt}}} = \sizeC{\Delta^{i,y,\tt}} + \sizeC{2 \Gamma^{i} - \idt} \enspace.
    \end{equation*}
    To implement $\Delta^{i,y,\tt}$, one needs to apply $U^{i,\tt}$, compute coherently $h(y)$, measure
    and then apply $(U^{i,\tt})^{*}$. On the other hand, $(2 \Gamma^{i} - \idt)$ is a simple reflection around the $\ket{0}_{\reg{H}_{i}}$ state. Thus,
    \begin{equation*}
        \sizeC{\Delta^{i,y,\tt}} \in 2 \cdot \sizeC{U^{i,\tt}} + O(n) \quad \text{ and } \quad\sizeC{2 \Gamma^{i} - \idt} \in O(n) \enspace.
    \end{equation*}
    Finally, the unitary $U^{i,\tt}$ applies a Hadamard tower on $n-i$ qubits and then applies $\sdv_{i}^{\tt}$,
    \begin{equation*}
        \sizeC{U^{i,\tt}} = n - i + \sizeC{\sdv_{i}} \enspace.
    \end{equation*}
    Thus,
    \begin{equation*}
        \sizeC{\qsearcher} \in R \cdot O(n) + 3 \sum_{i=1}^{R} \sizeC{\sdv_{i}} \in O(\sizeC{\sdv_n}) + O(n^2) \enspace.
    \end{equation*}
    
    \noindent \textbf{Circuit complexity of $\repeatqstraightline$.} $\repeatqstraightline$ (see Algorithm~\ref{algo:RepeatqStraightLine}) has 2 meaningful steps : $\textnormal{Rec}^{\varepsilon,\delta,T,N}_{\qstraightline^{\sdv}_{R}(\tt,\cdot)}$ and updating the output set $\mathcal{X}$. The list of recording $\mathfrak{R}$ is of size at most $O(N)$ since the recording algorithms attempts $N$ times to produce recordings. Hence, updating the output set $\mathcal{X}$ takes at most $O(n \cdot N)$ gates. By Corollary~\ref{cor:RecordingsqStraightLine},
    \begin{equation*}
        \sizeC{\textnormal{Rec}^{\varepsilon,\delta,T,N}_{\qstraightline^{\sdv}_{R}(\tt,\cdot)}} \in O\!\left( N \cdot T \cdot \sizeC{\qstraightline^{\sdv}_{R}(\tt,\cdot)}\cdot\frac{1}{\varepsilon} \log \frac{1}{\delta} \right) \enspace.
    \end{equation*}
    To implement the $\qstraightline$ experiment (see Experiment~\ref{exp:qStraightLineEquivalentFormulation}), one needs to pick random challenges for rounds $R+1$ to $n-1$, apply unitary $U_{\textnormal{SL},h}^{\tt}$, measure registers $\reg{R}_{R+1:n-1}\reg{X}_{0}\reg{X}_{1}$ and compute the permutation~$\pi_{n}$. Thus,
    \begin{equation*}
        \sizeC{\qstraightline^{\sdv}_{R}(\tt,\cdot)} \in O(n^2) + \sum_{i=R+1}^{n-1} \sizeC{\sdv_{i}} + \sizeC{\sdv_{\textnormal{out}}} + O(n) + 1 \in O(\sizeC{\sdv_n}) + O(n^2)  \enspace.
    \end{equation*}
    Hence,
    \begin{align*}
        \sizeC{\repeatqstraightline} &\in O\!\left( N \cdot T \cdot \left( O(\sizeC{\sdv_n}) + O(n^2) \right)\cdot\frac{1}{\varepsilon} \log\frac{1}{\delta} \right) + O(n^{2\kappa+1}) \\
        &\in O\!\left( \sizeC{\sdv_n}\cdot n^{10\kappa+2} \lg(n) \right) \enspace.
    \end{align*}
    Finally, Step~\ref{step:qInvStep3} of $\qinv$ takes at most $O(N)$ gates since $\mathcal{X}$ is of size at most $N$ and computing $\pi_{n}$ is assumed to have cost 1. Thus,
    \begin{equation*}
        \sizeC{\textnormal{step~\ref*{step:qInvStep3}}} \in O(N) \in O(n^{2\kappa}) \enspace.
    \end{equation*}
    In total,
    \begin{equation*}
        \sizeC{\qinv} = \sizeC{\qsearcher} + \sizeC{\repeatqstraightline} + \sizeC{\textnormal{step~\ref*{step:qInvStep3}}} \in O\!\left( \sizeC{\sdv_n}  \cdot n^{10\kappa+2}  \lg(n) \right) \enspace.
    \end{equation*}
    \qed
\end{proof}

\section{Analysis of the Soundness of $\prothc{\protpi}$} \label{app:ZKA}


In this section, we formally prove Lemma~\ref{lem:HC}.

\begin{restatelemma}{lem:HC}
   Let $\textsf{V}$ be the honest verifier of $\prothc{\protpi}$ and let $\protpi$ be a publicly verifiable post-quantum computational open-once binding bit commitment scheme that has unique openings. Let QPT (potentially malicious) prover $\pdv$ and graph $G\notin\textsc{hamcycle}$ be such that 
    \[
        \Pr\Big[1\gets\left[\pdv(1^\params,G)\rightleftarrows\textsf{V}(1^\params,G))\right]\Big]=1-s(\params)\enspace, 
    \]
    then the adversary $\sdv_{\pdv}$ satisfies
     \[
        \Pr\left[\bot\gets \sdv_{\pdv}(1^\params)\right]\leq 2\sqrt{2s(\params)}+4s(\params)+\negl[\params]\enspace.
     \]
\end{restatelemma}
\begin{proof}
Let us start by defining a few states. The state $\rho$ is the mixture of possible states for all different transcripts after the commitment stage of $\prothc{\protpi}$ between $\pdv$ and $\textsf{V}$. It can be written as
\[
    \rho\coloneq \sum_{\tt}\Pr\!\Big[\tt\,\Big|\,\left[\pdvcom(G)\rightleftarrows\textsf{V}(G)\right]\Big]\proj{\tt}_{\reg{T}}\otimes\rho_\tt\enspace,
\]
where $\Pr\!\left[\,\tt\mid\left[\pdvcom(G)\rightleftarrows\textsf{V}(G)\right]\right]$ is the probability that $\tt$ is the transcript after the commitment phase of an execution of $\prothc{\protpi}$ between $\pdv$ and $\textsf{V}$ on input $G$ and $\rho_t$ is $\pdv$'s resulting state after the commit phase.
For a transcript $\tt$, let $\rho_{\tt,0}\coloneq \pdv_{0,\tt}\,\rho_\tt \left(\pdv_{0,\tt}\right)^*$ be the state $\pdv$ uses to reply to challenge $ch=0$, and let 
\[
    \rho_0\coloneq \pdvch{0}\,\rho \left(\pdvch{0}\right)^* = \sum_\tt\Pr\!\Big[\tt\,\Big|\,\left[\pdvcom(G)\rightleftarrows\textsf{V}(G)\right]\Big]\proj{\tt}_{\reg{T}}\otimes\rho_{\tt,0}\enspace.
\]
Define $\sigma$ as the post-measurement states after measuring $\rho_0$ with $\textsf{Ver-COM}$,
\begin{align}
    \sigma&\coloneq \sum_\tt\Pr\!\Big[\tt\,\Big|\,\left[\pdvcom(G)\rightleftarrows\textsf{V}(G)\right]\Big]\proj{\tt}\otimes\Big(\Gamma_\tt\,\rho_{\tt,0}\,\Gamma_\tt + \left(\idt-\Gamma_\tt\right)\rho_{\tt,0} \left(\idt-\Gamma_\tt\right)\Big)\notag\\
    &= \Gamma\rho_0\Gamma + \left(\idt-\Gamma\right)\rho_0\left(\idt-\Gamma\right)\enspace.\notag
\end{align}
where $\Gamma_\tt\coloneq \bigotimes_{(i,j)\in\bI}\Gamma_{\tt,i,j}$ (i.e.~the projector on valid openings for the entire adjacency matrix given transcript $\tt$). Penultimately, let $\tau$ be the resulting state after measuring register $\reg{A}$ of $\sigma$ in the computational basis if $\reg{T}\reg{A}$ holds a valid commitment,
\begin{align}
    \tau&\coloneq \sum_\tt\Pr\Big[\tt\,\Big|\,\left[\pdvcom(G)\rightleftarrows\textsf{V}(G)\right]\Big]\proj{\tt}_{\reg{T}}\otimes\left(\sum_{a\in \text{B}(\reg{A})}\left(\idt\otimes\Lambda_a\right)\Gamma_\tt\,\rho_{t,0}\,\Gamma_\tt\left(\idt\otimes\Lambda_a\right) + \left(\idt-\Gamma_\tt\right)\rho_{t,0} \left(\idt-\Gamma_\tt\right)\right)\notag\\
    &=\sum_{a\in\text{B}(\reg{A})}\left(\idt\otimes\Lambda_a\right)\Gamma\rho_0\Gamma\left(\idt\otimes\Lambda_a\right)+\left(\idt-\Gamma\right)\rho_0\left(\idt-\Gamma\right)\enspace.\notag
\end{align}
where $\text{B}(\reg{A})$ is the computational basis of $\reg{A}$ and $\Lambda_a=\proj{a}_{\reg{A}}$ for all $a\in \text{B}(\reg{A})$. Finally, let $\omega$ be the post-measurement states after measuring $\rho_0$ with $\textsf{Ver-CH}_0$,
\[
    \omega\coloneq \Pi_0\tau\Pi_0 + \left(\idt-\Pi_0\right)\tau\left(\idt-\Pi_0\right)\enspace.
\]
Now, note that the adjacency matrix is committed using the $|\bI|$-fold parallel composition of $\protpi$. By Lemma~\ref{lem:ClassBindingUniqueOpeningsImpliesCollapseBinding}, since $\protpi$ is open-once binding and has unique openings, it is collapse-binding. It follows that the $|\bI|$-fold parallel composition of $\protpi$ is also collapse-binding because Lemma~\ref{lem:compositioncollapse} established that the parallel composition of a collapse-binding scheme is also collapse-binding. By definition of collapse-binding, $\tau$ and $\sigma$ are computationally indistinguishable. Indeed, since the commitment scheme has unique openings, then at this stage measuring only register $\reg{M}$ or measuring the whole register $\reg{A}$ is equivalent. Therefore, for all efficiently implementable projective measurements $\Pi$, $\left|\tr\left(\Pi\sigma\right)-\tr\left(\Pi\tau\right)\right|\leq\mu(n)$, where $\mu(n)$ is negligible. Now, since $\Pr\left[1\gets\left[\pdv(G)\rightleftarrows\textsf{V}(G)\right]\right]=1-s(\params)$, we have that 
\begin{align}
    1-s(\params)&=\tr\left(\left(\sum_{ch\in\{0,1\}} \proj{ch}_{\reg{C}}\otimes\Pi_{ch}\right)\sum_{ch\in\{0,1\}}\frac{1}{2}\proj{ch}_{\reg{C}}\otimes \pdvch{ch}\,\rho \left(\pdvch{ch}\right)^* \right)\notag\\
    &=\frac{1}{2}\tr\left(\Pi_{1}\pdvch{1}\,\rho \left(\pdvch{1}\right)^*\right) + \frac{1}{2}\tr\left(\Pi_{0}\pdvch{0}\,\rho \left(\pdvch{0}\right)^*\right)\notag\\
    &\leq \frac{1}{2}\tr\left(\Pi_{1}\pdvch{1}\,\rho \left(\pdvch{1}\right)^*\right) + \frac{1}{2}\enspace,\notag
\end{align}
which implies
\begin{align}
     1-2s(\params)&\leq \tr\left(\Pi_{1}\pdvch{1}\,\rho \left(\pdvch{1}\right)^*\right)\label{eq:HC2}\\
     &=\tr\left(\Pi_1\pdvch{1}\left(\pdvch{0}\right)^* \pdvch{0}\,\rho \left(\pdvch{0}\right)^* \pdvch{0}\left(\pdvch{1}\right)^*\right)\notag\\
    &= \tr\left(\Pi_1\pdvch{1}\left(\pdvch{0}\right)^* \rho_0 \,\pdvch{0}\left(\pdvch{1}\right)^*\right)\notag\\
    &\leq \tr\left(\Pi_1\pdvch{1}\left(\pdvch{0}\right)^* \sigma \,\pdvch{0}\left(\pdvch{1}\right)^*\right) + \TD\left(\rho_0,\sigma\right)\label{eq:td1}\\
    &\leq \tr\left(\Pi_1\pdvch{1}\left(\pdvch{0}\right)^* \tau \,\pdvch{0}\left(\pdvch{1}\right)^*\right) + \TD\left(\rho_0,\sigma\right) + \mu(n)\label{eq:HC1}\\
    &\leq \tr\left(\Pi_1\pdvch{1}\left(\pdvch{0}\right)^* \omega \, \pdvch{0}\left(\pdvch{1}\right)^*\right) + \TD\left(\rho_0,\sigma\right) + \mu(n)+\TD\left(\tau,\omega\right)\label{eq:td2}\\
    &= \tr\left(\Pi_1\pdvch{1}\left(\pdvch{0}\right)^* \Pi_0\tau\Pi_0 \pdvch{0}\left(\pdvch{1}\right)^*\right) + \tr\left(\Pi_1\pdvch{1}\left(\pdvch{0}\right)^*\left(\idt-\Pi_0\right)\tau\left(\idt-\Pi_0\right) \pdvch{0}\left(\pdvch{1}\right)^*\right) \notag\\
    &\hspace{1cm}+ \TD\left(\rho_0,\sigma\right) + \mu(n)+\TD\left(\tau,\omega\right)\enspace.\notag
\end{align}
In the above, \eqref{eq:HC1} follows from collapse-binding. Furthermore, $\TD(\rho,\rho')$ is the \emph{trace distance} between $\rho$ and $\rho'$ and is defined  $\TD(\rho,\rho')\coloneq \frac{1}{2}\tr\left(\sqrt{(\rho-\rho')^2}\right)$. Ineq.~\eqref{eq:td1} and~\eqref{eq:td2} follow from the fact that for any measurement and pair of states $\rho$, $\rho'$, the statistical distance between the distribution of the classical outcome when measuring $\rho$ and the distribution of the classical outcome when measuring $\rho'$ is bounded by $\TD(\rho,\rho')$. We now have the inequality
\begin{align}
     &\tr\left(\Pi_1\pdvch{1}\left(\pdvch{0}\right)^* \Pi_0\tau\Pi_0 \pdvch{0}\left(\pdvch{1}\right)^*\right)\notag\\
     &\geq 1-2s(\params) -    \tr\left(\Pi_1\pdvch{1}\left(\pdvch{0}\right)^*\left(\idt-\Pi_0\right)\tau\left(\idt-\Pi_0\right) \pdvch{0}\left(\pdvch{1}\right)^*\right) - \TD\left(\rho_0,\sigma\right) -\mu(n)-\TD\left(\tau,\omega\right)\enspace.\notag
\end{align}
Note that the left hand side is the probability that $\sdv$ finishes without aborting (i.e.~the probability we are trying to lower bound). Claim~\ref{claim:HC1}, Claim~\ref{claim:HC2} and Claim~\ref{claim:HC3} each give a bound for one of the term on the right hand side. The result is
\[
    \tr\left(\Pi_1\pdvch{1}\left(\pdvch{0}\right)^* \Pi_0\tau\Pi_0 \pdvch{0}\left(\pdvch{1}\right)^*\right)\geq 1-4s(\params)-2\sqrt{2s(\params)}-\mu(n)\enspace.
\]
\qed
\end{proof}
The following lemma is used in the proofs of Claim~\ref{claim:HC1} and Claim~\ref{claim:HC2}.  
\begin{lemma}[``Almost As Good As New Lemma" \cite{A05,ANTV98}]\label{lem:HC2}
    Let $M=\left(\Pi,\idt-\Pi\right)$ be a two-outcome projective measurement on some register $\reg{H}$, and let $\rho\in \Dst{\reg{H}}$ such that $\tr\left(\Pi\rho\right)\geq 1-\delta$ for some $\delta\in[0,1]$. Define $\sigma\coloneq \Pi\rho\Pi+\left(\idt-\Pi\right)\rho\left(\idt-\Pi\right)$ then
    \[
        \TD\left(\rho,\sigma\right)\leq \sqrt{\delta}\enspace.
    \]
\end{lemma}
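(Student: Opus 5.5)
We will prove the ``Almost As Good As New'' bound by reducing to pure states and computing the trace distance between a pure state and its dephased version explicitly. First, we purify $\rho$ as $\ket{\phi}\in\reg{H}\otimes\reg{E}$ and note that both $\rho\mapsto\rho$ and $\rho\mapsto\Pi\rho\Pi+(\idt-\Pi)\rho(\idt-\Pi)$ commute with tracing out $\reg{E}$ when $\Pi$ is extended as $\Pi\otimes\idt_{\reg{E}}$. Since the trace distance is contractive under partial trace, it then suffices to prove the bound for the pure state $\proj{\phi}$ and its pinched version. Here $\tr((\Pi\otimes\idt)\proj{\phi})=\tr(\Pi\rho)\geq 1-\delta$ still holds.

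For the pure case, we write $\ket{\phi}=\sqrt{p}\ket{a}+\sqrt{1-p}\ket{b}$ with $p\geq1-\delta$, where $\ket{a}$ and $\ket{b}$ are the normalized projections onto the image of $\Pi$ and its complement (phases absorbed). Then
\[
\proj{\phi}-\sigma=\sqrt{p(1-p)}\,\bigl(\ket{a}\!\bra{b}+\ket{b}\!\bra{a}\bigr),
\]
a rank-two Hermitian operator with eigenvalues $\pm\sqrt{p(1-p)}$. The trace distance is therefore $\sqrt{p(1-p)}\leq\sqrt{1-p}\leq\sqrt{\delta}$. The degenerate cases $p\in\{0,1\}$ give distance $0$ and are immediate.

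The only step needing care is the reduction from mixed to pure states: we must check that pinching by $\Pi\otimes\idt$ followed by partial trace equals pinching $\rho$ by $\Pi$, which holds because the partial trace is linear and the projector acts only on $\reg{H}$. As an alternative route, one could avoid purification by decomposing $\rho=\sum_k\lambda_k\proj{\phi_k}$ and using joint convexity of the trace distance together with concavity of $\sqrt{\cdot}$ (Jensen), giving $\sum_k\lambda_k\sqrt{\delta_k}\leq\sqrt{\sum_k\lambda_k\delta_k}\leq\sqrt{\delta}$. Either approach closes the proof, and we expect no real obstacle.
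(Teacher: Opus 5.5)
Your proof is correct. The paper gives no proof of this lemma; it only cites \cite{A05,ANTV98}. Your argument is essentially the standard one behind that lemma: purify, use contractivity of $\TD$ under the partial trace, then observe that $\proj{\phi}$ minus its pinched version equals $\sqrt{p(1-p)}\,(\ket{a}\!\bra{b}+\ket{b}\!\bra{a})$, so that $\TD=\sqrt{p(1-p)}\leq\sqrt{\delta}$. You work directly with the non-selective pinched state $\sigma$, which is exactly the form invoked in Claims~\ref{claim:HC1} and~\ref{claim:HC2}, so no translation from a post-selected formulation is needed. Your alternative via joint convexity and Jensen is also valid.
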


 Let us now present the claims that were used to prove Lemma~\ref{lem:HC}.
\begin{cl}\label{claim:HC1}
    $\TD(\tau,\omega)\leq \sqrt{2s(\params)} \enspace.$
\end{cl}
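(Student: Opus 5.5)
The plan is to apply the ``Almost As Good As New'' Lemma~\ref{lem:HC2} directly, since $\omega$ is exactly the post-measurement state of $\tau$ under the two-outcome projective measurement $\textsf{Ver-CH}_0=(\Pi_0,\idt-\Pi_0)$. By Lemma~\ref{lem:HC2}, it then suffices to show
\[
    \tr\left(\Pi_0\tau\right)\geq 1-2s(\params)\enspace,
\]
which gives $\TD(\tau,\omega)\leq\sqrt{2s(\params)}$.

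First I would lower bound $\tr(\Pi_0\rho_0)$. The same computation used to obtain~\eqref{eq:HC2} also bounds the other challenge. Writing $1-s(\params)=\frac{1}{2}\tr(\Pi_0\rho_0)+\frac{1}{2}\tr(\Pi_1\pdvch{1}\rho(\pdvch{1})^*)$ and bounding the second term by $\frac12$ gives $\tr(\Pi_0\rho_0)\geq 1-2s(\params)$.

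Next I would transfer this bound from $\rho_0$ to $\tau$ exactly, with no computational loss. The key structural facts are that $\mathrm{Image}(\Pi_0)\subseteq\mathrm{Image}(\Gamma)$, so $\Pi_0\Gamma=\Gamma\Pi_0=\Pi_0$. It is also needed that $\Pi_0$ commutes with every $\idt\otimes\Lambda_a$ on the valid part. This holds because $\Pi_{0,\tt}$ is diagonal in the computational basis of $\reg{P}\reg{A}$: it is a sum of projectors onto classical valid replies $\ket{\sigma,a}$. Consequently $\Pi_0$ commutes with the computational-basis dephasing of $\reg{A}$, and
\[
    \tr(\Pi_0\tau)=\sum_a\tr\left(\Pi_0(\idt\otimes\Lambda_a)\Gamma\rho_0\Gamma(\idt\otimes\Lambda_a)\right)+\tr\left(\Pi_0(\idt-\Gamma)\rho_0(\idt-\Gamma)\right)=\tr(\Pi_0\Gamma\rho_0\Gamma)=\tr(\Pi_0\rho_0)\enspace.
\]
Here the last term vanishes because $\Pi_0(\idt-\Gamma)=0$, and the dephasing sum collapses since $\sum_a\Lambda_a\Pi_0\Lambda_a=\Pi_0$.

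The only step needing care is this commutation/diagonality of $\Pi_{0,\tt}$ with respect to the computational basis of $\reg{A}$ (and $\reg{P}$). I would verify it directly from the definition of $\textsf{Ver-CH}_0$ as a classical predicate on the measured reply. Once that holds, $\tr(\Pi_0\tau)=\tr(\Pi_0\rho_0)\geq1-2s(\params)$, and Lemma~\ref{lem:HC2} with $\delta=2s(\params)$ concludes the claim.
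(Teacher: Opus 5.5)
Your proposal is correct and follows essentially the same route as the paper: you apply Lemma~\ref{lem:HC2} with $\delta = 2s(\params)$ after showing $\tr(\Pi_0\tau)=\tr(\Pi_0\rho_0)\geq 1-2s(\params)$. That identity uses $\Pi_0\Gamma=\Gamma\Pi_0=\Pi_0$, the commutation of $\Pi_0$ with each $\idt\otimes\Lambda_a$, and the same averaging argument as in~\eqref{eq:HC2}. Your explicit justification of the commutation, namely that $\Pi_{0,\tt}$ is diagonal in the computational basis of $\reg{P}\reg{A}$ because it encodes a classical predicate on replies, supplies a step the paper only asserts.
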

\begin{proof}
    As explained when they were defined, $\text{Image}\left(\Pi_0\right)\subseteq\text{Image}\left(\Gamma\right)$, which implies $\Pi_0\Gamma=\Pi_0=\Gamma\Pi_0$, and $\Pi_0\left(\idt-\Gamma\right)=0$. Furthermore, note that $\idt\otimes\Lambda_a$ and $\Pi_0$ commute, for all $a\in \text{B}(\reg{A})$. 
    Now, let us start by bounding $\tr\left(\Pi_{0}\tau\right)$,
    \begin{align}
        \tr\left(\Pi_0\tau\right)&=\tr\left(\Pi_0\left(\sum_{a\in\text{B}(\reg{A})}\left(\idt\otimes\Lambda_a\right)\Gamma\rho_0\Gamma\left(\idt\otimes\Lambda_a\right) + \left(\idt-\Gamma\right)\rho_0\left(\idt-\Gamma\right)\right)\right)\notag\\
        &=\tr\left(\sum_{a\in\text{B}(\reg{A})}\left(\idt\otimes\Lambda_a\right)\Pi_0\rho_0\Pi_0\left(\idt\otimes\Lambda_a\right)\right)\notag\\
        &=\tr\left(\Pi_0\rho_0\Pi_0\right)\notag\\
        &\geq 1-2s(\params)\enspace,\label{eq:HC3}
    \end{align}
    where the last inequality is by the same argument as \eqref{eq:HC2}. We conclude by Lemma~\ref{lem:HC2} that 
    \[
        \TD\left(\tau,\omega\right)\leq \sqrt{2s(\params)}\enspace.
    \]    \qed
\end{proof}

\begin{cl}\label{claim:HC2}
    $\TD(\rho_0,\sigma)\leq \sqrt{2s(\params)} \enspace.$
\end{cl}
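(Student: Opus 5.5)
The plan is to bound $\TD(\rho_0,\sigma)$ with the ``Almost As Good As New'' Lemma~\ref{lem:HC2}, following the proof of Claim~\ref{claim:HC1}. By definition, $\sigma = \Gamma\rho_0\Gamma + (\idt-\Gamma)\rho_0(\idt-\Gamma)$ is the post-measurement state obtained by measuring $\rho_0$ with the two-outcome projective measurement $\textsf{Ver-COM}=(\Gamma,\idt-\Gamma)$. Lemma~\ref{lem:HC2} therefore applies directly with $\Pi=\Gamma$ and the state $\rho_0$. It then suffices to show that $\tr(\Gamma\rho_0)\geq 1-2s(\params)$, which gives $\TD(\rho_0,\sigma)\leq\sqrt{2s(\params)}$.

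To lower bound $\tr(\Gamma\rho_0)$, I would use the containment $\text{Image}(\Pi_0)\subseteq\text{Image}(\Gamma)$ noted when the measurements were defined. This containment holds because a valid reply to challenge $0$ contains a valid opening of the whole adjacency matrix, and it gives the operator inequality $\Pi_0\leq\Gamma$. Since $\rho_0\geq 0$, it follows that $\tr(\Gamma\rho_0)\geq\tr(\Pi_0\rho_0)=\tr\!\left(\Pi_0\pdvch{0}\,\rho\,(\pdvch{0})^*\right)$.

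The last quantity is at least $1-2s(\params)$, by the argument that gave \eqref{eq:HC2} with the roles of the two challenges exchanged. The prover's overall success probability is
\[
1-s(\params)=\tfrac12\tr\!\left(\Pi_0\pdvch{0}\rho(\pdvch{0})^*\right)+\tfrac12\tr\!\left(\Pi_1\pdvch{1}\rho(\pdvch{1})^*\right).
\]
Bounding the $ch=1$ term by $\tfrac12$ yields $\tr\!\left(\Pi_0\pdvch{0}\rho(\pdvch{0})^*\right)\geq 1-2s(\params)$, the same bound already used in \eqref{eq:HC3}.

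No step here is a real obstacle; the only point needing care is the justification of $\Pi_0\leq\Gamma$. Both projectors are block-diagonal in the transcript register $\reg{T}$. Within each block $\tt$, the projector $\Pi_{0,\tt}$ acts on $\reg{P}\reg{A}$ and accepts only pairs $(\sigma,a)$ in which every entry of $a$ is a valid opening with respect to $\tt_{i,j}$. Each such basis vector therefore lies in the image of $\Gamma_\tt=\bigotimes_{(i,j)}\Gamma_{\tt,i,j}$ (tensored with the identity on $\reg{P}$), which gives $\Pi_{0,\tt}\leq\Gamma_\tt$ blockwise and hence $\Pi_0\leq\Gamma$.
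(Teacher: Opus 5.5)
Your proposal is correct and is essentially the paper's own proof. The paper also combines $\mathrm{Image}(\Pi_0)\subseteq\mathrm{Image}(\Gamma)$ with $\tr(\Pi_0\rho_0)\geq 1-2s(\params)$ (the argument of \eqref{eq:HC2} with the challenges swapped) to get $\tr(\Gamma\rho_0)\geq 1-2s(\params)$, and then applies Lemma~\ref{lem:HC2}; your blockwise-in-$\reg{T}$ justification of $\Pi_0\leq\Gamma$ just spells out a containment the paper asserts without proof.
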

\begin{proof}
    Since Image$\left(\Pi_0\right)\subseteq$ Image$\left(\Gamma\right)$, and since $\tr\left(\Pi_0\rho_0\right)\geq 1-2s(\params)$ by the same argument as \eqref{eq:HC2}, it follows that
    \[
        \tr\left(\Gamma\rho_0\right)\geq 1-2s(\params)\enspace.
    \]
    The statement follows by Lemma~\ref{lem:HC2}. 
    \qed
\end{proof}

\begin{cl}\label{claim:HC3}
    $\tr\left(\Pi_1\pdvch{1}\left(\pdvch{0}\right)^*\left(\idt-\Pi_0\right)\tau\left(\idt-\Pi_0\right) \pdvch{0}\left(\pdvch{1}\right)^*\right)\leq 2s(\params) \enspace.$
\end{cl}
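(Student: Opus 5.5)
We need to bound $\tr\bigl(\Pi_1\pdvch{1}(\pdvch{0})^*(\idt-\Pi_0)\tau(\idt-\Pi_0)\pdvch{0}(\pdvch{1})^*\bigr)$ by $2s(\params)$. The plan is to throw away the operators on the left and keep only the weight of $\tau$ outside $\Pi_0$.

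First, since $\Pi_1$ is a projector, we have $\Pi_1\le\idt$. Conjugating by a unitary preserves the trace. So the quantity is at most
\[
\tr\bigl(\pdvch{1}(\pdvch{0})^*(\idt-\Pi_0)\tau(\idt-\Pi_0)\pdvch{0}(\pdvch{1})^*\bigr)=\tr\bigl((\idt-\Pi_0)\tau(\idt-\Pi_0)\bigr).
\]
By cyclicity of the trace and idempotence of $\idt-\Pi_0$, this equals $\tr\bigl((\idt-\Pi_0)\tau\bigr)=1-\tr(\Pi_0\tau)$. Here I use $\tr(\tau)=1$, which holds because $\tau$ is obtained from $\rho_0$ by a measurement channel.

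Second, I invoke the computation \eqref{eq:HC3} from the proof of Claim~\ref{claim:HC1}, which gives $\tr(\Pi_0\tau)=\tr(\Pi_0\rho_0\Pi_0)\ge 1-2s(\params)$. That computation rests on two facts: $\text{Image}(\Pi_0)\subseteq\text{Image}(\Gamma)$, and $\Pi_0$ commutes with each $\idt\otimes\Lambda_a$. The lower bound itself comes from the same argument as \eqref{eq:HC2}: the honest verifier picks $ch=0$ with probability $\nicefrac{1}{2}$, so $1-s\le \frac12\tr(\Pi_0\rho_0)+\frac12$. Combining the two steps gives $1-\tr(\Pi_0\tau)\le 2s(\params)$, which proves the claim.

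The only point needing care is the commutation of $\Pi_0$ with the computational-basis projectors $\Lambda_a$. This holds because $\Pi_{0,\tt}$ is diagonal in the computational basis of $\reg{P}\reg{A}$: it checks classical replies. The argument is already carried out for \eqref{eq:HC3}, so no real obstacle remains.
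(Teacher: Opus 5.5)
Your proof is correct and is essentially the paper's argument: bound $\Pi_1\le\idt$, use unitary invariance of the trace to reduce to $\tr\bigl((\idt-\Pi_0)\tau(\idt-\Pi_0)\bigr)=1-\tr(\Pi_0\tau)$, and conclude with \eqref{eq:HC3}. Your observation that $\Pi_{0,\tt}$ is diagonal in the computational basis of $\reg{P}\reg{A}$ also justifies the commutation of $\Pi_0$ with $\idt\otimes\Lambda_a$, which the paper asserts without comment.
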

\begin{proof}
    \begin{align}
        \tr\left(\Pi_1\pdvch{1}\left(\pdvch{0}\right)^*\left(\idt-\Pi_0\right)\tau\left(\idt-\Pi_0\right) \pdvch{0}\left(\pdvch{1}\right)^*\right)&\leq \tr\left(\left(\idt-\Pi_0\right)\tau\left(\idt-\Pi_0\right)\right)\notag\\
        &= 1-\tr\left(\Pi_0\tau\right)\notag\\
        &\leq 2s(\params)\enspace,\notag
    \end{align}
    where the last inequality follows from (\ref{eq:HC3}).
    \qed
\end{proof}

\section{Analysis of $\prothc{\protpi}$'s Zero-Knowledge Property} \label{app:ZKA:ZK}

In this section, we show that if the classical bit commitment scheme $\protpi=(\setup,\SNDC,\RCPC,\SNDO,\RCPO)$ is perfectly hiding, then $\prothc{\protpi}$ is post-quantum perfect zero-knowledge.
First, we describe, for any (possibly malicious) verifier~$\vdv$, any value of the security parameter~$n$ and any graph $G$ with Hamiltonian cycle $H$, the channel $\Phi^{\params}_{G,H}$ induced from the real interaction between $\vdv(1^\params,G)$ and the honest prover $\textsf{P}(1^\params,G,H)$ (see ~\eqref{eq:psiGH}). Then, we show the existence of a simulator $\textsf{S}_{\vdv}$ (depicted in Procedure~\ref{proc:simulator}), whose channel $\Psi^{\params}_{G}$ (described in Lemma~\ref{lem:simulator}) satisfies $\Phi^{\params}_{G,H} = \Psi^{\params}_{G}$ (see Theorem~\ref{theo:zkappendix}).

Throughout this section, we omit the setup phase $\textnormal{Public}_{\params}\coloneq\setup(1^{\params})$, which is performed at the start of every commitment.
We also omit $\textnormal{Public}_{\params}$ from the inputs to all algorithms, since it is known to all parties.
We denote the public communication during an execution of $[\SNDC(m) \rightleftarrows \RCPC]$ by $\tt$, rather than $\textnormal{Comm}$, and refer to it as the \emph{transcript} of the commitment.
The set of possible transcripts at the start of round $i$ of the committing phase is denoted by $\TT{i}^\params$, and the set of possible transcripts for the entire commitment is denoted by $\TT{}^\params$.
In the case that $\protpi$ corresponds to $\novy$ bit commitment scheme, the transcripts are defined as in Sec.~\ref{subsec:ModelingS}.
We assume that the commitment scheme consists of $\ell(\params)$ rounds of interaction, each consisting of the receiver sending a message followed by the sender sending a reply.
Throughout the remainder of this section, the security parameter $\params$ is fixed and omitted from the notation. In particular, we write $\TT{i}$, $\TT{}$ and $\ell$ for $\TT{i}^{\params}$, $\TT{}^{\params}$ and $\ell(\params)$, respectively.

\subsubsection{Additional notation used for graphs.}
We denote by $K_\sizeG$ the complete graph with $\sizeG$ vertices (i.e.~for all $v,w\in V(K_\sizeG)$ such that $v \neq w$, $vw\in E(K_\sizeG)$).
In this section, a permutation $\sigma\in S_{\sizeG}$ is given as $\sigma=(\sigma(1),\dots,\sigma(\eta))$. Note that in $\sigma=(\sigma(1),\dots,\sigma(\eta))$, $\sigma(i)\neq\sigma(j)$ for all $i\neq j\in\{1,\dots,\sizeG\}$, thus $(\sigma(1),\dots,\sigma(\eta))$ is also a Hamiltonian cycle in $K_{\sizeG}$. Any cycle in $K_{\sizeG}$ is a permutation in $S_{\sizeG}$ and vice-versa.
For a graph $G$, we denote its adjacency matrix by $A_{G}$ (i.e~the entry $A_G(i,j)$ equals 1 if and only if $ij\in E(G)$).

\subsubsection{Registers in $\prothc{\protpi}$.}
We denote by $\reg{V}$ the register holding the verifier's auxiliary input state. Thus, at the start of the protocol, $\rho \in \Dst{\reg{V}}$ is given as input to the verifier. In addition, we denote by $\reg{C}$ and $\reg{W}$ the registers for the verifier's challenge and workspace, respectively. These registers are initialised in states $\ket{0}_{\reg{C}}$ and $\ket{0}_{\reg{W}}$.
Moreover, we denote by $\reg{T}$ the register holding the transcript of the committing phase. It is initialised in state $\ket{0}_{\reg{T}}$ and is updated at each round of the committing phase. For each round~$i$, we denote by $\reg{H}_{i}$ the register holding the verifier's message and by $\reg{Z}_{i}$ the register holding the prover's reply. Both are subregisters of $\reg{T}$.
Finally, we denote by $\reg{S}$ the prover's private memory\footnote{Note that, although the honest prover is classical, but we model its actions with unitary operators and its memory with quantum registers.}, which is initalised in state~$\ket{0}_{\reg{S}}$.
Registers $\reg{P}$ and $\reg{A}$, together, hold the prover's reply during the response and verification phase. When $ch = 0$, the prover sends the permutation $\sigma$ in register $\reg{P}$ and the opening information of the entire adjacency matrix of $\sigma(G)$ in register $\reg{A}$. When $ch = 1$, the prover sends the permuted cycle in register $\reg{P}$ and the opening information for the entries of the adjacency matrix of $\sigma(G)$ corresponding to the edges of that cycle in register $\reg{A}$.

\subsubsection{Unitary operators.}

Let $\tt \in \TT{i}$ be the transcript at the beginning of round $i$ of an execution of the committing phase.
We denote by 
$V^{\textnormal{com}}_{\tt}\in \Unitary{\reg{V}\reg{W}\reg{H}_i}$
the unitary that the verifier applies to execute round~$i$, before measuring register $\reg{H}_{i}$ in the computational basis and sending the outcome $h \in \HH{i}$ to the prover. For convenience, we treat this communication as the transfer of the measured register $\reg{H}_{i}$ to the prover.
Then, when the prover commits to the adjacency matrix of graph $G'$, we denote by 
$P^{\text{com}}_{G',\tt}\in \Unitary{\reg{S}\reg{H}_{i}\reg{Z}_i}$
the unitary that the honest prover applies to execute round~$i$, before measuring register $\reg{Z}_{i}$ in the computational basis and sending the outcome $z \in \ZZ{i}$ to the verifier. During the challenge phase, given the transcript $\tt \in \TT{}$ for the whole committing phase, we denote by 
$V^{\text{chal}}_{\tt}\in \Unitary{\reg{V}\reg{W}\reg{C}}$
the unitary that the verifier applies to execute that portion of the protocol, before measuring register $\reg{C}$ in the computational basis to produce its challenge $ch \in \{0,1\}$. Lastly, after the honest prover committed to the adjacency matrix of $G'$, that resulted in transcript $\tt$, and after the verifier sent challenge $ch$, then for the cycle $H'$ in $G'$, the prover uses the unitary 
$P^{\text{open}}_{G',\tt,ch,H'}\in \Unitary{\reg{S}\reg{A}}$
to reply to the verifier's challenge. If $ch=0$, then it puts the opening information for the entire adjacency matrix of $G'$ in $\reg{A}$. If $ch=1$, then it only puts the opening information for the edges of cycle $H'$ in $\reg{A}$. For analysis purposes, we also define controlled-versions of theses unitaries, for every 
graph $G'$ and
every round $i \in \{1,\hdots,\ell\}$,
\begin{equation*}
    \mathtt{c}\textnormal{-}V^{\textnormal{com}}_{i}\coloneq\sum_{\tt\in\TT{i}}\proj{\tt}_{\reg{T}}\otimes V^{\textnormal{com}}_{\tt} \enspace,\enspace \mathtt{c}\textnormal{-}V^{\textnormal{chal}}\coloneq\sum_{\tt\in\TT{}}\proj{\tt}_{\reg{T}}\otimes V^{\textnormal{chal}}_{\tt} \enspace, \text{ and} \enspace\mathtt{c}\textnormal{-}P^{\textnormal{com}}_{G',i}\coloneq\sum_{\tt\in\TT{i}}\proj{\tt}_{\reg{T}}\otimes P^{\textnormal{com}}_{G',\tt} \enspace.
\end{equation*}



\subsubsection{Linear operators.}

Recall that, at the start of the protocol, $\reg{V}$ holds the verifier's auxiliary input state.
For every graph $G'$, $h \in \HH{1}$ and $z \in \ZZ{1}$, let $\tt \in \TT{1}$ be the empty transcript and define the following linear operator
\mbox{$W^{\text{com}}_{G',\tt \circ (h,z)} : \linop{\reg{V}} \to \linop{\reg{S}\reg{V}\reg{W}}$}, 
which initialises and executes round 1 of the committing phase, postselected on obtaining transcript $(h,z)$ in that round.
\begin{equation*}
    W^{\text{com}}_{G',\tt\circ(h,z)}\coloneq \left(\idt_{\reg{S}\reg{V}\reg{W}}\otimes \bra{h,z}_{\reg{H}_{1}\reg{Z}_1}\right)P^{\text{com}}_{G',\tt}V^{\text{com}}_{\tt}\left(\idt_{\reg{V}}\otimes\ket{0}_{\reg{S}\reg{W}\reg{H}_1\reg{Z}_1}\right)\enspace.
\end{equation*}
Similarly, for every round
$i \in \{2,\hdots,\ell\}$,
graph $G'$, $h \in \HH{i}$, $z \in \ZZ{i}$ and $\tt \in \TT{i}$, define the following linear operator 
\mbox{$W^{\text{com}}_{G',\tt\circ(h,z)} : \linop{\reg{S}\reg{V}\reg{W}} \to \linop{\reg{S}\reg{V}\reg{W}}$},
which initialises and executes round $i$ of the committing phase, postselected on obtaining transcript $(h,z)$ in that round.
\begin{equation*}
    W^{\text{com}}_{G',\tt\circ(h,z)}\coloneq \left(\idt_{\reg{S}\reg{V}\reg{W}}\otimes \bra{h,z}_{\reg{H}_i\reg{Z}_i}\right)P^{\text{com}}_{G',\tt}V^{\text{com}}_{\tt}\left(\idt_{\reg{S}\reg{V}\reg{W}}\otimes\ket{0}_{\reg{H}_i\reg{Z}_i}\right)\enspace.
\end{equation*}
%
For every transcript $\tt=\bigl(\left(h_1,\dots, h_{\ell}\right),\left(z_1,\dots, z_{\ell}\right)\bigr) \in \TT{}$ for the whole committing phase, and for every round $i \in \{1,\hdots,\ell\}$, define $\tt_{i}\coloneq \left(\left(h_1,\dots, h_{i-1}\right),\left(z_1,\dots, z_{i-1}\right)\right)$, where $\tt_{1}$ is the empty transcript. Moreover, for every graph $G'$, define the following super-operator $\supopcom{G'}{\tt} : \linop{\reg{V}} \to \linop{\reg{SVW}}$, which initialises and executes the whole committing phase, postselected on obtaining transcript $\tt$. For every operator $X \in \linop{\reg{V}}$, it acts as
\[
    \supopcom{G'}{\tt}(X)\coloneq W^{\text{com}}_{G',\tt_{\ell-1}\circ(h_{\ell}z_{\ell})}\circ\cdots \circ W^{\text{com}}_{G',\tt_{1} \circ (h_1,z_1)}\, X \left({W^{\text{com}}_{G',\tt_{1}\circ (h_1,z_1)}}\right)^*\circ\cdots\circ \left(W^{\text{com}}_{G',\tt_{\ell-1}\circ(h_{\ell}z_{\ell})}\right)^*\enspace.
\]
Intuitively, $\tr_{\reg{S}}\left(\supopcom{G'}{\tt}(\rho)\right)$ is the non-normalised resulting state of the verifier after the prover has committed to the adjacency matrix of graph $G'$, given that the transcript is $\tt$ and that the verifier's auxiliary input state is $\rho$. Moreover, for $\rho\in \Dst{\reg{V}}$,
\[
    \tr\!\left(\supopcom{G'}{\tt}(\rho)\right)=\Pr\left[\,\tt\mid[\SNDC(G')\rightleftarrows \vdv(\rho)]\,\right]\enspace,
\]
where $\Pr\left[\,\tt\mid[\SNDC(G')\rightleftarrows \vdv(\rho)]\,\right]$ is the probability that the transcript of the committing phase is $\tt$, given that the honest prover commits to the adjacency matrix of graph $G'$ and that $\vdv$'s auxiliary input state is~$\rho$. Therefore, the mixture of all possible resulting states for $\vdv$, after the committing phase, given that the honest prover commits to the adjacency matrix of graph $G'$ and that $\vdv$'s auxiliary input state is~$\rho$, is the following state.
\begin{equation} 
    \rho_{G'}\coloneq \sum_{\tt\in\TT{}}\proj{\tt}_{\reg{T}}\otimes\tr_{\reg{S}}\left(\supopcom{G'}{\tt}(\rho)\right)\enspace.\notag    
\end{equation}
Importantly, because the bit commitment scheme $\protpi$ is assumed to be perfectly hiding, for every pair of graphs $G'$ and $\smash{\hat{G}}$, we have $\rho_{G'}=\rho_{\smash{\hat{G}}}$ (otherwise, the receiver's view in the commitment scheme would not be identical).
In particular, for every auxiliary input state $\rho\in \Dst{\reg{V}}$,
\begin{equation}
    \tr_{\reg{S}}\!\left(\supopcom{G'}{\tt}(\rho)\right)=\tr_{\reg{S}}\!\left(\supopcom{\smash{\hat{G}}}{\tt}(\rho)\right)\enspace.\label{eq:perfecthidingmu}
\end{equation}
For every transcript $t \in \TT{}$ for the whole committing phase and for every challenge $ch \in \{0,1\}$, define the following linear operator $W^{\text{chal}}_{\tt,ch} : \linop{\reg{S}\reg{V}\reg{W}} \to \linop{\reg{S}\reg{V}\reg{W}}$, which initialises and executes the challenge phase, postselected on producing challenge $ch$.
\[
    W^{\text{chal}}_{\tt,ch}\coloneq \left(\idt_{\reg{S}\reg{V}\reg{W}}\otimes\bra{ch}_{\reg{C}}\right)\left(\idt_{\reg{S}}\otimes V^{\text{chal}}_{\tt}\right)\left(\idt_{\reg{S}\reg{V}\reg{W}}\otimes\ket{0}_{\reg{C}}\right)\enspace.
\]
Hence, for every graph $G'$, define the following super-operator $\supopchal{G'}{\tt}{ch} : \linop{\reg{V}} \to \linop{\reg{S}\reg{V}\reg{W}}$, which initialises and executes the whole committing phase as well as the challenge phase, postselected on obtaining transcript $\tt$ and challenge $ch$. For every operator $X \in \linop{\reg{V}}$, it acts as
\[
    \supopchal{G'}{\tt}{ch}(X)\coloneq  W^{\text{chal}}_{\tt,ch}\Big[\supopcom{G'}{\tt}(X)\Big]\left({W^{\text{chal}}_{\tt,ch}}\right)^*\enspace.
\]
Intuitively, $\tr_{\reg{S}}\!\left(\supopchal{G'}{\tt}{ch}(\rho)\right)$ is the non-normalised resulting state of the verifier after the execution of the committing phase and the challenge phase, given that the honest prover commits to the adjacency matrix of graph $G'$, producing transcript $\tt$, and that the verifier produces challenge $ch$, when the verifier's auxiliary input state is~$\rho$.
Therefore, the mixture of all possible resulting states for $\vdv$, after the committing and challenge phases, given that the honest prover commits to the adjacency matrix of graph $G'$ and that $\vdv$'s auxiliary input state is~$\rho$, is the following state.
\[
   \sum_{\substack{\tt\in\TT{}\\ ch\in\{0,1\}}}\proj{\tt}_{\reg{T}}\otimes\proj{ch}_{\reg{C}}\otimes \tr_{\reg{S}}\!\left(\supopchal{G'}{\tt}{ch}(\rho)\right)\enspace. 
\]
Finally, let $\mathcal{U}$ be the set of possible openings the prover can send during the response and verification phase. For every graph $G'$, transcript $t \in \TT{}$, challenge $ch \in \{0,1\}$, cycle $H'$ and answer $u \in \mathcal{U}$, define the following linear operator $W^{\text{open}}_{G',\tt,ch,H',u} : \linop{\reg{S}\reg{V}\reg{W}} \to \linop{\reg{S}\reg{V}\reg{W}}$, which initialises and executes the response and verification phase, postselected on the prover sending the opening $u$.
\[
    W^{\text{open}}_{G',\tt,ch,H',u}\coloneq \left(\idt_{\reg{S}\reg{V}\reg{W}}\otimes\bra{u}_{\reg{A}}\right)\left(\idt_{\reg{V}\reg{W}} \otimes P^{\text{open}}_{G',\tt,ch,H'}\right)\left(\idt_{\reg{S}\reg{V}\reg{W}}\otimes\ket{0}_{\reg{A}}\right) \enspace.
\]
Thus, we define the following super-operator $\Xi_{G',\tt,ch,H',u} : \linop{\reg{V}} \to  \linop{\reg{V}\reg{W}}$,
which executes the whole protocol and traces out the prover's memory,
given that the honest prover commits on the adjacency matrix of graph $G'$, producing transcript $\tt$, that the verifier produces challenge $ch$, and that the honest prover answers opening $u$ knowing Hamiltonian cycle $H'$ in $G'$. For every operator $X \in \linop{\reg{V}}$, it acts as
\[
    \Xi_{G',\tt,ch,H',u}(X)\coloneq \tr_{\reg{S}}\!\left(W^{\text{open}}_{G',\tt,ch,H',u}\Big[\supopchal{G'}{\tt}{ch}(X)\Big]\left(W^{\text{open}}_{G',\tt,ch,H',u}\right)^*\right)\enspace.
\]
On auxiliary input state $\rho\in \Dst{\reg{V}}$, $\Xi_{G',\tt,ch,H',u}(\rho)$ is the verifier's non-normalised state in registers $\reg{V}\reg{W}$ after the execution of the whole protocol, given that the honest prover commits on the adjacency matrix of graph $G'$, producing transcript $\tt$, that the verifier produces challenge $ch$, and that the honest prover answers $u$ knowing Hamiltonian cycle $H'$ in $G'$.

In the response and verification phase, when the challenge is $ch =0$, the honest prover opens the commitments to the adjacency matrix of graph $G$. Thus, it follows that for every permutations $H'$ and $\smash{\hat{H}}$ (whether they are Hamiltonian cycle of $G$ or not), for every $t \in \TT{}$, $u \in \mathcal{U}$ and $\rho \in \Dst{\reg{V}}$,
\begin{equation}
    \Xi_{G',\tt,0,H',u}(\rho)=\Xi_{\hat{G},\tt,0,\smash{\hat{H}},u}(\rho) \enspace.\label{eq:gutfeeling0}
\end{equation}
As argued previously, because the bit commitment scheme $\protpi$ is perfectly hiding, for every pair of graphs $G'$ and $\smash{\hat{G}}$, $\tr_{\reg{S}}\!\left(\supopcom{G'}{\tt}(\rho)\right)=\tr_{\reg{S}}\!\left(\supopcom{\smash{\hat{G}}}{\tt}(\rho)\right)$. Since the verifier does not have access to register $\reg{S}$ to produce its challenge $ch$, it also holds that $\tr_{\reg{S}}\!\left(\supopchal{G'}{\tt}{ch}(\rho)\right) = \tr_{\reg{S}}\!\left(\supopchal{\smash{\hat{G}}}{\tt}{ch}(\rho)\right)$.
On the other hand,
when the challenge is $ch = 1$, the honest prover opens the commitments to the Hamiltonian cycle~$H'$. Thus, it follows that if graphs $G'$ and $\smash{\hat{G}}$ are such that $H'$ is a Hamiltonian cycle in both $G'$ and $\smash{\hat{G}}$, then for every $t \in \TT{}$, $u \in \mathcal{U}$ and $\rho \in \Dst{\reg{V}}$,
\begin{equation}
    \Xi_{G',\tt,1,H',u}(\rho)=\Xi_{\hat{G},\tt,1,H',u}(\rho) \enspace,\label{eq:gutfeeling}
\end{equation}
which we will use later in the analysis. 

In the expressions above, we omitted the other actions performed by the honest prover. In particular, at the beginning of the protocol, upon receiving the input graph $G$ with $\sizeG$ vertices and a Hamiltonian cycle $H$ in $G$, the honest prover samples a random permutation $\sigma \in S_{\sizeG}$ and sets $G' = \sigma(G)$. Moreover, after receiving the challenge $ch$ of the verifier, it puts, in register $\reg{P}$, the permutation $\sigma$ if $ch =0$, and the cycle $\sigma(H)$ if $ch=1$.
Therefore, the action of the super-operator $\Phi_{G,H}$ induced by the interaction between $\vdv(1^\params,G)$ and $\textsf{P}(1^\params,G,H)$, on operator $X \in \linop{\reg{V}}$, is given by
\begin{align}
    \Phi_{G,H}(X)=\frac{1}{\sizeG !} &\sum_{\substack{\sigma\in S_\sizeG\\\tt\in\TT{}\\u\in\mathcal{U}}}\proj{\sigma}_{\reg{P}}\otimes\proj{0}_{\reg{C}}\otimes\proj{\tt}_{\reg{T}}\otimes\Xi_{\sigma(G),\tt,0,\sigma(H),u}(X)\otimes\proj{u}_{\reg{A}}\notag\\
    &+\frac{1}{\sizeG !} \sum_{\substack{\sigma\in S_\sizeG\\\tt\in\TT{}\\u\in\mathcal{U}}}\proj{\sigma(H)}_{\reg{P}}\otimes\proj{1}_{\reg{C}}\otimes\proj{\tt}_{\reg{T}}\otimes\Xi_{\sigma(G),\tt,1,\sigma(H),u}(X)\otimes\proj{u}_{\reg{A}}\enspace.\label{eq:psiGH}
\end{align}

\subsubsection{Quantum simulator.}

The simulator uses the same registers introduced earlier, together with some additional ones. Register $\reg{C}'$ holds the simulator's guess for the verifier's challenge, while register $\reg{R}$ is used for rewinding.
To simplify the analysis that follows, we also introduce an additional register, denoted $\reg{T}'$, which holds a copy of the transcript of the committing phase.


The simulator has access to the verifier's unitary operators:  $\mathtt{c}\textnormal{-}V^{\text{com}}_{i}$ for $i\in\{1,\dots,\ell\}$ and $\mathtt{c}\textnormal{-}V^{\text{chal}}$. It also has access to the honest prover's unitary operators, although it does not know the witness (the Hamiltonian cycle $H$). In particular, for every round $i\in\{1,\dots,\ell\}$ and every graph $G'$, it has access to  $\mathtt{c}\textnormal{-}P^{\text{com}}_{G',i}$.
To perform the committing phase in superposition over its random choice of permutation and its guess for the verifier's challenge,
the simulator, on input graph $G$ with $\sizeG$ vertices, uses the following unitary $Q^{\text{com}}_{G}\in \Unitary{\reg{P}\reg{C'}\reg{T}\reg{S}\reg{V}\reg{W}}$.
\[
    Q^{\text{com}}_{G}\coloneq \sum_{\substack{\sigma\in S_{\sizeG}\\ ch'\in\{0,1\}}}\proj{\sigma}_{\reg{P}}\otimes\proj{ch'}_{\reg{C'}}\otimes \left[\prod_{i=1}^{\ell}\left(\mathtt{c}\textnormal{-}P^{\text{com}}_{\psi_G(ch',\sigma),i}\otimes\idt_{\reg{VW}}\right)\left(\idt_{\reg{S}}\otimes\mathtt{c}\textnormal{-}V^{\text{com}}_{i}\right)\right]\enspace,
\]
where
\[
    \psi_{G}(ch',\sigma)\coloneq\begin{cases}
                            \sigma(G) & \text{if } ch'= 0 \enspace,\\
                            K_{\sizeG}  & \text{if } ch'=1\enspace.
                        \end{cases}
\]
We define the following unitary operator $D_T\in \Unitary{\reg{T}\reg{T}'}$ as the purification of the dephasing channel of register $\reg{T}$, using ancilla register $\reg{T}'$. In particular, for every $\tt \in \TT{}$,
\[
    D_T\ket{\tt}_{\reg{T}}\ket{0}_{\reg{T}'}=\ket{\tt}_{\reg{T}}\ket{\tt}_{\reg{T}'}\enspace.
\]
The penultimate unitary operator defined in this section is the following controlled-version of
$P^{\text{open}}$,
which the simulator uses to reply to the verifier's challenge.
\[
    \mathtt{c}\textnormal{-}P^{\text{open}}_{G}\coloneq\sum_{\substack{\sigma\in S_\sizeG\\ ch\in\{0,1\}\\\tt\in\TT{}}}\proj{\sigma}_{\reg{P}}\otimes\proj{ch}_{\reg{C}}\otimes\proj{\tt}_{\reg{T}}\otimes P^{\text{open}}_{\psi_{G}(ch,\sigma),\tt,ch,\sigma} \enspace.
\]
Last but not least, we define the following unitary operator $R\in \Unitary{\reg{C}\reg{C'}\reg{R}}$, which computes in register $\reg{R}$ if the simulator's guess for the verifier's challenge is correct. For every $ch,ch' \in \{0,1\}$,
\[ 
    R\ket{ch}_{\reg{C}}\ket{ch'}_{\reg{C'}}\ket{0}_{\reg{R}}=\ket{ch}_{\reg{C}}\ket{ch'}_{\reg{C'}}\ket{ch\oplus ch'\oplus 1}_{\reg{R}}\enspace.
\]
Before describing the simulator, consider the following measurement procedure (Measurement Procedure~\ref{proc:zka1}).

\SetAlgorithmName{Measurement procedure}{}{}
\begin{algorithm}[H]
\caption{$\textsf{S}'_{\vdv}(1^\params,G,\rho)$}\label{proc:zka1}
\DontPrintSemicolon
\KwOracle{(Possibly malicious) verifier $\vdv$}
\KwInputs{Security parameter $\params\in\mathbb{N}$, graph $G$ and auxiliary state $\rho\in \Dst{\reg{V}}$}
Initialize the state: $\rho_1 \gets \proj{0}_{\reg{P}\reg{C}\reg{C}'\reg{T}\reg{T}'\reg{S}}\otimes\rho_{\reg{V}}\otimes\proj{0}_{\reg{W}\reg{R}}$\;
Prepare a uniform superposition of all permutations in register $\reg{P}$ and challenges in register $\reg{C'}$:
{%
\setlength{\abovedisplayskip}{0pt}
\setlength{\belowdisplayskip}{-10pt}
\begin{equation*}
    \rho_2 \gets \hspace{-0.25cm}\sum\limits_{\substack{\sigma,\sigma'\in S_{\sizeG}\\ch,ch'\in\{0,1\}}} \hspace{-0.25cm}\frac{1}{2\cdot \sizeG !}\ketbra{\sigma}{\sigma'}_{\reg{P}}\otimes\proj{0}_{\reg{C}}\otimes \ketbra{ch}{ch'}_{\reg{C}'}\otimes\proj{0}_{\reg{T}\reg{T}'\reg{S}}\otimes \rho_{\reg{V}}\otimes\proj{0}_{\reg{W}\reg{R}}
\end{equation*}
}%
\;
Execute the committing phase in superposition: $\rho_3\gets Q^{\text{com}}_{G}\rho_2 ({Q^{\text{com}}_{G}})^*$\;
Dephase the transcripts: $\rho_4\gets D_T\rho_3(D_T)^*$\label{step:SimulatorPrimeStep4}\;
Execute the challenge phase: $\rho_5\gets \mathtt{c}\textnormal{-}V^{\text{chal}} \rho_4 \left(\mathtt{c}\textnormal{-}V^{\text{chal}}\right)^*$ \label{step:SimulatorPrimeStep5}\;
Verify if the simulator's guess matches the verifier's challenge: $(r,\rho_6)\gets \CompMeas_{\reg{R}}\!\left(R \rho_5 R^*\right)$ \label{step:SimulatorPrimeStep6}\;
\Return $(r,\rho_6)$
\end{algorithm}

\begin{lemma}\label{lem:watrouszka}
    For every (possibly malicious) verifier $\vdv$, security parameter~$n$, graph $G$ and auxiliary input state $\ket{\phi}_{\reg{V}}$ on register $\reg{V}$, 
    \[
        \Pr\left[1\gets \textsf{S}'_{\vdv}(1^\params,G,\proj{\phi})\right]=\frac{1}{2}\enspace.
    \]
\end{lemma}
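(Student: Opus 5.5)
We prove that the outcome $r=1$ has probability exactly $\nicefrac{1}{2}$ by expanding $\rho_6$ branch by branch and using the perfect hiding identity \eqref{eq:perfecthidingmu}. Since $\CompMeas_{\reg{R}}$ outputs $1$ iff $ch=ch'$, the probability equals $\tr\bigl((\sum_{ch}\proj{ch}_{\reg{C}}\otimes\proj{ch}_{\reg{C}'})\rho_5\bigr)$. We first simplify $\rho_4$.

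\textbf{Structure of $\rho_4$.} The unitary $Q^{\text{com}}_G$ is controlled on $\reg{P}\reg{C}'$. The dephasing $D_T$ copies the transcript into $\reg{T}'$, and $\reg{T}'$ is never touched again, so the transcript is effectively dephased. After tracing out $\reg{T}'$, the reduced state on $\reg{P}\reg{C}'\reg{T}\reg{S}\reg{V}\reg{W}$ is
\[
\sum_{\tt}\proj{\tt}_{\reg{T}}\otimes \frac{1}{2\cdot\sizeG!}\sum_{\sigma,\sigma',c,c'}\ketbra{\sigma,c}{\sigma',c'}_{\reg{P}\reg{C}'}\otimes K_{\tt}^{\sigma,c}(\proj{\phi})(K_{\tt}^{\sigma',c'})^{*}.
\]
Here $K_{\tt}^{\sigma,c}$ is the product of the $W^{\text{com}}$ operators for graph $\psi_G(c,\sigma)$, acting on $\ket{\phi}$ with the ancillas initialised. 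Coherences between different values of $\reg{P}\reg{C}'$ may survive, because they live on $\reg{S}$, but this does not matter.

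\textbf{Reducing to the challenge distribution.} The quantity we need is a diagonal projection in $\reg{C}'$. The challenge unitary $\mathtt{c}\textnormal{-}V^{\text{chal}}$ acts only on $\reg{T}\reg{V}\reg{W}\reg{C}$, so it commutes with any projection on $\reg{C}'$ and with tracing out $\reg{P}\reg{S}$. The probability therefore only involves the diagonal blocks $c=c'$, traced over $\reg{P}$ and $\reg{S}$:
\[
\Pr[r=1]=\sum_{c\in\{0,1\}}\frac12\cdot\frac{1}{\sizeG!}\sum_{\sigma}\sum_{\tt}\tr\Bigl((\proj{c}_{\reg{C}}\otimes\idt)\,V^{\text{chal}}_{\tt}\bigl(\tr_{\reg{S}}\supopcom{\psi_G(c,\sigma)}{\tt}(\proj{\phi})\otimes\proj{0}_{\reg{C}}\bigr)(V^{\text{chal}}_{\tt})^{*}\Bigr).
\]
The step to check carefully is that tracing out $\reg{S}$ commutes with $V^{\text{chal}}$ and kills the $\sigma\neq\sigma'$ coherences. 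This holds because $\reg{P}$ is traced out as well and $\reg{P}$ is diagonal in the projection.

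\textbf{Applying perfect hiding.} By \eqref{eq:perfecthidingmu}, $\tr_{\reg{S}}\supopcom{\psi_G(c,\sigma)}{\tt}(\proj{\phi})$ is independent of both $c$ and $\sigma$. Call it $\mu_{\tt}$. Define $q_c\coloneq\sum_{\tt}\tr\bigl((\proj{c}\otimes\idt)V^{\text{chal}}_{\tt}(\mu_{\tt}\otimes\proj{0})(V^{\text{chal}}_{\tt})^{*}\bigr)$; this is the probability that $\vdv$ outputs challenge $c$. Then $\Pr[r=1]=\frac12(q_0+q_1)$.

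Since $\sum_c q_c=\sum_{\tt}\tr\mu_{\tt}=1$, the probability is exactly $\nicefrac12$. The main bookkeeping obstacle is justifying that perfect hiding applies with the verifier's purified workspace $\reg{W}$ included. The identity \eqref{eq:perfecthidingmu} is stated on $\reg{V}\reg{W}$, so this is fine as given.
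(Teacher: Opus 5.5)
Your proposal is correct and follows the paper's proof essentially step by step. You project onto $ch=ch'$, use that $\mathtt{c}\textnormal{-}V^{\text{chal}}$ does not act on $\reg{P}\reg{C}'\reg{S}\reg{T}'$ to reduce to diagonal blocks, invoke \eqref{eq:perfecthidingmu} to remove the dependence on $(c,\sigma)$, and sum over the verifier's challenge to get total trace $1$. The only difference is cosmetic: you work with reduced states after tracing out $\reg{T}'$, while the paper computes directly with the pure state $\ket{\phi_4}$ and the projector $\Pi$.
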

\begin{proof}
    By construction, the state after Step~\ref{step:SimulatorPrimeStep4} is given by
    \[
        \ket{\phi_4} \coloneq \sum_{\substack{\sigma\in S_{\sizeG}\\ch'\in\{0,1\}\\\tt\in\TT{}}} \sqrt{\frac{\Pr\left[\,\tt\mid\sigma,ch',\ket{\phi}\right]}{2\cdot \sizeG !}}\ket{\sigma}_{\reg{P}}\ket{0}_{\reg{C}}\ket{ch'}_{\reg{C}'}\ket{\tt}_{\reg{T}}\ket{\tt}_{\reg{T}'}\ket{\phi(\sigma,ch',\tt)}_{\reg{S}\reg{V}\reg{W}}\ket{0}_{\reg{R}} \enspace,
    \]
    where $\Pr\left[\,\tt\mid\sigma,ch',\ket{\phi}\right]$ denotes the probability that the transcript after the committing phase between $\SNDC(\psi_G(ch',\sigma))$ and $\vdv(\proj{\phi})$ is $\tt$, and $\ket{\phi(\sigma,ch',\tt)}_{\reg{S}\reg{V}\reg{W}}$ is the resulting state after a commitment to $\psi_G(ch',\sigma)$, given that the transcript of the committing phase is $\tt$ and that the auxiliary input state is $\proj{\phi}$.
    
    At Step $5$, the simulator applies the unitary operator $\mathtt{c}\textnormal{-}V^{\text{chal}}$, which acts only on registers  $\reg{C}\reg{T}\reg{V}\reg{W}$.
    Define the following projector $\Pi \in \Proj{\reg{P}\reg{C}\reg{C'}\reg{T}\reg{T}'\reg{S}\reg{V}\reg{W}\reg{R}}$,
    \begin{equation*}
        \Pi\coloneq \sum_{ch\in\{0,1\}}\idt_{\reg{P}}\otimes \proj{ch}_{\reg{C}}\otimes\proj{ch}_{\reg{C}'}\otimes\idt_{\reg{T}\reg{T}'\reg{S}\reg{V}\reg{W}\reg{R}} \enspace.
    \end{equation*}
    Therefore, the probability that measurement procedure $\textsf{S}'_{\vdv}$ returns 1 is given by
    \begin{align}
        \Pr\left[1\gets \textsf{S}'_{\vdv}(1^\params,G,\proj{\phi})\right]&\notag\\
        &\hspace{-3.5cm}= \tr\!\left(\Pi \left(\mathtt{c}\textnormal{-}V^{\text{chal}}\right)\proj{\phi_4}\left(\mathtt{c}\textnormal{-}V^{\text{chal}}\right)^*\right)\notag\\
        &\hspace{-3.5cm}= \sum_{\substack{\sigma\in S_{\sizeG}\\ch\in\{0,1\}\\\tt\in\TT{}}}\frac{\Pr\left[\,\tt\mid \sigma,ch,\ket{\phi}\right]}{2\cdot \sizeG !}\tr\!\left(\left(\proj{ch}_{\reg{C}}\otimes\idt_{\reg{T}\reg{S}\reg{V}\reg{W}}\right)\mathtt{c}\textnormal{-}V^{\text{chal}}\left(\proj{0}_{\reg{C}}\otimes\proj{\tt}_{\reg{T}}\otimes\proj{\phi(\sigma,ch,\tt)}_{\reg{S}\reg{V}\reg{W}}\right)\left(\mathtt{c}\textnormal{-}V^{\text{chal}}\right)^*\right)\notag\\
        &\hspace{-3.5cm}= \sum_{\substack{\sigma\in S_{\sizeG}\\ch\in\{0,1\}}}\frac{1}{2\cdot \sizeG !}\tr\!\left(\left(\proj{ch}_{\reg{C}}\otimes\idt_{\reg{T}\reg{S}\reg{V}\reg{W}}\right)\mathtt{c}\textnormal{-}V^{\text{chal}}\left(\sum_{\tt\in\TT{}}\proj{0}_{\reg{C}}\otimes\proj{\tt}_{\reg{T}}\otimes \supopcom{\psi_G(ch,\sigma)}{\tt}(\proj{\phi}_{\reg{V}})\right)\left(\mathtt{c}\textnormal{-}V^{\text{chal}}\right)^*\right)\notag \\
        &\hspace{-3.5cm}= \sum_{\substack{\sigma\in S_{\sizeG}\\ch\in\{0,1\}}}\frac{1}{2\cdot \sizeG !}\tr\!\left(\left(\proj{ch}_{\reg{C}}\otimes\idt_{\reg{T}\reg{V}\reg{W}}\right)\mathtt{c}\textnormal{-}V^{\text{chal}}\left(\sum_{\tt\in\TT{}}\proj{0}_{\reg{C}}\otimes\proj{\tt}_{\reg{T}}\otimes \tr_{\reg{S}}\!\left(\supopcom{\psi_G(ch,\sigma)}{\tt}(\proj{\phi}_{\reg{V}})\right)\right)\left(\mathtt{c}\textnormal{-}V^{\text{chal}}\right)^*\right) \enspace,\notag
    \end{align}
    Recall that (see~\eqref{eq:perfecthidingmu}) the state $\tr_{\reg{S}}\!\left(\supopcom{\psi_G(ch,\sigma)}{\tt}(\proj{\phi}_{\reg{V}})\right)$ is independent of $\psi_G(ch,\sigma)$. In particular, for any fixed graph $G'$, we may replace it by $\tr_{\reg{S}}\!\left(\supopcom{G'}{\tt}(\proj{\phi}_{\reg{V}})\right)$.
   Therefore, we have
   \begin{align*}
       \Pr\left[1\gets \textsf{S}'_{\vdv}(1^\params,G,\proj{\phi})\right]&\\
       &\hspace{-3.5cm}= \sum_{\substack{\sigma\in S_{\sizeG}\\ch\in\{0,1\}}}\frac{1}{2\cdot \sizeG !}\tr\!\left(\left(\proj{ch}_{\reg{C}}\otimes\idt_{\reg{T}\reg{V}\reg{W}}\right)\mathtt{c}\textnormal{-}V^{\text{chal}}\left(\sum_{\tt\in\TT{}}\proj{0}_{\reg{C}}\otimes\proj{\tt}_{\reg{T}}\otimes \tr_{\reg{S}}\!\left(\supopcom{G'}{\tt}(\proj{\phi}_{\reg{V}})\right)\right)\left(\mathtt{c}\textnormal{-}V^{\text{chal}}\right)^*\right)\\
       &\hspace{-3.5cm}= \sum_{\sigma \in S_{\sizeG}} \frac{1}{2\cdot \sizeG !} \tr\!\left( \mathtt{c}\textnormal{-}V^{\text{chal}} \left(\sum_{\tt\in\TT{}}\proj{0}_{\reg{C}}\otimes\proj{\tt}_{\reg{T}}\otimes \tr_{\reg{S}}\!\left(\supopcom{G'}{\tt}(\proj{\phi}_{\reg{V}})\right)\right) \left(\mathtt{c}\textnormal{-}V^{\text{chal}}\right)^*  \right)\\
       &\hspace{-3.5cm}= \sum_{\sigma \in S_{\sizeG}} \frac{1}{2\cdot \sizeG !}\\
       &\hspace{-3.5cm}= \frac{1}{2} \enspace.
   \end{align*}\qed
\end{proof}
Define $\ket{\phi(r)}$ as the resulting state after executing $\textsf{S}'_{\vdv}(1^\params,G,\proj{\phi}_{\reg{V}})$ and getting classical outcome $r\in\{0,1\}$. By Lemma~\ref{lem:watrousrewinding}, there exists an efficient unitary $U^{G}_{\text{Wat}}$ such that for every state $\ket{\phi}_{\reg{V}}$ on register $\reg{V}$,
\begin{equation*}
    \ket{\phi(1)}= U^{G}_{\text{Wat}}\ket{\phi(0)} \enspace.
\end{equation*}
Let $\mathsf{Wat}_{\vdv}$ be the following procedure (Procedure~\ref{proc:WatV}).

\SetAlgorithmName{Procedure}{}{}
\begin{algorithm}[H]
\caption{$\mathsf{Wat}_{\vdv}(1^\params,G,\rho)$}\label{proc:WatV}
\DontPrintSemicolon
\KwOracle{(Possibly malicious) verifier $\vdv$}
\KwInputs{Security parameter $\params\in\mathbb{N}$, graph $G$ and auxiliary state $\rho\in \Dst{\reg{V}}$}
$(r,\rho') \gets \textsf{S}'_{\vdv}(1^\params,G,\rho)$\;
\lIf{$r = 0$}{$\rho' \leftarrow U^{G}_{\text{Wat}}\rho(U^{G}_{\text{Wat}})^{*}$}
\Return $\rho'$
\end{algorithm}

\noindent The procedure $\mathsf{Wat}_{\vdv}$ always outputs the state that would result from a successful execution of $\textsf{S}'_{\vdv}(1^\params,G,\rho)$ (where a success is defined as having classical outcome 1). Indeed, by construction, for any pure state $\ket{\phi}_{\reg{V}}$ on register $\reg{V}$,
\begin{equation*}
    \proj{\phi(1)} = \mathsf{Wat}_{\vdv}(1^\params,G,\proj{\phi}) \enspace.
\end{equation*}
Since $\textsf{S}'_{\vdv}$ is linear, it follows that for every $\rho \in \Dst{\reg{V}}$, if $(1,\sigma) \gets  \textsf{S}'_{\vdv}(1^\params,G,\rho)$, then $\sigma = \mathsf{Wat}_{\vdv}(1^\params,G,\rho)$.
The simulator is described in the following procedure (Procedure~\ref{proc:simulator}).

\SetAlgorithmName{Procedure}{}{}
\begin{algorithm}
\caption{$\textsf{S}_{\vdv}(1^\params,G,\rho)$}\label{proc:simulator}
\DontPrintSemicolon
\KwOracle{(Possibly malicious) verifier $\vdv$}
\KwInputs{Security parameter $\params\in\mathbb{N}$, graph $G$ and auxiliary state $\rho\in \Dst{\reg{V}}$}
$\rho_6\leftarrow\mathsf{Wat}_{\vdv}(1^\params,G,\rho)$\;
Answer the verifier's challenge: $\rho_7\gets \mathtt{c}\textnormal{-}P^{\text{open}}_{G}\left(\rho_6\otimes\proj{0}_{\reg{A}}\right) \left(\mathtt{c}\textnormal{-}P^{\text{open}}_{G}\right)^*$\label{step:SimulatorStep2}\;
Dephase registers $\reg{P}\reg{A}$ and discard the simulator's extra registers: 
{%
\setlength{\abovedisplayskip}{0pt}
\setlength{\belowdisplayskip}{-10pt}
\begin{equation*}
    \rho_8\gets \tr_{\reg{C}'\reg{T}'\reg{S}\reg{R}}\!\left(\sum\limits_{\substack{\sigma\in S_{\sizeG}\\u\in\mathcal{U}}}\left(\proj{\sigma}_{\reg{P}}\otimes\idt\otimes\proj{u}_{\reg{A}}\right)\rho_7\left(\proj{\sigma}_{\reg{P}}\otimes\idt\otimes\proj{u}_{\reg{A}}\right)\right)
\end{equation*}
}%
\;
\Return $\rho_8$
\end{algorithm}

\noindent The following lemma establishes the super-operator induced by the simulator $\textsf{S}_{\vdv}$.
\begin{lemma}\label{lem:simulator}
    Let $G$ be a graph and $H$ be a Hamiltonian cycle in $G$. The action of the super-operator $\Psi_{G}$ induced by the simulator $\textsf{S}_{\vdv}(1^{\params},G,\rho)$, on operator $X\in\linop{\reg{V}}$, is given by
    \begin{align}
        \Psi_{G}(X)\coloneq \frac{1}{\sizeG !}&\sum_{\substack{\sigma\in S_{\sizeG}\\\tt\in\TT{}\\u\in\mathcal{U}}}\proj{\sigma}_{\reg{P}}\otimes\proj{0}_{\reg{C}}\otimes\proj{\tt}_{\reg{T}}\otimes \Xi_{\sigma(G),\tt,0,\sigma(H),u}(X)\otimes \proj{u}_{\reg{A}}\notag\\
        &+\frac{1}{\sizeG !}\sum_{\substack{\sigma\in S_{\sizeG}\\\tt\in\TT{}\\u\in\mathcal{U}}}\proj{\sigma(H)}_{\reg{P}}\otimes\proj{1}_{\reg{C}}\otimes\proj{\tt}_{\reg{T}}\otimes \Xi_{\sigma(G),\tt,1,\sigma(H),u}(X)\otimes \proj{u}_{\reg{A}}\enspace.\notag
    \end{align}
\end{lemma}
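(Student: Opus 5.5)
I will compute $\Psi_G(X)$ by tracking the simulator's state step by step, and then use perfect hiding together with \eqref{eq:gutfeeling0} and \eqref{eq:gutfeeling} to put it in the stated form. By linearity it suffices to treat $X=\proj{\phi}_{\reg{V}}$ for pure states and extend.

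\textbf{Step 1: the state after $\mathsf{Wat}_{\vdv}$.} By construction, $\mathsf{Wat}_{\vdv}(1^\params,G,\proj{\phi})=\proj{\phi(1)}$, which is the renormalised projection $\Pi\,\mathtt{c}\textnormal{-}V^{\text{chal}}\ket{\phi_4}$ of the state in the proof of Lemma~\ref{lem:watrouszka}, onto outcome $ch=ch'$. Lemma~\ref{lem:watrouszka} gives this projection norm squared $\nicefrac{1}{2}$, so the renormalisation factor is $2$. Writing it out, the branch with $\reg{C}=\reg{C}'=ch$, $\reg{P}=\sigma$ and transcript $\tt$ (held in $\reg{T}$ and copied into $\reg{T}'$) carries the unnormalised operator $W^{\text{chal}}_{\tt,ch}\supopcom{\psi_G(ch,\sigma)}{\tt}(\cdot)(W^{\text{chal}}_{\tt,ch})^*=\supopchal{\psi_G(ch,\sigma)}{\tt}{ch}$, with weight $\frac{2}{2\cdot\sizeG!}=\frac{1}{\sizeG!}$. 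The copy in $\reg{T}'$ dephases $\reg{T}$. Coherences between different $\sigma$ and between the two values of $ch$ survive at this stage in $\reg{P}\reg{C}\reg{C}'$. However, $\reg{C}'$ is later traced out while remaining equal to $\reg{C}$, and this kills the $ch$-coherences.

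\textbf{Step 2: opening and dephasing.} After applying $\mathtt{c}\textnormal{-}P^{\text{open}}_G$, the simulator dephases $\reg{P}\reg{A}$ and traces out $\reg{C}'\reg{T}'\reg{S}\reg{R}$. This yields a cq-state over $(\sigma,ch,\tt,u)$ whose $\reg{V}\reg{W}$ part is $\Xi_{\psi_G(ch,\sigma),\tt,ch,\sigma,u}(X)$ and whose register $\reg{P}$ holds $\sigma$. I will then handle the two challenge values separately.

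For $ch=0$ we have $\psi_G(0,\sigma)=\sigma(G)$, and by \eqref{eq:gutfeeling0} the cycle argument may be replaced by $\sigma(H)$. This gives exactly the first term of $\Psi_G$.

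For $ch=1$ the simulator committed to $K_\sizeG$ and opens the cycle $\sigma$ viewed as a Hamiltonian cycle of $K_\sizeG$. I will reindex the sum by the bijection $\sigma\mapsto\sigma'$ of $S_\sizeG$ defined by $\sigma=\sigma'(H)$, meaning that the cycle $\sigma'(H)$ of $K_\sizeG$ is the permutation $\sigma$ listed as a cycle. Under this map the uniform distribution on $S_\sizeG$ is preserved, and register $\reg{P}$ holds $\sigma'(H)$. Since $\sigma'(H)$ is a Hamiltonian cycle of both $K_\sizeG$ and $\sigma'(G)$, \eqref{eq:gutfeeling} gives $\Xi_{K_\sizeG,\tt,1,\sigma'(H),u}=\Xi_{\sigma'(G),\tt,1,\sigma'(H),u}$. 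This yields the second term of $\Psi_G$.

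\textbf{Main obstacle.} The delicate part is Step 1 and the bookkeeping of coherences. I must verify the following: the rewinding output is exactly the normalised success branch; tracing out $\reg{C}'$ together with dephasing $\reg{P}$ and $\reg{A}$ removes every off-diagonal term in $\sigma$ and $ch$; and the prover memory $\reg{S}$, which does depend on the committed graph, only enters through $\Xi$ after it has been traced out. The last point is what allows \eqref{eq:gutfeeling0} and \eqref{eq:gutfeeling} to be applied branch by branch.

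The reindexing for $ch=1$ also needs a check: the map $\sigma'\mapsto\sigma'(H)$ must be a bijection onto the list-representation of cycles used in $\reg{P}$. The ordering conventions for cycles (their rotations and starting points) must match those fixed earlier in the appendix.
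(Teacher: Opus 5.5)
Your proposal is correct and takes essentially the same route as the paper: replace $\mathsf{Wat}_{\vdv}$ by $\textsf{S}'_{\vdv}$ post-selected on $r=1$ (renormalisation factor $2$ from Lemma~\ref{lem:watrouszka}), apply $\mathtt{c}\textnormal{-}P^{\text{open}}_{G}$, dephase $\reg{P}\reg{A}$ and trace out $\reg{C}'\reg{T}'\reg{S}$ to get the cq-state with $\Xi_{\psi_G(ch,\sigma),\tt,ch,\sigma,u}(X)$, then use \eqref{eq:gutfeeling0} for $ch=0$, and the reindexing $\sigma\mapsto\sigma(H)$ together with \eqref{eq:gutfeeling} for $ch=1$. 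Your remark that the cross terms between $ch=0$ and $ch=1$ only vanish once $\reg{C}'$ is traced out is more careful than the paper's expression for $\rho_6$, which drops them early; your worry about cycle conventions is moot, because cycles are stored as ordered lists, so $\sigma\mapsto\sigma(H)$ is right composition with the list $H$ and hence a bijection of $S_{\sizeG}$.
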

\begin{proof}
    Let $\ket{\phi}_{\reg{V}}$ be a state on register $\reg{V}$. The first step of $\textsf{S}_{\vdv}(1^{\params},G,\proj{\phi})$ is to apply $\mathsf{Wat}_{\vdv}(1^{\params},G,\proj{\phi})$. By construction of the procedure $\mathsf{Wat}_{\vdv}$, we replace this step by $\textsf{S}'_{\vdv}(1^\params,G,\proj{\phi})$ and post-condition on classical outcome 1, which happens with probability~$\frac{1}{2}$, as shown in Lemma~\ref{lem:watrouszka}. The state after Step~\ref{step:SimulatorPrimeStep4} of $\textsf{S}'_{\vdv}(1^{\params},G,\proj{\phi})$ is given by
   \begin{equation}
        \rho_4= {\sum_{\substack{\sigma,\sigma'\in S_{\sizeG}\\ch,ch'\in\{0,1\}\\\tt,\tt'\in\TT{}}}}\alpha^{\tt,\sigma,ch}_{\tt',\sigma',ch'}\ketbra{\sigma}{\sigma'}_{\reg{P}}\otimes\proj{0}_{\reg{C}}\otimes\ketbra{ch}{ch'}_{\reg{C}'}\otimes\ketbra{\tt}{\tt'}_{\reg{T}}\otimes\ketbra{\tt}{\tt'}_{\reg{T}'}  \otimes \ketbra{\phi(\sigma,ch,\tt)}{\phi(\sigma',ch',\tt')}_{\reg{S}\reg{V}\reg{W}}\otimes\proj{0}_{\reg{R}}\enspace,\notag
     \end{equation}
     with $\alpha^{\tt,\sigma,ch}_{\tt',\sigma',ch'}$ is defined as
     \begin{equation*}
         \frac{\sqrt{\Pr\left[\,\tt\mid\sigma,ch,\ket{\phi}\right]\Pr\left[\,\tt'\mid\sigma',ch',\ket{\phi}\right]}}{2\cdot \sizeG !} \enspace,
     \end{equation*}
     where $\Pr\left[\,\tt\mid\sigma,ch',\ket{\phi}\right]$ denotes the probability that the transcript after the committing phase between $\SNDC(\psi_G(ch',\sigma))$ and $\vdv(\proj{\phi})$ is $\tt$, and $\ket{\phi(\sigma,ch',\tt)}_{\reg{S}\reg{V}\reg{W}}$ is the resulting state after a commitment to $\psi_G(ch',\sigma)$, given that the transcript of the committing phase is $\tt$ and that the auxiliary input state is $\proj{\phi}$.
    The state after Step~\ref{step:SimulatorPrimeStep5} of $\textsf{S}'_{\vdv}(1^{\params},G,\proj{\phi})$ is given by
    \begin{multline}
        \rho_5= \smash{\sum_{\substack{\sigma,\sigma'\in  S_{\sizeG}\\ch,ch'\in\{0,1\}\\\tt,\tt'\in\TT{}}}}\alpha^{\tt,\sigma,ch}_{\tt',\sigma',ch'}\ketbra{\sigma}{\sigma'}_{\reg{P}}\otimes\ketbra{ch}{ch'}_{\reg{C}'}\otimes\ketbra{\tt}{\tt'}_{\reg{T}}\otimes\ketbra{\tt}{\tt'}_{\reg{T}'} \\ \otimes \Big[\left(\idt_{\reg{S}}\otimes V^{\text{chal}}_{\tt}\right)\proj{0}_{\reg{C}}\otimes\ketbra{\phi(\sigma,ch,\tt)}{\phi(\sigma',ch',\tt')}_{\reg{S}\reg{V}\reg{W}}\left(\idt_{\reg{S}}\otimes V^{\text{chal}}_{\tt'}\right)^*\Big]\otimes\proj{0}_{\reg{R}}\enspace.\notag
    \end{multline}
    In Step~\ref{step:SimulatorPrimeStep6} of $\textsf{S}'_{\vdv}(1^{\params},G,\proj{\phi})$, we apply unitary operator $R$ and measure register $\reg{R}$ in the computational basis (with outcome~1 occurring with probability $\frac{1}{2}$). Then, we post-condition on observing outcome 1. Thus, after that point, register $\reg{R}$ is set to $\proj{1}_{\reg{R}}$ and left untouched until the end of the procedure $\textsf{S}_{\vdv}$, where it is discarded. In the following, we simply omit register $\reg{R}$ for convenience. Therefore, the output state of the procedure $\mathsf{Wat}_{\vdv}$ is given by
    \begin{align}
        \rho_6&= 2\sum_{c\in\{0,1\}}\left(\idt_{\reg{P}\reg{T}\reg{T}'\reg{S}\reg{V}\reg{W}}\otimes\proj{c}_{\reg{C}}\otimes\proj{c}_{\reg{C}'}\right)\Bigg(\sum_{\substack{\sigma,\sigma'\in  S_{\sizeG}\\ch,ch'\in\{0,1\}\\\tt,\tt'\in\TT{}}}\alpha^{\tt,\sigma,ch}_{\tt',\sigma',ch'}\ketbra{\sigma}{\sigma'}_{\reg{P}}\otimes\ketbra{ch}{ch'}_{\reg{C}'}\otimes\ketbra{\tt}{\tt'}_{\reg{T}}\otimes\ketbra{\tt}{\tt'}_{\reg{T}'}\notag\\
        &\hspace{1.7cm}\otimes \Big[\left(\idt_{\reg{S}}\otimes V^{\text{chal}}_{\tt}\right)\proj{0}_{\reg{C}}\otimes\ketbra{\phi(\sigma,ch,\tt)}{\phi(\sigma',ch',\tt')}_{\reg{S}\reg{V}\reg{W}}\left(\idt_{\reg{S}}\otimes V^{\text{chal}}_{\tt'}\right)^*\Big]\Bigg)\left(\idt_{\reg{P}\reg{T}\reg{T}'\reg{S}\reg{V}\reg{W}}\otimes\proj{c}_{\reg{C}}\otimes\proj{c}_{\reg{C}'}\right)\notag\\
         &=2{\sum_{\substack{\sigma,\sigma'\in  S_{\sizeG}\\ch\in\{0,1\}\\\tt,\tt'\in\TT{}}}}\alpha^{\tt,\sigma,ch}_{\tt',\sigma',ch}\ketbra{\sigma}{\sigma'}_{\reg{P}}\otimes\ketbra{ch,ch,\tt,\tt}{ch,ch,\tt',\tt'}_{\reg{C}\reg{C}'\reg{T}\reg{T}'}\otimes\Big[W^{\text{chal}}_{\tt,ch}\ketbra{\phi(\sigma,ch,\tt)}{\phi(\sigma',ch,\tt')}_{\reg{S}\reg{V}\reg{W}}\left(W^{\text{chal}}_{\tt',ch}\right)^*\Big]\enspace.\notag
    \end{align}
    The state after Step~\ref{step:SimulatorStep2} of $\textsf{S}_{\vdv}(1^{\params},G,\proj{\phi})$ is given by
    \begin{align}
        \hspace{-1cm}\rho_7&=2\smash{\sum_{\substack{\sigma,\sigma'\in  S_{\sizeG}\\ch\in\{0,1\}\\\tt,\tt'\in\TT{}}}}\alpha^{\tt,\sigma,ch}_{\tt',\sigma',ch}\ketbra{\sigma}{\sigma'}_{\reg{P}}\otimes\ketbra{ch,ch,\tt,\tt}{ch,ch,\tt',\tt'}_{\reg{C}\reg{C}'\reg{T}\reg{T}'}\notag\\
        &\hspace{2cm}\otimes\Big[\left( P^{\text{open}}_{\psi_G(ch,\sigma),\tt,ch,\sigma}\right)W^{\text{chal}}_{\tt,ch}\left(\ketbra{\phi(\sigma,ch,\tt)}{\phi(\sigma',ch,\tt')}_{\reg{S}\reg{V}\reg{W}}\otimes\proj{0}_{\reg{A}}\right)\left(W^{\text{chal}}_{\tt',ch}\right)^*\left(P^{\text{open}}_{\psi_G(ch,\sigma'),\tt',ch,\sigma'}\right)^*\Big]\enspace.\notag
    \end{align}
    We insert the resolution of the identity $\idt_{\reg{P}\reg{C}\reg{C}'\reg{T}\reg{T}'\reg{SVW}}\otimes\idt_{\reg{A}}=\idt_{\reg{P}\reg{C}\reg{C}'\reg{T}\reg{T}'\reg{SVW}}\otimes\sum\limits_{u\in\mathcal{U}}\proj{u}_{\reg{A}}$ on both sides of the above equation,
    \begin{align}
        \hspace{-1cm}\rho_7&=2\smash{\sum_{\substack{\sigma,\sigma'\in  S_{\sizeG}\\ch\in\{0,1\}\\\tt,\tt'\in\TT{}\\u,u'\in\mathcal{U}}}}\alpha^{\tt,\sigma,ch}_{\tt',\sigma',ch}\ketbra{\sigma}{\sigma'}_{\reg{P}}\otimes\ketbra{ch,ch,\tt,\tt}{ch,ch,\tt',\tt'}_{\reg{C}\reg{C}'\reg{T}\reg{T}'}\notag\\
        &\hspace{2cm}\otimes\Big[W^{\text{open}}_{\psi_G(ch,\sigma),\tt,ch,\sigma,u}W^{\text{chal}}_{\tt,ch}\ketbra{\phi(\sigma,ch,\tt)}{\phi(\sigma',ch,\tt')}_{\reg{S}\reg{V}\reg{W}}\left(W^{\text{chal}}_{\tt',ch}\right)^*\left(W^{\text{open}}_{\psi_G(ch,\sigma'),\tt',ch,\sigma',u'}\right)^*\Big]\otimes\ketbra{u}{u'}_{\reg{A}}\enspace.\notag
    \end{align}
    \vspace{0.2cm}
    
    \noindent Finally, the simulator dephases registers $\reg{PA}$ and discard registers $\reg{C}'\reg{T}'\reg{S}$:
    \begin{align*}
        \hspace{-1cm}\rho_8 &= \tr_{\reg{C}'\reg{T}'\reg{S}}\!\Biggl( 2\smash{\sum_{\substack{\sigma\in  S_{\sizeG}\\ch\in\{0,1\}\\\tt,\tt'\in\TT{}\\u\in\mathcal{U}}}}\alpha^{\tt,\sigma,ch}_{\tt',\sigma,ch}\ketbra{\sigma}{\sigma}_{\reg{P}}\otimes\ketbra{ch,ch,\tt,\tt}{ch,ch,\tt',\tt'}_{\reg{C}\reg{C}'\reg{T}\reg{T}'} \Biggr.\notag\\
        &\hspace{3cm}\Biggl.\otimes\Big[W^{\text{open}}_{\psi_G(ch,\sigma),\tt,ch,\sigma,u}W^{\text{chal}}_{\tt,ch}\ketbra{\phi(\sigma,ch,\tt)}{\phi(\sigma,ch,\tt')}_{\reg{S}\reg{V}\reg{W}}\left(W^{\text{chal}}_{\tt',ch}\right)^*\left(W^{\text{open}}_{\psi_G(ch,\sigma),\tt',ch,\sigma,u}\right)^*\Big]\otimes\ketbra{u}{u}_{\reg{A}} \Biggr)\\
        &= 2\smash{\sum_{\substack{\sigma\in  S_{\sizeG}\\ch\in\{0,1\}\\\tt\in\TT{}\\u\in\mathcal{U}}}}\alpha^{\tt,\sigma,ch}_{\tt,\sigma,ch}\ketbra{\sigma}{\sigma}_{\reg{P}}\otimes\ketbra{ch,\tt}{ch,\tt}_{\reg{C}\reg{T}} \notag\\
        &\hspace{2cm}\otimes\tr_{\reg{S}}\!\left(\Big[W^{\text{open}}_{\psi_G(ch,\sigma),\tt,ch,\sigma,u}W^{\text{chal}}_{\tt,ch}\ketbra{\phi(\sigma,ch,\tt)}{\phi(\sigma,ch,\tt)}_{\reg{S}\reg{V}\reg{W}}\left(W^{\text{chal}}_{\tt,ch}\right)^*\left(W^{\text{open}}_{\psi_G(ch,\sigma),\tt,ch,\sigma,u}\right)^*\Big]\right)\otimes\ketbra{u}{u}_{\reg{A}}\\[0.2cm]
        &=\frac{1}{\sizeG !}{\sum_{\substack{\sigma\in  S_{\sizeG}\\ch\in\{0,1\}\\\tt\in\TT{}\\u\in\mathcal{U}}}}\proj{\sigma}_{\reg{P}}\otimes\proj{ch}_{\reg{C}}\otimes\proj{\tt}_{\reg{T}}\otimes\Xi_{\psi_G(ch,\sigma),\tt,ch,\sigma,u}(\proj{\phi})\otimes\proj{u}_{\reg{A}}\enspace.
    \end{align*}
    We expand the sum over $ch \in \{0,1\}$,
    \begin{align*}
        \rho_8 = \frac{1}{\sizeG !}&\sum_{\substack{\sigma\in S_{\sizeG}\\\tt\in\TT{}\\u\in\mathcal{U}}}\proj{\sigma}_{\reg{P}}\otimes\proj{0}_{\reg{C}}\otimes\proj{\tt}_{\reg{T}}\otimes \Xi_{\sigma(G),\tt,0,\sigma,u}(\proj{\phi})\otimes \proj{u}_{\reg{A}}\notag\\
        &+\frac{1}{\sizeG !}\sum_{\substack{\sigma\in S_{\sizeG}\\\tt\in\TT{}\\u\in\mathcal{U}}}\proj{\sigma}_{\reg{P}}\otimes\proj{1}_{\reg{C}}\otimes\proj{\tt}_{\reg{T}}\otimes \Xi_{K_{\sizeG},\tt,1,\sigma,u}(\proj{\phi})\otimes \proj{u}_{\reg{A}}\enspace.\notag
    \end{align*}
    Recall that (see ~\eqref{eq:gutfeeling0}) the state $\Xi_{\sigma(G),\tt,0,\sigma,u}(\proj{\phi})$ is equal to $\Xi_{\sigma(G),\tt,0,\sigma',u}(\proj{\phi})$ for every permutation~$\sigma'$. In particular, we may replace it by $\Xi_{\sigma(G),\tt,0,\sigma(H),u}(\proj{\phi})$. Moreover, for any fixed Hamiltonian cycle $H$ in graph $G$, the map $\sigma \mapsto \sigma(H)$ is a bijection on $S_{\sizeG}$, which allows us to reindex the sum when $ch=1$. Finally, recall that (see~\eqref{eq:gutfeeling}) for every graphs $G'$ and $\smash{\hat{G}}$ such that $\sigma(H)$ is a Hamiltonian cycle in both $G'$ and $\smash{\hat{G}}$, $\Xi_{G',\tt,1,\sigma(H),u}(\proj{\phi})=\Xi_{\hat{G},\tt,1,\sigma(H),u}(\proj{\phi})$. In particular, we may replace $\Xi_{K_{\sizeG},\tt,1,\sigma(H),u}(\proj{\phi})$ by $\Xi_{\sigma(G),\tt,1,\sigma(H),u}(\proj{\phi})$. Therefore,
    \begin{align}
        \rho_8 = \frac{1}{\sizeG !}&\sum_{\substack{\sigma\in S_{\sizeG}\\\tt\in\TT{}\\u\in\mathcal{U}}}\proj{\sigma}_{\reg{P}}\otimes\proj{0}_{\reg{C}}\otimes\proj{\tt}_{\reg{T}}\otimes \Xi_{\sigma(G),\tt,0,\sigma(H),u}(\proj{\phi})\otimes \proj{u}_{\reg{A}}\notag\\
        &+\frac{1}{\sizeG !}\sum_{\substack{\sigma\in S_{\sizeG}\\\tt\in\TT{}\\u\in\mathcal{U}}}\proj{\sigma(H)}_{\reg{P}}\otimes\proj{1}_{\reg{C}}\otimes\proj{\tt}_{\reg{T}}\otimes \Xi_{\sigma(G),\tt,1,\sigma(H),u}(\proj{\phi})\otimes \proj{u}_{\reg{A}}\enspace,\notag
    \end{align}
    which is the output state $\Psi_{G}$ of the simulator $\textsf{S}_{\vdv}$.
    \qed
\end{proof}
We now prove the zero-knowledge property of $\prothc{\protpi}$.
\begin{restatetheorem}{theo:zk}\label{theo:zkappendix}
    Let $\protpi$ be a classical bit commitment scheme. If $\protpi$ is perfectly hiding, then the argument system $\prothc{\protpi}$ is post-quantum perfect zero-knowledge.
\end{restatetheorem}
\begin{proof}
    This follows directly from~(\ref{eq:psiGH}) and Lemma~\ref{lem:simulator}.
    \qed
\end{proof}

\end{document}